\documentclass[11pt]{article}

\newcommand{\footremember}[2]{%
    \footnote{#2}
    \newcounter{#1}
    \setcounter{#1}{\value{footnote}}%
}

\usepackage[normalem]{ulem}

\usepackage[margin=1in]{geometry}
\usepackage{amsmath,amssymb,amsthm,mathtools,comment,bm, booktabs}
\usepackage{algorithm}
\usepackage{float}
\usepackage[noend]{algpseudocode}
\usepackage{natbib}

\usepackage{enumitem}
\usepackage{hyperref}
\usepackage{xcolor}
\usepackage{bbm}

\hypersetup{
  colorlinks=true,
  linkcolor=blue!60!black,
  citecolor=blue!80!black,
  urlcolor=blue!60!black
}

\newtheorem{theorem}{Theorem}
\newtheorem{lemma}{Lemma}
\newtheorem{proposition}{Proposition}

\newtheorem{definition}{Definition}
\newtheorem{corollary}{Corollary}

\newcommand{\R}{\mathbb{R}}
\newcommand{\E}{\mathbb{E}}

\newcommand{\argmin}{\operatorname*{argmin}}
\newcommand{\Reg}{\operatorname{Reg}}
\newcommand{\TV}{\operatorname{TV}}
\newcommand{\Hist}{\mathcal{H}}

\newcommand{\rhobar}{\bar\rho}

\newcommand{\norm}[1]{\left\lVert #1\right\rVert}
\newcommand{\inner}[2]{\left\langle #1,#2\right\rangle}

\newtheorem{assumption}{Assumption}

\newcommand{\Unif}{\operatorname{Unif}}

\usepackage{amsmath,amssymb,amsthm,mathtools}
\usepackage{enumitem}

\newcommand{\bits}{\operatorname{bits}}

\title{Independent Reinforcement Learning in Discounted Markov Games}
\author{Draft note}
\date{}

\author{%
  Asrın Efe Yorulmaz\footremember{ay}{Department of Electrical and Computer Engineering, University of Illinois Urbana-Champaign, Urbana, IL, 61801, USA, ay20@illinois.edu}%
  \and
  U\u{g}ur Ayd{\i}n\footremember{ua}{Department of Electrical and Computer Engineering, University of Illinois Urbana-Champaign, Urbana, IL, 61801, USA, uaydin2@illinois.edu}%
  \and Tamer Ba\c{s}ar\footremember{tb}{Department of Electrical and Computer Engineering, University of Illinois Urbana-Champaign, Urbana, IL, 61801, USA, basar1@illinois.edu}%
}

\begin{document}
\maketitle

\begin{abstract}
In this work, we study \emph{radically uncoupled} learning in discounted general-sum Markov games. Assuming ``$\mathsf{ETH}$ for $\mathsf{PPAD}$", we show that, for every fixed discount factor, there is no polynomial-time algorithm for computing inverse-polynomially accurate coarse correlated equilibria in discounted general-sum Markov games when players learn independently in decentralized settings. Complementing this hardness result, we provide what appears to be the first \emph{radically uncoupled} algorithm with sub-exponential convergence guarantees to coarse correlated equilibria in discounted general-sum Markov games without imposing any structural restrictions on the game. Our algorithm is a \emph{layered} variant of optimistic mirror descent with an increasing step-size schedule tailored to the multi-agent setting. Finally, we develop both full-feedback and partial feedback versions of the aforementioned algorithm and establish sub-exponential convergence guarantees for each case.

\end{abstract}

\section{Introduction}
\label{sec:introduction}
Multi-agent reinforcement learning (MARL) studies how multiple decision-making agents learn and act in a shared environment, where each agent seeks to optimize its own objective while adapting to the behavior of others \citep{BusoniuBabuskaSchutter2008,ZhangYangBasar2021}. This framework has become a central modeling paradigm for large-scale decision-making systems involving both cooperation and competition. Its recent applications range from strategic game playing---including Go, poker, Stratego, and Diplomacy \citep{SilverHuangMaddisonEtAl2016,BrownSandholm2018,PerolatEtAl2022,KramarEtAl2022,BakhtinEtAl2022}---to large language models \citep{llm_multiagent1,llm_multiagent2}, robotics \citep{marl_robotics}, quantitative finance \citep{marl_finance}, intelligent transportation \citep{marl_autonomous_driving}, cybersecurity \citep{MalialisKudenko2015}, and economic policy design \citep{ZhengEtAl2022}. These applications motivate algorithms that are not only computationally and statistically efficient, but also robust to decentralization, limited information, and strategic non-stationarity.

\vspace{6pt}
\noindent 
The standard mathematical model for the theoretical study of MARL is stochastic games, also known as Markov games \citep{Shapley1953,Littman1994}. In a Markov game, interaction unfolds over a sequence of states. At each state, all players choose actions simultaneously; each player receives a reward depending on the current state and the joint actions; and the next state is drawn from a transition kernel that also depends on the joint actions. Thus, Markov games extend Markov decision processes (MDPs) to multi-agent environments, while also extending normal-form games to dynamic environments with state-dependent rewards and transitions.
This combination creates difficulties that are absent from each of these two special cases. In contrast to single-agent MDPs, the value of a player's policy depends on the policies chosen by the other players, and thus optimality must be replaced by an equilibrium notion. In contrast to normal-form games, a player's current action affects not only her immediate reward, but also the future state distribution on which later strategic interactions take place. Consequently, equilibrium computation in Markov games combines the fixed-point nature of equilibrium computation in normal-form games with the intertemporal structure of dynamic decision-making. 

\vspace{6pt}
\noindent
In this work, we focus on a fully decentralized information model for MARL, namely \emph{independent} or \emph{radically uncoupled} learning \citep{noah_hardness,FosterPeyton2016}. We study this model in the discounted setting, which is widely used in the control and reinforcement learning literatures for a variety of purposes \citep{daskalakis2023, tessler2019action, zou2026claimhabit, perolat2017learning, SayinZhangLeslieBasarOzdaglar2021}. Here, \emph{radically uncoupled} refers to the restriction that players are not allowed to observe one another's policies, losses, value functions, or random bits. In particular, there is no available shared randomness, communication, policy revelation, or correlation device during learning. Consequently, each player updates her policy using only her own feedback and treats the effect of the other players as part of the environment. Independent learning dynamics are attractive because they scale without the communication and synchronization overhead required by centralized coordination. For instance, in large-scale systems, repeatedly aggregating global information or coordinating agents through a central learner can be costly and or infeasible under bandwidth, latency, or privacy constraints \citet{lian2017}. Moreover, these informational restrictions also rule out many of the mechanisms used in existing 
results for equilibrium computation in Markov games \citep{BaiJinYu2020,WeiEtAl2021, Vlearning}, thereby motivating the study of equilibrium computation under decentralized information structures.

\vspace{6pt}
\noindent
Beyond asymptotic convergence, one typically seeks algorithms that reach an approximate equilibrium with controlled computational cost and sample complexity, for example with polynomial or quasi-polynomial dependence on the natural problem parameters. To achieve these finite-time computational guarantees, centralized or communication-based methods often rely on shared value estimates, policy information, or coordinated equilibrium-computation subroutines. In contrast, under radically uncoupled information, each player must both compute and statistically estimate her update using only her own local feedback. Therefore, the main challenge in the independent learning setting is not merely to design a convergent learning dynamic, but to obtain convergence with meaningful computational and statistical efficiency under radically uncoupled information. For these reasons, in this work we investigate the following question:
\begin{quote}\centering\emph{Are there (quasi-)efficient, radically uncoupled learning algorithms for approximate equilibrium computation in discounted general-sum Markov games?}\end{quote}

\noindent
In normal-form games, the connection between independent learning mechanisms and equilibrium computation is well understood, as independent learning reduces naturally to no-regret learning in this setting: if all players have vanishing external regret, then their average play converges to a coarse correlated equilibrium (CCE), and if all players have vanishing swap regret, their average play converges to a correlated equilibrium (CE) \citep{hannan57,HartMasColell2000,CesaBianchiLugosi2006,BlumMansour2007}. Moreover, it has been shown that optimistic and regularized learning dynamics yield faster self-play rates in normal-form games \citep{DaskalakisDeckelbaumKim2011,SyrgkanisAgarwalLuoSchapire2015,DaskalakisFishelsonGolowich2021,AnagnostidesEtAl2022,SoleymaniPiliourasFarina2025}. 

\vspace{6pt}
\noindent
In contrast, the relationship between independent learning dynamics and equilibrium computation in Markov games remains much less understood. For two-player zero-sum Markov games there is a substantial body of work on decentralized learning, including settings with limited shared information and \emph{radically uncoupled} algorithms with provable guarantees \citep{BaiJinYu2020,WeiEtAl2021,Cai23,Chenetal2024}. Naturally, general-sum Markov games are expected to be more challenging, as there is no common value function or saddle-point structure. The challenge lies not only in the general-sum strategic interactions among multiple players, but also in the dynamic nature of the environment. The unilateral alternative policies may induce a different distribution over future states, and this state distribution depends jointly on the deviating player's policy, the transition dynamics, and the policies used by the other players. Since the opponents may also change their policies over time, these comparison distributions are themselves time-varying and unknown to players. This is a central obstruction to transferring the standard normal-form no-regret argument directly to Markov games.

\vspace{6pt}
\noindent
The existing positive results for general-sum Markov games typically rely on coordination or post-processing mechanisms that are not present in the independent learning problem considered in this paper \citep{ArslanYuksel2017,Vlearning,SongMeiBai2022,MaoBasar2023,daskalakis2023}. A frequently studied class of algorithms include V-learning and related sample-based algorithms \citep{Vlearning,SongMeiBai2022,MaoBasar2023}, which compute CCE or CE policies in decentralized MARL. The equilibrium policies produced by these methods are represented as carefully specified distributions over the learning history and require the players to sample a common index from that history during execution. Notably, the learning phase of these algorithms does not provide any online regret guarantees. Thus, while these methods are decentralized, they do not address the radically uncoupled online-regret objective considered in this paper. Being another closely related algorithm, the so-called SPoCMAR \citep{daskalakis2023} is a decentralized algorithm with polynomial sample and computational complexity for learning a nonstationary Markov-CCE, but the provided regret guarantees rely on the existence of common random bits during learning and for sampling the output policy, where Markov-CCE is an equilibrium with respect to the Markov policy deviations. Consequently, these works do not provide online regret guarantees for the independently played product policies during the learning process. 

\vspace{6pt}
\noindent
Another closely related precursor to our work is \citet{ErezLancewickiShermanKorenMansour2025}. Similar to our work, they study radically uncoupled learning in finite-horizon general-sum Markov games in the self-play setting. Their work, compared to ours, contains further restrictions. First, their regret bounds are obtained with respect to Markov policies, whereas our results allow non-Markov history-dependent policies. Second, their sublinear regret guarantee is obtained when agents may affect one another's losses but not one another's transition dynamics. In contrast, we study finite- and infinite-horizon discounted Markov games with action-dependent transition kernels and aim to develop radically uncoupled learning algorithms with online regret guarantees against non-Markov history-dependent deviations.

\vspace{6pt}
\noindent
On the other hand, there exist hardness results that demonstrate limitations in finding efficient algorithms for the MDPs and Markov games. In the single-agent case, no-regret learning in adversarially changing MDPs can be computationally intractable \citep{marl_hardness2}. Naturally, this intractability also arises in Markov games with adversarial opponents, where both computational and statistical barriers are known \citep{marl_hardness3,marl_hardness4}. 
More recently, motivated by the fact that computing Nash equilibria is $\mathsf{PPAD}$-complete in normal-form games \citep{DaskalakisGoldbergPapadimitriou2009}, it has been shown that similar hardness constraints can persist in Markov games \citep{daskalakis2023, noah_hardness}. In particular, computing stationary Markov-CCEs in discounted general-sum stochastic games is $\mathsf{PPAD}$-hard \citep{daskalakis2023}. Moreover, in finite-horizon general-sum Markov games, any independent learning algorithm that achieves no regret property is also $\mathsf{PPAD}$-hard \citep{noah_hardness}. In contrast, we seek nonstationary policies that approximate CCE in discounted Markov games while avoiding the intractability barriers described above in the self-play setting, which complements the existing literature.

\vspace{6pt}
\noindent
Our contributions in this paper can be summarized as follows.

\begin{enumerate}[leftmargin=1.45em]
    \item 
    In Section~\ref{sec:episodic-algorithm}, we first study finite-horizon
    general-sum Markov games under full-feedback.  We introduce
    a \emph{layered} smoothed-entropy optimistic-online-mirror-descent (OOMD) algorithm,  with an increasing step-size schedule (Algorithm \ref{alg:layered-omd}). Then, we show that this algorithm achieves an
    \( 
        \mathcal O\!\left(T^{-3/(3H+1)}\right)
    \) 
    -approximate CCE with respect to history-dependent deviations
    after \(T\) episodes (Theorem \ref{thm:main}). Then, when the underlying model is unknown to agents, in Subsection~\ref{subsec:episodic-partial-feedback}, under a reachability notion (Definition \ref{def:kappa-zeta-reachability}), we prove a high-probability finite-horizon CCE 
    guarantee with the same exponential-in-horizon dependence up to the polynomial terms (Theorem~\ref{thm:episodic-partial-feedback}).

    \item 
    In Section~\ref{sec:discounted-truncation}, we transfer the finite-horizon
    guarantees established in Section~\ref{sec:episodic-algorithm} to discounted Markov games (Theorem \ref{thm:discounted-transfer}). Unlike the undiscounted case, in the discounted case our algorithms yield
    quasi-polynomial-time and quasi-polynomial-sample guarantees for computing
    inverse-polynomially accurate sparse discounted CCEs for
    the full-feedback and partial feedback cases for both finite-horizon and infinite-horizon Markov games (Corollaries~\ref{cor:disc-full} and~\ref{cor:disc-bandit}). These quasi-polynomial results rely on a horizon length truncation and approximation error bounding arguments to reduce the error bound \(\mathcal O\!\left(T^{-3/(3H+1)}\right)\) to a quasi-polynomial one.

    \item
    In Section~\ref{sec:computational-lower-bound}, we complement the
    quasi-polynomial upper bounds established in Section~\ref{sec:discounted-truncation} for learning discounted-CCE with a complexity-theoretic lower bound.
    In particular, assuming the ``\(\mathsf{ETH}\) for \(\mathsf{PPAD}\)''
    \citep{Babichenko2016}, we show that for every fixed discount factor
    \(\gamma\in(0,1)\cap \mathbb Q\) and every polynomial support bound, there exists an
    inverse-polynomial accuracy level for which no polynomial-time algorithm can
    compute a \emph{sparse} approximate CCE in finite horizon or infinite horizon discounted general-sum Markov games
    (Theorem~\ref{thm:discounted-ppad-eth-lower-main}, Corollary \ref{cor:finite-discounted-ppad-eth-main}) i.e., when players learn independently in a decentralized setting, one cannot calculate approximate CCE in polynomial time.
\end{enumerate}

\section{Preliminaries and Notations}
\label{sec:preliminaries}

{\subsection{Notations}}

\vspace{6pt}
\noindent
For an integer \(q\ge 1\), we write
\( 
    [q]:=\{1,\ldots,q\}.
\) 
For a finite set \(X\), we let
\( 
    \Delta(X)
\) 
denote the probability simplex over \(X\).  
For a vector \(x\in\Delta(X)\), \(x(a)\) denotes the probability of occurrence of action \(a\).  All logarithms are natural base unless
otherwise stated. In the computational statements, every input object is represented by a finite binary string.

\subsection{Undiscounted Finite-horizon Markov games}

Let \(m\) be a positive integer. An \(m\)-player finite-horizon Markov game is specified by the tuple
\( 
    G
    =
    \Bigl(
        H,\{S_h\}_{h=1}^{H+1},
        \{A_i\}_{i=1}^m,
        \{P_h\}_{h=1}^H,
        \{\ell_h^i\}_{i\in[m],h\in[H]},
        s_1
    \Bigr).
\) 
We will denote by \(\mathcal M = [m]\) the set of players. The components of the the Markov game are defined as follows:

\begin{itemize}

    \item Horizon length is denoted by \(H\in\mathbb N_+\).

    \item We denote the finite set of states at layer \(h\) by \(S_h\). The initial state is deterministic and we denote the initial state by $s_1$.  The layer \(H+1\) is terminal and has no actions or
    costs.  We write
    \( 
        \mathcal S := \bigsqcup_{h=1}^{H+1} S_h
    \) , and \(S:=|\mathcal S|\).

    \item The action set of player \(i\in  \mathcal M\) is denoted by \(A_i\), which we will assume to be finite.  The joint action 
    space of all players will be denoted by
    \( 
        A := \prod_{i=1}^m A_i .
    \) 
    We write joint actions of the players as
    \( 
        \bm a=(a_1,\ldots,a_m)\in A,
    \) 
    and write
    \( 
        \bm a_{-i}:=(a_j)_{j\neq i}\in A_{-i}:=\prod_{j\neq i} A_j
    \) 
    for the action profile that excludes the action of player \(i\). By
    \( 
        A_{\max}:=\max_{i\in[m]} |A_i|,
    \)  
    we will denote the maximum available cardinality among all players.
 
    \item For any given \(h\in[H]\), by
    \( 
        P_h:S_h\times A\to \Delta(S_{h+1})
    \) 
    we denote the (state) transition probability kernel of all players.
    The quantity \(P_h(s'\mid s,\bm a)\) represents the probability of transitioning to
    \(s'\in S_{h+1}\) from \(s\in S_h\) when the joint action is
    \(\bm a\in A\).

    \item For each player \(i\in[m]\) and layer \(h\in[H]\), the cost function
    is \( 
        \ell_h^i:S_h\times A\to[0,1].
    \) 
\end{itemize}

\paragraph{General (non-Markov) policies.}
For \(h\in[H]\), let \(\mathcal H_h\) denote the set of histories before the
layer-\(h\) action, 
\( 
    \mathcal H_h
    :=
    S_1\times A\times S_2\times A\times\cdots\times S_{h-1}\times A\times S_h .
\) 
An element of \(\mathcal H_h\) is written as
\( 
    \tau_h=(s_1,\bm a_1,s_2,\bm a_2,\ldots,s_{h-1},\bm a_{h-1},s_h).
\) Furthermore, when needed, we let \( s(\tau_h) \in S_h\) be the concluding state of the history \(\tau_h \in \mathcal H_h \) at layer \(h\).
A non-Markov policy of player \(i\) is a sequence of maps
\( 
    \pi^i=\{\pi_h^i\}_{h=1}^H,
    \; \pi_h^i(\cdot\mid\tau_h)\in\Delta(A_i)
    \;
    \text{for every }\tau_h\in\mathcal H_h .
\)
Let \(\Pi_i^{\mathrm{gen}}\) denote the set of all such finite-horizon policies of player \(i\), and define
\( 
    \Pi^{\mathrm{gen}}:=\prod_{i=1}^m \Pi_i^{\mathrm{gen}}.
\) 
A non-Markov product policy profile is denoted by
\( 
    \bm\pi=(\pi^1,\ldots,\pi^m)\in\Pi^{\mathrm{gen}}.
\) 
In particular, for player \(i \in \mathcal M\) a general (non-Markov) policy \(\pi_i \in \Pi_i^{\mathrm{gen}}\) is specified by a collection \(\pi^i = (\pi^i_1, \cdots, \pi^i_H )\), where \(\pi^i_h:\tau_h \to \Delta(A_i) \). 
Under such a profile, conditional on the history, \(\tau_h\), the players' 
actions are sampled independently as
\( 
    a_h^i\sim \pi_h^i(\cdot\mid\tau_h),
    \; i\in[m],  h \in [H] .
\)  

\paragraph{Markov policies.}
A Markov policy of player \(i\) is a sequence
\( 
    \pi^i=\{\pi_h^i\}_{h=1}^H,
    \;
    \pi_h^i(\cdot\mid s)\in\Delta(A_i)
    \) \( 
    \text{for every }s\in S_h .
\) 
Let \(\Pi_i^{\mathrm{markov}}\) denote the set of all such finite-horizon policies of player \(i\), and define
\( 
    \Pi^{\mathrm{markov}}:=\prod_{i=1}^m \Pi_i^{\mathrm{markov}}
\) to
denote the space of product Markov policies, where each agent \(i\) independently follows a policy
in \(\Pi_i^{\mathrm{markov}}\) for a given state. In particular, a Markov product policy \(\bm \pi \in \Pi^{\mathrm{markov}}\) is specified by a collection \(\bm \pi = (\bm \pi_1, \cdots, \bm \pi_H)\), where \(\bm \pi_h: S_h \to \Delta(A_1) \times \cdots \times \Delta (A_m) .\)
At state \(s\in S_h\), it induces the product distribution
\( 
    \bm\pi_h(\bm a\mid s)
    :=
    \prod_{j=1}^m \pi_h^j(a_j\mid s),
    \;
    \bm a=(a_1,\ldots,a_m)\in A .
\) 
For player \(i\), write
\( 
    \bm\pi^{-i}
    :=
    (\pi^1,\ldots,\pi^{i-1},\pi^{i+1},\ldots,\pi^m)
\) 
for the policy profile of all players except \(i\).  If \(\rho^i\) is another
policy of player \(i\), either Markov or non-Markov, we write
\( 
    \rho^i\odot\bm\pi^{-i}
\)
for the profile obtained by replacing player \(i\)'s policy with \(\rho^i\) and
leaving the opponents' policies equal to \(\bm\pi^{-i}\).  In particular, if
\(\rho^i\) is Markov, then for \(s\in S_h\),
\( 
    (\rho^i\odot\bm\pi^{-i})_h(\bm a\mid s)
    :=
    \rho_h^i(a_i\mid s)
    \prod_{j\neq i}\pi_h^j(a_j\mid s).
\) 

\paragraph{Repeated interaction.}
The agents interact with environment over \(T\) episodes.  At the beginning of
episode \(t\in[T]\), each player \(i\) chooses a Markov policy \(\pi_t^i\).  Let
\( 
    \bm\pi_t=(\pi_t^1,\ldots,\pi_t^m)
\) 
denote the resulting product Markov profile. Each episode starts from \(s_{t,1}=s_1\).  For each \(h\in[H]\), player \(i\)
samples
\( 
    a_{t,h}^i\sim \pi_{t,h}^i(\cdot\mid s_{t,h}),
\) 
the joint action is
\( 
    \bm a_{t,h}:=(a_{t,h}^1,\ldots,a_{t,h}^m),
\) 
player \(i\) incurs cost
\( 
    \ell_h^i(s_{t,h},\bm a_{t,h}),
\) 
and the next state is sampled as
\( 
    s_{t,h+1}\sim P_h(\cdot\mid s_{t,h},\bm a_{t,h}).
\) 
The feedback observed by each player after the episode depends on the feedback
model defined below.

\begin{definition}[Full-feedback]
At the end of episode \(t\), player \(i\) observes the exact
\( 
    Q_t^i(s,a_i)
    =
    Q_h^{i,\bm\pi_t}(s,a_i),
    \;
    \forall h\in[H],\ s\in S_h,\ a_i\in A_i .
\) 
\end{definition}

\begin{definition}[Partial feedback]
At the end of episode \(t\), player \(i\) observes only its own trajectory
information
\( 
    \Bigl(
        s_{t,1},a_{t,1}^i,c_{t,1}^i,
        s_{t,2},a_{t,2}^i,c_{t,2}^i,
        \ldots,
        s_{t,H},a_{t,H}^i,c_{t,H}^i,
        s_{t,H+1}
    \Bigr),
\) 
where
\( 
    c_{t,h}^i
    :=
    \ell_h^i(s_{t,h},\bm a_{t,h})
\). 
\end{definition}

\vspace{6pt}
\noindent
For a policy profile \(\bm\pi\in\Pi^{\mathrm{markov}}\), let \(\mathbb P^{\bm\pi}\) denote the distribution over trajectories induced by the policy profile \(\bm\pi\) and the transition kernels \(\{P_h\}_{h=1}^H\), and define
\( 
    V_{H+1}^{i,\bm\pi}\equiv 0.
\)  All expectations \(\mathbb E^{\bm\pi}\) below are taken with respect to this trajectory distribution. For \(h\in[H]\), \(s\in S_h\), and \(i\in[m]\), we define the value
function
\( 
    V_h^{i,\bm\pi}(s)
    :=
    \mathbb E^{\bm\pi}
    \left[
        \sum_{r=h}^H \ell_r^i(s_r,\bm a_r)
        \,\middle|\, s_h=s
    \right].
\) 
Similarly, the state-action value is given as
\( 
    Q_h^{i,\bm\pi}(s,a_i)
    :=
    \mathbb E_{\bm a_{-i}\sim \bm\pi_h^{-i}(\cdot\mid s)}
    \left[
        \ell_h^i(s,\bm a)
    +
    \sum_{s'\in S_{h+1}}
        P_h(s'\mid s,\bm a)\,
        V_{h+1}^{i,\bm\pi}(s')
    \right],
\) where the opponents follow the
\(\bm\pi^{-i}\). 
Therefore,
\( 
    V_h^{i,\bm\pi}(s)
    =
    \mathbb E_{a_i\sim\pi_h^i(\cdot\mid s)}
    \left[
        Q_h^{i,\bm\pi}(s,a_i)
    \right].
\) 
For \(s_1\), we have
\( 
    V^{i,\bm\pi}(s_1):=V_1^{i,\bm\pi}(s_1).
\)

\paragraph{The induced MDP of a player.}
Fix an episode \(t\), a player \(i\), and the opponents' policies
\(\bm\pi_t^{-i}\). Then, player \(i\) faces an induced finite-horizon MDP
\( 
    M_t^i
    :=
    \Bigl(
        H,\{S_h\}_{h=1}^{H+1},A_i,
        \{P_{t,h}^i\}_{h=1}^H,
        \{\ell_{t,h}^i\}_{h=1}^H,
        s_1
    \Bigr),
\) 
where, for \(s\in S_h\), \(a_i\in A_i\), and \(s'\in S_{h+1}\),
\( 
    \ell_{t,h}^i(s,a_i)
    :=
    \mathbb E_{\bm a_{-i}\sim\bm\pi_{t,h}^{-i}(\cdot\mid s)}
    \left[
        \ell_h^i\bigl(s,(a_i,\bm a_{-i})\bigr)
    \right],
\) 
and
\( 
    P_{t,h}^i(s'\mid s,a_i)
    :=
    \mathbb E_{\bm a_{-i}\sim\bm\pi_{t,h}^{-i}(\cdot\mid s)}
    \left[
        P_h\bigl(s'\mid s,(a_i,\bm a_{-i})\bigr)
    \right].
\) 
For a Markov policy \(\mu^i\) in this induced MDP, define
\( 
    V_{t,H+1}^{i,\mu^i}\equiv 0
\) 
and, then, we have
\[
    Q_{t,h}^{i,\mu^i}(s,a_i)
    :=
    \ell_{t,h}^i(s,a_i)
    +
    \sum_{s'\in S_{h+1}}
        P_{t,h}^i(s'\mid s,a_i)\,
        V_{t,h+1}^{i,\mu^i}(s'),
\qquad
    V_{t,h}^{i,\mu^i}(s)
    :=
    \mathbb E_{a_i\sim\mu_h^i(\cdot\mid s)}
    \left[
        Q_{t,h}^{i,\mu^i}(s,a_i)
    \right].
\]
These induced-MDP quantities agree with the corresponding Markov-game
quantities,
\( 
    V_{t,h}^{i,\mu^i}(s)
    =
    V_h^{i,\mu^i\odot\bm\pi_t^{-i}}(s),
    \;
    Q_{t,h}^{i,\mu^i}(s,a_i)
    =
    Q_h^{i,\mu^i\odot\bm\pi_t^{-i}}(s,a_i).
\) 
For the actually played policy \(\pi_t^i\), we use the shorthand
\( 
    V_{t,h}^{i,\pi}(s)
    :=
    V_h^{i,\pi_t^i\odot\bm\pi^{-i}_t}(s),
    \;
    Q_{t,h}^{i,\pi}(s,a_i)
    :=
    Q_h^{i,\pi_t^i\odot\bm\pi^{-i}_t}(s,a_i),
    \; s\in S_h .
\)  Moreover, we drop policy-dependent indices from the value-functions' notation whenever they are clear from the context.

\paragraph{Distributional policies.}
We will also consider distributions over non-Markov
general policies; we refer to such elements of \(\Delta(\Pi^{\mathrm{gen}})\) as
\emph{distributional policies}. Playing a distributional policy
\(P \in \Delta(\Pi^{\mathrm{gen}})\) consists of first sampling a
randomized policy
\( 
    \bm\pi \sim P,
\) 
and then executing the sampled policy \(\bm\pi\).

\vspace{6pt}
\noindent 
For any policy \(\bm\pi \in \Pi^{\mathrm{gen}}\), let
\( 
    \delta_{\bm\pi} \in \Delta(\Pi^{\mathrm{gen}})
\) 
denote the distribution placing unit
mass on \(\bm\pi\). 
Accordingly, the empirical distribution over the iterates
\(\bm\pi_1,\ldots,\bm\pi_T\) is
\( 
    \widehat{\Pi}_T
    :=
    \frac1T \sum_{t=1}^T \delta_{\bm\pi_t}  \in  \Delta(\Pi^{\mathrm{gen}}).
\) 
More generally, if \(\mathfrak P\in\Delta(\Pi^{\mathrm{gen}})\) is a
distribution over policy profiles, we define
\( 
    V^{i,\mathfrak P}(s_1)
    :=
    \mathbb E_{\bm\sigma\sim\mathfrak P}
    \left[
        V^{i,\bm\sigma}(s_1)
    \right].
\) 
Then, for a deviation \(\mu^i\in\Pi_i^{\mathrm{gen}}\), we let
\( 
    V^{i,\mu^i\odot\mathfrak P^{-i}}(s_1)
    :=
    \mathbb E_{\bm\sigma\sim\mathfrak P}
    \left[
        V^{i,\mu^i\odot\bm\sigma^{-i}}(s_1)
    \right].
\) 

\begin{definition}[General-policy external regret]\label{def:gen-ext-reg}
Given product Markov policy profiles
\(\bm\pi_1,\ldots,\bm\pi_T\), the general-policy external regret of player
\(i\) is
\( 
    \operatorname{Reg}^{\mathrm{gen},i}_T
    :=
    \sup_{\mu^i\in\Pi_i^{\mathrm{gen}}}
    \sum_{t=1}^T
    \left[
        V^{i,\bm\pi_t}(s_1)
        -
        V^{i,\mu^i\odot\bm\pi_t^{-i}}(s_1)
    \right].
\) 
\end{definition}

\begin{definition}[Coarse Correlated Equilibrium]\label{def:gen-ext-cce}
The empirical distribution
\( 
    \widehat{\Pi}_T
    =
    \frac1T\sum_{t=1}^T \delta_{\bm\pi_t}
\) 
is an \(\varepsilon\)-approximate CCE if, for every player \(i\in[m]\),
\( 
    \sup_{\mu^i\in\Pi_i^{\mathrm{gen}}}
    \left[
        V^{i,\widehat{\Pi}_T}(s_1)
        -
        V^{i,\mu^i\odot\widehat{\Pi}_T^{-i}}(s_1)
    \right]
    \le \varepsilon .
\) 
\end{definition}

\subsection{Discounted Markov games}
\label{subsec:discounted-preliminaries}

We use the same notation for finite and infinite-horizon discounted games.  Let
\( 
    \bar H\in\mathbb N_+\cup\{\infty\}
\) 
denote the horizon.  When \(\bar H<\infty\), an \(m\)-player
finite-horizon discounted Markov game is specified by
\( 
    G_{\gamma,\bar H}
    =
    \Bigl(
        \bar H,
        \{S_h\}_{h=1}^{\bar H+1},
        \{A_i\}_{i=1}^m,
        \{P_h\}_{h=1}^{\bar H},
        \{\ell_h^i\}_{i\in[m],h\in[\bar H]},
        \gamma,
        s_{\mathrm{1}}
    \Bigr).
\) 
When \(\bar H=\infty\), we recover the discounted infinite-horizon
Markov game
\( 
    G_\gamma
    =
    \Bigl(
        S,
        \{A_i\}_{i=1}^m,
        P,
        \{\ell^i\}_{i=1}^m,
        \gamma,
        s_{\mathrm{1}}
    \Bigr)
\) . In both cases, we will denote by \(\mathcal M = [m]\) the set of players. Corresponding games are defined as follows.

\begin{itemize}
    \item  For horizon \(\bar H\), we use the convention
    \( 
        [\bar H]=
        \begin{cases}
            \{1,\ldots,\bar H\}, & \bar H<\infty,\\
            \mathbb N_+, & \bar H=\infty.
        \end{cases}
    \) 

    \item In the finite-horizon
    case, the state spaces \(S_1,\ldots,S_{\bar H+1}\) are layer-labeled, and
    \(S_{\bar H+1}\) is terminal.  In the infinite-horizon case, we have \(S_h=S\) for all
    \(h\ge 1.\) 
    The initial state is deterministic and we denote it by $s_1$.
    
    \item The set \(A_i\) is the finite action set of player \(i \in \mathcal M\).  The joint action
    space is
    \( 
        A:=\prod_{i=1}^m A_i .
    \) 
    We write a joint action as
    \( 
        \bm a=(a_1,\ldots,a_m)\in A,
    \) 
    and we write
    \( 
        \bm a_{-i}:=(a_j)_{j\neq i}\in A_{-i}:=\prod_{j\neq i}A_j
    \).

    \item For each layer \(h\in[\bar H]\), the transition kernel is
    \( 
        P_h:S_h\times A\to\Delta(S_{h+1}).
    \) 
    In the infinite-horizon case, this reduces to the stationary kernel
    \( 
        P:S\times A\to\Delta(S).
    \)     

    \item For each player \(i\in[m]\) and layer \(h\in[\bar H]\), the cost
    function is
    \( 
    \ell_h^i:S_h\times A\to[0,1].
    \) 
    In the infinite-horizon case, the cost function is
    \( 
        \ell^i:S\times A\to[0,1].
    \) 

    \item The discount factor is
    \( 
        \gamma\in(0,1).
    \) 
\end{itemize}

\paragraph{General (non-Markov) and Markov policies.}
For each \(h\le \bar H\), let
\( 
    \mathcal H_h^{\bar H}
    :=
    S_1\times A\times S_2\times A\times\cdots\times S_{h-1}\times A\times S_h
\) 
be the set of histories before the time-\(h\) action.  When \(\bar H=\infty\),
this definition is understood with the convention \(S_h=S\) for every \(h\).
An element of \(\mathcal H_h^{\bar H}\) is written as
\( 
    \tau_h=(s_1,\bm a_1,s_2,\bm a_2,\ldots,s_{h-1},\bm a_{h-1},s_h).
\) 

\vspace{6pt}
\noindent 
A non-Markov policy of player \(i\) over horizon \(\bar H\) is a sequence of maps
\( 
    \mu^i=\{\mu_h^i\}_{h\le \bar H},
    \; \mu_h^i(\cdot\mid\tau_h)\in\Delta(A_i)
    \;
    \text{for every }\tau_h\in\mathcal H_h^{\bar H}.
\) 
Let \(\Pi_i^{\mathrm{gen},\bar H}\) denote the set of such policies of player
\(i\), and define
\( 
    \Pi^{\mathrm{gen},\bar H}
    :=
    \prod_{i=1}^m \Pi_i^{\mathrm{gen},\bar H}.
\) 
For \(\bar H=\infty\), we also write
\(\Pi_i^{\mathrm{gen},\infty}\) and
\(\Pi^{\mathrm{gen},\infty}\).

\vspace{6pt}
\noindent 
A Markov policy of player \(i\) over horizon \(\bar H\) is a sequence
\( 
    \pi^i=\{\pi_h^i\}_{h\le \bar H},
    \;
    \pi_h^i(\cdot\mid s)\in\Delta(A_i),
    \; s\in S_h.
\) 
In the infinite-horizon case, as a special case our notion of Markov policies corresponds to non-stationary Markov policies. Let \(\Pi_i^{\mathrm{markov},\bar H}\) denote the set of such policies of player
\(i\), and define
\( 
    \Pi^{\mathrm{markov},\bar H}
    :=
    \prod_{i=1}^m \Pi_i^{\mathrm{markov},\bar H}.
\) 
For \(\bar H=\infty\), we also write
\(\Pi_i^{\mathrm{markov},\infty}\) and
\(\Pi^{\mathrm{markov},\infty}\).

\paragraph{Discounted values.}
For a policy profile \(\bm\sigma\in\Pi^{\mathrm{gen},\bar H}\) and a finite
\(L\le \bar H\), define the \(L\)-truncated discounted cost of player \(i\) by
\( 
    J_{i,L}^{\gamma}(\bm\sigma)
    :=
    \mathbb E^{\bm\sigma}
    \left[
        \sum_{h=1}^{L}\gamma^{h-1}\ell_h^i(s_h,\bm a_h)
        \,\middle|\,
        s_1=s_{\mathrm{init}}
    \right].
\) 
The full discounted objective is
\( 
    J_{i,\bar H}^{\gamma}(\bm\sigma)
    :=
    \mathbb E^{\bm\sigma}
    \left[
        \sum_{h=1}^{\bar H}\gamma^{h-1}\ell_h^i(s_h,\bm a_h)
        \,\middle|\,
        s_1
    \right],
\) 
In the infinite-horizon case,
we also write
\( 
    J_i^\gamma(\bm\sigma)
    :=
    J_{i,\infty}^\gamma(\bm\sigma).
\) 
Since \(0\le \ell_h^i\le 1\), for every profile \(\bm\sigma\),
\( 
    0\le J_{i,\bar H}^{\gamma}(\bm\sigma)
    \le
    \sum_{h=1}^{\bar H}\gamma^{h-1}
    \le
    \frac{1}{1-\gamma}.
\) 

\vspace{6pt}
\noindent 
For a distribution \(\mathfrak P\in\Delta(\Pi^{\mathrm{gen},\bar H})\) over
policy profiles, define
\( 
    J_{i,\bar H}^{\gamma}(\mathfrak P)
    :=
    \mathbb E_{\bm\sigma\sim\mathfrak P}
    \left[J_{i,\bar H}^{\gamma}(\bm\sigma)\right].
\) 
For a unilateral deviation \(\mu^i\in\Pi_i^{\mathrm{gen},\bar H}\), define
\( 
    J_{i,\bar H}^{\gamma}(\mu^i\odot\mathfrak P^{-i})
    :=
    \mathbb E_{\bm\sigma\sim\mathfrak P}
    \left[J_{i,\bar H}^{\gamma}(\mu^i\odot\bm\sigma^{-i})\right].
\) Then, we define the corresponding discounted CCE and regret notions as follows.

\begin{definition}
\label{def:discounted-general-policy-regret}
For a sequence of \(\bar H\)-horizon product policy profiles
\(\bar{\bm{\pi}}_1,\ldots,\bar{\bm{\pi}}_T\), the discounted general-policy regret of
player \(i\) is defined as
\( 
    \Reg_{\gamma,\bar H,T}^{\mathrm{gen},i}
    :=
    \sup_{\mu^i\in\Pi_i^{\mathrm{gen},\bar H}}
    \sum_{t=1}^T
    \left[
        J_{i,\bar H}^{\gamma}(\bar{\bm{\pi}}_t)
        -
        J_{i,\bar H}^{\gamma}(\mu^i\odot\bar{\bm{\pi}}_t^{-i})
    \right].
\) 
\end{definition}

\begin{definition}
\label{def:disc-cce-unified}
The empirical distribution
\( 
    \widehat{\Pi}_T
\) 
is an \(\varepsilon\)-approximate CCE if, for every player \(i\in[m]\),
\( 
    \sup_{\mu^i\in\Pi_i^{\mathrm{gen},\bar H}}
    \left[
        J_{i,\bar H}^{\gamma}(\widehat{\Pi}_T)
        -
        J_{i,\bar H}^{\gamma}(\mu^i\odot\widehat{\Pi}_T^{-i})
    \right]
    \le
    \varepsilon.
\) 
If \(\bar H=\infty\), it is the discounted infinite-horizon CCE.
\end{definition}

\subsection{Truncated Markov Games}\label{subsect:truncated:quaso}
Let \(L\le \bar H\). For a given discounted Markov game \(G_{\gamma,\bar H}\), we define a corresponding \(L\)-step discounted truncation
\(G_{\gamma,\bar H}^{[L]}\) whose one-step cost function is defined as
\( 
    c_h^i(s,\bm a)
    :=
    \gamma^{h-1}\ell_h^i(s,\bm a),
    \;
    h\in[L].
\) 
The state spaces, action sets, transition kernels, and initial state of \(G_{\gamma,\bar H}^{[L]}\) are inherited from the original game for layers \(1,\ldots,L\).

\vspace{6pt}
\noindent 
For an \(\bar H\)-horizon policy
\( 
\mu^i = (\mu^i_1,\mu^i_2,\ldots,\mu^i_{\bar H}),
\) define
\( 
\mu^{i,[L]} := (\mu^i_h)_{h=1}^{L}.
\) 
That is, \(\mu^{i,[L]}\) is the \(L\)-step policy obtained by truncating \(\mu^i\) after stage \(L\). We define policy \(\bm{\pi}^{L,\mathrm{ref}}\) for the layers after
\(L\), in which every player plays uniformly at every state.  For an \(L\)-step product profile \(\bm{\pi}^{[L]}\), let
\( 
    \mathrm{Ext}_L(\bm{\pi}^{[L]})
\) 
denote the \(\bar H\)-horizon profile that plays \(\bm{\pi}^{[L]}\) during layers
\(1,\ldots,L\) and then plays \(\bm{\pi}^{L,\mathrm{ref}}\) afterward. Finally, define the discounted tail after layer \(L\) or truncation error by
\( 
    \tau_{L,\bar H}(\gamma)
    :=
    \sum_{h=L+1}^{\bar H}\gamma^{h-1}\le
    \frac{\gamma^L}{1-\gamma}.
\)

\section{Main Algorithm and Results for Episodic Markov Games}
\label{sec:episodic-algorithm}

Let \( 
    G
    =
    \Bigl(
        H,\{S_h\}_{h=1}^{H+1},
        \{A_i\}_{i=1}^m,
        \{P_h\}_{h=1}^H,
        \{\ell_h^i\}_{i\in[m],h\in[H]},
        s_1
    \Bigr)
\)
be a finite-horizon Markov game. In this section, we establish the undiscounted finite-horizon regret guarantees for $G$ that underlie our later discounted-game results.
We consider undiscounted general-sum Markov games within episodic framework in the self-play setting, where the players repeatedly generate product Markov profiles and update their own policies independently. The performance criterion is regret with respect to history-dependent deviations, which, through the standard no-regret-to-CCE connection \citep{CesaBianchiLugosi2006}, yields approximate CCE for the empirical distribution of play.

\vspace{6pt}
\noindent
The main obstruction, compared with normal-form games, is that the regret comparison in a Markov game is weighted by the history distribution induced by the policy deviations.  A unilateral deviation of policies may change the distribution of future states, and therefore the future decision points at which the deviation is evaluated.  Hence, the comparison places more weight on histories that the deviating policy visits frequently and less weight on histories that it reaches rarely.

\vspace{6pt}
\noindent
From the learner's perspective, these history-occupancy weights are unknown and time-varying: they depend on the transition kernel, the deviation being compared against, and the opponents' evolving policies.  Thus, the analysis must control the \emph{drift} of these unknown weights in addition to the online learning dynamics.  We first handle this issue in the full-feedback model, and then extend the argument to the trajectory-level partial-feedback model by introducing value estimation and an exploration floor under a reachability condition.

\subsection{Episodic Markov games under full-feedback}
\label{subsec-full}

In this subsection, we present 
the Layered Optimistic Online Mirror Descent (OOMD) algorithm for the self-play setting in game $G$ (see Algorithm~\ref{alg:layered-omd}) and prove its regret guarantee under full-feedback.  The algorithm follows an independent self-play structure: at each player--state--layer tuple \((i,s,h)\), player \(i\) performs a local policy update using the state--action value vector generated by the product Markov profile played in the previous episode.

\begin{algorithm}[H]
\caption{Layered Optimistic Online Mirror Descent}
\label{alg:layered-omd}
\begin{algorithmic}[1]
\State \textbf{Input:}
Finite-horizon game \(G\), horizon \(H\), episode count \(T\), base step-size
\(\eta_0\in\mathbb{R}\).

\State Define
\( 
    \alpha_h:=\frac{3(H-h)+1}{3H+1},
    \;
    \eta_h:=\eta_0T^{-\alpha_h},
    \;  \bar x^i(a_i):=\frac{1}{|A_i|},\lambda_i=1/|A_i|,
    \; a_i\in A_i, h\in[H].
\) 

\State For every \(i\in[m]\), \(h\in[H]\), and \(s\in S_h\), set
\( 
    \widetilde x_{0,h}^{i,s}:=\bar x^i\in\argmin_{x\in\Delta(A_i)}\Psi_i(x),
    \;
    Q_0^i(s,\cdot)\equiv 0\in\mathbb R^{A_i}.
\) 

\For{\(t=1,\ldots,T\)}

    \For{each player \(i\in[m]\), layer \(h\in[H]\), and state \(s\in S_h\)}
        \Comment{Optimistic mirror step}

        \State 
        \( 
            x_{t,h}^{i,s}
            =
            \argmin_{x\in\Delta(A_i)}
            \left\{
                \eta_h\,
                \bigl\langle Q_{t-1,h}^i(s,\cdot),x\bigr\rangle
                +
                D_i\!\left(x,\widetilde x_{t-1,h}^{i,s}\right)
            \right\}.
        \) 

        \State Set the played policy at \((h,s)\) by
        \( 
            \pi_{t,h}^i(\cdot\mid s)
            =
            x_{t,h}^{i,s}(\cdot).
        \) 

    \EndFor

    \State Execute episode \(t\) using the
    \( 
        \bm\pi_t=(\pi_t^1,\ldots,\pi_t^m).
    \) 

    \State Each player \(i\) observes
    \( 
        Q_{t,h}^i(s,a_i),
        \;
        \forall h\in[H],\ s\in S_h,\ a_i\in A_i.
    \) 

    \For{each player \(i\in[m]\), layer \(h\in[H]\), and state \(s\in S_h\)}
        \Comment{Mirror-descent update}

        \State 
        \( 
            \widetilde x_{t,h}^{i,s}
            =
            \argmin_{x\in\Delta(A_i)}
            \left\{
                \eta_h\,
                \bigl\langle Q_{t,h}^i(s,\cdot),x\bigr\rangle
                +
                D_i\!\left(x,\widetilde x_{t-1,h}^{i,s}\right)
            \right\}.
        \) 
    \EndFor

\EndFor
\end{algorithmic}
\end{algorithm}

\vspace{6pt}
\noindent
The full-feedback setting removes the statistical difficulty of estimating value functions from trajectories, since each player observes the exact vectors \(Q^i_{t,h}(s,\cdot)\) after every episode.  The remaining challenge is therefore purely dynamical: each player's OOMD updates must be stable enough to control the induced drift of the hidden history-occupancy weights.  Algorithm~\ref{alg:layered-omd} addresses this difficulty by combining a smoothed-entropy regularizer with an increasing layer-wise step-size schedule.

\vspace{6pt}
\noindent
Our first design choice is the smoothed entropy function as the regularizer choice.
Given a smoothing parameter
\(\lambda_i>0\), the smoothed negative-entropy regularizer over \(\Delta(A_i)\) is defined as
\[
    \Psi_i(x)
    :=
    \sum_{a\in A_i}
    \bigl(x(a)+\lambda_i\bigr)
    \log\bigl(x(a)+\lambda_i\bigr),
    \qquad
    x\in \Delta(A_i).
\]
The associated Bregman divergence is
\( 
    D_i(x,y)
    =
    \sum_{a\in A_i}
    \bigl(x(a)+\lambda_i\bigr)
    \log
    \frac{x(a)+\lambda_i}{y(a)+\lambda_i},
    \;
    x,y\in \Delta(A_i).
\) Here, the idea of shift $\lambda_i$ is reminiscent of the fixed-share smoothing idea used in
tracking and adaptive-regret algorithms
\cite{Herbster98,CesaBianchiGaillardLugosiStoltz2012}. To the best of our knowledge, this type of \emph{fixed-share} idea, implemented through regularizers, was first proposed in~\cite{shiftedent}, although we use
it for a different purpose. In our setting, the shift provides a bounded
Bregman divergence on the full simplex and a clean Lipschitz bound,
while still allowing the comparator
\(\mu^i_h(\cdot\mid\tau_h)\) to be an arbitrary distribution in
\(\Delta(A_i)\). Furthermore, under the choice
\(\lambda_i=1/|A_i|\), the upper bound over $D_i(x,y)$ grows only logarithmically with the action-set size.

\vspace{6pt}
\noindent 
The second design choice is the increasing step-size \vspace{1.5pt}schedule. For a given horizon \(H\), defining 
\( 
    \alpha_h:=\frac{3(H-h)+1}{3H+1} 
    \), we set our learning-rate as \vspace{1.5pt}
\( 
    \eta_h:=\eta_0 T^{-\alpha_h},
    \; h\in[H], 
\) given \(\eta_0\in(0,1]\). Simply, our step-size schedule is designed to make the drift of unknown weights manageable. In particular, the earlier
layers use smaller learning rates, and thus their policies move \emph{slowly} and do not create abrupt changes over later histories. Later layers
use larger learning rates, allowing the downstream parts of the game to adapt more \emph{quickly} by providing them a more stable regime. In a  sense, this creates a nested \emph{time-scale separation} across the horizon as deep layers react faster, while earlier layers remain comparatively stable.  
Combined with optimism, this schedule turns the
problem into a controlled collection of slowly varying weighted online learning problems.

\vspace{6pt}
\noindent
This layer-wise schedule is mathematically reminiscent of increasing
learning-rate schemes used in adaptive online learning and adversarial
bandit/MDP analyses
\cite{bubeck2017kernel,agarwal2017corralling,wei2018adaptive,lee2020bias},
but the motivation and exact schedule here is fundamentally different. In those settings,
learning rates increase across episodes to improve exploration by discounting
early high-variance loss estimates and avoiding commitment to
suboptimal actions. In contrast, our schedule is designed to make the drift of unknown history-occupancy weights manageable. 

\vspace{6pt}
\noindent
Formally, the performance
of Algorithm \ref{alg:layered-omd} is provided in the statement below.

\begin{theorem}
\label{thm:main}
Suppose all players run Algorithm~\ref{alg:layered-omd}. Define
\[
C_*^{\mathrm{SE}}
:=
\max_{i\in[m]}
\left\{
\frac{2H\log(|A_i|+1)}{\eta_0}
+12mH^3\log(|A_i|+1)
+\eta_0H^3
+144\eta_0^3m^2H^7
\right\}.
\]
Given \(\epsilon\in(0,1]\), choose
\( 
T_\varepsilon
:=
\left\lceil
\left(
\frac{C_*^{\mathrm{SE}}}{\varepsilon}
\right)^{(3H+1)/3}
\right\rceil .
\) 
Then, after running Algorithm~\ref{alg:layered-omd} for \(T_\varepsilon\) episodes, for every player
\(i\in[m]\), 
\( 
\mathrm{Reg}^{\mathrm{gen},i}_{T_\varepsilon}
\le
\varepsilon T_\varepsilon .
\) 
Equivalently, the empirical distribution
\( 
\widehat{\Pi}_{T_\varepsilon}
\) 
is an \(\varepsilon\)-approximate CCE with respect to all history-dependent deviations.
\end{theorem}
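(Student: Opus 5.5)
The plan is to reduce general-policy regret to a sum over layers of weighted local regrets, and then control each local regret with the optimistic mirror-descent bound together with the drift of the weights.

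\textbf{Step 1: performance-difference decomposition.} Fix player \(i\) and a history-dependent deviation \(\mu^i\). Apply the performance difference lemma in the induced MDP \(M_t^i\). The played profile is Markov, so \(Q_{t,h}^{i,\pi}\) depends only on the current state. This gives
\[
V^{i,\bm\pi_t}(s_1)-V^{i,\mu^i\odot\bm\pi_t^{-i}}(s_1)
=\sum_{h=1}^H \mathbb E_{\tau_h\sim \mathbb P^{\mu^i\odot\bm\pi_t^{-i}}}\Bigl[\bigl\langle Q_{t,h}^i(s(\tau_h),\cdot),\,x_{t,h}^{i,s(\tau_h)}-\mu_h^i(\cdot\mid\tau_h)\bigr\rangle\Bigr].
\]
Aggregate the history distribution by the final state \(s\). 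This produces a state weight \(w_{t,h}(s)\) and a comparator \(y_{t,h}^s\in\Delta(A_i)\), namely the conditional mixture of \(\mu_h^i(\cdot\mid\tau_h)\) given \(s(\tau_h)=s\). The regret therefore becomes \(\sum_h\sum_{s\in S_h}\sum_t w_{t,h}(s)\langle Q_{t,h}^i(s,\cdot),x_{t,h}^{i,s}-y_{t,h}^s\rangle\). Both \(w_{t,h}\) and \(y_{t,h}^s\) are time-varying, because they depend on \(\bm\pi_{t,1:h-1}^{-i}\).

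\textbf{Step 2: weighted, time-varying-comparator OOMD.} For each \((h,s)\), invoke the standard OOMD analysis with comparator \(u_t:=w_{t,h}(s)y_{t,h}^s\) (unnormalized) applied to the loss \(Q\). Under \(\lambda_i=1/|A_i|\), the shifted entropy gives \(D_i\le 2\log(|A_i|+1)\) and a Lipschitz bound \(|D_i(u,x)-D_i(u',x)|\lesssim \log(|A_i|+1)\|u-u'\|_1\), with the same bound for the weights. The per-\((h,s)\) regret is then bounded by three terms:
\[
\frac{\log(|A_i|+1)}{\eta_h}\Bigl(1+\sum_t\|u_t-u_{t-1}\|_1\Bigr)+\eta_h\sum_t\|Q_{t,h}-Q_{t-1,h}\|_\infty^2-\frac{1}{\eta_h}\sum_t\|x_t-\widetilde x_{t-1}\|^2 .
\]
Summing over \(s\) costs no extra factor, since the weights sum to one at each layer. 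The constant \(2H\log(|A_i|+1)/\eta_0\) comes from the first term.

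\textbf{Step 3: drift control (main obstacle).} Two drift quantities must be bounded. The first is the comparator/weight path length \(\sum_t\|u_t-u_{t-1}\|_1\) at layer \(h\). It is bounded by the movement of opponents' policies at layers \(<h\), which is at most \(m\sum_{h'<h}\max_s\|x_{t,h'}-x_{t-1,h'}\|_1\lesssim m\sum_{h'<h}\eta_{h'}H\). The second is the \(Q\)-variation \(\|Q_{t,h}-Q_{t-1,h}\|_\infty\), which is at most \(H\,m\sum_{h'\ge h}\eta_{h'}H\); this yields the \(\eta_0 H^3\) and \(\eta_0^3m^2H^7\) terms, the latter after using optimism to square the stability. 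Combining these, the layer-\(h\) contribution scales like
\[
T^{\alpha_h}+\frac{mH^2}{\eta_h}\sum_{h'<h}\eta_{h'}T+\eta_h T\cdot\bigl(\text{variation}\bigr)^2 .
\]
The key observation is that \(\eta_{h'}/\eta_h=T^{-3(h-h')/(3H+1)}\le T^{-3/(3H+1)}\) for \(h'<h\), so the cross term is \(O(mH^2T^{1-3/(3H+1)})\). The schedule \(\alpha_h=(3(H-h)+1)/(3H+1)\) is chosen precisely so that every term is at most \(T^{1-3/(3H+1)}\). The worst cases are \(h=1\), where \(1/\eta_1=T^{(3H-2)/(3H+1)}\), and \(h=H\), where the optimism-squared term is about \(\eta_H^3T\) with \(\eta_H=T^{-1/(3H+1)}\). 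I expect this to be the delicate part: optimism and the negative stability term must absorb the later-layer variation, while the history-dependent comparator must be handled through its state-marginalization.

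\textbf{Step 4: conclusion.} Collecting terms gives \(\mathrm{Reg}_T^{\mathrm{gen},i}\le C_*^{\mathrm{SE}}T^{1-3/(3H+1)}\). Setting \(T=T_\varepsilon\) gives \(C_*^{\mathrm{SE}}T^{-3/(3H+1)}\le\varepsilon\). Linearity of \(V^{i,\cdot}\) in the empirical mixture \(\widehat\Pi_T\) then yields the \(\varepsilon\)-CCE statement.
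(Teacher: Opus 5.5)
Your architecture matches the paper's. Step~1 is Lemma~\ref{lem:history-vdiff}, the optimistic weighted bound is Lemma~\ref{lem:weighted-oomd}, and the weight drift is Lemma~\ref{lem:history-tv}. The $Q$-variation is Lemmas~\ref{lem:q-perturb} and~\ref{lem:qvar}, fed by the movement bound of Lemma~\ref{lem:movement}. The balancing via $\eta_{h-1}/\eta_h=T^{-3/(3H+1)}$ and $1-3\alpha_H=\beta$ is identical.

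The one real difference is how you treat the history-dependent comparator. You marginalize onto the final state, which creates a time-varying comparator $y^s_{t,h}$ and forces you to use Lipschitzness of $D_i(\cdot,x)$ in its first argument. The paper never aggregates. It applies Lemma~\ref{lem:weighted-oomd} separately to each history $\tau_h$, with the \emph{fixed} comparator $\mu^i_h(\cdot\mid\tau_h)$ and weights $q_t=d^{i,\mu}_{t,h}(\tau_h)$; the iterates at $s(\tau_h)$ simply serve every history ending there. This uses only $D_i\le B_i$ and gives drift cost exactly $B_i\TV^{i,\mu}_h/\eta_h$. By linearity, $w_{t,h}(s)\langle Q,x-y^s_{t,h}\rangle=\sum_{\tau_h:\,s(\tau_h)=s}d^{i,\mu}_{t,h}(\tau_h)\langle Q,x-\mu^i_h(\cdot\mid\tau_h)\rangle$, so your aggregation can always be undone and does not improve the worst-case bound. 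Done your way, the summation by parts of $\sum_t w_{t,h}(s)[D_i(y^s_{t,h},\widetilde x_{t-1})-D_i(y^s_{t,h},\widetilde x_t)]$ costs up to $B_i+2\log(|A_i|+1)$ per unit of history TV instead of $B_i$. You therefore get the same $T^\beta$ rate, but not the coefficient $12mH^3\log(|A_i|+1)$ that your Step~4 claims.

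There are also two smaller points. First, the three-point inequality needs a comparator in $\Delta(A_i)$, so apply it with $y^s_{t,h}$ and then multiply by $w_{t,h}(s)$, rather than plugging in an unnormalized $u_t$. The leading $1$ and the squared-variation term must carry $w_{1,h}(s)$ and $w_{t,h}(s)$, or summing over $s$ costs a factor $|S_h|$. Second, the negative stability term plays no role. The $Q$-variation bound $O(mH^3\eta_H)$ already holds deterministically from the prox-Lipschitz movement estimate, so nothing needs to be absorbed and the step you flag as delicate is routine.
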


\begin{proof}[Proof Outline.]

We provide here an overview of the proof of Theorem~\ref{thm:main}.  The full
argument is deferred to Appendix~\ref{app:proof-main}.  The proof follows three main steps.  First,
the Markov-game regret is rewritten as a sum of local weighted regret terms.
Second, the movement of the hidden weights and the movement of the \(Q\)-functions
are controlled through the stability of the policy sequences.  Third, the
layer-wise learning-rate schedule is chosen so that all resulting terms have
the same order.

\vspace{6pt}
\noindent
For a player \(i\) and a history-dependent policy \(\mu^i\in\Pi_i^{\mathrm{gen}}\),
let \(d_{t,h}^{i,\mu}\in\Delta(\mathcal H_h)\) denote the distribution over
histories at layer \(h\), \(\tau_h\), generated by \(\mu^i\) and the opponents' episode-\(t\)
Markov profile \(\bm{\pi}_t^{-i}\). For each layer \(h\), define the local weighted
regret
\[
    R_{i,h}(\mu^i)
    :=
    \sum_{\tau_h\in\mathcal H_h}
    \sum_{t=1}^T
    d_{t,h}^{i,\mu}(\tau_h)
    \left\langle
        Q_{t,h}^i(s(\tau_h),\cdot),
        \pi_{t,h}^i(\cdot\mid s(\tau_h))
        -
        \mu_h^i(\cdot\mid\tau_h)
    \right\rangle .
\]
The weighted value-difference decomposition, Lemma~\ref{lem:history-vdiff},
shows that the regret against \(\mu^i\) is exactly
\( 
    \sum_{t=1}^T
    \!\!\left[
        V^{i,\pi}_{t,h}(s_1)
        \!-\!
        V^{i,\mu^i}_{t,h}(s_1)
    \right]\!\!
    = \!\!
    \sum_{h=1}^H \!\!R_{i,h}(\mu^i).
\) 
Thus, the problem reduces to bounding \(R_{i,h}(\mu^i)\) for all $h$.  

\vspace{6pt}
\noindent
The first step is the weighted optimistic-online-mirror-descent bound in
Lemma~\ref{lem:weighted-oomd}. Applied to the learner at layer \(h\), and then
summed over histories, it leads to a bound of the form
\[
    R_{i,h}(\mu^i)
    \leq
    \frac{B_i(1+\mathrm{TV}_{i,h}^{\mu})}{\eta_h}
    +
    \frac{\eta_h\rho_i}{2}
    \sum_{\tau\in\mathcal H_h}
    \sum_{t=1}^T
    d_{t,h}^{i,\mu}(\tau)
    \left\|
        Q_{t,h}^i(s(\tau),\cdot)
        -
        Q_{t-1,h}^i(s(\tau),\cdot)
    \right\|_\infty^2 ,
\]
where
\( 
    \mathrm{TV}_{i,h}^{\mu}
    :=
    \sum_{t=1}^{T-1}
    \left\|
        d_{t+1,h}^{i,\mu}
        -
        d_{t,h}^{i,\mu}
    \right\|_1 .
\) 
The first term is controlled by the bounded Bregman diameter of the
smoothed-entropy regularizer.  The second term is where optimism enters: the
cost of learning depends on the variation of consecutive \(Q\)-functions rather
than on their magnitudes.

\vspace{6pt}
\noindent
The second step controls the variation of the hidden weights. Regarding the smooth-entropy regularizer, define
    \( 
        \rho_i := 1 + |A_i|\lambda_i,
        \;
        \bar\rho := \max_{i\in[m]} \rho_i,
    \) 
    and
    \( 
        B_i
        :=
        \rho_i
        \log\frac{1+\lambda_i}{\lambda_i}.
    \) 
By the
proximal Lipschitz property of the smoothed-entropy OOMD update given in Lemma \ref{lem:prox-lip}, Lemma~\ref{lem:movement}
shows that the one-step movement of player \(i\)'s layer-\(h\) policy satisfies
\[
    \max_{s\in S_h}
    \left\|
        \pi_{t+1,h}^i(\cdot\mid s)
        -
        \pi_{t,h}^i(\cdot\mid s)
    \right\|_1
    \le
    3\rho_iH\eta_h .
\]
Then, Lemma~\ref{lem:history-tv} uses the layered structure of the game to show
that the layer-\(h\) history weights depend only on policy movement in earlier
layers:
\( 
    \mathrm{TV}_{i,h}^{\mu}
    \le
    3m\rhobar H T
    \sum_{\ell<h}\eta_\ell .
\) 
Since
earlier layers use smaller step sizes, their movement is slow enough to keep
the later-layer visitation weights stable.

\vspace{6pt}
\noindent
The third step of the proof controls the \(Q\)-variation term.  Lemma~\ref{lem:q-perturb}
shows that a small change in the product Markov profile leads to a small change
in \(Q\)-functions.  Combining this perturbation estimate with the
policy-movement bound leads to Lemma~\ref{lem:qvar}, namely
\[
    \left\|
        Q_{t,h}^i(s,\cdot)
        -
        Q_{t-1,h}^i(s,\cdot)
    \right\|_\infty^2
    \le
    36m^2\rhobar^2H^6\eta_H^2
    \qquad (t\ge 2),
\]
with the \(t=1\) round bounded separately by \(H^2\).

\vspace{6pt}
\noindent
Substituting these estimates into the weighted OOMD bound yields, up to
absolute constants,
\[
    R_{i,h}(\mu^i)
    \leq
    \frac{B_i}{\eta_h}
    +
    \frac{3m\bar\rho HB_i}{\eta_h}
    \left(
        T\sum_{\ell<h}\eta_\ell
    \right)
    +
    \frac{1}{2}\eta_h\rho_iH^2
    +
    18 \eta_h\rho_i m^2\rhobar^2H^6\eta_H^2T .
\]
The learning-rate schedule
\( 
    \eta_h=\eta_0T^{-\alpha_h},
    \;
    \alpha_h=\frac{3(H-h)+1}{3H+1},
\) 
is chosen precisely to balance these terms after summing over layers. Therefore the total regret against any \(\mu^i\in\Pi_i^{\mathrm{gen}}\) is
bounded by
\[
    \left(
        \frac{H B_i}{\eta_0}
        +
        3mB_i\rhobar H^3
        +
        \frac12\eta_0\rho_iH^3
        +
        18\eta_0^3\rho_i m^2\rhobar^2H^7
    \right)T^\beta ,
\]
where \(\beta=\frac{3H-2}{3H+1}\). Finally, taking the supremum over \(\mu^i\) and choosing \(\lambda_i=1/|A_i|\) gives
\( 
    \Reg^{\mathrm{gen},i}_T \le C_i^{\mathrm{SE}}T^\beta.
\) 
Since \(\beta-1=-3/(3H+1)\), choosing
\( 
    T_\varepsilon
    :=
    \left\lceil
    \left(
        \frac{C_*^{\mathrm{SE}}}{\varepsilon}
    \right)^{(3H+1)/3}
    \right\rceil,
    \;
    C_*^{\mathrm{SE}}:=\max_{i\in[m]} C_i^{\mathrm{SE}},
\) 
ensures that, for every player \(i\in[m]\),
\( 
    \Reg^{\mathrm{gen},i}_{T_\varepsilon} 
    \le \varepsilon T_\varepsilon .
\) 
The CCE guarantee then follows from the standard no-regret-to-CCE equivalence applied to the
empirical distribution over the generated policy profiles.

\end{proof}

\subsection{Episodic Markov games under partial feedback}
\label{subsec:episodic-partial-feedback}

In this subsection, we extend the results of Subsection \ref{subsec-full} to the partial-feedback setting. The primary distinction between the full-feedback and partial-feedback settings is the restrictions on the accessibility to \(Q\)-functions. We tackle this difficulty by estimating the \(Q\)-functions as follows. First, players will maintain a small action-exploration floor throughout the play. This action-exploration floor is maintained by taking mirror descent steps over the simplex
\[
    \Delta_i^\zeta
    :=
    \left\{
        x\in\Delta(A_i):
        x(a_i)\ge \frac{\zeta}{|A_i|}
        \text{ for every }a_i\in A_i
    \right\}
\]
instead of the full simplex \(\Delta(A_i)\), where \(\zeta\in(0,1]\). Second, unlike the full-information case, the players update policies in \emph{blocks}, which is inspired by \citet{ErezLancewickiShermanKorenMansour2023}. This update proceeds as follows. During a block, the
same product Markov profile is repeated for \(B\) independent episodes, and the
realized trajectories are used to estimate the \(Q\)-functions for that
block by visiting each state-action \((s,a)\) pair sufficiently many times. The policy is then updated using the estimated \(Q\)-functions.

\vspace{6pt}
\noindent
To ensure that every player state is reached often enough throughout the play, we impose the following reachability condition in a similar way to \citet{ErezLancewickiShermanKorenMansour2023, WeiEtAl2021}.

\begin{definition}[\((\kappa,\zeta)\)-reachability]
\label{def:kappa-zeta-reachability}
A finite-horizon Markov game is said to be \((\kappa,\zeta)\)-reachable if, for every Markov policy profile \(\bm\pi\) satisfying
\( 
\pi_h^i(a_i\mid s)
\ge
\frac{\zeta}{|A_i|}
\) 
for all \(i\in[m]\), \(h\in[H]\), \(s\in S_h\), and \(a_i\in A_i\), the induced state-occupancy probabilities satisfy
\( 
\Pr_{\bm\pi}(s_h=s)
\ge
\kappa
\)
for all \(h\in[H]\) and \(s\in S_h\).
\end{definition}

\vspace{6pt}
\noindent
At block \(k\), all players use the same product Markov profile \(\pi_k\) for
\(B\) independent episodes.  For episode \(r\in[B]\) in block \(k\), we write the
realized trajectory as
\( 
    (s_{1}^{k,r},\bm a_{1}^{k,r},s_{2}^{k,r},\bm a_{2}^{k,r},
    \ldots,s_{H}^{k,r},\bm a_{H}^{k,r},s_{H+1}^{k,r}).
\) 
The realized tail cost of player \(i\) at layer \(h\) is denoted by
\(
    Y_{h}^{i,k,r}
    :=
    \sum_{\tau=h}^{H}
        \ell_\tau^i(s_{\tau}^{k,r},\bm a_{\tau}^{k,r}).
\) 
By \(N_{k,i}\) we denote the number of times that player \(i\) visits \((s,a)\in S_h \in A_i\) during the play, which can be formally written as
\begin{equation}\label{eq:quasocount}
    N_{k,i}(s,a_i)
    :=
    \sum_{r=1}^{B}
    \mathbf 1\{s_h^{k,r}=s,\ a_{h}^{i,k,r}=a_i\}.
\end{equation}
Then, we define the empirical estimate of the \(Q\)-function of player \(i\) at block \(k\) as
\[
    \widehat Q_{k,h}^i(s,a_i)
    :=
    \begin{cases}
    \displaystyle
    \frac{1}{N_{k,i}(s,a_i)}
    \sum_{r=1}^{B}
    \mathbf 1\{s_h^{k,r}=s,\ a_{h}^{i,k,r}=a_i\}
    Y_h^{i,k,r},
    & N_{k,i}(s,a_i)>0,\\[2ex]
    0, & N_{k,i}(s,a_i)=0.
    \end{cases}
\]
\noindent 
Our Algorithm in the partial-feedback setting, compared to Algorithm~\ref{alg:layered-omd}, differs in only two aspects: the use of a blocked estimation procedure and the replacement of exact values with their corresponding estimates; see Algorithm \ref{alg:blocked-bandit-oomd}.

\begin{algorithm}[H]
\caption{Blocked Layered OOMD}
\label{alg:blocked-bandit-oomd}
\begin{algorithmic}[1]
\State \textbf{Input:} finite-horizon game \(G\), horizon \(H\), number of blocks
\(K\), block length \(B\), exploration floor \(\zeta\in(0,1]\), base step-size
\(\eta_0\in(0,1]\).
\State Define
\( 
    \alpha_h:=\frac{3(H-h)+1}{3H+1},
    \;
    \eta_h:=\eta_0K^{-\alpha_h}, \; \bar x^i(a_i):=\frac{1}{|A_i|},\lambda_i=1/|A_i|,
    \; a_i\in A_i, h\in[H].
\) 
\State For every \(i\in[m]\), \(h\in[H]\), and \(s\in S_h\), set
\( 
    \widetilde x_{0,h}^{i,s} = \bar x^i\in
    \argmin_{x\in\Delta_i^\zeta}\Psi_i(x),
    \;
    \widehat Q_0^i(s,\cdot)\equiv 0\in\mathbb R^{A_i}.
\) 
\For{\(k=1,\ldots,K\)}
    \For{each player \(i\in[m]\), layer \(h\in[H]\), and state \(s\in S_h\)}
        \State 
        \( 
            x_{k,h}^{i,s}
            =
            \argmin_{x\in\Delta_i^\zeta}
            \left\{
                \eta_h\bigl\langle \widehat Q_{k-1,h}^i(s,\cdot),x\bigr\rangle
                +D_i\bigl(x,\widetilde x_{k-1,h}^{i,s}\bigr)
            \right\}.
        \) 
        \State Set \(\pi_{k,h}^i(\cdot\mid s)=x_{k,h}^{i,s}(\cdot)\).
    \EndFor
    \State Execute \(B\) independent episodes using the fixed product profile
    \(\pi_k=(\pi_k^1,\ldots,\pi_k^m)\).
    \State Each player \(i\) forms the trajectory estimate
    \(\widehat Q_{k,h}^i(s,a_i)\) for all \(h\in[H]\), \(s\in S_h\), and
    \(a_i\in A_i\).
    \For{each player \(i\in[m]\), layer \(h\in[H]\), and state \(s\in S_h\)}
        \State 
        \( 
            \widetilde x_{k,h}^{i,s}
            =
            \argmin_{x\in\Delta_i^\zeta}
            \left\{
                \eta_h\bigl\langle \widehat Q_{k,h}^i(s,\cdot),x\bigr\rangle
                +D_i\bigl(x,\widetilde x_{k-1,h}^{i,s}\bigr)
            \right\}.
        \) 
    \EndFor
\EndFor
\end{algorithmic}
\end{algorithm}

\vspace{6pt}
\noindent
For the regret guarantee of Algorithm~\ref{alg:blocked-bandit-oomd}, we first introduce the following:
\[
    \widetilde C_*^{\mathrm{SE}}
    \le
    \max_{i\in[m]}
    \left\{
        \frac{2H\log(|A_i|+1)}{\eta_0}
        +12mH^3\log(|A_i|+1)
        +\eta_0H^3
        +288\eta_0^3m^2H^7
    \right\},
    \quad
    \widetilde C_*^{\mathrm{SE}}
    :=
    \max_{i\in[m]}\widetilde C_i^{\mathrm{SE}}.
\]
The constant above differs from the constant $C_*^{\mathrm{SE}}$ in the full-feedback case only in the last term, which is due to the \(Q\)-function estimations that we use. The following result is the corresponding regret guarantee of Algorithm~\ref{alg:blocked-bandit-oomd}.

\begin{theorem}
\label{thm:episodic-partial-feedback}
Suppose the underlying Markov game is \((\kappa,\zeta_{\epsilon})\)-reachable, where \(\zeta_\varepsilon:=\frac{\varepsilon}{8H^2}\). Fix \(\varepsilon,\delta\in(0,1]\). \vspace{2pt} Choose
\( 
    K_\varepsilon
    :=
    \left\lceil
        \left(
            \frac{4\widetilde C_*^{\mathrm{SE}}}{\varepsilon}
        \right)^{(3H+1)/3}
    \right\rceil, B_\varepsilon
    :=
    \left\lceil
        \frac{8A_{\max}}{\kappa\zeta_\varepsilon}
        \bigl(n_\varepsilon+u_\varepsilon\bigr)
    \right\rceil,
\) 
where \vspace{2pt}
\( 
    M_\varepsilon
    :=
    K_\varepsilon m A_{\max}S,
    \xi_\varepsilon
    :=
    \min\left\{
        \frac{\varepsilon}{8H},
        \sqrt{\frac{\varepsilon}{32\eta_0 H}}
    \right\}, u_\varepsilon:=\log\frac{4M_\varepsilon}{\delta},
    n_\varepsilon
    :=
    \left\lceil
        \frac{H^2}{2\xi_\varepsilon^2}u_\varepsilon
    \right\rceil.
\) 
Then, running  Algorithm~\ref{alg:blocked-bandit-oomd} for every player \(i\in[m]\), for \(
    N_\varepsilon
    =
    \widetilde{O}
    \left(
        \frac{A_{\max}H^2}{\kappa\varepsilon}
        \left(\frac{H^2}{\xi_\varepsilon^2}+1\right)
        \left(
            \frac{\widetilde C_*^{\mathrm{SE}}}{\varepsilon}
        \right)^{(3H+1)/3}
    \right)
\) episodes\vspace{2pt} results in
\( \Reg^{\mathrm{gen},i}_{N_\varepsilon}
    \le
    \varepsilon N_\varepsilon,
\) with probability at least \vspace{2pt}
\(1-\delta\).  Consequently, the empirical distribution
\( 
    \widehat\Pi_{N_\varepsilon}
\) 
is an \(\varepsilon\)-approximate CCE.  
\end{theorem}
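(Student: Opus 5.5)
The plan is to reduce Theorem~\ref{thm:episodic-partial-feedback} to the full-feedback analysis of Theorem~\ref{thm:main}, run at the level of blocks, plus three additive error sources: the exploration floor, the $Q$-estimation error, and the block structure.

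\textbf{Step 1: episode regret versus block regret.} Since the profile $\bm\pi_k$ is frozen within a block, the episode regret is exactly $B$ times the block regret:
\[
\Reg^{\mathrm{gen},i}_{N_\varepsilon}=B_\varepsilon\sup_{\mu^i}\sum_{k=1}^{K_\varepsilon}\bigl[V^{i,\bm\pi_k}(s_1)-V^{i,\mu^i\odot\bm\pi_k^{-i}}(s_1)\bigr],
\]
with $N_\varepsilon=K_\varepsilon B_\varepsilon$. It therefore suffices to show that the block regret is at most $\varepsilon K_\varepsilon$.

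\textbf{Step 2: decomposition and the exploration floor.} By Lemma~\ref{lem:history-vdiff}, the block regret equals $\sum_h R_{i,h}(\mu^i)$, with the weights $d^{i,\mu}_{k,h}$ induced by $\mu^i$ and $\bm\pi_k^{-i}$.

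The comparator $\mu_h^i(\cdot\mid\tau)$ may leave $\Delta_i^\zeta$. I would replace it by its projection $\mu_\zeta:=(1-\zeta)\mu+\zeta\bar x^i$. The inner-product cost of this replacement is at most $\zeta H\cdot\|\mu-\bar x^i\|_1\le 2\zeta H$ per layer and block. Summed over $h$ this gives $2\zeta H^2K$, and with $\zeta_\varepsilon=\varepsilon/(8H^2)$ it is $\varepsilon K/4$.

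\textbf{Step 3: the biased weighted OOMD bound.} For each $h$, I write $\widehat Q=Q+e$, where $e_{k,h}$ is the estimation error. I then rerun Lemma~\ref{lem:weighted-oomd} on $\Delta_i^\zeta$ with losses $\widehat Q$. Lemmas~\ref{lem:prox-lip} and~\ref{lem:movement} hold verbatim over the convex subset $\Delta_i^\zeta$, since $\|\widehat Q\|_\infty\le H$. As a result, the bounds of Lemma~\ref{lem:history-tv} on $\mathrm{TV}^\mu_{i,h}$ are unchanged.

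The two places where $e$ enters are as follows.
\begin{itemize}
\item The linear term $\sum_k\langle e_{k,h},\pi_k-\mu_\zeta\rangle$ is bounded by $2\xi K$ per layer. Summed over layers it is $2H\xi K\le\varepsilon K/4$, using $\xi_\varepsilon\le\varepsilon/(8H)$.
\item The optimism term uses $\|\widehat Q_k-\widehat Q_{k-1}\|_\infty^2\le 2\|Q_k-Q_{k-1}\|_\infty^2+8\xi^2$. The first part doubles the constant of Lemma~\ref{lem:qvar}; this is where $144$ becomes $288$ and produces $\widetilde C^{\mathrm{SE}}_*$. The second part contributes $\sum_h\eta_h\rho_i\cdot 4\xi^2K\le 8\eta_0H\xi^2K$, using $\eta_h\le\eta_0$ and $\rho_i=2$. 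This is at most $\varepsilon K/4$ by $\xi_\varepsilon^2\le\varepsilon/(32\eta_0H)$.
\end{itemize}
The remaining terms reproduce the argument of Theorem~\ref{thm:main}. They give $\widetilde C_i^{\mathrm{SE}}K^\beta\le\varepsilon K/4$ by the choice of $K_\varepsilon$ (the factor $4$ inside). Summing the four contributions yields $\varepsilon K$.

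\textbf{Step 4: concentration, the main obstacle.} I need $\|e_{k,h}^i(s,\cdot)\|_\infty\le\xi$ simultaneously for all $(k,i,h,s,a)$, a family of at most $M_\varepsilon$ pairs, with probability $1-\delta$. I would handle this in two stages.
\begin{itemize}
\item \emph{Visit counts.} Every played profile lies in the floor, so by $(\kappa,\zeta)$-reachability each $(s,a_i)$ is visited in one episode with probability at least $\kappa\zeta/A_{\max}$. A multiplicative Chernoff bound then gives $N_{k,i}(s,a_i)\ge n_\varepsilon$ once $B\ge 8A_{\max}(n_\varepsilon+u_\varepsilon)/(\kappa\zeta)$, except with probability $\delta/(4M)$ per pair.
\item \emph{Averages.} Conditional on visits, the tail costs $Y_h^{i,k,r}$ are i.i.d. in $[0,H]$ with mean $Q_{k,h}^i(s,a_i)$, by the Markov property of the product policy. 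I would make this rigorous by a stopping-time, or ``stack of samples'', argument, since conditioning on random counts is the delicate point. Hoeffding with $n_\varepsilon=\lceil H^2u/(2\xi^2)\rceil$ then gives deviation at most $\xi$, again except with probability $\delta/(4M)$ per pair.
\end{itemize}
A union bound over all pairs gives total failure probability at most $\delta$.

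Finally, the episode count is $N_\varepsilon=K_\varepsilon B_\varepsilon$, which matches the stated $\widetilde O$ expression after substituting $\zeta_\varepsilon$. The CCE claim then follows by the same no-regret-to-CCE equivalence used in Theorem~\ref{thm:main}.
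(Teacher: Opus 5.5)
Your proposal is correct and follows the paper's proof closely in every step except one: the block-to-episode reduction, the Chernoff bound on visit counts, conditional Hoeffding and a union bound over the \(M_\varepsilon\) tuples (Lemma~\ref{lem:pf-uniform-q-estimation}), the weighted OOMD bound rerun on \(\Delta_i^\zeta\), the split \((a+2\xi)^2\le 2a^2+8\xi^2\) that turns \(144\) into \(288\), and the four-way \(\varepsilon/4\) budget of Proposition~\ref{prop:pf-accurate-estimates} all match. On the concentration step, conditioning on the indicator vector \((X_1,\ldots,X_B)\) within a block is already rigorous, as the paper does, so no stopping-time argument is needed.

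The genuine difference is how you absorb the exploration floor. The paper proves a value-level bound, Lemma~\ref{lem:pf-comparator-smoothing}, by a hybrid argument that swaps \(\mu^i\) for \(\mu^{i,\zeta}\) one layer at a time. It then runs the whole analysis against \(\mu^{i,\zeta}\), using that policy's own occupancy weights \(d^{i,\mu^\zeta}_{k,h}\). You instead keep the exact decomposition of Lemma~\ref{lem:history-vdiff} for the original \(\mu^i\), with weights \(d^{i,\mu}_{k,h}\). You swap the comparator only inside each local inner product, paying \(\zeta\,|\langle Q^i_{k,h},\bar x^i-\mu^i_h(\cdot\mid\tau)\rangle|\le 2\zeta H\) pointwise.

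This is legitimate. Lemma~\ref{lem:weighted-oomd} needs only a comparator \(u\in\Delta_i^\zeta\) that is fixed across blocks and weights \(q_k\in[0,1]\); the weights need not be the occupancy measure of \(u\). Lemma~\ref{lem:history-tv} controls the variation of \(d^{i,\mu}_{k,h}\) for every \(\mu^i\).

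Your route is shorter. It also avoids needing a performance-difference identity between two history-dependent policies, which the paper's hybrid step implicitly uses, since Lemma~\ref{lem:history-vdiff} as stated compares a Markov policy with a general one. The paper's route, in exchange, gives a standalone fact: restricting deviations to the floor costs at most \(2\zeta H^2\) in value.

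One small slip: two-sided Hoeffding with \(n_\varepsilon\) gives failure probability \(2e^{-u_\varepsilon}=\delta/(2M_\varepsilon)\) per tuple, not \(\delta/(4M_\varepsilon)\). The union bound therefore yields \(3\delta/4\), which is still at most \(\delta\).
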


\noindent
Thus, our regret guarantees for the partial-feedback implementation of Algorithm~\ref{alg:blocked-bandit-oomd} inherits the same exponential dependence on the horizon length as the one we had in the full-feedback finite-horizon case, together with an additional polynomial overhead arising from exploration, and value-function estimation parts.

\begin{proof}[Proof Outline.]
The proof follows the same stability argument as in the full-feedback case,
with two additional steps.  First, because the algorithm uses only
partial feedback, it relies on an estimation of the exact $Q$-function values. The exploration floor
and the \((\kappa,\zeta)\)-reachability ensure that, during each
block, every state-action pair \((s,a_i)\) is sampled often enough. Later, Lemma~\ref{lem:pf-uniform-q-estimation}
shows that, if the block length is chosen as in Theorem \ref{thm:episodic-partial-feedback}, i.e. \( B_\varepsilon
    :=
    \left\lceil
        \frac{8A_{\max}}{\kappa\zeta_\varepsilon}
        \bigl(n_\varepsilon+u_\varepsilon\bigr)
    \right\rceil\), then with probability at
least \(1-\delta\), the \(Q\)-functions can be estimated uniformly with the desired precision, i.e.,
\( 
    \max_{k,i,s,a_i}
    \left|
        \widehat Q_{k,h}^i(s,a_i)-Q^{i}_{k,h}(s,a_i)
    \right|
    \le
    \xi_\varepsilon .
\) 

\vspace{6pt}
\noindent
Second, the imposed action floor for exploration restricts the policies played by the algorithm to
\(\Delta_i^{\zeta_\varepsilon}\).  To compare policies within \(\Delta_i^{\zeta_\varepsilon}\) against an arbitrary
history-dependent policy \(\mu^i\), we smooth the comparator by mixing it with
the uniform distribution.  Lemma~\ref{lem:pf-comparator-smoothing} then shows that
this comparison changes the value by at most \(2\zeta H^2\).  Thus, the error bound obtained due to the exploration is controlled by choosing \(\zeta_\varepsilon=\varepsilon/(8H^2)\).

\vspace{6pt}
\noindent
On the \emph{good event} that all \(Q\)-estimates are uniformly accurate, the proof then
runs the full-feedback OOMD analysis with \(\widehat Q_{k,h}^i\) in place of
\(Q_{k,h}^i\).  Then, Proposition~\ref{prop:pf-accurate-estimates} gives the following regret bound:
\[
\begin{aligned}
    \operatorname{Reg}^{\mathrm{gen},i}_N \!\!= \!\!\sup_{\mu^i\in\Pi_i^{\mathrm{gen}}}
    \frac1K\sum_{k=1}^K
    \left[
        V^{i,\pi_k}(s_1)
        \!-\!
        V^{i,\mu^i\odot\pi_k^{-i}}(s_1)
    \right]
    \!\le
    \widetilde C_i^{\mathrm{SE}}K^{-3/(3H+1)}
    \!+\!2H\xi
    \!+\!2\zeta H^2
    \!+\!
    8\eta_0H\xi^2K^{-\alpha_H}.
\end{aligned}
\]
The four terms have clear roles: the first is the finite-horizon OOMD error,
the second is the error from replacing exact \(Q\)-functions by (uniform)
estimates, the third is the cost of smoothing the comparator, and the last one is
an additional variation term caused by using estimated \(Q\)-functions in the
optimistic update.  The choices of \(K_\varepsilon\), \(\xi_\varepsilon\), and
\(\zeta_\varepsilon\) make these four terms at most \(\varepsilon/4\) each. Combining
this deterministic regret bound with high probability estimation guarantee given in Lemma~\ref{lem:pf-uniform-q-estimation} proves
Theorem~\ref{thm:episodic-partial-feedback}.
\end{proof}

\section{Approximating Equilibrium Discounted Markov Games}
\label{sec:discounted-truncation}

In this section, we extend the finite-horizon non-discounted Markov game regret guarantees from
Section~\ref{sec:episodic-algorithm} to discounted Markov games in both finite horizon and infinite horizon cases. The argument for this extension is independent of the feedback model, and it only requires a
finite-horizon regret guarantee on the truncated discounted game. With this extension, we show that the corresponding regret can be reduced to a \emph{sub-exponential} bound in the discounted case. We first
state this general transfer principle in Theorem \ref{thm:discounted-transfer}, and then instantiate it for the
full-feedback and partial-feedback algorithms in Corollaries \ref{cor:disc-full} and \ref{cor:disc-bandit} .

\vspace{6pt}
\noindent
\subsection{A transfer principle}
In the infinite-horizon discounted game case, a non-Markov nonstationary policy consists of infinitely many decision rules, and hence cannot be represented by a finite-sized object. Thus, a finite-output algorithm must either restrict attention to a compactly represented policy class, such as stationary policies, or approximate the discounted game by a finite-horizon truncation. It is known that, for general-sum discounted Markov games, computing stationary equilibria is already \(\mathsf{PPAD}\)-hard \citep{daskalakis2023}. We therefore follow the second route and work with \emph{sparse} distributions over finite-horizon product-policy prefixes. In the infinite-horizon case, each such prefix is extended beyond the truncation point by a fixed reference continuation. The same formulation also covers finite-horizon discounted games: when \(\bar H<\infty\), we choose a truncation length \(L\leq \bar H\), extend the learned \(L\)-step policies only if \(L<\bar H\), and the continuation is vacuous when \(L=\bar H\).

\vspace{6pt}
\noindent
Ultimately, discounting is what makes this finite approximation idea possible.  Since the
contribution of future layers decays geometrically, controlling the first
\(L\) discounted layers is sufficient to control the original discounted
objective up to a tail error.  This observation allows the finite-horizon
regret guarantees from Section~\ref{sec:episodic-algorithm} to be transferred
directly to discounted Markov games.  Specifically, we run the finite-horizon
algorithm on the \(L\)-step discounted truncated game, obtain a regret guarantee on
the first \(L\) layers, and then account separately for the remaining
discounted tail to approximate the equilibrium of the discounted game with horizon \(\bar H\). We now formalize this transfer argument as follows.

\begin{theorem}
\label{thm:discounted-transfer}
Let
\( 
    \bar H\in\mathbb N_+\cup\{\infty\},
\) \( \gamma \in (0,1), \)
and fix \(L\le \bar H\). Suppose that a learning procedure is run on the game \(G_{\gamma,\bar H}^{[L]}\) for \(N\) episodes, and outputs an
\(L\)-step product Markov profile at each episode,
\( 
    \{\bm\pi_1^{[L]},\ldots,\bm\pi_N^{[L]}\},
\) 
where
\( 
    \bm\pi_t^{[L]}\in
    \Pi^{\mathrm{markov},L}
    \;
    \text{for every } t\in[N].
\)  
Assume that, for every player
\(i\in[m]\), the learning procedure satisfies
\[
    \sup_{\nu^i\in\Pi_i^{\mathrm{gen},L}}
    \frac1N
    \sum_{n=1}^N
    \left[
        J_{i,L}^\gamma(\bm\pi_n^{[L]})
        -
        J_{i,L}^\gamma(\nu^i\odot\bm\pi_n^{[L],-i})
    \right]
    \le
    \varepsilon_{\mathrm{alg}} .
\]
For each \(n\in[N]\), let
\( 
    \bar{\bm\pi}_n
    :=
    \mathrm{Ext}_L(\bm\pi_n^{[L]}).
\) 
Then, 
for every player \(i\in[m]\),
\[
    \sup_{\mu^i\in\Pi_i^{\mathrm{gen},\bar H}}
    \frac1N
    \sum_{n=1}^N
    \left[
        J_{i,\bar H}^\gamma(\bar{\bm\pi}_n)
        -
        J_{i,\bar H}^\gamma(\mu^i\odot\bar{\bm\pi}_n^{-i})
    \right]
    \le
    \varepsilon_{\mathrm{alg}}
    +
    \tau_{L,\bar H}(\gamma).
\]
Equivalently, \( \frac1N\sum_{n=1}^N \delta_{\bar{\bm\pi}_n} \) is an
\((\varepsilon_{\mathrm{alg}}+\tau_{L,\bar H}(\gamma))\)-approximate
discounted CCE over horizon \(\bar H\).
\end{theorem}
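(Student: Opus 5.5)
Fix a player \(i\) and an arbitrary deviation \(\mu^i\in\Pi_i^{\mathrm{gen},\bar H}\). The plan is to split every discounted objective into its first \(L\) layers and its tail, bound the tail difference by \(\tau_{L,\bar H}(\gamma)\), and bound the prefix difference by the assumed regret guarantee applied to the truncated deviation \(\nu^i:=\mu^{i,[L]}\). Concretely, for any \(\bar H\)-horizon profile \(\bm\sigma\),
\[
J_{i,\bar H}^\gamma(\bm\sigma)=J_{i,L}^\gamma(\bm\sigma)+\mathbb E^{\bm\sigma}\Bigl[\sum_{h=L+1}^{\bar H}\gamma^{h-1}\ell_h^i(s_h,\bm a_h)\Bigr],
\]
where the tail term lies in \([0,\tau_{L,\bar H}(\gamma)]\) because \(\ell_h^i\in[0,1]\). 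In the infinite-horizon case this rearrangement is justified by monotone (or dominated) convergence, since the series converges absolutely.

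The first key step is a prefix-consistency claim: the law of the first-\(L\)-layer trajectory \((s_1,\bm a_1,\dots,s_L,\bm a_L)\) under \(\mu^i\odot\bar{\bm\pi}_n^{-i}\) coincides with its law under \(\mu^{i,[L]}\odot\bm\pi_n^{[L],-i}\) in \(G_{\gamma,\bar H}^{[L]}\). Similarly, its law under \(\bar{\bm\pi}_n\) coincides with its law under \(\bm\pi_n^{[L]}\). This holds because the trajectory through layer \(L\) is generated by the decision rules \(\mu^i_h,\pi^j_{n,h}\) for \(h\le L\) and the kernels \(P_h\) for \(h<L\), none of which depend on layers after \(L\). \(\mathrm{Ext}_L\) only modifies rules at layers \(>L\), and \(\mu^{i,[L]}\) is history-dependent on \(\mathcal H_h\) for \(h\le L\), which is the same history set in the truncated game. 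An induction over \(h\) on the joint history distribution gives the claim. Since the truncated game's costs are \(c_h^i=\gamma^{h-1}\ell_h^i\), this yields
\[
J_{i,L}^\gamma(\bar{\bm\pi}_n)=J_{i,L}^\gamma(\bm\pi_n^{[L]})\quad\text{and}\quad J_{i,L}^\gamma(\mu^i\odot\bar{\bm\pi}_n^{-i})=J_{i,L}^\gamma(\mu^{i,[L]}\odot\bm\pi_n^{[L],-i}).
\]

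Combining the two steps, for each \(n\),
\[
J_{i,\bar H}^\gamma(\bar{\bm\pi}_n)-J_{i,\bar H}^\gamma(\mu^i\odot\bar{\bm\pi}_n^{-i})\le J_{i,L}^\gamma(\bm\pi_n^{[L]})-J_{i,L}^\gamma(\mu^{i,[L]}\odot\bm\pi_n^{[L],-i})+\tau_{L,\bar H}(\gamma).
\]
This uses that the tail of the first profile is at most \(\tau_{L,\bar H}(\gamma)\) and the tail of the deviating profile is at least \(0\). Averaging over \(n\) and invoking the hypothesis with \(\nu^i=\mu^{i,[L]}\in\Pi_i^{\mathrm{gen},L}\) bounds the right side by \(\varepsilon_{\mathrm{alg}}+\tau_{L,\bar H}(\gamma)\). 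Taking the supremum over \(\mu^i\) and then all \(i\) gives the regret statement. The CCE statement is immediate by linearity: by the definitions of \(J_{i,\bar H}^\gamma(\mathfrak P)\) and \(J_{i,\bar H}^\gamma(\mu^i\odot\mathfrak P^{-i})\) for \(\mathfrak P=\frac1N\sum_n\delta_{\bar{\bm\pi}_n}\), the averaged expression is exactly the CCE gap in Definition~\ref{def:disc-cce-unified}. When \(L=\bar H\), the tail is empty, \(\tau=0\), and the argument degenerates correctly.

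The main obstacle is the prefix-consistency step in the infinite-horizon case: one must make sure the trajectory measure on infinite paths is well defined (via Ionescu–Tulcea) and that its marginal on the first \(L\) layers is the truncated-game measure. I would handle this by noting that the cylinder-set probabilities up to layer \(L\) are given by the same finite product formula in both games. The rest is bookkeeping; notably, no property of \(\bm\pi^{L,\mathrm{ref}}\) beyond its existence is used, since the tail is bounded crudely.
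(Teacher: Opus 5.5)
Your proposal is correct and follows essentially the same route as the paper. The paper likewise bounds the tail by $\tau_{L,\bar H}(\gamma)$ for the learned profile and uses nonnegativity of costs for the deviating profile (Lemmas~\ref{lem:discounted-tail-unified} and~\ref{lem:discounted-single-policy-transfer}), then averages over $n$ and applies the hypothesis with $\nu^i=\mu^{i,[L]}$; your prefix-consistency and Ionescu--Tulcea remarks just make explicit a step the paper asserts directly.
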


\begin{proof}[Proof Outline.]
The complete proof is deferred to Appendix~\ref{app:discounted-truncation}. To
compare an extended learned policy \(\bar{\bm\pi}_n=\mathrm{Ext}_L(\bm\pi_n^{[L]})\)
against an arbitrary \(\bar H\)-horizon deviation \(\mu^i\), the finite-horizon
algorithm only needs to compete with the first \(L\) decision rules of that
deviation,
\( 
    \mu^{i,[L]}.
\) 
The formal tail bound in Lemma~\ref{lem:discounted-tail-unified} leads to
\[
    0
    \le
    J_{i,\bar H}^\gamma(\bm\sigma)
    -
    J_{i,L}^\gamma(\bm\sigma^{[L]})
    \le
    \tau_{L,\bar H}(\gamma)
    \le
    \frac{\gamma^L}{1-\gamma}
\]
for every policy profile \(\sigma\).  Combining this estimate for the learned
policy with the non-negativity of costs leads to the one-step comparison
\[
\begin{aligned}
    J_{i,\bar H}^\gamma(\bar{\bm\pi}_n)
    -
    J_{i,\bar H}^\gamma(\mu^i\odot\bar{\bm\pi}_n^{-i})
    \le\;&
    J_{i,L}^\gamma(\bm{\pi}_n^{[L]})
    -
    J_{i,L}^\gamma(\mu^{i,[L]}\odot\bm{\pi}_n^{[L],-i})
    +
    \tau_{L,\bar H}(\gamma),
\end{aligned}
\]
which is Lemma~\ref{lem:discounted-single-policy-transfer}.  Averaging this
inequality over the policies produced by the finite-horizon algorithm and then
using the finite-horizon regret guarantee proves the theorem.\end{proof}

\vspace{6pt}
\noindent
The corollaries below instantiate Theorem~\ref{thm:discounted-transfer} with a similar choice of truncation length. For a target accuracy \(\varepsilon\), we choose
\( 
H_\varepsilon^\gamma
:=
\left\lceil
\frac{\log\left(\frac{2}{(1-\gamma)\varepsilon}\right)}
{\log(1/\gamma)}
\right\rceil ,
\)  
so that
\( 
\frac{\gamma^{H_\varepsilon^\gamma}}{1-\gamma}
\le
\frac{\varepsilon}{2}.
\) 
Thus, for a discounted game with horizon \(\bar H\in\mathbb N_+\cup{\infty}\), we set
\( 
L_\varepsilon := \min{\{\bar H,H_\varepsilon^\gamma\}}
\) 
to guarantee, 
\( 
\tau_{L_\varepsilon,\bar H}(\gamma)\le \frac{\varepsilon}{2}.
\) 
We, then run the corresponding finite-horizon algorithms, Algorithms \ref{alg:layered-omd} and \ref{alg:blocked-bandit-oomd}, on the \(L_\varepsilon\)-step truncated discounted game \(G_{\gamma,\bar H}^{[L_\varepsilon]}\) long enough to make its average regret at most \(\varepsilon/2\). The finite-horizon regret bounds in Theorems~\ref{thm:main} and~\ref{thm:episodic-partial-feedback} have exponents that depend on the horizon length, and thus replacing the original horizon by the effective discounted horizon \(L_\varepsilon\) is crucial. For every fixed \(\gamma<1\), \(H_\varepsilon^\gamma=O(\log(1/\varepsilon))\), and hence the resulting discounted-game bounds become quasi-polynomial in \(1/\varepsilon\), rather than exponential in the original horizon.

\subsection{Full-feedback and Partial-feedback consequences}
\label{subsec:discounted-full-feedback}

Let \(C_*^{\mathrm{SE}}(L)\) denote the full-feedback constant in
Theorem~\ref{thm:main} with the horizon parameter replaced by
\(L\), and set
\( 
    A_{\log}:=\max_{i\in[m]}\log(|A_i|+1).
\) Then, formally, we have the following guarantees. 

\begin{corollary}
\label{cor:disc-full}
Let
\( 
    \bar H\in\mathbb N_+\cup\{\infty\},
    \;
    \varepsilon\in(0,1],
\) 
and define
\( 
    H_\varepsilon^\gamma
    :=
    \left\lceil
        \frac{\log\!\left(\frac{2}{(1-\gamma)\varepsilon}\right)}
             {\log(1/\gamma)}
    \right\rceil .
\)  Choose
\( 
    T_\varepsilon
    :=
    \left\lceil
        \left(
            \frac{2C_*^{\mathrm{SE}}(L_\varepsilon)}{\varepsilon}
        \right)^{(3L_\varepsilon+1)/3}
    \right\rceil , 
\) 
\( 
    L_\varepsilon
    :=
    \min\{\bar H,H_\varepsilon^\gamma\}.
\)
Run Algorithm~\ref{alg:layered-omd} on the \(L_\varepsilon\)-step discounted
truncation
\( 
    G_{\gamma,\bar H}^{[L_\varepsilon]}
\) 
for \(T_\varepsilon\) episodes.  For each \(t\in[T_\varepsilon]\), let
\(
\bar{\bm\pi}_t
=
\mathrm{Ext}_{L_\varepsilon}(\bm\pi_t^{[L_\varepsilon]})
\).
Then, for every player \(i\in[m]\),
\[
    \sup_{\mu^i\in\Pi_i^{\mathrm{gen},\bar H}}
    \sum_{t=1}^{T_\varepsilon}
    \left[
        J_{i,\bar H}^\gamma(\bar{\bm\pi}_t)
        -
        J_{i,\bar H}^\gamma(\mu^i\odot\bar{\bm\pi}_t^{-i})
    \right]
    \le
    \varepsilon T_\varepsilon .
\]
Consequently,
\( 
    \frac1{T_\varepsilon}
    \sum_{t=1}^{T_\varepsilon}\delta_{\bar{\bm\pi}_t}
\) 
is an \(\varepsilon\)-approximate discounted CCE over horizon \(\bar H\). Moreover, for every fixed \(\gamma<1\),
\( 
    T_\varepsilon
     \le
        \left[
            \frac{
                \left(
                    \frac{A_{\log}}{\eta_0}
                    +
                    mA_{\log}
                    +
                    m^2
                    +
                    1
                \right)
                \min\{\bar H,\log(1/\varepsilon)\}^7
            }{
                \varepsilon
            }
        \right]^{
            O(\min\{\bar H,\log(1/\varepsilon)\})
        }.
\) 
\end{corollary}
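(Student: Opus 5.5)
The plan is to combine Theorem~\ref{thm:main}, applied to the truncated game, with the transfer principle of Theorem~\ref{thm:discounted-transfer}. Two preliminary checks are needed. First, \(G_{\gamma,\bar H}^{[L_\varepsilon]}\) is a legitimate finite-horizon Markov game in the sense of Section~\ref{sec:episodic-algorithm}. Its horizon is \(L_\varepsilon\) and its costs are \(c_h^i=\gamma^{h-1}\ell_h^i\in[0,1]\), so all the hypotheses of Theorem~\ref{thm:main} (costs in \([0,1]\), finite layered states) hold. Second, the undiscounted value of a profile in the truncated game coincides with the truncated discounted cost \(J_{i,L_\varepsilon}^\gamma\) of that profile. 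The general-policy regret \(\Reg^{\mathrm{gen},i}_T\) of Theorem~\ref{thm:main} is therefore exactly the quantity bounded in the hypothesis of Theorem~\ref{thm:discounted-transfer}, with \(\Pi_i^{\mathrm{gen},L_\varepsilon}\) as the deviation class.

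Next, I would invoke Theorem~\ref{thm:main} on this game, with target accuracy \(\varepsilon/2\) and horizon \(L_\varepsilon\). The theorem's prescribed episode count becomes \(\lceil (2C_*^{\mathrm{SE}}(L_\varepsilon)/\varepsilon)^{(3L_\varepsilon+1)/3}\rceil=T_\varepsilon\), which gives
\[
\sup_{\nu^i}\frac{1}{T_\varepsilon}\sum_t\bigl[J_{i,L_\varepsilon}^\gamma(\bm\pi_t^{[L_\varepsilon]})-J_{i,L_\varepsilon}^\gamma(\nu^i\odot\bm\pi_t^{[L_\varepsilon],-i})\bigr]\le\varepsilon/2 .
\]
For the tail term I split into two cases. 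If \(L_\varepsilon=\bar H\), then \(\tau_{L_\varepsilon,\bar H}(\gamma)=0\). Otherwise \(L_\varepsilon=H_\varepsilon^\gamma\), and by definition of \(H_\varepsilon^\gamma\) we have \(\gamma^{L_\varepsilon}\le (1-\gamma)\varepsilon/2\), hence \(\tau_{L_\varepsilon,\bar H}(\gamma)\le \gamma^{L_\varepsilon}/(1-\gamma)\le\varepsilon/2\). Theorem~\ref{thm:discounted-transfer} then yields average regret at most \(\varepsilon\). Multiplying by \(T_\varepsilon\) gives the displayed bound, and the CCE claim follows directly from Definition~\ref{def:disc-cce-unified}.

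The remaining work is the quantitative bound on \(T_\varepsilon\), and I expect this to be the main, though routine, obstacle. Write \(L=L_\varepsilon\). Each term of \(C_*^{\mathrm{SE}}(L)\) is at most \((A_{\log}/\eta_0+mA_{\log}+m^2+1)\,O(L^7)\); this requires \(\eta_0\le1\) so that \(\eta_0^3\) and \(\eta_0\) are at most 1. For fixed \(\gamma\), \(H_\varepsilon^\gamma\le c_\gamma\log(1/\varepsilon)+c'_\gamma\), and therefore \(L=O(\min\{\bar H,\log(1/\varepsilon)\})\), with constants depending only on \(\gamma\). The factor \(2\), the ceiling, and the polynomial constants are all absorbed into the \(O(\cdot)\) in the exponent \((3L+1)/3\). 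One care point is when \(\log(1/\varepsilon)\) is small, for instance at \(\varepsilon=1\). There I would replace \(\log(1/\varepsilon)\) by \(\max\{1,\log(1/\varepsilon)\}\) inside the \(O\), reading the displayed bound with that standard convention, so that \(L^7\) is dominated by a constant multiple of \(\min\{\bar H,\log(1/\varepsilon)\}^7\) raised to a power absorbed into the exponent.
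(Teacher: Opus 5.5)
Your proposal is correct and follows essentially the same route as the paper. You apply Theorem~\ref{thm:main} to the truncated game at accuracy $\varepsilon/2$, which reproduces $T_\varepsilon$, bound the tail with the same two-case split, and combine the two via Theorem~\ref{thm:discounted-transfer}; you then bound $T_\varepsilon$ using $C_*^{\mathrm{SE}}(L)=O(L^7)$ and $L_\varepsilon=O_\gamma(\min\{\bar H,\log(1/\varepsilon)\})$. Your two caveats are assumptions the paper's quantitative bound also relies on without stating them: that $\eta_0\le 1$, and that $\log(1/\varepsilon)$ is read as $\max\{1,\log(1/\varepsilon)\}$ near $\varepsilon=1$.
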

\begin{proof} See Appendix \ref{subsec:corollaryproof-1}
\end{proof}

\vspace{6pt}
\noindent 
We now apply Theorem \ref{thm:discounted-transfer} to the partial-feedback
case.  By \(\widetilde C_*^{\mathrm{SE}}(L)\) denote the finite-horizon constant from Theorem~\ref{thm:episodic-partial-feedback} with horizon
\(L\). For a target finite-horizon accuracy \(\bar\varepsilon\in(0,1]\), horizon
\(L\), and confidence level \(\delta\in(0,1)\), choose
\( 
    K_{\bar\varepsilon,L}
    :=
    \left\lceil
        \left(
            \frac{4\widetilde C_*^{\mathrm{SE}}(L)}{\bar\varepsilon}
        \right)^{(3L+1)/3}
    \right\rceil,\; 
    \zeta_{\bar\varepsilon,L}
    :=
    \frac{\bar\varepsilon}{8L^2},
    \;
    B_{\bar\varepsilon,L,\delta}
    \!:=\!
    \left\lceil\!
        \frac{8A_{\max}}{\kappa\zeta_{\bar\varepsilon,L}}
    \left(n_{\bar\varepsilon,L,\delta}+u_{\bar\varepsilon,L,\delta}\right)
    \!\right\rceil,
\) 
where, 
\( 
    M_{\bar\varepsilon,L}
    \!:=\!
    K_{\bar\varepsilon,L}mA_{\max}S,
    \) \( 
    u_{\bar\varepsilon,L,\delta}
    \!:=\!
    \log\frac{4M_{\bar\varepsilon,L}}{\delta}, \; 
    n_{\bar\varepsilon,L,\delta}
    \!:=\!
    \left\lceil\!
        \frac{L^2}{2\xi_{\bar\varepsilon,L}^2}
        u_{\bar\varepsilon,L,\delta}
    \!\right\rceil\!,
    \;
    \xi_{\bar\varepsilon,L}
    :=
    \min\left\{
        \frac{\bar\varepsilon}{8L},
        \sqrt{\frac{\bar\varepsilon}{32\eta_0\bar L}}
    \right\}. 
\) 
Then, formally, we have the following guarantee. 

\begin{corollary}
\label{cor:disc-bandit}
Let
\( 
    \bar H\in\mathbb N_+\cup\{\infty\},
    \;
    \varepsilon,\delta\in(0,1],
\)
and define 
\( 
    H_\varepsilon^\gamma
    :=
    \left\lceil
        \frac{\log\!\left(\frac{2}{(1-\gamma)\varepsilon}\right)}
             {\log(1/\gamma)}
    \right\rceil .
\) 
Set
\( 
    L_\varepsilon
    :=
    \min\{\bar H,H_\varepsilon^\gamma\},
    \;
    \bar\varepsilon:=\frac{\varepsilon}{2}
\) 
. 
Assume that the truncated discounted game
\(
    G_{\gamma,\bar H}^{[L_\varepsilon]}
\) 
is \((\kappa,\zeta_{\bar\varepsilon,L_\varepsilon})\)-reachable.  Run
Algorithm~\ref{alg:blocked-bandit-oomd} on
\(G_{\gamma,\bar H}^{[L_\varepsilon]}\) with parameters 
\(K_{\bar\varepsilon,L_\varepsilon}\),
\(B_{\bar\varepsilon,L_\varepsilon,\delta}\), and \(\zeta_{\bar\varepsilon,L_\varepsilon}\) for \(N_{\varepsilon,\bar H,\delta}\) episodes. For each
\(
k\in[K_{\bar\varepsilon,L_\varepsilon}],
\)
let
\( 
    \bar\pi_k
    :=
    \mathrm{Ext}_{L_\varepsilon}
    \bigl(\pi_k^{[L_\varepsilon]}\bigr).
\) 
Then, with probability at least \(1-\delta\), for every player \(i\in[m]\),
\[
    \sup_{\mu^i\in\Pi_i^{\mathrm{gen},\bar H}}
    \frac1{N_{\varepsilon,\bar H,\delta}}
    \sum_{k=1}^{K_{\bar\varepsilon,L_\varepsilon}}
    \sum_{t=1}^{B_{\varepsilon,L_\varepsilon,\delta}}
    \left[
        J_{i,\bar H}^\gamma(\bar\pi_k)
        -
        J_{i,\bar H}^\gamma(\mu^i\odot\bar\pi_k^{-i})
    \right]
    \le
    \varepsilon .
\]
Consequently,
\( 
    \frac1{N_{\varepsilon,\bar H,\delta}}
    \sum_{t=1}^{N_{\varepsilon,\bar H,\delta}}
    \delta_{\bar\pi_t}
\) 
is an \(\varepsilon\)-approximate discounted CCE.
Moreover, for constants \( \mathcal C_{\mathrm{band}}
    :=
    \frac{2A_{\log}}{\eta_0}
    +12mA_{\log}
    +\eta_0
    +288m^2 \eta^3_0,\) \( 
    \mathcal L_{\bar H,\varepsilon}
    :=
    \min\{\bar H,\log(1/\varepsilon)\},
\)
and \(\gamma<1\), we have
\[
\begin{aligned}
    N_{\varepsilon,\bar H,\delta}
    \le
    \left[
        \frac{
            A_{\max}\mathcal L_{\bar H,\varepsilon}^{2}
        }{
            \kappa\varepsilon
        }\!
        \left(
            1
            +
            \frac{\mathcal L_{\bar H,\varepsilon}^{4}}{\varepsilon^2}
            +
            \frac{\eta_0\mathcal L_{\bar H,\varepsilon}^{3}}{\varepsilon}\!
        \right)\!\!
        \times\!\!
        \left(\!
            \mathcal L_{\bar H,\varepsilon}\!
            \log
            \frac{
                \mathcal C_{\mathrm{band}}
                \mathcal L_{\bar H,\varepsilon}^{7}
            }{
                \varepsilon
            }
            \!+\!
            \log
            \frac{\!
                mA_{\max}S\!
            }{
                \delta
            }
        \right)\!
    \right]\!\!
    \left[
        \frac{\!
            \mathcal C_{\mathrm{band}}
            \mathcal L_{\bar H,\varepsilon}^{7}\!
        }{
            \varepsilon
        }
    \right]^{
        O(\mathcal L_{\bar H,\varepsilon})
    }
\end{aligned}
\]
\end{corollary}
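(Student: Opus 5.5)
The proof has two parts: the CCE guarantee, obtained by feeding Theorem~\ref{thm:episodic-partial-feedback} into Theorem~\ref{thm:discounted-transfer}, and the sample-count bound, obtained by substituting the parameter choices.

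\textbf{Step 1: the truncated game is in scope.} Apply Theorem~\ref{thm:episodic-partial-feedback} to the finite-horizon game $G_{\gamma,\bar H}^{[L_\varepsilon]}$ with horizon $L_\varepsilon$ and target accuracy $\bar\varepsilon=\varepsilon/2$. Its one-step costs are $c_h^i=\gamma^{h-1}\ell_h^i\in[0,1]$, so the standing assumptions hold. By hypothesis the game is $(\kappa,\zeta_{\bar\varepsilon,L_\varepsilon})$-reachable. The parameters $K_{\bar\varepsilon,L_\varepsilon}$, $B_{\bar\varepsilon,L_\varepsilon,\delta}$, $\zeta_{\bar\varepsilon,L_\varepsilon}$ and $\xi_{\bar\varepsilon,L_\varepsilon}$ are exactly those of Theorem~\ref{thm:episodic-partial-feedback} with $(H,\varepsilon)$ replaced by $(L_\varepsilon,\bar\varepsilon)$.

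\textbf{Step 2: finite-horizon regret on the truncation.} The truncated cost satisfies $J^\gamma_{i,L}=V^{i}$ in $G^{[L_\varepsilon]}_{\gamma,\bar H}$. Hence, with probability at least $1-\delta$, the block-averaged regret against every $\nu^i\in\Pi_i^{\mathrm{gen},L_\varepsilon}$ is at most $\bar\varepsilon$.

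Each block repeats the same profile $B$ times. The episode-level average over $N=KB$ episodes therefore equals the block average, which gives the hypothesis of Theorem~\ref{thm:discounted-transfer} with $\varepsilon_{\mathrm{alg}}=\varepsilon/2$.

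\textbf{Step 3: transfer to horizon $\bar H$.} Theorem~\ref{thm:discounted-transfer} gives the bound $\varepsilon/2+\tau_{L_\varepsilon,\bar H}(\gamma)$. There are two cases:
\begin{itemize}
\item If $L_\varepsilon=\bar H$, then $\tau=0$.
\item Otherwise $L_\varepsilon=H_\varepsilon^\gamma$, and $\tau\le\gamma^{H^\gamma_\varepsilon}/(1-\gamma)\le\varepsilon/2$ by the choice of $H_\varepsilon^\gamma$.
\end{itemize}
This yields the displayed inequality, and the CCE statement follows from Definition~\ref{def:disc-cce-unified}.

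\textbf{Step 4: the sample bound.} We have $N=K_{\bar\varepsilon,L_\varepsilon}B_{\bar\varepsilon,L_\varepsilon,\delta}$. For fixed $\gamma$, $L_\varepsilon\le\min\{\bar H,c_\gamma\log(1/\varepsilon)\}=O(\mathcal L_{\bar H,\varepsilon})$, with $\gamma$-dependent constants absorbed into the $O(\cdot)$.

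\emph{Bounding $K$.} Since $\log(|A_i|+1)\le A_{\log}$, we get $\widetilde C_*^{\mathrm{SE}}(L)\le\mathcal C_{\mathrm{band}}L^7$. Therefore
\[
K\le\left[\mathcal C_{\mathrm{band}}\mathcal L^7/\varepsilon\right]^{O(\mathcal L)}.
\]

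\emph{Bounding $B$.} We have $1/\zeta=16L^2/\varepsilon$ and
\[
1/\xi^2\le \max\{256L^2/\varepsilon^2,\;64\eta_0L/\varepsilon\}.
\]
Hence $n\lesssim L^2u\,(L^2/\varepsilon^2+\eta_0L/\varepsilon)$, which produces the factor $1+\mathcal L^4/\varepsilon^2+\eta_0\mathcal L^3/\varepsilon$. Next,
\[
u=\log(4KmA_{\max}S/\delta)\lesssim \mathcal L\log(\mathcal C_{\mathrm{band}}\mathcal L^7/\varepsilon)+\log(mA_{\max}S/\delta),
\]
using the bound on $\log K$.

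Multiplying these gives the stated form. Stray polynomial factors such as the extra $L^2$ from $n$ are absorbed into the $[\cdot]^{O(\mathcal L)}$ term.

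\textbf{Main obstacle.} The only delicate step is this final bookkeeping: making sure every factor (the ceilings, the $\xi$ minimum, and the $\gamma$-dependence of $L_\varepsilon$) is dominated by the displayed expression. I would handle it by bounding each ceiling by twice its argument and sending all residual $\mathrm{poly}(\mathcal L)$ factors into the exponent $O(\mathcal L)$.
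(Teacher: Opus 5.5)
Your proposal is correct and follows essentially the same route as the paper. You apply Theorem~\ref{thm:episodic-partial-feedback} to $G_{\gamma,\bar H}^{[L_\varepsilon]}$ at accuracy $\varepsilon/2$, pass the resulting bound through Theorem~\ref{thm:discounted-transfer} with the two-case tail bound, and then bound $K$ and $B$ via $\widetilde C_*^{\mathrm{SE}}(L)\le\mathcal C_{\mathrm{band}}L^7$ and the explicit forms of $1/\zeta$, $1/\xi^2$ and $u$. Your explicit ``otherwise $L_\varepsilon=H_\varepsilon^\gamma$'' case is cleaner than the paper's write-up; the only caveat is that the step $L_\varepsilon=O(\mathcal L_{\bar H,\varepsilon})$, which both you and the paper use, tacitly assumes $\varepsilon$ is bounded away from $1$, since $L_\varepsilon\ge 1$ while $\log(1/\varepsilon)\to 0$ as $\varepsilon\to 1$.
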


\begin{proof}
See Appendix \ref{subsec:corollaryproof-2}
\end{proof}

\section{Computational Lower Bound for Discounted Markov Games}
\label{sec:computational-lower-bound}

In this section, we provide computational lower bounds for the task of finding a CCE in discounted Markov games to complement the quasi-polynomial upper bounds we have found in the earlier sections. In particular, we show that under the ``\(\mathsf{ETH}\) for \(\mathsf{PPAD}\)'', one should not expect a polynomial-time
polynomial-support algorithm for the same discounted CCE objective in
general-sum discounted Markov games when players learn independently in decentralized settings. To show this, we make simple modifications on the main results of \citet{noah_hardness}, which achieves this result for finite-horizon undiscounted Markov games for a problem called \emph{sparse Markov CCE}. Accordingly, we define a search problem that subsumes their problem as a special case.

\begin{definition}
\label{def:discounted-sparse-markov-cce}
Fix a discount factor \(\gamma\in(0,1]\) for the finite horizon case and fix a discount factor \(\gamma\in(0,1)\) for the infinite horizon case . For an \(m\)-player
discounted Markov game \(G_{\gamma,\bar H}\) and parameters
\(T\in\mathbb{N}\) and \(\varepsilon>0\), which may depend on the
size of \(G_{\gamma,\bar H}\), the
\( 
    (T,\varepsilon)\text{-}
    \mathsf{DiscSparseMarkovCCE}^{\mathrm{gen}}_{\gamma}
\) 
problem is the problem of finding a sequence
\( 
    \bm{\sigma}^{(1)},\ldots,\bm{\sigma}^{(T)},
    \;
    \bm{\sigma}^{(t)}\in\Pi^{\mathrm{markov},\bar H}
    \;\text{for every }t\in[T],
\) 
such that the distributional policy
\( 
    \widehat{\bm{\sigma}}_{T}
    :=
    \frac{1}{T}
    \sum_{t=1}^{T}
    \delta_{\bm{\sigma}^{(t)}}
    \in
    \Delta\!\left(\Pi^{\mathrm{gen},\bar H}\right)
\) 
is an \(\varepsilon\)-approximate discounted CCE of \(G_{\gamma,\bar H}\). Equivalently, the sequence must satisfy
\( 
    \sup_{\mu^i\in\Pi_i^{\mathrm{gen},\bar H}}
    \sum_{t=1}^{T}
    \left[
        J_i^\gamma\!\left(\bm{\sigma}^{(t)}\right)
        -
        J_i^\gamma\!\left(
            \mu^i\odot\bm{\sigma}^{(t),-i}
        \right)
    \right]
    \leq
    \varepsilon T
\) 
for every player \(i\in[m]\). We call such a sequence a
\(T\)-sparse discounted CCE.
\end{definition}

\noindent
Definition~\ref{def:discounted-sparse-markov-cce} formulates the
equilibrium-computation objective studied in this section as a search
problem over finite sequences of product Markov policy profiles. The
following proposition says that the outputs constructed
from Algorithms~\ref{alg:layered-omd} and
\ref{alg:blocked-bandit-oomd} are valid solutions to this search
problem.

\begin{proposition}
\label{prop:algorithms-solve-discounted-sparse-cce}
Let
\( G_{\gamma,\bar H}\) be an \(m\)-player discounted
Markov game, and fix \(\varepsilon\in(0,1]\).

\begin{enumerate}
    \item Let \(L_\varepsilon\) and \(T_\varepsilon\) be chosen as in
    Corollary~\ref{cor:disc-full}. Run
    Algorithm~\ref{alg:layered-omd} on the truncated Markov game
    \( G_{\gamma,\bar H}^{[L_\varepsilon]}\), and define
    \( 
        \bar{\bm\pi}_t
        :=
        \operatorname{Ext}_{L_\varepsilon}
        \bigl(\bm\pi_t^{[L_\varepsilon]}\bigr),
        \; t\in[T_\varepsilon].
    \) 
    Then
    \( 
        \bigl(
            \bar{\bm\pi}_1,\ldots,
            \bar{\bm\pi}_{T_\varepsilon}
        \bigr)
    \) 
    is a solution to
    \( 
        (T_\varepsilon,\varepsilon)\text{-}
        \mathsf{DiscSparseMarkovCCE}_{\gamma}^{\mathrm{gen}}.
    \) 

    \item Suppose that the reachability condition of
    Corollary~\ref{cor:disc-bandit} holds, and choose the parameters
    \(L_\varepsilon\), \(K_\varepsilon\), and \(B_\varepsilon\) as in Corollary~\ref{cor:disc-bandit}. Run
    Algorithm~\ref{alg:blocked-bandit-oomd} on
    \( G_{\gamma,\bar H}^{[L_\varepsilon]}\), and define
    \( 
        \bar{\bm\pi}_k
        :=
        \operatorname{Ext}_{L_\varepsilon}
        \bigl(\bm\pi_k^{[L_\varepsilon]}\bigr),
        \; k\in[K_\varepsilon].
    \)
    Let \(N_\varepsilon:=K_\varepsilon B_\varepsilon\), and define the
    policy sequence by
    \(
        \bm\sigma^{(k-1)B_\varepsilon+r}
        :=
        \bar{\bm\pi}_k,
        \;
        k\in[K_\varepsilon],\; r\in[B_\varepsilon].
    \)
    Then, with probability \(1-\delta\),
    \( 
        \bigl(
            \bm\sigma^{1},\ldots,
            \bm\sigma^{N_\varepsilon}
        \bigr)
    \) 
    is a solution to
    \( 
        (N_\varepsilon,\varepsilon)\text{-}
        \mathsf{DiscSparseMarkovCCE}_{\gamma}^{\mathrm{gen}}.
    \) 
\end{enumerate}
\end{proposition}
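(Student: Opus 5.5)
The plan is to check, for each part, the two requirements of Definition~\ref{def:discounted-sparse-markov-cce}. The first is membership: every element of the output sequence must lie in \(\Pi^{\mathrm{markov},\bar H}\). The second is the regret condition: the summed discounted regret must be at most \(\varepsilon\) times the sequence length for every player and every deviation \(\mu^i\in\Pi_i^{\mathrm{gen},\bar H}\). Membership is structural in both cases. Algorithms~\ref{alg:layered-omd} and~\ref{alg:blocked-bandit-oomd} output, at each episode or block, policies \(\pi^i_{t,h}(\cdot\mid s)\) that depend only on the current layer and state. So \(\bm\pi_t^{[L_\varepsilon]}\in\Pi^{\mathrm{markov},L_\varepsilon}\). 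The extension \(\mathrm{Ext}_{L_\varepsilon}\) appends \(\bm\pi^{L,\mathrm{ref}}\), which plays uniformly at every state and is therefore also a product Markov policy. Hence each \(\bar{\bm\pi}_t\) lies in \(\Pi^{\mathrm{markov},\bar H}\). When \(L_\varepsilon=\bar H\) the continuation is vacuous and nothing needs checking.

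For part 1, I will invoke Corollary~\ref{cor:disc-full} directly. It states that, for every \(i\) and every \(\mu^i\in\Pi_i^{\mathrm{gen},\bar H}\),
\[
\sum_{t=1}^{T_\varepsilon}\bigl[J_{i,\bar H}^\gamma(\bar{\bm\pi}_t)-J_{i,\bar H}^\gamma(\mu^i\odot\bar{\bm\pi}_t^{-i})\bigr]\le\varepsilon T_\varepsilon .
\]
This is verbatim the condition in Definition~\ref{def:discounted-sparse-markov-cce} with \(T=T_\varepsilon\) and \(\bm\sigma^{(t)}=\bar{\bm\pi}_t\). The only thing to note is that \(J_i^\gamma\) in the definition means the full \(\bar H\)-horizon objective \(J^\gamma_{i,\bar H}\), which is how it is used throughout.

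For part 2, I will work on the event of probability at least \(1-\delta\) supplied by Corollary~\ref{cor:disc-bandit}. On that event, the averaged regret over \(N_\varepsilon\) episodes, in which block \(k\)'s profile \(\bar{\bm\pi}_k\) is counted \(B_\varepsilon\) times, is at most \(\varepsilon\). Because the sequence is defined by \(\bm\sigma^{(k-1)B_\varepsilon+r}=\bar{\bm\pi}_k\), regrouping gives
\[
\sum_{n=1}^{N_\varepsilon}\bigl[J^\gamma_{i,\bar H}(\bm\sigma^{n})-J^\gamma_{i,\bar H}(\mu^i\odot\bm\sigma^{n,-i})\bigr]=B_\varepsilon\sum_{k=1}^{K_\varepsilon}\bigl[J^\gamma_{i,\bar H}(\bar{\bm\pi}_k)-J^\gamma_{i,\bar H}(\mu^i\odot\bar{\bm\pi}_k^{-i})\bigr].
\]
Multiplying the corollary's averaged bound by \(N_\varepsilon\) then yields the required \(\le\varepsilon N_\varepsilon\) for all players simultaneously.

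The only real subtlety is bookkeeping. Corollary~\ref{cor:disc-bandit} indexes the empirical distribution by episodes in its final sentence, but by blocks in its displayed sum. I must make sure that the repeated-block sequence I define is exactly the object whose average the corollary bounds, and that \(N_\varepsilon=K_\varepsilon B_\varepsilon\) is precisely the episode count there. I also need the high-probability event to be uniform over players and deviations, which holds because the corollary's event is the good estimation event and does not depend on \(\mu^i\).
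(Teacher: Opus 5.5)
Your proposal is correct and follows the paper's proof essentially step by step. The paper also first checks membership in \(\Pi^{\mathrm{markov},\bar H}\), using the Markov iterates together with the uniform product Markov continuation, and then reads the regret condition directly off Corollary~\ref{cor:disc-full} and, on its probability-\(1-\delta\) event, off Corollary~\ref{cor:disc-bandit}; your explicit regrouping \(\sum_{n}=B_\varepsilon\sum_{k}\) just spells out bookkeeping the paper leaves implicit.
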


\begin{proof}
See Appendix~\ref{subsec:proof-search-problem-output}.
\end{proof}

\noindent
Before proceeding, we emphasize that
\( 
    (T,\varepsilon)\text{-}
    \mathsf{DiscSparseMarkovCCE}^{\mathrm{gen}}_{\gamma}
\) 
is a relaxation of the radically uncoupled learning task addressed by
Algorithms~\ref{alg:layered-omd} and
\ref{alg:blocked-bandit-oomd} through discounted policy extensions. The search problem only asks for
a finite sequence of product Markov policy profiles whose uniform
empirical distribution is an \(\varepsilon\)-approximate discounted CCE.
It does not require that this sequence be generated online by
independent players; in particular, a centralized algorithm with full
access to the game description is also permitted as stated in
\citet{noah_hardness}. Therefore, any lower bound provided for this search problem, is also valid for our radically uncoupled learning problem. 

\vspace{6pt}
\noindent
Next, we state the arguments that lay the foundation for our computational lower bound result.

\begin{lemma}\label{lemma:quaso:cheese}
    Let \(M\) be an \(n\times n\) bimatrix game. Then, there exists a corresponding  \(2\)-player Markov game \(F_H(M)\) with a predesignated horizon-length \(H\) such that computation of the CCE of \(F_H(M)\) implies computation of NE of \(M\). Furthermore, this construction can be done in polynomial time.
\end{lemma}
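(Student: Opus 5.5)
The plan is to reuse the construction of \citet{noah_hardness}, which reduces Nash equilibrium computation in a bimatrix game to sparse Markov CCE computation in a finite-horizon two-player Markov game. Given an \(n\times n\) bimatrix game \(M=(A,B)\) with payoffs normalized to \([0,1]\), I would build \(F_H(M)\) as a layered game with horizon \(H\). Each layer \(h\) has a small number of states that encode, for each player, which action the \emph{opponent} played at the previous layer. Concretely, I would take \(S_h=[n]\times[n]\), let the transition be deterministic with \(s_{h+1}=(a_1,a_2)\), and set the per-layer cost of player \(i\) to \(1\) minus its payoff in \(M\) at the joint action. The construction is polynomial in \(n\) and \(H\) because there are \(H n^2\) states, \(n\) actions per player, and the transitions are deterministic. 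The point of recording the previous actions in the state is that a history-dependent (and even Markov) deviation can condition on the opponent's past play.

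The key step is to extract a Nash equilibrium from an approximate CCE \(\widehat{\bm\sigma}_T=\frac1T\sum_t\delta_{\bm\sigma^{(t)}}\) of \(F_H(M)\), where each \(\bm\sigma^{(t)}\) is a product Markov profile. The argument runs layer by layer:
\begin{itemize}
\item For each \(t\) and \(h\), the opponent's stage-\(h\) mixed strategy at the realized state is a product distribution.
\item The set of such stage strategies across \(t\) forms a candidate list of size at most \(TH\) per state.
\item Suppose no profile \(\bigl(\sigma^{(t),1}_h(\cdot\mid s),\sigma^{(t),2}_h(\cdot\mid s)\bigr)\) on the list is an \(\epsilon'\)-Nash equilibrium of \(M\). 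Then in every such stage some player gains at least \(\epsilon'\) by a unilateral best response.
\item A history-dependent deviation can exploit this. Since the state reveals the opponent's previous action, the deviator can learn which episode \(t\) is being played, using a fingerprint of \(\bm\sigma^{(t)}\) distinguished from past actions, and then best-respond in later layers.
\end{itemize}
Summing these gains over \(H\) layers yields a regret larger than \(\varepsilon\) once \(H\) is large relative to \(\log T/\epsilon'\). This contradicts the CCE property, so some listed stage profile must be an \(\epsilon'\)-NE, and it can be found by checking all \(TH\cdot n^2\) candidates in polynomial time.

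The main obstacle is the identification step: the deviator must infer from the observed opponent actions, with enough accuracy, which component \(\bm\sigma^{(t)}\) was sampled. I would first try the multiplicative-weights / likelihood-based deviation of \citet{noah_hardness}. The deviator maintains a posterior over \(t\in[T]\), updated from the opponent's observed actions, and best-responds to the posterior-mixed opponent stage strategy. If that strategy is not a near-NE, this yields a constant gain, while the posterior-tracking error is at most \(O(\log T)\) total via a KL or online-learning bound. I would then state the horizon \(H=\mathrm{poly}(n,T,1/\varepsilon)\) needed for the gain to dominate. Finally, I would note that the discounted version needed later follows by scaling the costs by \(\gamma^{h-1}\), which is the reason \(H\) is predesignated.
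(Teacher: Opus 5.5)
Your argument breaks at the construction. In your $F_H(M)$ the bimatrix game is played at \emph{every} layer, and the state at that layer is the previous joint action. So each product Markov profile $\bm\sigma^{(t)}$ may make its stage play depend on what was played one step earlier. The feature you added so the deviator can condition on the opponent's past play equally lets the opponent condition on the deviator's past play.

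Consequently the step ``in every stage some player gains at least $\epsilon'$ by a unilateral best response, and these gains add up over the $H$ layers'' fails. A best response at layer $h$ changes $s_{h+1}$, and hence the opponent's later stage strategies $\sigma^{(t),-i}_{h'}(\cdot\mid s_{h'})$. The deviation's per-layer cost is therefore taken under the deviation-induced state distribution, while the Nash gaps of your listed profiles are realized under the on-path distribution, and the two are not comparable. This is not a technicality. On your state space, Markov policies can implement one-step-memory trigger/punishment strategies, which is exactly how repeated-game equilibria sustain non-Nash stage play, and nothing in your argument excludes it.

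The construction the paper imports from \citet{noah_hardness} (Appendix~\ref{subsect:quaso0}) avoids this. $M$ is played, with cost $\frac1H L^i$, only at odd layers whose state set is a singleton $\{s_h^\star\}$. So the payoff-relevant stage strategies $x_{t,h}=\pi^1_t(\cdot\mid s_h^\star)$ and $y_{t,h}=\pi^2_t(\cdot\mid s_h^\star)$ cannot depend on the past at all. The recording states $[n]\times[n]$ occur only at even layers, which carry zero cost and return to the singleton regardless of actions. The only thing the deviator does not know is then the index $t$, and the posterior argument of Lemma~\ref{lem:fgk-extraction-app} applies.

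Your posterior argument also needs quantitative corrections.
\begin{itemize}
\item The loss from best-responding to the posterior mixture $\bar y_h=\sum_t q_h(t)y_{t,h}$, rather than to $y_{t,h}$, is first order in total variation: of order $\sum_t q_h(t)\|y_{t,h}-\bar y_h\|_1$. The chain rule gives $\sum_h\mathbb E\bigl[\sum_t q_h(t)\mathrm{KL}(y_{t,h}\,\|\,\bar y_h)\bigr]\le\log T$. After Pinsker and Cauchy--Schwarz, the cumulative tracking error is therefore $O(\sqrt{H\log T})$, not $O(\log T)$. The requirement becomes $H\gtrsim\log T/\epsilon'^2$, i.e.\ $T<\exp(c_F\varepsilon_\star^2H)$.
\item The violated best-response condition may belong to different players at different $(t,h)$. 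You must therefore sum the two players' regrets, using $\mathbb E[q_h(t)]=1/T$ to turn each deviation's value into an average over $t$, and conclude that one of them is large.
\item Taking $H=\mathrm{poly}(n,T,1/\varepsilon)$ would break the downstream use. Lemma~\ref{lemma:quaso:hyper} rescales costs by $\gamma^{H-h}$ (not $\gamma^{h-1}$) and loses a factor $\gamma^{-(H-1)}$ in accuracy. Theorem~\ref{thm:discounted-ppad-eth-lower-main} therefore needs $H=\Theta(\log n)$, which is possible only because the dependence on $T$ is logarithmic and $T$ is polynomial.
\end{itemize}
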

\begin{proof}
    See Appendix \ref{subsect:quaso0}.
\end{proof}

\noindent For any given finite-horizon Markov game, by forcing the game to enter an absorbing state after the step \(h=H\), we obtain an infinite-horizon discounted extension of it.

\begin{lemma} \label{lemma:quaso:hyper}
    Let \(G_H\) be a Markov game with horizon length \(H\). Then there exists an infinite-horizon \(\gamma\)-discounted Markov game \(\widetilde G_{H,\gamma}\) such that if \(\tilde L\) is an \(\varepsilon\)-approximate discounted CCE of \(\widetilde G_{H,\gamma}\), then there exists a list \(\widetilde L^{[H]}\) that is an \((\varepsilon/\gamma^{H-1})\)-approximate CCE of \(G_H\). Furthermore, this construction can be done in polynomial time. We denote this correspondence in the case of \(F_H(M)\) by \(\widetilde F_{H,\gamma} (M)\).
\end{lemma}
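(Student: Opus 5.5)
The plan is to build \(\widetilde G_{H,\gamma}\) by appending an absorbing, cost-free state to \(G_H\), and to \emph{pre-scale} the costs so that the discounting inside the first \(H\) steps can be undone.

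\textbf{Construction.} Let \(\widetilde G_{H,\gamma}\) have state space \(\mathcal S\sqcup\{s_\perp\}\), the same action sets, and the same initial state \(s_1\). For \(h\le H\) and \(s\in S_h\), the transitions are the \(P_h\) of \(G_H\). Every state of \(S_{H+1}\) moves to \(s_\perp\), and \(s_\perp\) is absorbing with zero cost. To keep the stationary infinite-horizon format, we label states by their layer; this is possible since the layers are disjoint. The costs are \(\tilde\ell(s,\bm a):=\gamma^{H-h}\ell_h^i(s,\bm a)\) for \(s\in S_h\), which lie in \([0,1]\). With this choice, for every history-dependent profile \(\bm\sigma\),
\[
J_i^\gamma(\bm\sigma)=\sum_{h=1}^H\gamma^{h-1}\gamma^{H-h}\,\mathbb E^{\bm\sigma}[\ell_h^i]=\gamma^{H-1}V^{i,\bm\sigma^{[H]}}(s_1).
\]
An alternative is to leave the costs unscaled and bound \(\gamma^{h-1}\ge\gamma^{H-1}\). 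That inequality does not hold termwise for differences of values, so the rescaling is the clean route. The construction has polynomial size: we add \(|S|+1\) states, and the rational powers \(\gamma^{H-h}\) have polynomial bit-length because \(H\) is given in unary through the layered state space.

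\textbf{Transfer of the CCE.} Let \(\tilde L=(\bm\sigma^{(1)},\dots,\bm\sigma^{(T)})\) be an \(\varepsilon\)-approximate discounted CCE of \(\widetilde G_{H,\gamma}\). Define \(\widetilde L^{[H]}\) as the list of truncations \(\bm\sigma^{(t),[H]}\), which are still product Markov profiles of \(G_H\). Take any deviation \(\mu^i\in\Pi_i^{\mathrm{gen}}\) of \(G_H\). Extend it arbitrarily (for example, uniformly) after layer \(H\) to obtain \(\bar\mu^i\in\Pi_i^{\mathrm{gen},\infty}\). Its histories up to layer \(H\) coincide with histories of \(G_H\), because every state before absorption lies in some \(S_h\). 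The identity above applied to \(\bar\mu^i\odot\bm\sigma^{(t),-i}\) then gives
\[
\frac1T\sum_t\Bigl[V^{i,\bm\sigma^{(t),[H]}}(s_1)-V^{i,\mu^i\odot\bm\sigma^{(t),[H],-i}}(s_1)\Bigr]=\gamma^{-(H-1)}\frac1T\sum_t\Bigl[J_i^\gamma(\bm\sigma^{(t)})-J_i^\gamma(\bar\mu^i\odot\bm\sigma^{(t),-i})\Bigr]\le\frac{\varepsilon}{\gamma^{H-1}}.
\]
Taking the supremum over \(\mu^i\) and over \(i\) yields the claim. The same argument applies verbatim to a finite discounted horizon \(\bar H\ge H+1\).

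\textbf{Main obstacle.} The delicate step is the value identity. We must check that costs after absorption vanish, so the tail contributes exactly zero rather than being bounded by \(\tau\). We must also check that the identity holds for \emph{history-dependent} deviations, where the deviation's decisions after step \(H\) are irrelevant. The remaining work is routine bookkeeping: verifying that the layered-to-stationary encoding and the rescaled rational costs keep the instance polynomial in the size of \(G_H\).
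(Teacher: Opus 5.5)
Your proposal is correct and matches the paper's proof: the paper also appends an absorbing zero-cost state $\bot$ and sets $\widetilde\ell^i((h,s),a)=\gamma^{H-h}\ell_h^i(s,a)$, which gives $J_i^\gamma(\bm\sigma)=\gamma^{H-1}V^{i,\bm\sigma^{[H]}}(s_1)$, and it transfers the CCE gap by extending $H$-step deviations arbitrarily past the absorbing point. The only cosmetic difference is that the paper sends layer-$H$ states directly to $\bot$ instead of routing them through $S_{H+1}$; in your version you should state explicitly that the $S_{H+1}$ states carry zero cost.
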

\begin{proof}
    See Appendix \ref{subsect:quaso1}.
\end{proof}

\vspace{6pt}
\noindent 
We refer to Appendix~\ref{app:discounted-lower-bound} for the detailed encoding conventions. 
Before introducing our main result of this section, we first recall the complexity-theoretic assumption underlying our lower bound. Introduced by \citet{Papadimitriou1994}, \(\mathsf{PPAD}\) is a widely studied complexity class for equilibrium computation in game theory. For the sake of completeness, we provide a complete statement of the so-called Exponential Time Hypothesis for \(\mathsf{PPAD}\). We refer to \citet{Papadimitriou1994} and \citet{Rubinstein2016} for further details.

\begin{definition}
 [$\mathsf{EndOfALine}$; \citep{DaskalakisGoldbergPapadimitriou2009}] Given two binary circuits \(S,P,\) each with \(m\)-input bits and \(m\) output bits such that \(P(0^m)=0^m \not = S(s^m)\), find an input \(x \in \{0,1\}^m\) such that \(P(S(x))\not = x\) or \(S(P(x))\not = x \not = 0^m\).   
\end{definition}

\begin{assumption}[\(\mathsf{ETH}\) for \(\mathsf{PPAD}\);
\citet{Babichenko2016}]
\label{ass:eth-ppad}
Solving the \(\mathsf{PPAD}\)-complete problem
\(\mathsf{EndOfALine}\), on instances of size
\(\tilde n\), requires time
\( 
2^{\Omega(\tilde n)}.
\) 
\end{assumption}

\noindent 
Our hardness results rely on the following consequence of the ``\(\mathsf{ETH}\) for \(\mathsf{PPAD}\)''.

\begin{theorem}[\citet{Rubinstein2016}]
\label{thm:rubinstein-qpoly-hardness}
Assume \(\mathsf{ETH}\) for \(\mathsf{PPAD}\). Then, there exists a universal
constant \(\varepsilon_\star>0\) such that every algorithm that computes an
\(\varepsilon_\star\)-approximate Nash equilibrium of every two-player
\(n\times n\) bimatrix game requires time
\( 
n^{\log^{1-o(1)} n}.
\) 
\end{theorem}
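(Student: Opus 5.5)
This is the quasi-polynomial lower bound of \citet{Rubinstein2016}, and the plan is to reproduce its reduction chain rather than derive it from anything in this paper. The goal is to give a polynomial-time (in the output size) reduction from an $\mathsf{EndOfALine}$ instance of size $\tilde n$ to a two-player $n\times n$ bimatrix game with
\[
\log n=\tilde n^{1/2+o(1)} .
\]
The reduction must be such that every $\varepsilon_\star$-approximate Nash equilibrium, with $\varepsilon_\star$ a universal constant, can be decoded efficiently into a solution of the $\mathsf{EndOfALine}$ instance. Suppose an algorithm found such equilibria in time $n^{\log^{1-o(1)}n}$. Since $\tilde n=\log^{2-o(1)}n$, this time equals $2^{\log^{2-o(1)} n}=2^{o(\tilde n)}$, which contradicts Assumption~\ref{ass:eth-ppad}. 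So everything reduces to constructing this reduction with the stated parameters.

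\textbf{Step 1: a robust Brouwer instance.} Embed the $\mathsf{EndOfALine}$ graph into the cube $[0,1]^d$ with $d=\tilde n^{1+o(1)}$, and build a continuous, $O(1)$-Lipschitz map $f$ whose approximate fixed points correspond to ends of lines. The key requirement is \emph{constant} robustness: every $x$ with $\|f(x)-x\|_2\le \varepsilon_0\sqrt d$, for a fixed constant $\varepsilon_0$, must be decodable to a solution. The construction I would use is the following.
\begin{itemize}
\item Encode each vertex $u\in\{0,1\}^{\tilde n}$ by a constant-rate, constant-distance error-correcting code, so that distinct vertices are $\Omega(d)$ apart in Hamming distance and therefore $\Omega(\sqrt d)$ apart in $\ell_2$.
\item Route the line $u\to S(u)$ through the cube along paths of $O(1)$ segments, each of which changes one codeword into the next through intermediate points that remain far from all other encodings.
\item Define $f$ as a constant-magnitude displacement along these paths, with a default drift pointing toward a corner elsewhere.
\item Interpolate near path junctions by averaging over nearby segments, keeping $f$ Lipschitz.
\end{itemize}

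\textbf{Step 2: from Brouwer to a bimatrix game via birthday repetition.} Partition the $d$ coordinates into $\sqrt d$ blocks. A pure strategy of player A names a block together with discretized values of its coordinates, using a constant-size grid. A pure strategy of player B names a block together with discretized values of $f$ on that block. This gives
\[
n=2^{\tilde O(\sqrt d)}=2^{\tilde n^{1/2+o(1)}}
\]
actions per player. The payoffs implement an imitation game, in which B is rewarded for matching A's point and A is rewarded for matching B's claimed values of $f$. Two further devices are needed.
\begin{itemize}
\item A generalized matching-pennies component forces both players to spread their mass nearly uniformly over blocks. Then, by the birthday paradox, random pairs of blocks overlap with constant probability, which is exactly what allows a constant payoff gap to certify consistency.
\item A local-decodability argument lets payoffs that depend only on $O(\sqrt d)$ coordinates estimate the displacement $f(x)-x$ to accuracy $\varepsilon_0\sqrt d$.
\end{itemize}
At an $\varepsilon_\star$-equilibrium, the marginals of the two mixed strategies therefore define a point $x$ with $\|f(x)-x\|\le\varepsilon_0\sqrt d$, and Step 1 decodes it.

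\textbf{Main obstacle.} The hard part is Step 1 combined with the equilibrium analysis of Step 2 under a \emph{constant}, not inverse-polynomial, approximation parameter. With constant error, approximate equilibria only give $x$ that is close to a fixed point in a normalized $\ell_2$ sense. The Brouwer function therefore has to be robust to a constant fraction of corrupted coordinates, and the payoffs have to compute $f$ locally on blocks. My first attempt would be to choose the code so that $f$ on each block depends only on that block and a small, locally decodable neighbourhood. I would then bound the equilibrium deviation by combining the matching-pennies uniformity with an averaging argument over random block pairs.
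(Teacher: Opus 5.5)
The paper does not prove this theorem. It imports it from Rubinstein (2016) as a black box and later uses only properties of his final construction: the bipartite polymatrix game and the Alth\"ofer-game gadgets behind $\widehat M$. Your outer architecture does match Rubinstein's: ETH for $\mathsf{PPAD}$, a constant-robustness Brouwer instance of near-linear dimension $d=\tilde n^{1+o(1)}$, then birthday repetition over $\sqrt d$-sized blocks with a matching-pennies-type component forcing near-uniform block marginals. The count $\log n=\tilde n^{1/2+o(1)}$ is also right. In the final step, though, you should fix $\delta>0$ and check $n^{\log^{1-\delta}n}\cdot\mathrm{poly}(n)=2^{\tilde n^{1-\delta/2+o(1)}}=2^{o(\tilde n)}$, rather than identifying two different $o(1)$ terms. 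The real problem is a genuine gap at exactly the step you call the main obstacle, and your proposed fix does not close it.

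A bimatrix payoff entry can depend only on the two pure strategies: A's block $i$ with its values, and B's block $j$ with its claimed values. So B's claimed values of $f$ on block $j$ can be checked only if they are determined by coordinates that A reveals. Since a given pair of coordinates is covered by the two random blocks with probability about $1/d$, this requires $f$ to be local: each output coordinate must depend on few input coordinates, which is what Rubinstein's bounded-degree bipartite polymatrix structure provides. Your Step 1 map is inherently global. Near the encoding of $u$, its displacement points toward the encoding of $S(u)$, so every output coordinate depends on an arbitrary circuit applied to all of $u$.

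Local decodability of the vertex code does not help, because it returns bits of $u$, not bits of $S(u)$ or $P(u)$. Moreover, constant-query locally decodable codes cannot have near-linear length, so using them would break $d=\tilde n^{1+o(1)}$ and with it the exponent $\log^{1-o(1)}n$.

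The missing idea, which is the PCP-based core of Rubinstein's proof, is to make the line-following itself locally checkable. The point $x$ must carry, besides the vertex encoding, an encoding of the computation of $S$ and $P$ with three properties: it is verifiable by constant-size local tests, it has near-linear total size, and it still decodes to an end of line when a constant fraction of coordinates are arbitrary. That gives hardness of $(\varepsilon,\delta)$-WeakNash for a bounded-degree bipartite polymatrix game. Only on top of that does the birthday-repetition step with the Alth\"ofer games go through. As written, your Step 2 payoffs ("reward B for matching $f$ on its block") cannot be implemented as matrix entries.
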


\begin{lemma}\label{lemma:quaso}
    There exists a countable infinite subset \(\mathcal S \subset \mathbb N\) such that for all \(n \in \mathcal S\) there exists a bimatrix game \(\widehat M\) with the following properties: 
    \begin{enumerate}
        \item Both players have \(n\) pure actions, and assuming \(\mathsf{ETH}\) for \(\mathsf{PPAD}\), computing an
\(\varepsilon_\star\)-approximate Nash equilibrium of \(\widehat M\) requires time
\( 
n^{\log^{1-o(1)} n}.
\)  
        \item Let \(|\widehat M|\) be the size of the game \(\widehat M\) (see Appendix \ref{app:lower-bound-search-problems}). There exist absolute constants \(q_{\rm src},n_{\rm src}\ge 1\), such that
    \( 
    n
    \le
    |\widehat M|
    \le
    (n+2)^{q_{\rm src}}
    \)
    for all \(n\ge n_{\rm src}\).
    \item Let \(\gamma\) be a discount factor. There are absolute constants \(A_H,B_H,n_{\mathrm{size}}(\gamma),d_0\) such that whenever \(n \ge n_{\mathrm{size}}(\gamma)\) and \(A_H\log(n+2)\!\le H \le \!B_H\log(n+2)\), \(n \!\le |\tilde F_{H,\gamma}(\widehat M)| \le \!(n+2)^{d_0}\) holds.
    \end{enumerate}
\end{lemma}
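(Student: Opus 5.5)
The plan is to take $\widehat M$ to be the hard instances supplied by Theorem~\ref{thm:rubinstein-qpoly-hardness} and to read off the two size bounds from the encoding conventions and the polynomial-time constructions of Lemmas~\ref{lemma:quaso:cheese} and~\ref{lemma:quaso:hyper}.

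\textbf{Choosing $\mathcal S$ and property 1.} Theorem~\ref{thm:rubinstein-qpoly-hardness} says that no single algorithm solves all $n\times n$ bimatrix games to accuracy $\varepsilon_\star$ in time $n^{\log^{1-o(1)}n}$. I would turn this uniform statement into a per-size family as follows. For each $n$, let $\widehat M_n$ be an $n\times n$ instance maximizing the running time of the (fixed) optimal algorithm among $n\times n$ games. Then set $\mathcal S$ to be the set of $n$ at which this worst-case time is at least $n^{\log^{1-o(1)}n}$. If $\mathcal S$ were finite, the algorithm would beat the lower bound for all large $n$, contradicting the theorem. Hence $\mathcal S$ is countably infinite, and property~1 holds for $\widehat M=\widehat M_n$, $n\in\mathcal S$.

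\textbf{Property 2.} I would use the encoding convention of Appendix~\ref{app:lower-bound-search-problems}: the payoff entries are rationals with $\mathrm{poly}(n)$ bits. Following Rubinstein, the hard instances can be rounded to $O(\log n)$ bits per entry at the cost of an additive $\varepsilon_\star/2$ error, which still preserves hardness. Writing $2n^2$ entries with $O(\log n)$ bits each, plus the header encoding $n$, gives $|\widehat M|\le (n+2)^{q_{\rm src}}$ for $n\ge n_{\rm src}$. The lower bound $|\widehat M|\ge n$ is immediate, since the encoding must list the $n$ actions.

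\textbf{Property 3.} This is the main obstacle, because it depends on the explicit construction $F_H(M)$ from Appendix~\ref{subsect:quaso0} and on the absorbing extension from Lemma~\ref{lemma:quaso:hyper}. The approach is to count the states, actions, and transition and cost entries of $F_H(\widehat M)$. It has horizon $H$, action sets of size polynomial in $n$, and states per layer polynomial in $n$, with each layer essentially copying the bimatrix structure. This gives total size $\mathrm{poly}(n)\cdot H$ times the bits per entry. The absorbing extension $\widetilde F_{H,\gamma}$ adds $O(1)$ states and rescales costs by $\gamma^{h-1}$ or stores $\gamma$ itself. Since $\gamma$ is a fixed rational, its encoding has $O(1)$ bits. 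Any powers $\gamma^{h-1}$ with $h\le H$ have $O(H)=O(\log n)$ bits, and this is exactly where the constraint $H\le B_H\log(n+2)$ is used. The lower constraint $H\ge A_H\log(n+2)$ is inherited from the construction's requirement in \citet{noah_hardness}; it plays no role in the size bound itself.

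Multiplying these counts gives $|\widetilde F_{H,\gamma}(\widehat M)|\le \mathrm{poly}(n)\cdot O(\log n)\cdot O(\log n)\le (n+2)^{d_0}$ once $n\ge n_{\mathrm{size}}(\gamma)$, where the threshold absorbs the $\gamma$-dependent constants. The lower bound $|\widetilde F_{H,\gamma}(\widehat M)|\ge n$ holds because the game contains the $n$ actions of $\widehat M$.
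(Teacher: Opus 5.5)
Your property~3 matches the paper's Lemma~\ref{lem:clock-size-app}. There you count $|\widetilde S|=1+\tfrac H2+\tfrac H2 n^2$, note $A_{\max}=n$, and bound the bits of the loss entries $\gamma^{H-h}\tfrac1H L^i(a,b)$ using only $H\le B_H\log(n+2)$. Since $|\cdot|$ is a maximum of these quantities, your product bound is just looser. The genuine gap is in how you choose $\widehat M$ for property~1, and it carries over to property~2.

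\textbf{The gap.} Your $\widehat M_n$ and $\mathcal S$ are defined as worst cases of one fixed algorithm, so both depend on that algorithm. The lemma, however, asserts a single family that does not depend on any algorithm. Theorem~\ref{thm:discounted-ppad-eth-lower-main} uses this family against an \emph{arbitrary} hypothesized polynomial-time $\mathcal A$, composing $\mathcal A$ with $\widehat M\mapsto\widetilde F_{H,\gamma}(\widehat M)$ and Lemma~\ref{lem:fgk-extraction-app} to contradict Theorem~\ref{thm:rubinstein-qpoly-hardness}. That argument needs $\widehat M$ to range over a family that is hard for \emph{every} algorithm. Hardness of a single instance against all algorithms is impossible anyway, since any fixed instance is solved instantly by hardcoding its answer.

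The construction also breaks down on its own terms:
\begin{itemize}
\item There need not be a single ``optimal'' algorithm to take worst cases of.
\item The maximum over all $n\times n$ rational games need not exist, because running time can grow with the bit length of the entries.
\item A maximizer could have arbitrarily long entries, so property~2 fails for it.
\item ``Time $\ge n^{\log^{1-o(1)}n}$'' is not a condition one can check at an individual $n$, since $o(1)$ is asymptotic.
\end{itemize}
Your rounding step for property~2 then replaces $\widehat M_n$ with a different game. Properties~1 and~2 are therefore no longer about the same object, and the rounded game is hard only at accuracy $\varepsilon_\star/2$, not $\varepsilon_\star$.

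\textbf{The missing idea.} You should take $\widehat M$ to be the explicit source game of Lemma~8.3 of \citet{Rubinstein2016}, i.e.\ the output of his reduction from $\mathsf{EndOfALine}$. Property~1 is then exactly Theorem~\ref{thm:rubinstein-qpoly-hardness} for this family, and $\mathcal S$ is the set of action counts $n$ that the reduction produces. The paper verifies property~2 directly on this construction (Lemma~\ref{lem:rubinstein-source}):
\begin{itemize}
\item the bipartite polymatrix payoffs have $\mathrm{poly}(n)$ bits;
\item scaling them by $\lambda n_A$ or $\lambda n_B$ keeps the bit length polynomial;
\item the Alth\"ofer components are $\{0,1\}$-valued;
\end{itemize}
which together give $\beta(\widehat M)\le(n+2)^{r_0}$. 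Your rounding idea is a legitimate alternative route to property~2 if you apply it to this family. Rounding payoffs to precision $\varepsilon_\star/4$ turns any $\varepsilon_\star/2$-equilibrium of the rounded game into an $\varepsilon_\star$-equilibrium of the original, at the cost of renaming the accuracy constant.
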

\begin{proof}
    See Appendices \ref{app:lower-bound-search-problems} and  \ref{subsect:quaso1}.
\end{proof}

\vspace{6pt}
\noindent
The next result is the main result of this section, which provides a computational lower bound for \((N^C,N^{-K})\text{-}\mathsf{DiscSparseMarkovCCE}^{\mathrm{gen}}_\gamma\) for arbitrarily large \(N\), where \(N\) is the \emph{size} of the discounted Markov game instance the search problem is concerned with (see Appendix \ref{app:lower-bound-search-problems}). Here, \(C\) is the \emph{support exponent} and \(K\) is the \emph{accuracy exponent}, which are absolute constants. 

\begin{theorem}
\label{thm:discounted-ppad-eth-lower-main}

Fix a constant discount factor \(\gamma\in(0,1)\cap \mathbb Q\) and a support
exponent \(C>0\). We define \(A_H :=
        \frac{2(Cd_0+1)}{c_F\varepsilon_\star^2},
        \;
        B_H := A_H+2,\) \(a_\gamma = B_H \log(\frac{1}{\gamma})\). By \(\widehat M\) denote the bimatrix game in Lemma \ref{lemma:quaso}. Under Assumption~\ref{ass:eth-ppad}, there exists an accuracy exponent \(K=K(\gamma,C)\ge a_\gamma+1\) such that, for any constructed discounted Markov game \(\widetilde F_{H,\gamma}(\widehat M)\), \( n \in \mathcal S\), satisfying \( n \ge n_{\mathrm{src}},n_{\mathrm{size}}(\gamma),\) \( n^{\hspace{2pt}-K}( n+2)^{a_{\gamma}}\le \frac{\varepsilon_{\star}}{4}\), and such that there exists an even integer \(H\) with 
\( 
B_H\log( n+2)\ge H \ge A_H\log( n+2),
\)
there exists a discounted two-player general-sum Markov game such that no polynomial-time algorithm solves \(\bigl(|\widetilde F_{H,\gamma}(\widehat M)|^C,|\widetilde F_{H,\gamma}(\widehat M)|^{-K}\bigr)\text{-}\mathsf{DiscSparseMarkovCCE}^{\mathrm{gen}}_\gamma\).
\end{theorem}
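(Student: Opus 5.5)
The proof is a reduction by contradiction. Suppose some algorithm $\mathcal A$ solves $\bigl(N^C,N^{-K}\bigr)$-$\mathsf{DiscSparseMarkovCCE}^{\mathrm{gen}}_\gamma$ in time $N^{c}$ for some constant $c$, where $N:=|\widetilde F_{H,\gamma}(\widehat M)|$. We use it to compute an $\varepsilon_\star$-approximate Nash equilibrium of $\widehat M$ in time polynomial in $n$. This contradicts item~1 of Lemma~\ref{lemma:quaso}, which under Assumption~\ref{ass:eth-ppad} requires time $n^{\log^{1-o(1)}n}$. The chain of objects is
\[
\widehat M \;\longrightarrow\; F_H(\widehat M) \;\longrightarrow\; \widetilde F_{H,\gamma}(\widehat M) \;\longrightarrow\; \text{output of }\mathcal A \;\longrightarrow\; \text{CCE of } F_H(\widehat M) \;\longrightarrow\; \text{NE of }\widehat M .
\]

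\textbf{Step 1: build the instance.} Given $n\in\mathcal S$ with $n\ge n_{\rm src},n_{\rm size}(\gamma)$, pick the even $H$ with $A_H\log(n+2)\le H\le B_H\log(n+2)$. Build $F_H(\widehat M)$ via Lemma~\ref{lemma:quaso:cheese}, then $\widetilde F_{H,\gamma}(\widehat M)$ via Lemma~\ref{lemma:quaso:hyper}. Both constructions are polynomial time. By item~3 of Lemma~\ref{lemma:quaso}, $n\le N\le (n+2)^{d_0}$. Hence running $\mathcal A$ costs $(n+2)^{cd_0}$, and the output has $T=N^C\le (n+2)^{Cd_0}$ profiles, so the whole pipeline is polynomial in $n$.

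\textbf{Step 2: pull back the accuracy.} The output is an $N^{-K}$-approximate discounted CCE of $\widetilde F_{H,\gamma}(\widehat M)$. Lemma~\ref{lemma:quaso:hyper} converts it into its $H$-prefix list, which is an $N^{-K}\gamma^{-(H-1)}$-approximate CCE of $F_H(\widehat M)$. Since $N\ge n$ and $H\le B_H\log(n+2)$,
\[
\gamma^{-(H-1)}\le e^{B_H\log(1/\gamma)\log(n+2)}=(n+2)^{a_\gamma}.
\]
So the accuracy is at most $n^{-K}(n+2)^{a_\gamma}\le\varepsilon_\star/4$ by hypothesis. Choosing $K\ge a_\gamma+1$ guarantees this hypothesis holds for all large $n$, and also keeps the reduction nonvacuous. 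Thus we obtain a $T$-sparse $(\varepsilon_\star/4)$-CCE of the undiscounted game $F_H(\widehat M)$, with $T\le (n+2)^{Cd_0}$.

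\textbf{Step 3: extract a Nash equilibrium (main obstacle).} We need the quantitative content of Lemma~\ref{lemma:quaso:cheese}, namely the argument of \citet{noah_hardness}. There, $F_H(M)$ repeats $M$ over $H$ stages, with states recording enough history that a deviator can "learn" which of the $T$ sparse components is being played. The intended statement is: if $T< \exp(c_F\varepsilon_\star^2H/2)$, then some component profile, restricted at some layer and state, is an $\varepsilon_\star$-NE of $M$. The reason is that if no component restricted to any layer is an approximate NE, a multiplicative-weights–style deviator over the $T$ components gains roughly $\varepsilon_\star$ per stage after a burn-in of order $\log T/(c_F\varepsilon_\star^2)$ stages, which violates the $\varepsilon_\star/4$ CCE bound.

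The choice $A_H=2(Cd_0+1)/(c_F\varepsilon_\star^2)$ is made exactly so that
\[
c_F\varepsilon_\star^2H/2\ge (Cd_0+1)\log(n+2)>\log T .
\]
I expect the delicate part to be tracking the constants: matching the $\varepsilon_\star/4$ slack, the parity requirement on $H$, and the exact form of $c_F$ to the lemma in the appendix.

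Once a candidate layer and state are known to exist, they are found by brute force. We check each of the at most $T\cdot H\cdot|S|$ candidates for being an $\varepsilon_\star$-NE of $\widehat M$, which takes polynomial time per check. This gives a polynomial-time $\varepsilon_\star$-NE algorithm for $\widehat M$ on infinitely many $n\in\mathcal S$, contradicting Lemma~\ref{lemma:quaso}. Hence no polynomial-time algorithm exists.
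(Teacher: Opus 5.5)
Your proposal is correct and follows the paper's proof essentially step by step. You assume a polynomial-time solver with $K=a_\gamma+1$, and you pull the $N^{-K}$ discounted gap back through Lemma~\ref{lemma:quaso:hyper} to an $n^{-K}(n+2)^{a_\gamma}\le\varepsilon_\star/4$ gap on $F_H(\widehat M)$. You then use $A_H$ to make $T$ small enough that the extraction lemma (Lemma~\ref{lem:fgk-extraction-app}, i.e.\ Lemma D.4 of \citet{noah_hardness}) applies, and brute-force the polynomially many candidate pairs.

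The differences are cosmetic. Your threshold $\exp(c_F\varepsilon_\star^2H/2)$ is more conservative than the lemma's $\exp(c_F\varepsilon_\star^2H)$, and your choice of $A_H$ satisfies it anyway. You also phrase the final contradiction via item~1 of Lemma~\ref{lemma:quaso} rather than directly via Theorem~\ref{thm:rubinstein-qpoly-hardness}.
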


\noindent
Before proceeding with the proof of Theorem \ref{thm:discounted-ppad-eth-lower-main}, some remarks are in order. Theorem \ref{thm:discounted-ppad-eth-lower-main} demonstrates that with our current understanding of computational complexity, it is \emph{unlikely} that there exists a polynomial time algorithm for discounted Markov games. In particular, a quasi-polynomial time algorithm (which we have constructed) is \emph{about} the best one can hope for.

\begin{proof}[Proof of Theorem \ref{thm:discounted-ppad-eth-lower-main}]
    Suppose not. Take \(K:= a_{\gamma}+1\) that violates the running hypothesis. Let \(\widehat M\) denote the bimatrix game whose infinite-horizon discounted extension under \(\gamma\) admits a polynomial-time algorithm \(\mathcal A\) that solves \(\bigl(|\widetilde F_{H,\gamma}(\widehat M)|^C,|\widetilde F_{H,\gamma}(\widehat M)|^{-K}\bigr)\text{-}\mathsf{DiscSparseMarkovCCE}^{\mathrm{gen}}_\gamma\) with the aforementioned properties. Running \(\mathcal A\) on \(\widetilde F_{H,\gamma}(\widehat M)\), we obtain a list \(\widetilde L=(\pmb \sigma_1,\cdots,\pmb \sigma_T)\) such that
    \[
        T\le \lceil |\widetilde F_{H,\gamma}(\widehat M)|^C\rceil,
        \qquad
        \max_{i\in\{1,2\}}
        \mathrm{Gap}_i^\gamma
        (\widetilde L;\widetilde F_{H,\gamma}(\widehat M))
        \le |\widetilde F_{H,\gamma}(\widehat M)|^{-K},
    \]
    whose restrictions to the first \(H\)-layers is denoted by \( 
        L_H := \widetilde L^{[H]}
        =
        (\bm \sigma_1^{[H]},\ldots,\bm \sigma_T^{[H]})
\)
that satisfies the properties in Lemma \ref{lemma:quaso:hyper}. In particular, it holds that
\[
\begin{aligned}
\max_{i\in\{1,2\}}
\mathrm{Gap}_i^H(L_H;F_H(\widehat M))
&=
\gamma ^{-H+1}
\max_{i\in\{1,2\}}
\mathrm{Gap}_i^\gamma
(\widetilde L;\widetilde F_{H,\gamma}(\widehat M)) \le
|\widetilde F_{H,\gamma}(\widehat M)|^{-K}\gamma^{-H+1}.
\end{aligned}
\]
Since \(|\widetilde F_{H,\gamma}(\widehat M)|\ge n\) and \(H\le B_H\log(n+2)\) by Lemma \ref{lemma:quaso},
\[
\begin{aligned}
|\widetilde F_{H,\gamma}(\widehat M)|^{-K}\gamma^{-(H-1)}
&\le
n^{-K}\gamma^{-H}                                      \\
&=
n^{-K}\exp\!\left(H\log(1/\gamma)\right)                \\
&\le
n^{-K}\exp\!\left(B_H\log(n+2)\log(1/\gamma)\right)     \\
&=
n^{-K}(n+2)^{a_\gamma}
\le
\frac{\varepsilon_\star}{4}.
\end{aligned}
\]
Therefore \(L_H\) is an \(\varepsilon_\star/4\)-approximate \(T\)-sparse CCE
of \(F_H(\widehat M)\), against non-Markov deviations.
Since \(T\le \lceil |\widetilde F_{H,\gamma}(\widehat M)|^C\rceil\), \(|\widetilde F_{H,\gamma}(\widehat M)| \le (n+2)^{d_0}\), and that \(H \ge A_h\log(n+2)\), a straightforward inequality implies that \(T < \exp(c_F\varepsilon_\star^2 H)\), meaning that Lemma \ref{lem:fgk-extraction-app} is applicable. In particular, through \(\mathcal A\), one obtains an \(\varepsilon_\star\)-approximate Nash equilibrium of \(\widehat M\) in polynomial time.

\vspace{6pt}
\noindent 
Note that \(TH \le (n+2)^{Cd_0+1}O(\log n)\), which can be written in polynomial time in \(n\). Furthermore, for any given mixed strategy in \(\widehat M\), its Nash gap can be checked in polynomial time. Since \(\mathcal A\) runs in polynomial time and \(|\widetilde F_{H,\gamma}(\widehat M)|\le (n+2)^{d_0}\), all output probabilities have polynomial bit length in \(n\). Thus enumerating the
candidate pairs and outputting one with gap at most \(\varepsilon_\star\) is feasible in 
polynomial time. Construction of \(F_H(\widehat M)\) and \(\widetilde F_{H,\gamma}(\widehat M)\) takes polynomial time by Lemmas \ref{lemma:quaso:hyper} and \ref{lemma:quaso:cheese}. In particular, we have obtained a contradiction to Theorem \ref{thm:rubinstein-qpoly-hardness}.
\end{proof}

\noindent
The following corollary extends our hardness result for the
finite-horizon discounted setting. 

\begin{corollary}
\label{cor:finite-discounted-ppad-eth-main}
Fix a constant discount factor \(\gamma\in(0,1)\) and a support
exponent \(C_{\mathrm{fh}}>0\).  Under
Assumption~\ref{ass:eth-ppad}, there exists a finite-horizon discounted Markov game and an accuracy exponent
\( 
    K_{\mathrm{fh}}=K_{\mathrm{fh}}(\gamma,C_{\mathrm{fh}})>0
\) 
such that no polynomial-time algorithm can solve the following problem: given a game
\(G_{\gamma,\bar H}\) of input length at most \(N\), output a list of at most
\(N^{C_{\mathrm{fh}}}\) product Markov profiles whose \(\bar H\)-horizon
discounted CCE gap, against all history-dependent deviations, is at
most \(N^{-K_{\mathrm{fh}}}\).
\end{corollary}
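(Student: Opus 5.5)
The plan is to run the proof of Theorem~\ref{thm:discounted-ppad-eth-lower-main} again, with the infinite-horizon extension $\widetilde F_{H,\gamma}(\widehat M)$ replaced by a finite-horizon discounted game. Take $\widehat M$ and $n\in\mathcal S$ from Lemma~\ref{lemma:quaso}, and pick an even $H$ with $A_H\log(n+2)\le H\le B_H\log(n+2)$. Form the two-player game $F_H(\widehat M)$ of Lemma~\ref{lemma:quaso:cheese}. From it, build the finite-horizon discounted game $G_{\gamma,\bar H}$ with $\bar H:=H$: it has the same states, actions and transitions, and costs $\ell_h^i$, and the objective $J^\gamma_{i,H}$ weights layer $h$ by $\gamma^{h-1}$. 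An alternative is to stop the absorbing-state construction of Lemma~\ref{lemma:quaso:hyper} after layer $H$ (or $H+1$), which gives a game of horizon $\bar H=H+1$ whose absorbing layer has zero cost. Either way, the encoding size satisfies $n\le |G_{\gamma,\bar H}|\le (n+2)^{d_0}$, by the same counting as item 3 of Lemma~\ref{lemma:quaso}.

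The key step is a finite-horizon analogue of Lemma~\ref{lemma:quaso:hyper}: an $\varepsilon$-approximate discounted CCE list of $G_{\gamma,\bar H}$ (its prefix, when $\bar H>H$) is an $\varepsilon\gamma^{-(H-1)}$-approximate CCE of $F_H(\widehat M)$. I would obtain this by reusing the appendix argument for Lemma~\ref{lemma:quaso:hyper}. That argument only uses the first $H$ layers and the fact that the continuation after layer $H$ has zero cost and is policy-independent. So it applies verbatim to the truncated game, with $\tau_{H,\bar H}$-type tail terms vanishing.

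If the appendix construction instead rescales costs so that discounting is exactly undone, the gap transfers with the factor $\gamma^{-(H-1)}$. If it does not, I would rescale directly: set $\ell_h^i:=\gamma^{H-h}\,\ell_h^{F,i}$. These costs stay in $[0,1]$, and then $J^\gamma_{i,H}=\gamma^{H-1}V^{i}_{F}$ for every profile. This identity makes the gap relation an equality, and it holds for all history-dependent deviations, since the relation between the two objectives does not depend on the policy at all.

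With this in hand, I would set $C:=C_{\mathrm{fh}}$ and $K_{\mathrm{fh}}:=a_\gamma+1$, where $a_\gamma=B_H\log(1/\gamma)$ and $A_H,B_H$ are defined from $C_{\mathrm{fh}}$ as in Theorem~\ref{thm:discounted-ppad-eth-lower-main}. Suppose a polynomial-time algorithm outputs $T\le N^{C}$ profiles with gap at most $N^{-K}$. Then the same chain of inequalities as before gives $\gamma^{-(H-1)}N^{-K}\le n^{-K}(n+2)^{a_\gamma}\le \varepsilon_\star/4$ for all large $n\in\mathcal S$. The bound $T\le (n+2)^{Cd_0}<\exp(c_F\varepsilon_\star^2H)$ follows from $H\ge A_H\log(n+2)$, so Lemma~\ref{lem:fgk-extraction-app} extracts an $\varepsilon_\star$-Nash equilibrium of $\widehat M$ in polynomial time. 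This contradicts Theorem~\ref{thm:rubinstein-qpoly-hardness}.

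Two points need care. The first, which I expect to be the main obstacle, is that the regret notion in the problem statement uses $\bar H$-horizon Markov profiles and deviations, and this must match exactly what Lemma~\ref{lem:fgk-extraction-app} consumes: a sparse CCE of $F_H(\widehat M)$ against non-Markov deviations. Restricting to the first $H$ layers must preserve the deviation class. The second is that the problem is phrased for inputs of length at most $N$, so I would take $N:=|G_{\gamma,\bar H}|$ and check that the quantifier order (one accuracy exponent working for infinitely many instance sizes) matches the theorem's.
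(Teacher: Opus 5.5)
Your proposal is correct and follows the paper's argument. The paper's proof likewise treats $F_H(\widehat M)$ as a finite-horizon discounted game with costs rescaled to $\gamma^{H-h}\ell_h^i$, so that the discounted objective is exactly $\gamma^{H-1}$ times the undiscounted one and the CCE gap transfers with factor $\gamma^{-(H-1)}$, then appeals to the argument of Theorem~\ref{thm:discounted-ppad-eth-lower-main}. You additionally spell out the size, support and accuracy bookkeeping that the paper leaves implicit; note that, as in the paper, your encoding-size bound tacitly requires $\gamma$ to be rational, although the corollary is stated for all $\gamma\in(0,1)$.
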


\begin{proof}
Due to equilibrium correspondence between \(F_H(\widehat M)\) and \(\widetilde F_{H,\gamma}(\widehat M)\), given in Lemma \ref{lemma:quaso:hyper},  Theorem \ref{thm:discounted-ppad-eth-lower-main} implies that no polynomial time algorithm is feasible for a class of Markov games \(F_{H}(\widehat M)\) under some restrictions on \(H\) and \(n\). In such a game \(F_H(\widehat M)\), suppose that player \(i\) has an objective function of the form 
\(
J_{i,H}=\mathbb E\left[ \sum_{h=1}^H \gamma^{H-1}\ell^i_h(x_h,a_h) \right].
\)
Note that \(J_{i,H} = \mathbb E\left[ \sum_{h=1}^H \gamma^{h-1}\gamma^{H-1} \frac{\ell^i_h(x_h,a_h)}{\gamma^{h-1}} \right],\) and thus rewriting the cost of player \(i\) instead as \(\{\ell^i_h\gamma^{H-1}/\gamma^{h-1}\}_{h=1}^H\), we obtain that there is no polynomial time algorithm for a discounted Markov game.
\end{proof}

\section{Numerical Results}
\label{sec:num_results}



In this section, we evaluate the independent learning dynamics under both
information structures considered in the paper.  First, we run
Algorithm~\ref{alg:layered-omd} with full-feedback, as in
Theorem~\ref{thm:main}.  Second, we run Algorithm~\ref{alg:blocked-bandit-oomd} using only independently
sampled partial feedbacks, as in Theorem~\ref{thm:episodic-partial-feedback}.  These two experiments use exactly the
same four finite-horizon Markov games.  Finally, we use a two-state discounted linear-quadratic zero-sum Markov game with a closed-form stationary solution as a benchmark for the discounted truncation approximation result stated in Corollary~\ref{cor:disc-full}.

\subsection{Finite-horizon episodic Markov games}

We consider a Markov battle-of-the-sexes chain, a routing/congestion game, a
three-player public-goods game, and a transition-trap game.  The transition and
cost tables are given in Appendix~\ref{app:numerical-details}.  Every state appearing in these games has occupancy probability at least \(0.05\) under every product Markov profile.  Hence, all four games are
\((0.05,\zeta)\)-reachable for every \(\zeta\in(0,1]\). Letting \(N\) denote the
number of executed episodes, given the episode-indexed sequence of product
Markov profiles \((\rho_n)_{n=1}^N\), we report
\( 
    \operatorname{Gap}_N
    :=
    \max_{i\in[m]}
    \sup_{\mu^i\in\Pi_i^{\mathrm{gen}}}
    \frac{1}{N}
    \sum_{n=1}^{N}
    \left[
        V^{i,\rho_n}(s_1)
        -
        V^{i,\mu^i\odot\rho_n^{-i}}(s_1)
    \right].
\) 
For Algorithm~\ref{alg:layered-omd}, one policy update is performed per
episode, so \(N=t\) and \(\rho_n=\pi_n\).  For Algorithm~\ref{alg:blocked-bandit-oomd}, the profile
\(\pi_k\) is held fixed for \(B\) episodes in block \(k\); hence
\(N=kB\) and
\(\rho_{(k-1),B+r}=\pi_k\) for \(r\in[B]\).  

\vspace{6pt}
\noindent
The full-feedback experiment uses \(T=1000\), \(\eta_0=0.5\), and evaluates
the gap every \(50\) episodes.  The partial-feedback experiment uses
\( 
    K=1000,
    \;
    B=500,
    \;
    \zeta=0.05,
    \;
    \eta_0=0.1,
\) 
and evaluates the gap every \(50\) blocks, or equivalently in every \(25{,}000\)
episodes.  We perform \(20\) independent runs of the trajectory-feedback
experiment.  Figure~\ref{fig:finite-horizon-feedback-comparison} reports their
mean gap, with the shaded region representing plus or minus one empirical
standard deviation. In all four games, the CCE gap decreases under both feedback models.
Thus, the two panels respectively support the finite-horizon conclusions of
Theorems~\ref{thm:main} and~\ref{thm:episodic-partial-feedback}.

\begin{figure}[t]
    \centering
    \begin{minipage}{0.48\linewidth}
        \centering
        
        \includegraphics[width=\linewidth]{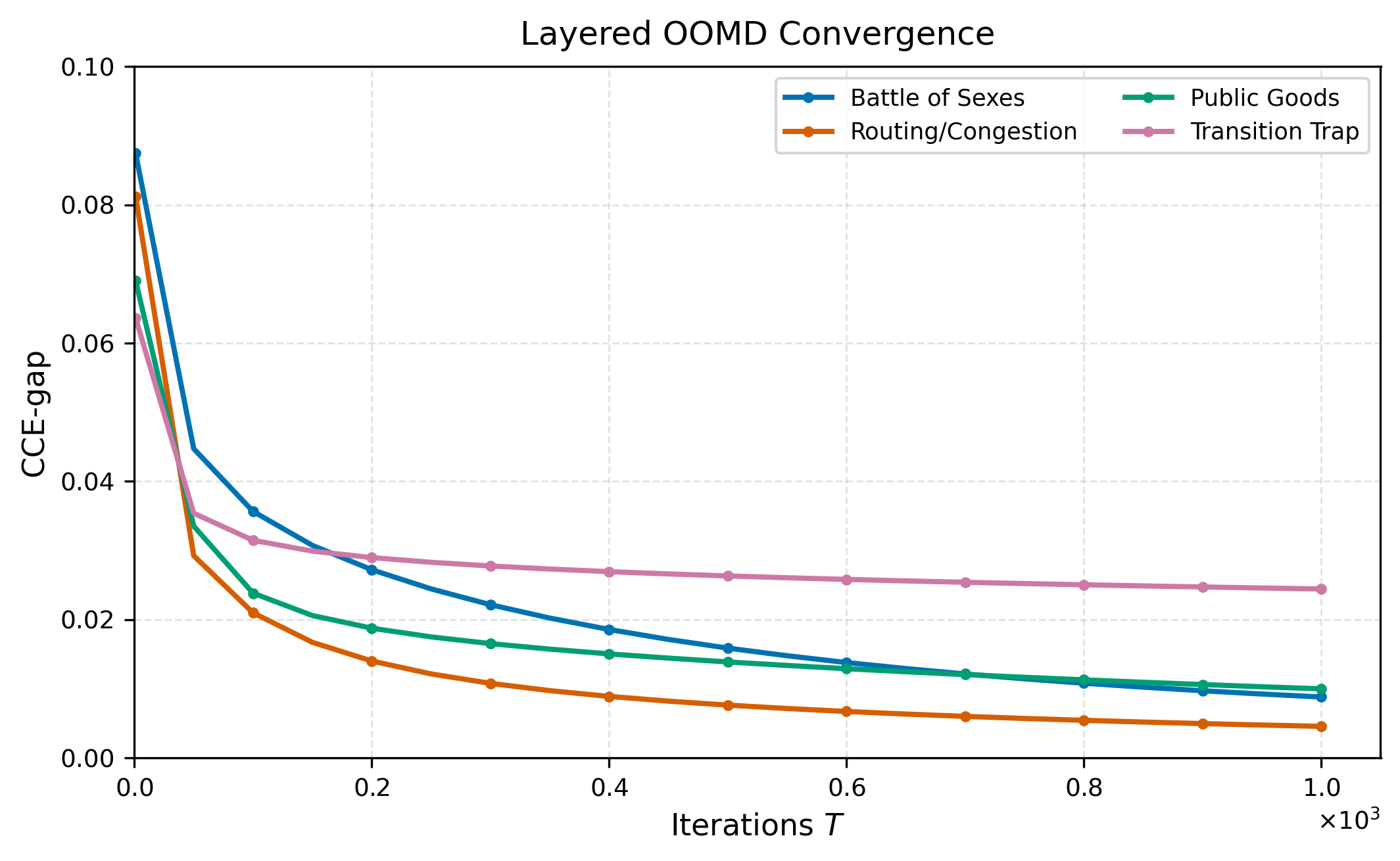}%
        \vspace{3pt}
        \textbf{(a)} Algorithm~\ref{alg:layered-omd}: full-feedback.
    \end{minipage}
    \hfill
    \begin{minipage}{0.48\linewidth}
        \centering        
            \includegraphics[width=\linewidth]{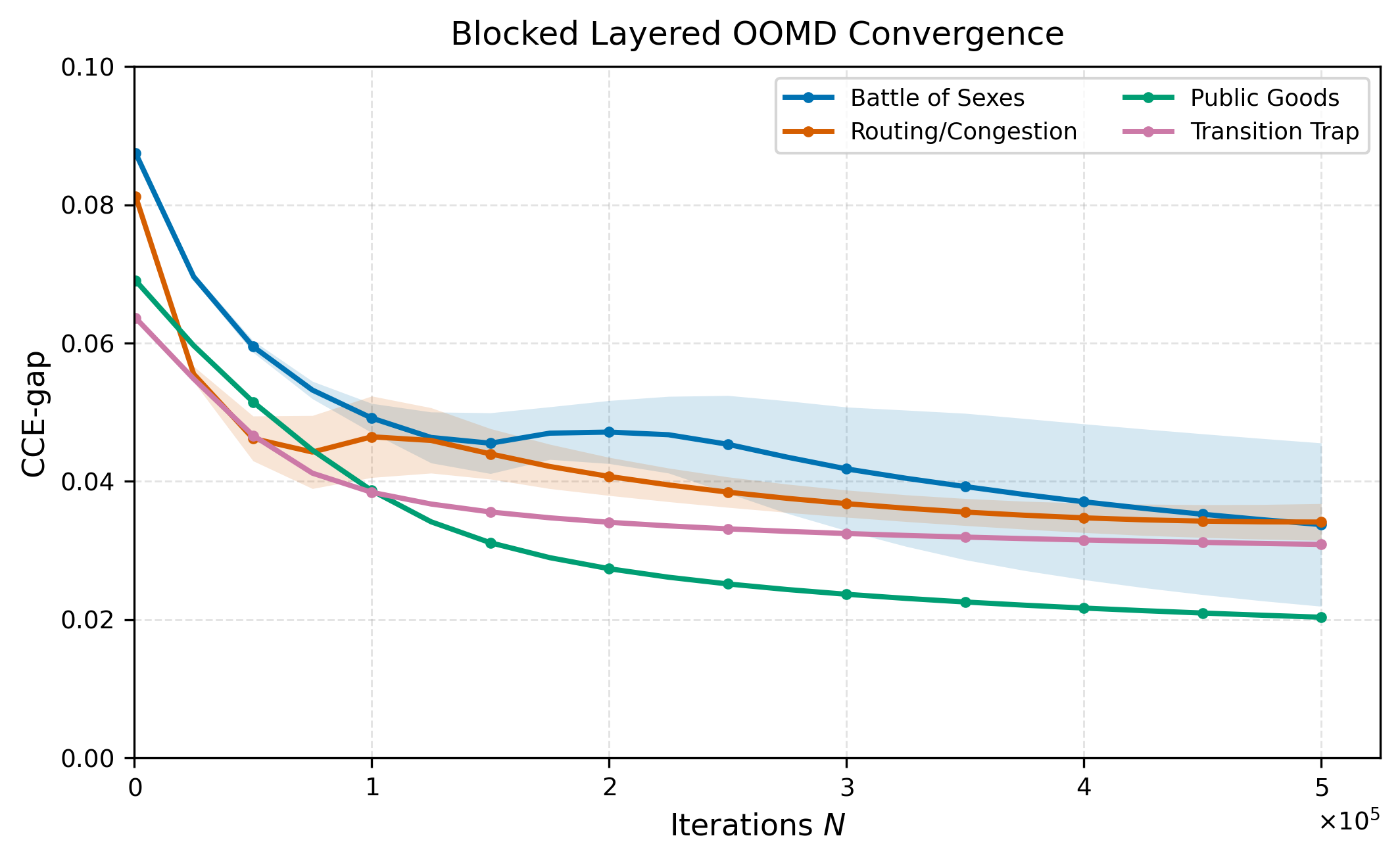}%
        \vspace{3pt}
        \textbf{(b)} Algorithm~\ref{alg:blocked-bandit-oomd}: partial feedback.
    \end{minipage}

    \caption{
    Finite-horizon behavior of the independent product-policy learning
    dynamics on the same four Markov games. Panel~(a) reports the
    deterministic feedback generated full-feedback.  Panel~(b)
    reports the mean over \(20\) independent partial feedback runs; shaded
    regions show one empirical standard deviation.
    }
    \label{fig:finite-horizon-feedback-comparison}
\end{figure}

\subsection{Discounted linear-quadratic benchmark}

The final experiment studies a two-player, two-state, zero-sum discounted
linear-quadratic Markov game with continuous scalar actions.  The benchmark is
chosen so that its infinite-horizon discounted saddle-point solution can be
computed in closed form.  In the simulation, we discretize the action interval
while ensuring that the grid contains the closed-form saddle-point actions.

\vspace{6pt}
\noindent 
Let \(J_0^\star(s_0)\) denote the normalized infinite-horizon discounted
saddle-point value obtained from the closed-form solution.  We report
\( 
    \operatorname{Err}_{T'}
    :=
    \left|
        \frac{1}{T'}\sum_{t=1}^{T'}J_0(\pi_t)
        -
        J_0^\star(s_0)
    \right|,
\) 
where \(J_0(\pi_t)\) is the value of the finite approximation used in the
simulation.  Figure~\ref{fig:discounted-lq-value-error} shows that this error
decreases rapidly. Thus, while the first two experiment evaluates CCE violations
directly for the episodic case, the LQ benchmark tests our approximation on a problem
where the infinite-horizon value is known. This experiment uses
\(
    T=1000,
    \ H=20,
    \ \gamma=0.90,
    \ \text{base grid size}=9,
    \ \eta_0=1.0
\).

\begin{figure}[t]
    \centering
    \IfFileExists{images/value_error_to_saddle.png}{%
        \includegraphics[width=0.64\linewidth]{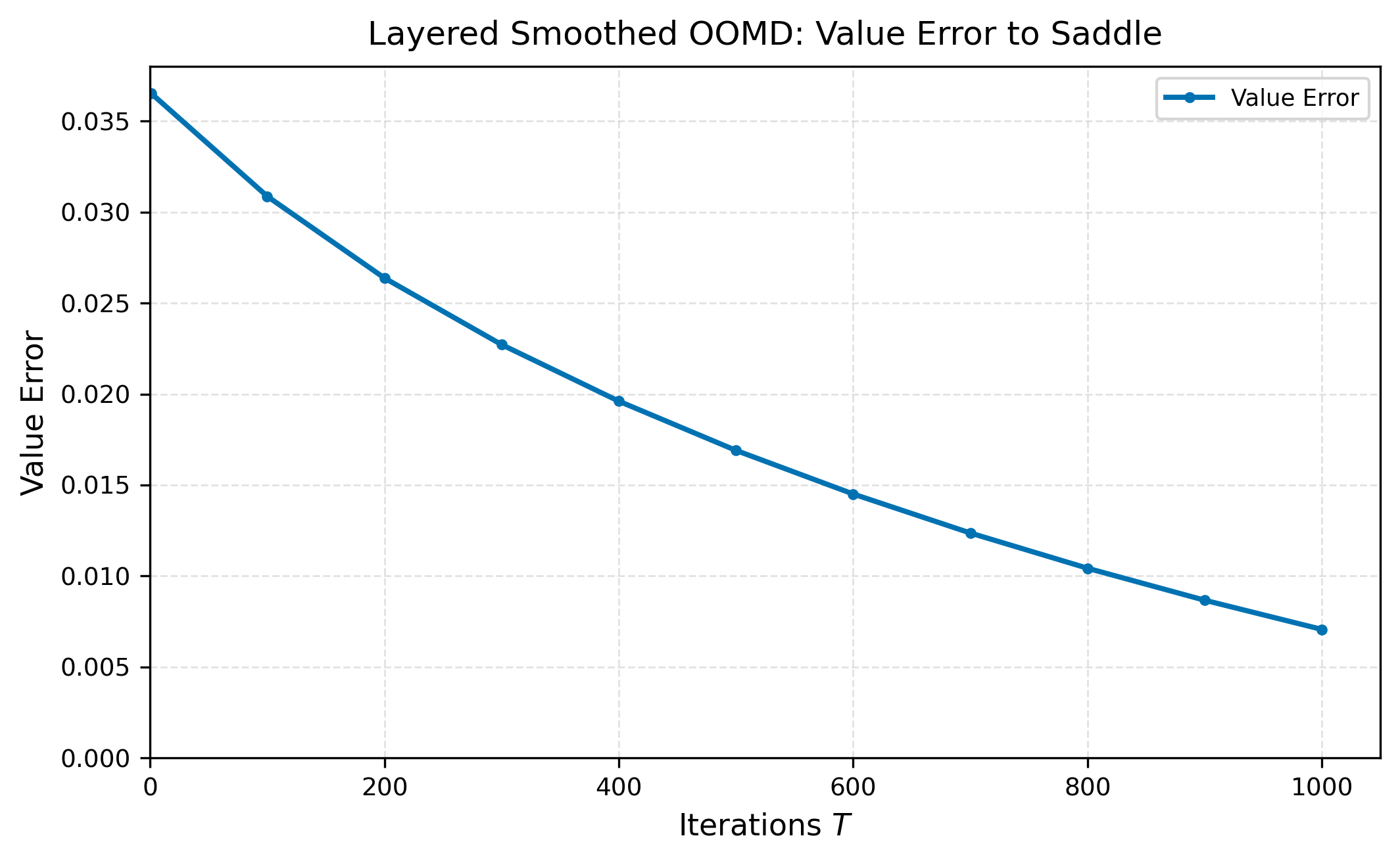}%
    }{%
        \fbox{\parbox{0.55\linewidth}{\centering
        Place \texttt{value\_error\_to\_saddle.png} here.}}%
    }
    \caption{
    Discounted linear-quadratic benchmark.  The plot reports the value error
    relative to the closed-form infinite-horizon saddle-point value.
    }
    \label{fig:discounted-lq-value-error}
\end{figure}

\section{Conclusion}

In this work, we have shown that, under the so-called ``\(\mathsf{ETH}\) for \(\mathsf{PPAD}\)'', there does not exist any polynomial time algorithm for computing CCE in discounted general-sum Markov games. We have complemented this hardness result with a quasi-polynomial-time algorithm for online self-play under a radically uncoupled information structure, without imposing any structural assumptions on the game. Our layered OOMD algorithm appears to be the first provably convergent algorithm in this setting with a sub-exponential convergence guarantee. Furthermore, we have shown that the layered OOMD algorithm extends to the partial-feedback setting while retaining similar convergence guarantees.

\vspace{6pt}
\noindent
Our work suggests several directions for future research. First, a gap remains between our quasi-polynomial upper bounds and the computational lower bounds established in this paper; determining the precise complexity of computing sparse CCEs in discounted Markov games remains an important open problem. Second, it would be interesting to determine whether polynomial-time radically uncoupled online-learning algorithms exist for general-sum Markov games when the considered policy classes for the players are restricted to Markov policies rather than general history-dependent policies. Finally, another direction is to characterize the boundary between radically uncoupled learning and more permissive information models. In particular, it would be interesting to investigate whether information models weaker than shared randomness also make tractable algorithms possible.   

\section{Acknowledgment}
Research was supported in part by the Army Research Office (ARO), under Grant Number W911NF-24-1-0085.

\newpage

\bibliographystyle{plainnat}

\newpage
\appendix 

\section{Related Work}

\noindent
The connection between no-regret learning and equilibrium computation is a cornerstone of learning in games. In repeated normal-form games, if every player uses an external no-regret algorithm with regret $\mathcal{O}(T^\alpha)$, then the empirical distribution of play converges to an $\mathcal{O}(T^{\alpha-1})$-approximate CCE; in two-player zero-sum games this further implies convergence to the saddle-point/Nash value, while swap or internal regret yields CE in general-sum games \citep{hannan57,HartMasColell2000,CesaBianchiLugosi2006,BlumMansour2007}. Classical online learning algorithms such as multiplicative weights, online mirror descent, and follow-the-regularized-leader provide the underlying no-regret guarantees in the fully adversarial setting, yielding the standard $\widetilde{\mathcal O}(T^{-1/2})$ convergence rate to equilibrium \citep{SchapireFreund95,NemirovskiYudin1983,AbernethyHazanRakhlin2008}.

\vspace{6pt}
\noindent
It is known that agents in the adversarial setting can be \emph{pessimistic} when their learning algorithms interact with one another in self-play \citet{farina2022}. Starting with the optimistic dynamics of \citet{DaskalakisDeckelbaumKim2011} for two-player zero-sum games and continuing through the RVU framework of \citet{SyrgkanisAgarwalLuoSchapire2015}, optimism and stability have emerged as central tools for leveraging this predictability. In normal-form games, optimistic variants of FTRL and OMD, together with their regularized extensions, attain convergence rates to coarse correlated and correlated equilibria that improve upon the baseline $\widetilde{\mathcal O}(T^{-1/2})$ rate, with recent results approaching $O(1/T)$ rates in general-sum normal-form games \citep{RakhlinSridharan2013,DaskalakisFishelsonGolowich2021,AnagnostidesEtAl2022,SoleymaniPiliourasFarina2025}. Our work is motivated by this optimistic self-play perspective. However, extending these ideas to Markov games introduces a challenge absent from normal-form games: deviations in the joint policy profile affect not only instantaneous losses but also future state occupancies.

\vspace{6pt}
\noindent
Early learning approaches in Markov games were mostly value-based. Specifically, the convergence of Nash-Q learning \citep{HuWellman2003} and Friend-or-Foe Q-learning \citep{Littman2001FriendFoe}, is established only under restrictions on the games encountered during learning. Related approaches based on CE computation further developed this equilibrium-computation perspective while maintaining similar structural requirements \citep{GreenwaldHall2003}. A separate line of research on decentralized \(Q\)-learning with convergence guarantees for weakly acyclic stochastic teams and games also emerged, which includes stochastic team problems, but again relies on strong structural assumptions on the games \citep{ArslanYuksel2017}. 

\vspace{6pt}
\noindent
Compared to the general-sum case, the two-player zero-sum structure in Markov games has been studied more thoroughly. There exist tractable algorithms with optimism and value-iteration variants, achieve finite sample-complexity guarantees for learning approximate Nash equilibria \citep{BaiJinYu2020,ZhangKakadeBasarYang2020}.  Policy-optimization methods have also been analyzed in the decentralized learning setting, including independent policy-gradient and optimistic gradient-descent/ascent approaches \citep{DaskalakisFosterGolowich2020,WeiEtAl2021}. Furthermore, \citet{Cai23, Chenetal2024} have developed uncoupled algorithms for two-player zero-sum Markov games with bandit feedback and finite-time last-iterate guarantees, relaxing the coordination and prior-knowledge constraints used in \citet{WeiEtAl2021}. From the value-learning angle, \citet{SayinZhangLeslieBasarOzdaglar2021} develops a radically uncoupled Q-learning algorithm for discounted zero-sum Markov games with asymptotic convergence guarantees.  

\vspace{6pt}
\noindent
Even for multi-player general-sum normal-form games, Nash equilibrium computation is known to be intractable \citet{DaskalakisGoldbergPapadimitriou2009}; and thus in Markov games provably efficient algorithms in the existing literature targets either CE or CCE. For instance, in sample-based MARL literature, V-learning \citep{Vlearning,SongMeiBai2022,MaoBasar2023}, SPoCMAR \citep{daskalakis2023} and related algorithms are utilized to compute CCE/CE policies in a decentralized setting and avoid the exponential dependence on the number of agents that appears in centralized settings \citet{BaiJinYu2020,Liu21central}. Near-optimal $\widetilde{\mathcal{O}}(1/T)$ convergence rates for both CCE and CE obtained in the full-information setting by using either value-based or stage-based policy-optimization schemes \citep{CaiLuoWeiZheng2024,MaoEtAl2024,YorulmazBasar2025}. These results require post-processing of the policies after the learning phase, which relies on the so-called \emph{shared randomness}. In contrast to our work, their learning phases do not provide any online regret guarantees, making their operating setting simpler than ours since they allow coordination mechanisms among the players.

\section{Convergence Analysis of Algorithm \ref{alg:layered-omd}}\label{app:proof-main}

\subsection{Properties of smoothed entropy}

\begin{lemma}
\label{lem:smoothed-entropy-properties}
Fix a player \(i\in[m]\) and a smoothing parameter \(\lambda_i>0\). Define the smoothed negative-entropy regularizer
\[
    \Psi_i(x)
    :=
    \sum_{a\in A_i}
    \bigl(x(a)+\lambda_i\bigr)
    \log\bigl(x(a)+\lambda_i\bigr),
    \qquad
    x\in\Delta(A_i).
\]
Then, 
\begin{equation}\label{eq:quaso}
    D_i(x,y)
    =
    \sum_{a\in A_i}
    \bigl(x(a)+\lambda_i\bigr)
    \log
    \frac{x(a)+\lambda_i}{y(a)+\lambda_i}.
\end{equation}
Moreover, defining
\( 
    \rho_i := 1+|A_i|\lambda_i,
    \;
    \bar\rho := \max_{i\in[m]}\rho_i,
\) 
and
\( 
    B_i
    :=
    \rho_i
    \log\frac{1+\lambda_i}{\lambda_i},
\) 
the following properties hold:
\begin{enumerate}
    \item \textbf{Boundedness.} For all \(x,y\in\Delta(A_i)\),
    \( 
        D_i(x,y)\le B_i.
    \) 

    \item \textbf{Strong convexity.} The function \(\Psi_i\) is \(1/\rho_i\)-strongly convex with respect to the \(\|\cdot\|_1\) on \(\Delta(A_i)\). Equivalently, for all \(x,y\in\Delta(A_i)\),
    \( 
        D_i(x,y)
        \ge
        \frac{1}{2\rho_i}
        \|x-y\|_1^2.
    \) 
\end{enumerate}
\end{lemma}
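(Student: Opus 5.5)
The plan is to compute the Bregman divergence directly from its definition, $D_i(x,y)=\Psi_i(x)-\Psi_i(y)-\langle\nabla\Psi_i(y),x-y\rangle$. The gradient is $\partial_a\Psi_i(y)=\log(y(a)+\lambda_i)+1$. The constant $1$ contributes $\sum_a(x(a)-y(a))=0$ on the simplex and drops out. The remaining terms combine as
\[
\sum_a (x(a)+\lambda_i)\log(x(a)+\lambda_i)-\sum_a (y(a)+\lambda_i)\log(y(a)+\lambda_i)-\sum_a (x(a)-y(a))\log(y(a)+\lambda_i).
\]
The $y$-terms telescope to $-\sum_a (x(a)+\lambda_i)\log(y(a)+\lambda_i)$, which gives \eqref{eq:quaso}.

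For boundedness, set $p(a):=(x(a)+\lambda_i)/\rho_i$ and $q(a):=(y(a)+\lambda_i)/\rho_i$. Both are probability vectors, since $\sum_a(x(a)+\lambda_i)=1+|A_i|\lambda_i=\rho_i$. The $\rho_i$ factors cancel inside the logarithm, so $D_i(x,y)=\rho_i\,\mathrm{KL}(p\|q)$. I then bound each log ratio pointwise: $x(a)+\lambda_i\le 1+\lambda_i$ and $y(a)+\lambda_i\ge\lambda_i$, so $\log\frac{x(a)+\lambda_i}{y(a)+\lambda_i}\le\log\frac{1+\lambda_i}{\lambda_i}$. Weighting by $p(a)$ and summing gives $\mathrm{KL}(p\|q)\le\log\frac{1+\lambda_i}{\lambda_i}$, hence $D_i\le B_i$. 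The pointwise bound is used as an upper bound only, so negative terms cause no trouble.

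For strong convexity, I apply Pinsker's inequality to the same representation: $\mathrm{KL}(p\|q)\ge\frac12\|p-q\|_1^2=\frac{1}{2\rho_i^2}\|x-y\|_1^2$. Multiplying by $\rho_i$ gives $D_i(x,y)\ge\frac{1}{2\rho_i}\|x-y\|_1^2$.

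This divergence inequality on the convex set $\Delta(A_i)$ is equivalent to $1/\rho_i$-strong convexity of $\Psi_i$ in $\|\cdot\|_1$. One can also check it via the Hessian $\mathrm{diag}(1/(x(a)+\lambda_i))$ and Cauchy–Schwarz: $\|v\|_1^2\le\sum_a(x(a)+\lambda_i)\sum_a v(a)^2/(x(a)+\lambda_i)=\rho_i\,v^\top\nabla^2\Psi_i v$.

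The only delicate point is the normalization by $\rho_i$, since $x+\lambda_i$ is not itself a distribution. Once that rescaling is made explicit, both properties follow from standard KL facts.
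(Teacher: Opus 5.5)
Your proof is correct. The divergence formula and the boundedness bound are obtained as in the paper. The paper bounds the log-ratio pointwise by $\log\frac{1+\lambda_i}{\lambda_i}$ and sums with weights $x(a)+\lambda_i$ of total mass $\rho_i$, which is your KL bound without the normalization.

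The genuine difference is the strong-convexity step. The paper uses the Hessian $\operatorname{diag}(1/(x(a)+\lambda_i))$ and Cauchy--Schwarz to get $\|z\|_1^2\le\rho_i\,z^\top\nabla^2\Psi_i(x)z$. It then passes to $D_i(x,y)\ge\frac{1}{2\rho_i}\|x-y\|_1^2$ with a one-line ``which implies''; this implicitly integrates the curvature bound along the segment. You instead write $D_i(x,y)=\rho_i\,\mathrm{KL}(p\|q)$ with $p=(x+\lambda_i)/\rho_i$, $q=(y+\lambda_i)/\rho_i$, and apply Pinsker.

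What your route buys:
\begin{itemize}
\item It lands directly on the divergence form that Lemma~\ref{lem:weighted-oomd} actually uses, with no integration step.
\item It exposes the structure. Since $p=\frac{1}{\rho_i}x+\frac{\rho_i-1}{\rho_i}\Unif(A_i)$, $D_i$ is a rescaled KL divergence between fixed-share-smoothed policies, matching the paper's fixed-share motivation.
\end{itemize}

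What the paper's route buys: it is self-contained, needing no black-box inequality, and it gives strong convexity in its primal curvature form. Your closing Hessian remark reproduces exactly the paper's argument.
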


\begin{proof}
For each \(a\in A_i\), we have
\[
    \nabla_a \Psi_i(x)
    =
    \log\bigl(x(a)+\lambda_i\bigr)+1.
\]
Therefore,
\begin{align*}
    D_i(x,y)
    &=
    \sum_{a\in A_i}
    \bigl(x(a)+\lambda_i\bigr)
    \log\bigl(x(a)+\lambda_i\bigr)
    -
    \sum_{a\in A_i}
    \bigl(y(a)+\lambda_i\bigr)
    \log\bigl(y(a)+\lambda_i\bigr) \\
    &\qquad
    -
    \sum_{a\in A_i}
    \bigl(\log(y(a)+\lambda_i)+1\bigr)
    \bigl(x(a)-y(a)\bigr).
\end{align*}
Since \(x,y\in\Delta(A_i)\), we have
\( 
    \sum_{a\in A_i}\bigl(x(a)-y(a)\bigr)=0.
\) 
The linear terms cancel, and we obtain \eqref{eq:quaso}. Next, we prove that \(D_i\) is a bounded function. For every \(a\in A_i\),
\( 
    x(a)+\lambda_i \le 1+\lambda_i,
    \;
    y(a)+\lambda_i \ge \lambda_i.
\) 
Thus,
\[
    0\le\frac{x(a)+\lambda_i}{y(a)+\lambda_i}
    \le
    \frac{1+\lambda_i}{\lambda_i} \implies D_i(x,y)
    \le
    \sum_{a\in A_i}
    \bigl(x(a)+\lambda_i\bigr)
    \log\frac{1+\lambda_i}{\lambda_i}.
\]
Since
\begin{equation}\label{eq:quaso1}
    \sum_{a\in A_i}\bigl(x(a)+\lambda_i\bigr)
    =
    1+|A_i|\lambda_i
    =
    \rho_i,
\end{equation}
\(D_i\) is bounded above by $\rho_i
    \log\left((1+\lambda_i)/\lambda_i\right)
    =
    B_i$. Next, note that the Hessian of \(\Psi_i\) is diagonal and given by
\( 
    \nabla^2\Psi_i(x)
    =
    \operatorname{diag}
    \left(
        1/(x(a)+\lambda_i)
    \right)_{a\in A_i}.
\) 
Thus, for any \(z\in\mathbb{R}^{|A_i|}\), by Cauchy--Schwarz,
\[
    \|z\|_1^2
    =
    \left(\sum_{a\in A_i}|z(a)|\right)^2
    \le
    \left(\sum_{a\in A_i}(x(a)+\lambda_i)\right)
    \left(
        \sum_{a\in A_i}
        \frac{z(a)^2}{x(a)+\lambda_i}
    \right) = \rho_i\, z^\top \nabla^2\Psi_i(x)z,
\]
where the last inequality follows from \eqref{eq:quaso1}. Therefore, \(\Psi_i\) is \(1/\rho_i\)-strongly convex with respect to the \(\ell_1\)-norm, which implies
\( 
    D_i(x,y)
    \ge
    \frac{1}{2\rho_i}
    \|x-y\|_1^2.
\) 
This completes the proof.
\end{proof}

\subsection{Properties of OOMD with smoothed entropy}
\begin{lemma}\label{lem:prox-lip}
Fix player $i$, a base point $y\in\Delta(A_i)$, and $g,g'\in\R^{A_i}$.  Define
\[
  P_y(g):=
  \argmin_{x\in\Delta(A_i)}
  \left\{
    \eta\inner{g}{x}+D_i(x,y)
  \right\}.
\]
Then
\[
  \norm{P_y(g)-P_y(g')}_1
  \le
  \eta\rho_i\norm{g-g'}_\infty.
\]
In particular,
\[
  \norm{P_y(g)-y}_1
  \le
  \eta\rho_i\norm{g}_\infty.
\]
\end{lemma}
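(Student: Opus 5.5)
This is the standard Lipschitz property of a proximal map with a strongly convex regularizer, and I would prove it using the strong convexity of $\Psi_i$ from Lemma~\ref{lem:smoothed-entropy-properties}.

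Write $x=P_y(g)$ and $x'=P_y(g')$. The objective $x\mapsto \eta\langle g,x\rangle + D_i(x,y)$ equals $\eta\langle g,x\rangle+\Psi_i(x)-\langle\nabla\Psi_i(y),x\rangle$ plus a constant, and it is strictly convex on $\Delta(A_i)$. Since $\lambda_i>0$, $\Psi_i$ is smooth on a neighborhood of the simplex, so the minimizer exists, is unique, and is characterized by the first-order optimality (variational inequality) condition
\[
\bigl\langle \eta g+\nabla\Psi_i(x)-\nabla\Psi_i(y),\, z-x\bigr\rangle\ge 0\quad\forall z\in\Delta(A_i).
\]
The same condition holds for $x'$ with $g'$. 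I would take $z=x'$ in the first inequality and $z=x$ in the second, then add them. The $\nabla\Psi_i(y)$ terms cancel, leaving
\[
\langle\nabla\Psi_i(x)-\nabla\Psi_i(x'),\,x-x'\rangle\le \eta\langle g'-g,\,x-x'\rangle .
\]

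Next I would use the $1/\rho_i$-strong convexity of $\Psi_i$ in $\|\cdot\|_1$ from Lemma~\ref{lem:smoothed-entropy-properties}. Adding $D_i(x,x')\ge\frac1{2\rho_i}\|x-x'\|_1^2$ and $D_i(x',x)\ge\frac1{2\rho_i}\|x-x'\|_1^2$ shows that the left-hand side is at least $\frac1{\rho_i}\|x-x'\|_1^2$. For the right-hand side, Hölder's inequality gives the bound $\eta\|g-g'\|_\infty\|x-x'\|_1$. Dividing through by $\|x-x'\|_1$ (the case $x=x'$ is trivial) yields $\|x-x'\|_1\le\eta\rho_i\|g-g'\|_\infty$.

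For the second claim, set $g'=0$. Then $P_y(0)=\argmin_x D_i(x,y)=y$, since $D_i\ge0$ and $D_i(y,y)=0$, with uniqueness coming from strict convexity. The first bound then gives $\|P_y(g)-y\|_1\le\eta\rho_i\|g\|_\infty$.

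There is no real obstacle. The only point that needs care is justifying the first-order optimality condition on the closed simplex. This is fine because the smoothing $\lambda_i>0$ keeps $\nabla\Psi_i$ finite up to the boundary, so the standard variational inequality for minimizing a differentiable convex function over a convex set applies directly.
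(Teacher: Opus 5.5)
Your proof is correct, but it takes a different route from the paper's. The paper does not prove the first claim itself: it invokes Proposition B.4 of Mertikopoulos et al.\ (2019), a general Lipschitz estimate for prox-mappings of strongly convex regularizers, here with modulus $1/\rho_i$. It then handles the second claim exactly as you do, by setting $g'=0$ and noting $P_y(0)=y$.

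You instead derive the Lipschitz bound directly. You add the two first-order optimality conditions and lower-bound $\langle\nabla\Psi_i(x)-\nabla\Psi_i(x'),x-x'\rangle = D_i(x,x')+D_i(x',x)$ using the strong convexity from Lemma~\ref{lem:smoothed-entropy-properties}. This is essentially the standard argument behind the cited proposition, specialized to the smoothed entropy.

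The citation buys brevity. Your version is self-contained and makes explicit the only two facts needed. First, $\lambda_i>0$ keeps $\nabla\Psi_i$ finite up to the boundary of the simplex, so the variational inequality applies directly with none of the boundary subtleties of the unsmoothed entropy. Second, the modulus $1/\rho_i$ holds for every pair of points in $\Delta(A_i)$.

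Your argument also carries over verbatim to any closed convex subset of $\Delta(A_i)$. In particular it covers the restricted simplex $\Delta_i^\zeta$ on which Algorithm~\ref{alg:blocked-bandit-oomd} runs, where the paper reuses this lemma without re-justifying it.
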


\begin{proof}
The first claim follows from Proposition B.4 in \citet{mertikopoulos2019optimistic}. 
The second claim follows by taking $g'=0$ and observing that $P_y(0)=y$, where $P_y$ is a well-defined function.
\end{proof}
\begin{lemma}
\label{lem:three-point}
Let \(\mathcal X\subseteq \mathbb R^{A_i}\) be a nonempty closed convex set, and let
\(\Psi:\mathcal X\to\mathbb R\) be a differentiable strictly convex function.
Further, let
\( 
D(x,y)
:=
\Psi(x)-\Psi(y)-\inner{\nabla\Psi(y)}{x-y},
\; x,y\in\mathcal X,
\) denote the associated Bregman divergence. Now, fix \(g\in\mathbb R^d\) and \(y\in\mathcal X\), and define
\(  
x^+
:=
\argmin_{x\in\mathcal X}
\left\{
    \eta\inner{g}{x}
    +
    D(x,y)
\right\}.
\)  
Then, for every $u\in\mathcal X$,
\[
\eta\inner{x^+-u}{g}
\le
D(u,y)
-
D(u,x^+)
-
D(x^+,y).
\]
\end{lemma}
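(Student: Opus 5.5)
The plan is to use the first-order optimality condition for \(x^+\) together with the standard three-point identity for Bregman divergences. Since \(\Psi\) is differentiable and strictly convex, the objective \(F(x):=\eta\inner{g}{x}+D(x,y)\) is convex and differentiable on \(\mathcal X\). Its gradient is \(\nabla F(x)=\eta g+\nabla\Psi(x)-\nabla\Psi(y)\). Strict convexity makes the minimizer unique; existence is taken as in the setting of the paper, e.g. compactness of the simplex. The first-order optimality condition for a minimizer of a convex differentiable function over a closed convex set then gives, for every \(u\in\mathcal X\),
\[
\inner{\eta g+\nabla\Psi(x^+)-\nabla\Psi(y)}{u-x^+}\ge 0 .
\]
Rearranged, this reads
\[
\eta\inner{x^+-u}{g}\le \inner{\nabla\Psi(x^+)-\nabla\Psi(y)}{u-x^+}.
\]

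Next, I expand the right-hand side via the three-point identity. Directly from the definition of \(D\), for any \(u,x,y\in\mathcal X\),
\[
D(u,y)-D(u,x)-D(x,y)=\inner{\nabla\Psi(x)-\nabla\Psi(y)}{u-x}.
\]
To verify it, I expand all three divergences. The \(\Psi(u)\), \(\Psi(x)\), and \(\Psi(y)\) terms cancel, which leaves \(-\inner{\nabla\Psi(y)}{u-y}+\inner{\nabla\Psi(x)}{u-x}+\inner{\nabla\Psi(y)}{x-y}\). This equals \(\inner{\nabla\Psi(x)}{u-x}-\inner{\nabla\Psi(y)}{u-x}\), as required. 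Taking \(x=x^+\) and substituting into the displayed inequality yields exactly the claim.

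The only delicate point is boundary behavior. For the smoothed entropy used in the paper, \(\nabla\Psi_i\) is finite on all of \(\Delta(A_i)\) because \(\lambda_i>0\). So the gradient is defined even when \(x^+\) lies on the boundary, and the variational inequality is legitimate without Legendre-type assumptions. In the general statement, differentiability on \(\mathcal X\) is assumed, so this causes no trouble. If one worries about one-sided derivatives at the boundary, I would derive the optimality condition directly instead. Convexity of \(\mathcal X\) gives \(F(x^++s(u-x^+))\ge F(x^+)\) for \(s\in(0,1]\). Dividing by \(s\) and letting \(s\downarrow 0\) yields the directional-derivative inequality, which is all that is needed.
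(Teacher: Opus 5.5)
Your proof is correct and is essentially the paper's argument. The paper only cites Lemma 3.2 of Chen and Teboulle (1993), whose proof is exactly what you derive directly: the first-order optimality condition for the proximal step combined with the three-point identity $D(u,y)-D(u,x)-D(x,y)=\inner{\nabla\Psi(x)-\nabla\Psi(y)}{u-x}$. Your remarks on existence of $x^+$ (presupposed by its definition) and on boundary behavior (finite gradient of the smoothed entropy, or the one-sided directional-derivative argument) adequately cover the only technical points.
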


\begin{proof}

The claim follows from Lemma 3.2. in \citet{chenTeboulle1993}. 

\end{proof}

\begin{lemma}\label{lem:weighted-oomd}
Let $x_t,\widetilde x_t\in\Delta(A_i)$ be generated by
\( 
  x_t
  =
  \argmin_{x\in\Delta(A_i)}
  \left\{\eta\inner{M_t}{x}+D(x,\widetilde x_{t-1})\right\},
\) 
\( 
  \widetilde x_t
  =
  \argmin_{x\in\Delta(A)}
  \left\{\eta\inner{\ell_t}{x}+D(x,\widetilde x_{t-1})\right\},
\) 
where $M_t=\ell_{t-1}$ and $\ell_0=0$.  Suppose that the regularizer chosen for the iterations is the smoothed entropy regularizer with parameters $\lambda_i,\rho_i=1+|A_i|\lambda_i$, and Bregman diameter at most $B$. Then, for any $u\in\Delta(A_i)$ and any weights $q_t\in[0,1]$,
\[
  \sum_{t=1}^T q_t\inner{x_t-u}{\ell_t}
  \le
  \frac{B(q_1+\TV(q))}{\eta}
  +
  \frac{\eta\rho_i}{2}
  \sum_{t=1}^T q_t\norm{\ell_t-\ell_{t-1}}_\infty^2,
\]
where we define 
\( 
  \TV(q):=\sum_{t=1}^{T-1}|q_{t+1}-q_t|.
\) 
\end{lemma}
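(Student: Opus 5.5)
The plan is to prove a per-round optimistic mirror-descent inequality with two applications of the three-point lemma (Lemma~\ref{lem:three-point}), and then sum it with the weights \(q_t\) using Abel summation. Summation by parts makes the Bregman terms telescope up to a total-variation correction in the weights. Throughout, \(D=D_i\) is the smoothed-entropy Bregman divergence. By Lemma~\ref{lem:smoothed-entropy-properties}, \(D\) satisfies \(0\le D(x,y)\le B\) and \(D(x,y)\ge \frac{1}{2\rho_i}\|x-y\|_1^2\). I read both updates as taking place over \(\Delta(A_i)\).

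\emph{Step 1 (one-round bound).} Fix \(t\) and \(u\in\Delta(A_i)\). First apply Lemma~\ref{lem:three-point} to the update defining \(\widetilde x_t\), with base point \(\widetilde x_{t-1}\), gradient \(\ell_t\) and comparator \(u\):
\[
\eta\langle \widetilde x_t-u,\ell_t\rangle\le D(u,\widetilde x_{t-1})-D(u,\widetilde x_t)-D(\widetilde x_t,\widetilde x_{t-1}).
\]
Next apply it to the update defining \(x_t\), with base point \(\widetilde x_{t-1}\), gradient \(M_t=\ell_{t-1}\) and comparator \(\widetilde x_t\):
\[
\eta\langle x_t-\widetilde x_t,M_t\rangle\le D(\widetilde x_t,\widetilde x_{t-1})-D(\widetilde x_t,x_t)-D(x_t,\widetilde x_{t-1}).
\]
Use the decomposition \(\langle x_t-u,\ell_t\rangle=\langle \widetilde x_t-u,\ell_t\rangle+\langle x_t-\widetilde x_t,M_t\rangle+\langle x_t-\widetilde x_t,\ell_t-M_t\rangle\). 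Adding the two inequalities, the \(D(\widetilde x_t,\widetilde x_{t-1})\) terms cancel and \(D(x_t,\widetilde x_{t-1})\ge 0\) can be dropped. This leaves
\[
\eta\langle x_t-u,\ell_t\rangle\le D(u,\widetilde x_{t-1})-D(u,\widetilde x_t)+\eta\|x_t-\widetilde x_t\|_1\|\ell_t-\ell_{t-1}\|_\infty-\tfrac{1}{2\rho_i}\|x_t-\widetilde x_t\|_1^2 .
\]
Here the last term comes from strong convexity applied to \(D(\widetilde x_t,x_t)\). Maximizing \(\eta a b-b^2/(2\rho_i)\) over \(b=\|x_t-\widetilde x_t\|_1\ge 0\) gives at most \(\eta^2\rho_i a^2/2\) with \(a=\|\ell_t-\ell_{t-1}\|_\infty\). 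Dividing by \(\eta\), each round therefore satisfies
\[
\langle x_t-u,\ell_t\rangle\le \frac{D_{t-1}-D_t}{\eta}+\frac{\eta\rho_i}{2}\|\ell_t-\ell_{t-1}\|_\infty^2,\qquad D_t:=D(u,\widetilde x_t).
\]
With \(\ell_0=0\), the first round carries the term \(\|\ell_1\|_\infty^2\), as the statement allows.

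\emph{Step 2 (weighted summation).} Multiply the round-\(t\) inequality by \(q_t\in[0,1]\); since \(q_t\ge 0\), the direction of the inequality is preserved. Then sum over \(t\). The quadratic terms give exactly \(\frac{\eta\rho_i}{2}\sum_t q_t\|\ell_t-\ell_{t-1}\|_\infty^2\). For the Bregman terms, summation by parts gives
\[
\sum_{t=1}^T q_t(D_{t-1}-D_t)=q_1D_0-q_TD_T+\sum_{t=1}^{T-1}(q_{t+1}-q_t)D_t .
\]
Now use \(0\le D_t\le B\): drop \(-q_TD_T\le 0\), bound \(q_1D_0\le Bq_1\), and bound each \((q_{t+1}-q_t)D_t\le |q_{t+1}-q_t|B\). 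This gives at most \(B(q_1+\mathrm{TV}(q))\), and dividing by \(\eta\) yields the claimed bound.

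The main subtlety is Step 1. The negative term \(-D(\widetilde x_t,x_t)\) must be kept rather than discarded, because it is what absorbs the optimism error \(\langle x_t-\widetilde x_t,\ell_t-M_t\rangle\). This is exactly where the \(1/\rho_i\)-strong convexity from Lemma~\ref{lem:smoothed-entropy-properties} enters. The second point to get right is that, with time-varying weights, the Bregman terms do not telescope; Abel summation together with the uniform diameter bound \(B\) is what converts the mismatch into \(\mathrm{TV}(q)\).
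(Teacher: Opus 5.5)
Your proposal is correct and follows essentially the same route as the paper's proof. You apply the three-point lemma twice, with comparators $u$ and $\widetilde x_t$, so that $D(\widetilde x_t,\widetilde x_{t-1})$ cancels; you use $1/\rho_i$-strong convexity with Young's inequality to absorb $\langle x_t-\widetilde x_t,\ell_t-\ell_{t-1}\rangle$; and you finish with weighted summation by parts using $0\le D(u,\widetilde x_t)\le B$ to obtain the $B(q_1+\TV(q))/\eta$ term.
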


\begin{proof}
Throughout the proof, we follow steps similar those in \citet{RakhlinSridharan2013, SyrgkanisAgarwalLuoSchapire2015, ErezLancewickiShermanKorenMansour2023}.

\vspace{6pt}
\noindent 
Since
\( 
\widetilde x_t
=
\argmin_{x\in\Delta(A_i)}
\left\{
    \eta\inner{\ell_t}{x}
    +
    D(x,\widetilde x_{t-1})
\right\},
\) 
from Lemma \ref{lem:three-point}, for every
$u\in\Delta(A_i)$,
\begin{align}\label{eq:quaso15}
\inner{\widetilde x_t-u}{\ell_t}
\le
\frac{
    D(u,\widetilde x_{t-1})
    -
    D(u,\widetilde x_t)
}{\eta}
-
\frac{
    D(\widetilde x_t,\widetilde x_{t-1})
}{\eta}.
\end{align}
Next, decompose
\( 
\inner{x_t-u}{\ell_t}
=
\inner{\widetilde x_t-u}{\ell_t}
+
\inner{x_t-\widetilde x_t}{\ell_t}.
\) 
Substituting \eqref{eq:quaso15} leads to
\begin{align}\label{eq:quaso16}
\inner{x_t-u}{\ell_t}
\le 
\frac{
    D(u,\widetilde x_{t-1})
    -
    D(u,\widetilde x_t)
}{\eta}
+
\inner{x_t-\widetilde x_t}{\ell_t}
-
\frac{
    D(\widetilde x_t,\widetilde x_{t-1})
}{\eta}.
\end{align}
Now split
\begin{align}\label{eq:quaso17}
\inner{x_t-\widetilde x_t}{\ell_t}
=
\inner{x_t-\widetilde x_t}{\ell_t-M_t}
+
\inner{x_t-\widetilde x_t}{M_t}.
\end{align}
Since
\(
x_t
=
\argmin_{x\in\Delta(A_i)}
\left\{
    \eta\inner{M_t}{x}
    +
    D(x,\widetilde x_{t-1})
\right\},
\)
another application of Lemma \ref{lem:three-point}, with comparator
$u=\widetilde x_t$, yields
\begin{align}\label{eq:quaso18}
\inner{x_t-\widetilde x_t}{M_t}
\le
\;
\frac{
    D(\widetilde x_t,\widetilde x_{t-1})
}{\eta}
-
\frac{
    D(\widetilde x_t,x_t)
}{\eta}
-
\frac{
    D(x_t,\widetilde x_{t-1})
}{\eta}.
\end{align}
Substituting \eqref{eq:quaso17} and \eqref{eq:quaso18} into \eqref{eq:quaso16} cancels out the term
\(
D(\widetilde x_t,\widetilde x_{t-1})
\),
yielding
\[
\begin{aligned}
\inner{x_t-u}{\ell_t}
\le 
\frac{
    D(u,\widetilde x_{t-1})
    -
    D(u,\widetilde x_t)
}{\eta}
+
\inner{x_t-\widetilde x_t}{\ell_t-M_t}
-
\frac{
    D(\widetilde x_t,x_t)
}{\eta}
-
\frac{
    D(x_t,\widetilde x_{t-1})
}{\eta}.
\end{aligned}
\]
Finally, since Bregman divergences are nonnegative, we may drop the final negative term to obtain
\[
\begin{aligned}
    \inner{x_t-u}{\ell_t}
    \le
    \frac{
        D(u,\widetilde x_{t-1})
        -
        D(u,\widetilde x_t)
    }{\eta}
    +
    \inner{x_t-\widetilde x_t}{\ell_t-M_t}
    -
    \frac{
        D(\widetilde x_t,x_t)
    }{\eta}.
\end{aligned}
\]
Due to Lemma \ref{lem:smoothed-entropy-properties}, $\Psi_i$ is $1/\rho_i$-strongly convex with respect to $\ell_1$,
\( 
  D(\widetilde x_t,x_t)
  \ge
  \frac{1}{2\rho_i}\norm{x_t-\widetilde x_t}_1^2.
\) 
Therefore, by Young's inequality,
\[
\begin{aligned}
  \inner{x_t-\widetilde x_t}{\ell_t-M_t}
  -\frac{D(\widetilde x_t,x_t)}{\eta}
  &\le
  \norm{x_t-\widetilde x_t}_1\norm{\ell_t-M_t}_\infty
  -\frac{1}{2\eta\rho_i}\norm{x_t-\widetilde x_t}_1^2 \le
  \frac{\eta\rho_i}{2}\norm{\ell_t-M_t}_\infty^2.
\end{aligned}
\]
Since $M_t=\ell_{t-1}$,
\[
  \inner{x_t-u}{\ell_t}
  \le
  \frac{D(u,\widetilde x_{t-1})-D(u,\widetilde x_t)}{\eta}
  +
  \frac{\eta\rho_i}{2}\norm{\ell_t-\ell_{t-1}}_\infty^2.
\]
Multiplying by $q_t$ and summing up yields
\[
  \sum_{t=1}^T q_t\inner{x_t-u}{\ell_t}
  \le
  \frac1\eta\sum_{t=1}^T q_t(D_{t-1}-D_t)
  +
  \frac{\eta\rho_i}{2}\sum_{t=1}^Tq_t\norm{\ell_t-\ell_{t-1}}_\infty^2,
\]
where $D_t:=D(u,\widetilde x_t)$.  Since $0\le D_t\le B$, leveraging summation by parts we get, 
\[
\begin{aligned}
  \sum_{t=1}^T q_t(D_{t-1}-D_t)
  &=q_1D_0-q_TD_T+
    \sum_{t=1}^{T-1}(q_{t+1}-q_t)D_t \le Bq_1+B\TV(q).
\end{aligned}
\]
This completes the proof.
\end{proof}

\subsection{First order path of the played policies}

\begin{lemma}\label{lem:movement}
For every player $i$, layer $h$, state $s\in S_h$, and episode $t$,
\[
  \norm{\pi^i_{t+1,h}(\cdot\mid s)-\pi^i_{t,h}(\cdot\mid s)}_1
  \le
  3\rho_iH\eta_h.
\]
\end{lemma}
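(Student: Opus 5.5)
The plan is to write both consecutive played iterates as proximal steps from anchors in the auxiliary sequence, and to control each distance with the proximal Lipschitz property of Lemma~\ref{lem:prox-lip}. Fix $i,h,s$ and abbreviate $x_t:=x_{t,h}^{i,s}=\pi_{t,h}^i(\cdot\mid s)$, $\widetilde x_t:=\widetilde x_{t,h}^{i,s}$, $g_t:=Q_{t,h}^i(s,\cdot)$, with $g_0\equiv 0$. Since all costs lie in $[0,1]$ and there are at most $H$ remaining layers, every $Q$-vector satisfies $\|g_t\|_\infty\le H$. By the update rules of Algorithm~\ref{alg:layered-omd},
\[
x_t=P_{\widetilde x_{t-1}}(g_{t-1}),\qquad \widetilde x_t=P_{\widetilde x_{t-1}}(g_t),\qquad x_{t+1}=P_{\widetilde x_t}(g_t),
\]
where $P_y$ is the prox map of Lemma~\ref{lem:prox-lip} with step $\eta=\eta_h$.

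The triangle inequality then gives
\[
\|x_{t+1}-x_t\|_1\le \|x_{t+1}-\widetilde x_t\|_1+\|\widetilde x_t-x_t\|_1 .
\]
For the first term, $x_{t+1}=P_{\widetilde x_t}(g_t)$, so the second claim of Lemma~\ref{lem:prox-lip} (distance to the base point) yields $\|x_{t+1}-\widetilde x_t\|_1\le\eta_h\rho_i\|g_t\|_\infty\le\eta_h\rho_iH$.

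For the second term, $x_t$ and $\widetilde x_t$ are prox steps from the same base point $\widetilde x_{t-1}$, with gradients $g_{t-1}$ and $g_t$. The first claim of Lemma~\ref{lem:prox-lip} gives $\|\widetilde x_t-x_t\|_1\le\eta_h\rho_i\|g_t-g_{t-1}\|_\infty\le 2\eta_h\rho_iH$. Here I use the crude bound $\|g_t-g_{t-1}\|_\infty\le\|g_t\|_\infty+\|g_{t-1}\|_\infty\le 2H$. This also covers $t=1$, where $g_0=0$. One could instead use $\|g_t-g_{t-1}\|_\infty\le H$ because both vectors lie in $[0,H]$, but the factor $2$ is enough to reach the stated constant.

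Adding the two terms gives $\|x_{t+1}-x_t\|_1\le 3\rho_iH\eta_h$. The only point requiring care is the uniform bound $\|Q_{t,h}^i(s,\cdot)\|_\infty\le H$; it follows from nonnegativity of the costs and $\ell\le1$ via the definition of $Q$ as an expected sum of at most $H-h+1\le H$ costs. Everything else is a direct application of Lemma~\ref{lem:prox-lip}, so I expect no real obstacle. The one thing to check is that the intermediate iterate $\widetilde x_t$ is the correct common anchor, both as the base point for $x_{t+1}$ and as a sibling prox step of $x_t$.
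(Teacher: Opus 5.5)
Your proposal is correct and follows the paper's proof essentially verbatim. It uses the same triangle inequality through the anchor $\widetilde x_{t,h}^{i,s}$, the base-point bound of Lemma~\ref{lem:prox-lip} giving $\|x_{t+1}-\widetilde x_t\|_1\le\eta_h\rho_iH$, and the Lipschitz bound on the two prox steps from the common base $\widetilde x_{t-1}$ giving $2\eta_h\rho_iH$. Your side remark is also right: since both $Q$-vectors lie in $[0,H]^{A_i}$, one has $\|g_t-g_{t-1}\|_\infty\le H$, which would sharpen the constant to $2\rho_iH\eta_h$.
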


\begin{proof}
Write $x_t=x^{i,s}_{t,h}$ and $\widetilde x_t=\widetilde x^{i,s}_{t,h}$.  By Lemma~\ref{lem:prox-lip}, applied with base $\widetilde x_t$ and loss $Q^i_t(s,\cdot)$,
\[
  \norm{x_{t+1}-\widetilde x_t}_1
  \le
  \eta_h\rho_i\norm{Q^i_{t,h}(s,\cdot)}_\infty
  \le
  \eta_h\rho_iH.
\]
Again applying Lemma~\ref{lem:prox-lip} to compare the reference point $\widetilde{x}_t$ and the played point $x_t$, both generated from the common base point $\widetilde{x}_{t-1}$ using the losses $Q_t^i(s,\cdot)$ and $Q_{t-1}^i(s,\cdot)$, respectively, we obtain:
\[
  \norm{\widetilde x_t-x_t}_1
  \le
  \eta_h\rho_i
  \norm{Q^i_{t,h}(s,\cdot)-Q^i_{t-1,h}(s,\cdot)}_\infty
  \le
  2\eta_h\rho_iH,
\]
because both $Q$-functions lie in $[0,H]^{A_i}$.  The triangle inequality gives
\( 
\norm{x_{t+1}-x_t}_1\le 3\rho_iH\eta_h.
\) 
Since $x_t=\pi^i_{t,h}(\cdot\mid s)$, the claim follows.
\end{proof}
\noindent 
For later use, we define
\( 
  \psi^i_{t,h}:=
  \max_{s\in S_h}
  \norm{\pi^i_{t+1,h}(\cdot\mid s)-\pi^i_{t,h}(\cdot\mid s)}_1.
\) 
Therefore, Lemma~\ref{lem:movement} gives
\(  \psi^i_{t,h}
  \le
  3\rhobar H\eta_h.
\) 

\subsection{History occupancy variation}

\begin{lemma}
\label{lem:history-tv}
Fix a player $i$, a layer \(h\), and a policy $\mu_i\in\Pi_i^{\mathrm{gen}}$.
Then, we have,
\[
  \TV^{i,\mu}_h
  :=
  \sum_{t=1}^{T-1}
  \norm{d^{i,\mu}_{t+1,h}-d^{i,\mu}_{t,h}}_1
  \le
  3m\rhobar HT
  \sum_{\ell=1}^{h-1}\eta_\ell .
\]
\end{lemma}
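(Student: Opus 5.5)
The plan is to write the history distribution explicitly as a product of transition factors and bound its one-step change by a telescoping (hybrid) argument over layers.

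For $\tau_h=(s_1,\bm a_1,\ldots,s_{h-1},\bm a_{h-1},s_h)$ we have
\[
d^{i,\mu}_{t,h}(\tau_h)=\prod_{\ell=1}^{h-1}\mu^i_\ell(a_{\ell}^i\mid\tau_\ell)\,\prod_{j\neq i}\pi^j_{t,\ell}(a_\ell^j\mid s_\ell)\,P_\ell(s_{\ell+1}\mid s_\ell,\bm a_\ell).
\]
Here $\tau_\ell$ is the prefix of $\tau_h$. Only the opponent factors depend on $t$, and only through layers $\ell<h$. This already explains why the bound involves only $\eta_1,\ldots,\eta_{h-1}$.

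To compare $d_{t+1,h}$ with $d_{t,h}$, I will introduce hybrid distributions $d^{(k)}$ for $k=0,\ldots,h-1$. In $d^{(k)}$ the opponents use $\bm\pi^{-i}_{t+1}$ at layers $\ell\le k$ and $\bm\pi^{-i}_t$ at layers $\ell>k$. Then $d^{(0)}=d_{t,h}$ and $d^{(h-1)}=d_{t+1,h}$, and the triangle inequality gives
\[
\norm{d_{t+1,h}-d_{t,h}}_1\le\sum_{k=1}^{h-1}\norm{d^{(k)}-d^{(k-1)}}_1 .
\]

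The two hybrids $d^{(k)}$ and $d^{(k-1)}$ differ only in the layer-$k$ opponent kernel. I will first marginalize over the suffix after layer $k$. The conditional law of the suffix given $(\tau_k,\bm a_k)$ is the same under both hybrids, and a common Markov kernel applied to two measures does not increase $\ell_1$ distance. Hence
\[
\norm{d^{(k)}-d^{(k-1)}}_1\le\sum_{\tau_k}p(\tau_k)\sum_{a_k^i}\mu^i_k(a_k^i\mid\tau_k)\,\norm{\bm\pi^{-i}_{t+1,k}(\cdot\mid s_k)-\bm\pi^{-i}_{t,k}(\cdot\mid s_k)}_1 ,
\]
where $p$ is the common law of $\tau_k$ (the layers before $k$ are identical in both hybrids). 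For product distributions, the standard inequality $\norm{\otimes_j p_j-\otimes_j q_j}_1\le\sum_j\norm{p_j-q_j}_1$ bounds the opponent term by $\sum_{j\neq i}\psi^j_{t,k}$. Averaging against probability weights preserves this, so $\norm{d^{(k)}-d^{(k-1)}}_1\le(m-1)\,3\bar\rho H\eta_k$ by Lemma~\ref{lem:movement}.

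Summing over $k<h$ and then over $t\le T-1$ gives the claimed bound $3m\bar\rho HT\sum_{\ell<h}\eta_\ell$, with slack from $m-1\le m$ and $T-1\le T$. The main obstacle is the careful justification of the contraction step. The suffix factors depend on the full history, including $\mu^i$, but they are identical across the two hybrids. I would argue this by conditioning on $(\tau_k,\bm a_k)$, or equivalently by summing the product representation from layer $h-1$ down to $k+1$, where the factors sum to one.
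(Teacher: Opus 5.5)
Your proposal is correct and follows essentially the same route as the paper. The paper writes $d^{i,\mu}_{t,h}=\delta_{s_1}K^{i,\mu}_{t,1}\cdots K^{i,\mu}_{t,h-1}$ with history transition kernels and telescopes the difference of the kernel products, which is exactly your hybrid sequence (with $t$ and $t+1$ swapped between prefix and suffix). It then uses the same $\ell_1$-contraction of stochastic kernels, the product-distribution inequality, and Lemma~\ref{lem:movement}. The only cosmetic difference is that you average the layer-$k$ opponent discrepancy over the common prefix law, whereas the paper takes the worst case over histories; both give the same bound.
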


\begin{proof}
Given $\tau_\ell\in\Hist_\ell$, $a_i\in A_i$, $\bm a_{-i}\in A_{-i}$, and
$s'\in S_{\ell+1}$, define the next history, 
\( 
  \tau_{\ell+1}
  =
  \bigl(\tau_\ell, a_i, \bm a_{-i}, s'\bigr)
  \in \Hist_{\ell+1}.
\) For episode $t$, define the history transition kernel
\( 
  K^{i,\mu}_{t,\ell}
  :
  \Hist_\ell \to \Delta(\Hist_{\ell+1})
\) 
by
\[
  K^{i,\mu}_{t,\ell}(\tau_{\ell+1}\mid \tau_\ell)
  :=
  \sum_{\substack{a_i\in A_i,\;a_{-i}\in A_{-i},\;s'\in S_{\ell+1}:\\
  \tau_{\ell+1}=\bigl(\tau_\ell, a_i, \bm a_{-i}, s'\bigr)}}
  \mu^i(a_i\mid \tau_\ell)\,
  \bm{\pi}^{-i}_{t,\ell}(\bm a_{-i}\mid s(\tau_\ell))\,
  P_\ell(s'\mid s(\tau_\ell),a_i,\bm a_{-i}),
\]
First, we demonstrate that $K^{i,\mu}_{t,\ell}$ is a stochastic kernel. For every fixed
$\tau_\ell\in\Hist_\ell$,
\[
\begin{aligned}
  \sum_{\tau_{\ell+1}\in\Hist_{\ell+1}}
  K^{i,\mu}_{t,\ell}(\tau_{\ell+1}\mid\tau_\ell)
  &=
  \sum_{a_i\in A_i}
  \sum_{\bm a_{-i}\in A_{-i}}
  \sum_{s'\in S_{\ell+1}}
  \mu^i(a_i\mid\tau_\ell)
  \bm{\pi}^{-i}_{t,\ell}(\bm a_{-i}\mid s(\tau_\ell))
  P_\ell(s'\mid s(\tau_\ell),\bm a_i,a_{-i})
  \\
  &=
  \sum_{a_i\in A_i}
  \mu^i(a_i\mid\tau_\ell)
  \sum_{\bm a_{-i}\in A_{-i}}
  \bm{\pi}^{-i}_{t,\ell}(\bm a_{-i}\mid s(\tau_\ell))
  \sum_{s'\in S_{\ell+1}}
  P_\ell(s'\mid s(\tau_\ell),\bm a_i,a_{-i})
  \\
  &=1.
\end{aligned}
\]
Now, let $d^{i,\mu}_{t,h}\in\Delta(\Hist_h)$ denote the distribution over histories
at the beginning of layer $h$ induced by player $i$ using $\mu^i$ and the
opponents using $\bm{\pi}^{-i}_t$. Denoting by
\( 
    \delta_s \in \Delta(\mathcal S)
\) 
the probability distribution placing unit
mass on \(s\), we have
\( 
  d^{i,\mu}_{t,1}=\delta_{s_1},
\) 
For $h\ge2$,
\[ 
  d^{i,\mu}_{t,h}
  =
  \delta_{s_1}
  K^{i,\mu}_{t,1}
  K^{i,\mu}_{t,2}
  \cdots
  K^{i,\mu}_{t,h-1}.
\]
Here we view distributions as row vectors and kernels as row-stochastic
matrices. We next compare $d^{i,\mu}_{t+1,h}$ with $d^{i,\mu}_{t,h}$. For notational
simplicity, write
\( 
  K_\ell := K^{i,\mu}_{t,\ell},
  \;
  K'_\ell := K^{i,\mu}_{t+1,\ell}.
\) 
Then
\begin{align*}
  d^{i,\mu}_{t+1,h}-d^{i,\mu}_{t,h}
  &=
  \delta_{s_1}
  \left(
    K'_1K'_2\cdots K'_{h-1}
    -
    K_1K_2\cdots K_{h-1}
  \right)
  \\&=\delta_{s_1}\left( \sum_{\ell=1}^{h-1}
  K_1\cdots K_{\ell-1}
  (K'_\ell-K_\ell)
  K'_{\ell+1}\cdots K'_{h-1}\right).
\end{align*}
Stochastic kernels are $\ell_1$-contractions on signed measures, i.e., it holds that
\( 
  \norm{\nu K}_1\le \norm{\nu}_1
\) 
for every signed measure $\nu$ and every stochastic kernel $K$. Hence
\begin{align*}
  \norm{
    \delta_{s_1}
    K_1\cdots K_{\ell-1}
    (K'_\ell-K_\ell)
    K'_{\ell+1}\cdots K'_{h-1}
  }_1
  &\le
  \norm{
    \delta_{s_1}
    K_1\cdots K_{\ell-1}
    (K'_\ell-K_\ell)
  }_1
  \le
  \norm{K'_\ell-K_\ell}_{\infty,1},
\end{align*}
where
\( 
  \norm{K'_\ell-K_\ell}_{\infty,1}
  :=
  \max_{\tau_\ell\in\Hist_\ell}
  \sum_{\tau_{\ell+1}\in\Hist_{\ell+1}}
  \left|
    K'_\ell(\tau_{\ell+1}\mid\tau_\ell)
    -
    K_\ell(\tau_{\ell+1}\mid\tau_\ell)
  \right|.
\) Therefore,
\begin{align}\label{eq:quaso2}
  \norm{d^{i,\mu}_{t+1,h}-d^{i,\mu}_{t,h}}_1
  &\le
  \sum_{\ell=1}^{h-1}
  \norm{
    \delta_{s_1}
    K_1\cdots K_{\ell-1}
    (K'_\ell-K_\ell)
    K'_{\ell+1}\cdots K'_{h-1}
  }_1 \le
  \sum_{\ell=1}^{h-1}
  \norm{K^{i,\mu}_{t+1,\ell}-K^{i,\mu}_{t,\ell}}_{\infty,1}.
\end{align}
\noindent
It remains to control the one-layer kernel differences \(\norm{K^{i,\mu}_{t+1,\ell}-K^{i,\mu}_{t,\ell}}_{\infty,1}\). Fix
$\tau_\ell\in\Hist_\ell$ and write $s=s(\tau_\ell)$. Since $\mu^i$ and
$P_\ell$ are the same in episodes $t$ and $t+1$, only the opponents' product
distribution changes. Therefore,
\[
\begin{aligned}
\norm{
  K^{i,\mu}_{t+1,\ell}(\cdot\mid\tau_\ell)
  -
  K^{i,\mu}_{t,\ell}(\cdot\mid\tau_\ell)
}_1
 & =\!\!\!\!\!\!
\sum_{\tau_{\ell+1}\in\Hist_{\ell+1}}
\!\!\!\!\!\left|
  K^{i,\mu}_{t+1,\ell}(\tau_{\ell+1}\mid\tau_\ell)
  -
  K^{i,\mu}_{t,\ell}(\tau_{\ell+1}\mid\tau_\ell)
\right|
\\
& = 
\sum_{a_i}
\sum_{\bm a_{-i}}
\sum_{s'} 
\mu^i(a_i\mid\tau_\ell)
P_\ell(s'\mid s,a_i,\bm a_{-i})
\left|
  \bm{\pi}^{-i}_{t+1,\ell}(\bm a_{-i}\mid s)
  -
  \bm{\pi}^{-i}_{t,\ell}(\bm a_{-i}\mid s)
\right|
\\
& =
\sum_{\bm a_{-i}} \left|
  \bm{\pi}^{-i}_{t+1,\ell}(\bm a_{-i}\mid s)
  -
  \bm{\pi}^{-i}_{t,\ell}(\bm a_{-i}\mid s)
\right|
\sum_{a_i}
\mu^i(a_i\mid\tau_\ell)
\sum_{s'}
P_\ell(s'\mid s,a_i,\bm a_{-i})
\\
& =
\norm{
  \bm{\pi}^{-i}_{t+1,\ell}(\cdot\mid s)
  -
  \bm{\pi}^{-i}_{t,\ell}(\cdot\mid s)
}_1 \le
  \max_{s\in S_\ell}
  \norm{
    \bm{\pi}^{-i}_{t+1,\ell}(\cdot\mid s)
    -
    \bm{\pi}^{-i}_{t,\ell}(\cdot\mid s)
  }_1.
\end{aligned}
\]
For product distributions, we have 
\[
  \norm{
    \bm{\pi}^{-i}_{t+1,\ell}(\cdot\mid s)
    -
    \bm{\pi}^{-i}_{t,\ell}(\cdot\mid s)
  }_1
  \le
  \sum_{j\ne i}
  \norm{
    \pi^j_{t+1,\ell}(\cdot\mid s)
    -
    \pi^j_{t,\ell}(\cdot\mid s)
  }_1.
\]
Since, 
\( 
  \psi^j_{t,\ell}
  :=
  \max_{s\in S_\ell}
  \norm{
    \pi^j_{t+1,\ell}(\cdot\mid s)
    -
    \pi^j_{t,\ell}(\cdot\mid s)
  }_1,
\) 
it holds that
\( 
  \norm{K^{i,\mu}_{t+1,\ell}-K^{i,\mu}_{t,\ell}}_{\infty,1}
  \le
  \sum_{j\ne i}\psi^j_{t,\ell}.
\) 
For all \(h>1\), combining this with \eqref{eq:quaso2} gives
\[
  \norm{d^{i,\mu}_{t+1,h}-d^{i,\mu}_{t,h}}_1
  \le
  \sum_{\ell=1}^{h-1}\sum_{j\ne i}\psi^j_{t,\ell}.
\]
In the special case $h=1$, the distribution is always $d^{i,\mu}_{t,1}=\delta_{s_1}$, and hence
\( 
  \norm{d^{i,\mu}_{t+1,1}-d^{i,\mu}_{t,1}}_1=0.
\) Finally, summing over $t=1,\ldots,T-1$ gives
\begin{align*}
  \TV^{i,\mu}_h
  &=
  \sum_{t=1}^{T-1}
  \norm{d^{i,\mu}_{t+1,h}-d^{i,\mu}_{t,h}}_1
  \le
  \sum_{t=1}^{T-1}
  \sum_{\ell=1}^{h-1}
  \sum_{j\ne i}
  \psi^j_{t,\ell}
  \le
  \sum_{\ell=1}^{h-1}
  \sum_{j\ne i}
  3\rhobar HT\eta_\ell \le
  3m\rhobar HT
  \sum_{\ell=1}^{h-1}\eta_\ell.
\end{align*}
This completes the proof.
\end{proof}

\subsection{Variation bounds on the Q-values}

\begin{lemma}
\label{lem:q-perturb}
Fix a player $i$ and two Markov joint policies
\( 
  \bm{\pi}=(\pi^1,\ldots,\pi^m),
  \;
  \bm{\pi}'=(\pi^{1\prime},\ldots,\pi^{m\prime}).
\) Define
\( 
  \Delta
  :=
  \sum_{j=1}^m
  \norm{\pi^j-\pi^{j\prime}}_{\infty,1},
\) 
where
\( 
  \norm{\pi^j-\pi^{j\prime}}_{\infty,1}
  :=
  \max_{h,s}
  \norm{
    \pi^j_{h}(\cdot\mid s)
    -
    \pi^{j\prime}_{h}(\cdot\mid s)
  }_1.
\) 
Then, for every layer $h$ and state $s$,
\( 
  \norm{
    Q^{i,\bm{\pi}}_{h}(s,\cdot)
    -
    Q^{i,\bm{\pi}'}_{h}(s,\cdot)
  }_\infty
  \le
  2H^2\Delta.
\) 
\end{lemma}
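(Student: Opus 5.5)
I would argue by backward induction over the layers, using the Bellman recursion that defines \(Q^{i,\bm\pi}_h\) from \(V^{i,\bm\pi}_{h+1}\). Set
\[
e_h:=\max_{s\in S_h}\bigl|V^{i,\bm\pi}_h(s)-V^{i,\bm\pi'}_h(s)\bigr|,
\qquad
f_h:=\max_{s\in S_h}\norm{Q^{i,\bm\pi}_h(s,\cdot)-Q^{i,\bm\pi'}_h(s,\cdot)}_\infty,
\]
with \(e_{H+1}=0\). The proof needs two one-step inequalities, one bounding \(f_h\) in terms of \(e_{h+1}\) and one bounding \(e_h\) in terms of \(f_h\). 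Both rest on two basic facts. First, \(0\le V_{h+1}\le H-h\le H\) and \(0\le Q_h\le H\). Second, if \(g\) is a function with \(\|g\|_\infty\le c\), then \(|\E_p g-\E_{p'}g|\le c\|p-p'\|_1\).

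For \(f_h\), fix \(s\in S_h\) and \(a_i\in A_i\), and write \(F(\bm a_{-i}):=\ell^i_h(s,(a_i,\bm a_{-i}))+\sum_{s'}P_h(s'\mid s,\bm a)V^{i,\bm\pi}_{h+1}(s')\). Then
\[
Q^{i,\bm\pi}_h(s,a_i)-Q^{i,\bm\pi'}_h(s,a_i)
=\bigl(\E_{\bm\pi^{-i}_h}-\E_{\bm\pi'^{-i}_h}\bigr)F
+\E_{\bm\pi'^{-i}_h}\Bigl[\textstyle\sum_{s'}P_h(s'\mid s,\bm a)\bigl(V^{i,\bm\pi}_{h+1}-V^{i,\bm\pi'}_{h+1}\bigr)(s')\Bigr].
\]
Since \(0\le F\le 1+(H-h)\le H\), the first term is at most \(H\norm{\bm\pi^{-i}_h(\cdot\mid s)-\bm\pi'^{-i}_h(\cdot\mid s)}_1\). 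For this product-distribution difference I would use the telescoping bound \(\le\sum_{j\ne i}\norm{\pi^j_h(\cdot\mid s)-\pi^{j\prime}_h(\cdot\mid s)}_1\), already used in the proof of Lemma~\ref{lem:history-tv}; this is at most \(\Delta\). The second term is at most \(e_{h+1}\). Hence \(f_h\le H\Delta+e_{h+1}\).

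For \(e_h\), write \(V^{i,\bm\pi}_h(s)=\langle \pi^i_h(\cdot\mid s),Q^{i,\bm\pi}_h(s,\cdot)\rangle\) and split the difference as
\[
\langle \pi^i_h-\pi^{i\prime}_h,\,Q^{i,\bm\pi}_h\rangle+\langle\pi^{i\prime}_h,\,Q^{i,\bm\pi}_h-Q^{i,\bm\pi'}_h\rangle.
\]
The first piece is at most \(H\norm{\pi^i-\pi^{i\prime}}_{\infty,1}\) and the second at most \(f_h\). Therefore \(e_h\le H\norm{\pi^i-\pi^{i\prime}}_{\infty,1}+f_h\).

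Combining the two gives \(e_h\le e_{h+1}+H\bigl(\sum_{j\ne i}\|\pi^j-\pi^{j\prime}\|_{\infty,1}+\|\pi^i-\pi^{i\prime}\|_{\infty,1}\bigr)= e_{h+1}+H\Delta\). Unrolling from \(e_{H+1}=0\) yields \(e_{h+1}\le (H-h)H\Delta\le H^2\Delta\). Substituting back, \(f_h\le H\Delta+H^2\Delta\le 2H^2\Delta\), which is the claim.

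The only point needing care is the product-measure bound \(\|\otimes_j p_j-\otimes_j p'_j\|_1\le\sum_j\|p_j-p'_j\|_1\). I would prove it by replacing one coordinate at a time and using the triangle inequality together with the fact that each intermediate measure is a product of probability vectors. Everything else is bookkeeping; a cruder range bound such as \(\|V\|\le H\) at every layer still fits within the stated constant \(2H^2\).
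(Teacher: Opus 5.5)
Your proof is correct and follows essentially the same route as the paper: backward induction through the Bellman recursion, splitting the $Q$-difference into an opponents' distribution-shift term (bounded via the product-measure telescoping inequality) and a future-value term. The only difference is bookkeeping. You track $e_h$ and $f_h$ separately and charge your own policy change and the opponents' changes to their respective pieces of $\Delta$, which gives $e_h\le e_{h+1}+H\Delta$ and in fact $f_h\le (H-h+1)H\Delta\le H^2\Delta$; the paper charges both pieces to the full $\Delta$ and gets the recursion $\Delta_h\le\Delta_{h+1}+2H\Delta$.
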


\begin{proof}
Fix player $i$. Define
\( 
  \Delta_h
  :=
  \max_s
  \norm{
    Q^{i,\bm{\pi}}_{h}(s,\cdot)
    -
    Q^{i,\bm{\pi}'}_{h}(s,\cdot)
  }_\infty.
\) It can be seen that, at layer $H+1$, both value functions are zero, and hence
\( 
  \Delta_{H+1}=0.
\) 
Now, fix $h\le H$, $s\in S_h$, and $a_i\in A_i$. Then,
\[
\begin{aligned}
&
Q^{i,\bm{\pi}}_{h}(s,a_i)
-
Q^{i,\bm{\pi}'}_{h}(s,a_i)
=
\E_{\bm a_{-i}\sim\bm{\pi}^{-i}_{h}}
\Big[
  \ell^i_h
  +
  \E_{s'}V^{i,\bm{\pi}}_{h+1}(s')
\Big]
-
\E_{\bm a_{-i}\sim\bm{\pi}^{\prime -i}_{h}}
\Big[
  \ell^i_h
  +
  \E_{s'}V^{i,\bm{\pi}'}_{h+1}(s')
\Big].
\end{aligned}
\]
Add and subtract
\( 
\E_{\bm a_{-i}\sim\bm{\pi}^{-i}_{h}}
\Big[
  \ell^i_h
  +
  \E_{s'}V^{i,\bm{\pi}'}_{h+1}(s')
\Big].
\) 
Then,
\[
\begin{aligned}
&
\left|
Q^{i,\bm{\pi}}_{h}(s,a_i)
-
Q^{i,\bm{\pi}'}_{h}(s,a_i)
\right|
\le
\underbrace{
\E_{\bm a_{-i}\sim\bm{\pi}^{-i}_{h}}
\left|
\E_{s'}
\left[
  V^{i,\bm{\pi}}_{h+1}(s')
  -
  V^{i,\bm{\pi}'}_{h+1}(s')
\right]
\right|
}_{\text{future-value difference}}+
\underbrace{
\left|
\E_{\bm a_{-i}\sim\bm{\pi}^{-i}_{h}}f(\bm a_{-i})
-
\E_{\bm a_{-i}\sim\bm{\pi}^{\prime -i}_{h}}f(\bm a_{-i})
\right|
}_{\text{distribution shift}},
\end{aligned}
\]
where
\( 
  f(\bm a_{-i})
  :=
  \ell^i_h(s,a_i,\bm a_{-i})
  +
  \E_{s'}V^{i,\bm{\pi}'}_{h+1}(s').
\) 
For the future-value term, note that
\[
\begin{aligned}
\left|
V^{i,\bm{\pi}}_{h+1}(s')
-
V^{i,\bm{\pi}'}_{h+1}(s')
\right| 
&=
\left|
\left\langle
\pi^i_{h+1}(\cdot\mid s'),
Q^{i,\bm{\pi}}_{h+1}(s',\cdot)
\right\rangle
-
\left\langle
\pi^{i\prime}_{h+1}(\cdot\mid s'),
Q^{i,\bm{\pi}'}_{h+1}(s',\cdot)
\right\rangle
\right| \\
&\le
\left|
\left\langle
\pi^i_{h+1}(\cdot\mid s'),
Q^{i,\bm{\pi}}_{h+1}(s',\cdot)
-
Q^{i,\bm{\pi}'}_{h+1}(s',\cdot)
\right\rangle
\right| +
\left|
\left\langle
\pi^i_{h+1}(\cdot\mid s')
-
\pi^{i\prime}_{h+1}(\cdot\mid s'),
Q^{i,\bm{\pi}'}_{h+1}(s',\cdot)
\right\rangle
\right| \\
&\le
\Delta_{h+1}
+
\left\|
\pi^i_{h+1}(\cdot\mid s')
-
\pi^{i\prime}_{h+1}(\cdot\mid s')
\right\|_1
\left\|
Q^{i,\bm{\pi}'}_{h+1}(s',\cdot)
\right\|_\infty .
\end{aligned}
\]
Since per-layer losses lie in $[0,1]$, the remaining loss from layer $h+1$
to layer $H$ is at most $H-h$. Therefore
\( 
\left\|
Q^{i,\bm{\pi}'}_{h+1}(s',\cdot)
\right\|_\infty
\le H-h \le H.
\) 
Thus,
\[
\left|
V^{i,\bm{\pi}}_{h+1}(s')
-
V^{i,\bm{\pi}'}_{h+1}(s')
\right|
\le
\Delta_{h+1}
+
H
\left\|
\pi^i_{h+1}(\cdot\mid s')
-
\pi^{i\prime}_{h+1}(\cdot\mid s')
\right\|_1
\le
\Delta_{h+1}+H\Delta .
\]
Therefore, 
\( 
\E_{\bm a_{-i}\sim\bm{\pi}^{-i}_h}
\left|
\E_{s'}
[
V^{i,\bm{\pi}}_{h+1}(s')
-
V^{i,\bm{\pi}'}_{h+1}(s')
]
\right|
\le
\Delta_{h+1}+H\Delta .
\) For the distribution-shift term, since \( \|f\|_\infty\le H \), we have
\[
\begin{aligned}
&
\left|
\E_{\bm a_{-i}\sim\bm{\pi}^{-i}_{h}}f(\bm a_{-i})
-
\E_{\bm a_{-i}\sim\bm{\pi}^{\prime -i}_{h}}f(\bm a_{-i})
\right|
\le
H
\,
\norm{
  \bm{\pi}^{-i}_{h}(\cdot\mid s)
  -
  \bm{\pi}_{h}^{\prime -i}(\cdot\mid s)
}_1.
\end{aligned}
\]
Using the product-distribution inequality,
\[
\norm{
  \bm{\pi}^{-i}_{h}(\cdot\mid s)
  -
  \bm{\pi}_{h}^{\prime -i}(\cdot\mid s)
}_1
\le
\sum_{j\ne i}
\norm{
  \pi^j_{h}(\cdot\mid s)
  -
  \pi_{h}^{j\prime}(\cdot\mid s)
}_1
\le
\Delta.
\]
Hence
\( 
  \Delta_h
  \le
  \Delta_{h+1}
  +
  2H\Delta.
\) 
Since
\(
  \Delta_{H+1}=0,
\) 
backward induction yields
\( 
  \Delta_h
  \le
  2H(H-h+1)\Delta.
\) 
In particular, it holds that
\( 
  \Delta_h\le 2H^2\Delta.
\) 
\end{proof}

\begin{lemma}\label{lem:qvar}
For every player $i$, state $s$, layer $h$, and episode $t\ge2$,
\( 
  \norm{Q^i_{t,h}(s,\cdot)-Q^i_{t-1,h}(s,\cdot)}_\infty^2
  \le
  36m^2\rhobar^2H^6\eta_H^2,
\) 
and, with $Q^i_{0,h}(s,\cdot)=0$,
\( 
  \norm{Q^i_{1,h}(s,\cdot)-Q^i_{0,h}(s,\cdot)}_\infty^2\le H^2.
\) 
\end{lemma}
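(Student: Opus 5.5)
The plan is to combine the perturbation bound of Lemma~\ref{lem:q-perturb} with the one-step policy-movement bound of Lemma~\ref{lem:movement}. The case \(t=1\) is immediate and is handled first. Since \(Q^i_{0,h}(s,\cdot)\equiv 0\) and every entry of \(Q^i_{1,h}(s,\cdot)=Q^{i,\bm\pi_1}_h(s,\cdot)\) is a sum of at most \(H-h+1\le H\) per-step costs in \([0,1]\), we get \(\|Q^i_{1,h}(s,\cdot)\|_\infty\le H\). Squaring gives the bound \(H^2\).

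For \(t\ge 2\), apply Lemma~\ref{lem:q-perturb} with \(\bm\pi=\bm\pi_t\) and \(\bm\pi'=\bm\pi_{t-1}\). This gives
\[
\norm{Q^i_{t,h}(s,\cdot)-Q^i_{t-1,h}(s,\cdot)}_\infty\le 2H^2\Delta_t,
\qquad
\Delta_t:=\sum_{j=1}^m\max_{\ell\in[H],\,s'\in S_\ell}\norm{\pi^j_{t,\ell}(\cdot\mid s')-\pi^j_{t-1,\ell}(\cdot\mid s')}_1 .
\]
Each summand is \(\max_\ell \psi^j_{t-1,\ell}\). By Lemma~\ref{lem:movement}, with \(\rho_j\le\rhobar\), it is at most \(3\rhobar H\max_\ell\eta_\ell\).

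Next we use the monotonicity of the step-size schedule. The exponent \(\alpha_\ell=\frac{3(H-\ell)+1}{3H+1}\) decreases in \(\ell\), and \(T\ge1\), so \(\eta_\ell=\eta_0T^{-\alpha_\ell}\) is nondecreasing in \(\ell\). Hence \(\max_\ell\eta_\ell=\eta_H\). Therefore \(\Delta_t\le 3m\rhobar H\eta_H\), and
\[
\norm{Q^i_{t,h}(s,\cdot)-Q^i_{t-1,h}(s,\cdot)}_\infty\le 6m\rhobar H^3\eta_H .
\]
Squaring yields \(36m^2\rhobar^2H^6\eta_H^2\), as claimed.

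The only delicate point is that Lemma~\ref{lem:q-perturb} uses a uniform-over-layers sup norm for \(\Delta\). We therefore cannot keep the layer-specific rate \(\eta_h\), and must take the worst layer, which is \(H\) because the step sizes increase. Beyond that, we only need to check that Lemma~\ref{lem:movement} applies for every \(t\ge1\), i.e.\ to the pair \((t-1,t)\) whenever \(t\ge2\). It does, since its proof only uses \(Q\)-values in \([0,H]\), including the initialization \(Q_0\equiv0\).
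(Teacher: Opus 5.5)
Your proposal is correct and follows the paper's proof essentially verbatim: you apply Lemma~\ref{lem:q-perturb} to \(\bm\pi_t,\bm\pi_{t-1}\), bound each player's movement with Lemma~\ref{lem:movement}, and use the monotonicity \(\eta_1\le\cdots\le\eta_H\) to pass to \(\eta_H\). Your explicit treatment of the \(t=1\) case via \(Q^i_{1,h}(s,\cdot)\in[0,H]^{A_i}\), which the paper leaves implicit, is also correct.
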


\begin{proof}
The Lemma \ref{lem:q-perturb} bound gives
\[
  \norm{Q^i_{t,h}(s,\cdot)-Q^i_{t-1,h}(s,\cdot)}_\infty
  \le
  2H^2\sum_{j=1}^m
  \norm{\pi^j_t-\pi^j_{t-1}}_{\infty,1}
  =
  2H^2\sum_{j=1}^m \max_{h,s\in S_h}
  \norm{\pi^j_{t,h}(\cdot\mid s)-\pi^j_{t-1,h}(\cdot\mid s)}_1.
\]
By Lemma~\ref{lem:movement} and the monotonicity $\eta_1\le\cdots\le\eta_H$, it holds that
\( 
  \norm{\pi^j_t-\pi^j_{t-1}}_{\infty,1}
  \le
  3\rhobar H\eta_H,
\) 
which directly implies the desired result.
\end{proof}

\subsection{Weighted value-difference decomposition}
\begin{lemma}
\label{lem:history-vdiff}
Fix a player $i$, an episode $t$, and a policy
$\mu_i\in\Pi_i^{\mathrm{gen}}$. Fix the opponents' profile
$\bm{\pi}^{-i}_t$. Then,
\[
\begin{aligned}
  V^{i,\pi}_{t,1}(s_1)
  -
  V^{i,\mu_i}_{t,1}(s_1)
  =
  \sum_{h=1}^H
  \sum_{\tau_h\in\Hist_h}
  d^{i,\mu}_{t,h}(\tau_h)
  \left\langle
    Q^i_{t,h}(s(\tau_h),\cdot),
    \pi^i_{h}(\cdot\mid s(\tau_h))
    -
    \mu_{i,h}(\cdot\mid\tau_h)
  \right\rangle .
\end{aligned}
\]
\end{lemma}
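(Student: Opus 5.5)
This is the performance-difference lemma, adapted to a history-dependent comparator facing Markov opponents. The plan is a telescoping argument along the trajectory generated by \(\mu^i\odot\bm\pi_t^{-i}\), evaluated against the \(Q\)-functions of the played Markov profile \(\bm\pi_t\).

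Throughout, write \(V_h:=V^{i,\pi}_{t,h}\) and \(Q_h:=Q^i_{t,h}\) for the value functions of \(\pi_t^i\odot\bm\pi_t^{-i}\), which are Markov functions of the state. By definition \(V_h(s)=\langle Q_h(s,\cdot),\pi^i_{t,h}(\cdot\mid s)\rangle\) with \(V_{H+1}\equiv0\). Also
\[
Q_h(s,a_i)=\E_{\bm a_{-i}\sim\bm\pi^{-i}_{t,h}(\cdot\mid s)}\Bigl[\ell_h^i(s,(a_i,\bm a_{-i}))+\sum_{s'}P_h(s'\mid s,(a_i,\bm a_{-i}))V_{h+1}(s')\Bigr].
\]
The key observation is that this identity holds for any action \(a_i\). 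It therefore holds when \(a_i\) is drawn from \(\mu^i_h(\cdot\mid\tau_h)\), since conditional on \(\tau_h\) the opponents still act according to \(\bm\pi^{-i}_{t,h}(\cdot\mid s(\tau_h))\), independently of \(a_i\).

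\textbf{Step 1.} Express the deviation value through \(d^{i,\mu}_{t,h}\) (the history distribution from the proof of Lemma~\ref{lem:history-tv}):
\[
V^{i,\mu_i}_{t,1}(s_1)=\sum_{h=1}^H\sum_{\tau_h}d^{i,\mu}_{t,h}(\tau_h)\,\E_{a_i\sim\mu^i_h(\cdot\mid\tau_h),\,\bm a_{-i}}\bigl[\ell^i_h(s(\tau_h),\bm a)\bigr].
\]
This is linearity of expectation over the layer-wise costs.

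\textbf{Step 2.} Telescope:
\[
V_1(s_1)=\sum_{h=1}^H\Bigl(\E_{\tau_h\sim d^{i,\mu}_{t,h}}[V_h(s(\tau_h))]-\E_{\tau_{h+1}\sim d^{i,\mu}_{t,h+1}}[V_{h+1}(s(\tau_{h+1}))]\Bigr),
\]
using \(d^{i,\mu}_{t,1}=\delta_{s_1}\) and \(V_{H+1}=0\). Next, apply the chain relation \(d^{i,\mu}_{t,h+1}=d^{i,\mu}_{t,h}K^{i,\mu}_{t,h}\) from Lemma~\ref{lem:history-tv}. Integrating out the last transition gives
\[
\E_{d_{h+1}}[V_{h+1}]=\sum_{\tau_h}d^{i,\mu}_{t,h}(\tau_h)\,\E_{a_i\sim\mu^i_h(\cdot\mid\tau_h),\,\bm a_{-i},\,s'}\bigl[V_{h+1}(s')\bigr].
\]
Combined with the \(Q\)-identity, this yields
\[
\E_{d_{h+1}}[V_{h+1}]=\sum_{\tau_h}d^{i,\mu}_{t,h}(\tau_h)\Bigl(\langle Q_h(s(\tau_h),\cdot),\mu^i_h(\cdot\mid\tau_h)\rangle-\E\bigl[\ell^i_h\bigr]\Bigr),
\]
where \(\E[\ell^i_h]\) is the expected cost at layer \(h\) under the deviation.

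\textbf{Step 3.} Substitute this expression into the telescoped sum and replace \(V_h(s(\tau_h))\) by \(\langle Q_h,\pi^i_{t,h}\rangle\). The layer-\(h\) summand becomes
\[
\sum_{\tau_h}d^{i,\mu}_{t,h}(\tau_h)\Bigl(\langle Q_h,\pi^i_{t,h}-\mu^i_h\rangle+\E\bigl[\ell^i_h\bigr]\Bigr).
\]
Summing over \(h\), the cost terms reproduce \(V^{i,\mu_i}_{t,1}(s_1)\) by Step 1. Subtracting them gives exactly the claimed identity.

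The only delicate point is the conditional-independence justification in Step 2. The deviation is history-dependent while \(Q_h\) and \(V_{h+1}\) are Markov. This needs only that the opponents' layer-\(h\) action law and the transition depend on \(\tau_h\) solely through \(s(\tau_h)\), which holds because \(\bm\pi^{-i}_t\) is Markov. Finiteness of all sums makes the rearrangements immediate.
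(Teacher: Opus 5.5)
Your proof is correct and follows essentially the same route as the paper. Both arguments telescope along the comparator's history distribution $d^{i,\mu}_{t,h}$ and use the one-step Bellman identity for the played Markov profile's $Q^i_{t,h}$, which applies under a history-dependent $\mu_i$ because the opponents and the transitions depend on $\tau_h$ only through $s(\tau_h)$. The only difference is bookkeeping: the paper telescopes $V^{i,\pi}_{t,h}-W_h$, with $W_h(\tau_h)$ the comparator's continuation cost, whereas you telescope $V^{i,\pi}_{t,h}$ alone and recover $V^{i,\mu_i}_{t,1}(s_1)$ directly as the occupancy-weighted sum of layer costs, which is equivalent.
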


\begin{proof}
Let $W_h(\tau_h)$ denote the expected \emph{continuation} loss from history
$\tau_h\in\Hist_h$ when player $i$ follows the comparator $\mu_i$ from layer
$h$ onward and the opponents follow $\bm{\pi}^{-i}_t$. Thus
\( 
  W_{H+1}(\tau_{H+1})=0,
\) 
and for $\tau_h\in\Hist_h$ with current state $s=s(\tau_h)$,
\[
\begin{aligned}
  W_h(\tau_h)
  =
  \sum_{a_i\in A_i}
  \mu_{i,h}(a_i\mid\tau_h)
  \Big[
     \ell^i_{t,h}(s,a_i)
    +
    \sum_{s'\in S_{h+1}}
     P_{t,h}(s'\mid s,a_i)
    W_{h+1}(\tau_{h+1})
  \Big].
\end{aligned}
\]
where $\tau_{h+1} = (\tau_h, a_i, \bm a_{-i}, s')$ denotes the next history after player $i$ chooses
$a_i$, 
the opponents' policies are according to $\bm{\pi}^{-i}_t$,
and the next state is $s'$.  Then, for every history $\tau_h\in\Hist_h$ with $s=s(\tau_h)$, we have 
\[
\begin{aligned}
  V^{i,\pi}_{t,h}(s)\!-\!W_h(\tau_h)
  \!&=\!
  \left\langle
    Q^i_{t,h}(s,\cdot),
    \pi^i_{h}(\cdot\!\mid\! s)
  \right\rangle\!
  -\!\!\!
  \sum_{a_i\in A_i}\!\!
  \mu_{i,h}(a_i\!\mid\!\tau_h)
  \!\Big[
     \ell^i_{t,h}(s,a_i)\!
    +\!\!\!\!\!
    \sum_{s'\in S_{h+1}}\!\!\!\!\!
     P_{t,h}(s'\!\!\mid\! s,a_i)
    W_{h+1}(\tau_{h+1})
  \Big].
  \\&=
  \left\langle
    Q^i_{t,h}(s,\cdot),
    \pi^i_{h}(\cdot\mid s)
    -
    \mu_{i,h}(\cdot\mid\tau_h)
  \right\rangle
  \\
  &\quad+
  \sum_{a_i\in A_i}
  \mu_{i,h}(a_i\mid\tau_h)
  \sum_{s'\in S_{h+1}}
   P_{t,h}(s'\mid s,a_i)
  \left[
    V^{i,\pi^i}_{t,h+1}(s')
    -
    W_{h+1}(\tau_{h+1})
  \right].
    \\&=
  \left\langle
    Q^i_{t,h}(s(\tau_h),\cdot),
    \pi^i_{h}(\cdot\mid s(\tau_h))
    -
    \mu_{i,h}(\cdot\mid\tau_h)
  \right\rangle
  \\
  &\quad+
  \E\left[
    V^{i,\pi}_{t,h+1}(s(\tau_{h+1}))
    -
    W_{h+1}(\tau_{h+1})
    \,\middle|\,
    \tau_h,\mu_i,\bm{\pi}^{-i}_t
  \right].
\end{aligned}
\]
Now taking expectation with respect to
$\tau_h\sim d^{i,\mu}_{t,h}$, we get,
\[
\begin{aligned}
\E_{\tau_h\sim d^{i,\mu}_{t,h}}
\left[
  V^{i,\pi}_{t,h}(s(\tau_h))-W_h(\tau_h)
\right]
= &
\sum_{\tau_h\in\Hist_h}
d^{i,\mu}_{t,h}(\tau_h)
\left\langle
  Q^i_{t,h}(s(\tau_h),\cdot),
  \pi^i_{h}(\cdot\mid s(\tau_h))
  -
  \mu_{i,h}(\cdot\mid\tau_h)
\right\rangle
\\
& +
\E_{\tau_{h+1}\sim d^{i,\mu}_{t,h+1}}
\left[
  V^{i,\pi}_{t,h+1}(s(\tau_{h+1}))
  -
  W_{h+1}(\tau_{h+1})
\right].
\end{aligned}
\]
Summing this identity over $h=1,\ldots,H$ telescopes the terms.
At the initial layer, $d^{i,\mu}_{t,1}=\delta_{s_1}$, and thus
\[
  \E_{\tau_1\sim d^{i,\mu}_{t,1}}
  \left[
    V^{i,\pi}_{t,1}(s(\tau_1))-W_1(\tau_1)
  \right]
  =
  V^{i,\pi}_{t,1}(s_1)
  -
  V^{i,\mu_i}_{t,1}(s_1).
\]
At layer $H+1$, both continuation values are zero, i.e. 
\( 
  V^{i,\pi}_{t,H+1} \equiv 0,
  \;
  W_{H+1} \equiv 0.
\) 
Therefore,
\[
\begin{aligned}
  V^{i,\pi}_{t,1}(s_1)
  -
  V^{i,\mu^i}_{t,1}(s_1)
  =
  \sum_{h=1}^H
  \sum_{\tau_h\in\Hist_h}
  d^{i,\mu}_{t,h}(\tau_h)
  \left\langle
    Q^i_{t,h}(s(\tau_h),\cdot),
    \pi^i_{h}(\cdot\mid s(\tau_h))
    -
    \mu_{i,h}(\cdot\mid\tau_h)
  \right\rangle .
\end{aligned}
\]
This completes the proof.
\end{proof}

\subsection{Proof of Theorem \ref{thm:main}}

Fix a player \(i\) and an arbitrary comparator policy \(\mu_i\in\Pi_i^{\mathrm{gen}}\). For each layer $h$, define
\[
\begin{aligned}
  R_{i,h}(\mu_i)
  :=
  \sum_{\tau_h\in\Hist_h}\sum_{t=1}^T
  d^{i,\mu}_{t,h}(\tau_h)
  \left\langle
    Q^i_{t,h}(s(\tau_h),\cdot),
    \pi^i_{t,h}(\cdot\mid s(\tau_h))-\mu_{i,h}(\cdot\mid\tau_h)
  \right\rangle .
\end{aligned}
\]
Now, apply Lemma~\ref{lem:weighted-oomd} separately for every history $\tau_h\in\Hist_h$, with
\( 
  x_t=\pi^i_{t,h}(\cdot\mid s(\tau_h)),
  \;
  u=\mu_{i,h}(\cdot\mid\tau_h),\;
  \ell_t=Q^i_{t,h}(s(\tau_h),\cdot),
  \;
  q_t=d^{i,\mu}_{t,h}(\tau_h),
\) 
and learning rate $\eta_h$.  Summing over $\tau\in\Hist_h$ yields
\[
\begin{aligned}
  R_{i,h}(\mu_i)
  &\le
  \frac{B_i}{\eta_h}
  \sum_{\tau_h\in\Hist_h}
  \left(
    d^{i,\mu}_{1,h}(\tau_h)
    +
    \sum_{t=1}^{T-1}
    |d^{i,\mu}_{t+1,h}(\tau_h)-d^{i,\mu}_{t,h}(\tau_h)|
  \right) \\
  &\quad+
  \frac{\eta_h\rho_i}{2}
  \sum_{\tau_h\in\Hist_h}\sum_{t=1}^T
  d^{i,\mu}_{t,h}(\tau_h)
  \norm{Q^i_{t,h}(s(\tau_h),\cdot)-Q^i_{t-1,h}(s(\tau_h),\cdot)}_\infty^2.
\end{aligned}
\]
Since $\sum_{\tau_h\in\Hist_h}d^{i,\mu}_{1,h}(\tau_h)=1$ and is the same at every episode, Lemma~\ref{lem:qvar} gives
\[
  R_{i,h}(\mu_i)
  \le
  \frac{B_i(1+\TV^{i,\mu}_h)}{\eta_h}
  +
  \frac12\eta_h\rho_iH^2
  +
  18\eta_h\rho_i m^2\rhobar^2H^6\eta_H^2T.
\]
We now sum over $h$.  First,
\[
  \sum_{h=1}^H \frac{B_i}{\eta_h}
  =
  \frac{B_i}{\eta_0}\sum_{h=1}^H T^{\alpha_h}
  \le
  \frac{HB_i}{\eta_0}T^\beta.
\]
Second, for $h\ge2$, Lemma~\ref{lem:history-tv} gives
\[
  \frac{B_i\TV^{i,\mu}_h}{\eta_h}
  \le
  3mB_i\rhobar H T\frac{\sum_{\ell=1}^{h-1}\eta_\ell}{\eta_h}.
\]
Since $\eta_1\le\cdots\le\eta_H$,
\( 
  \sum_{\ell=1}^{h-1}\eta_\ell
  \le
  H\eta_{h-1}.
\) 
Moreover,
\( 
  \frac{\eta_{h-1}}{\eta_h}
  =
  T^{-(\alpha_{h-1}-\alpha_h)}
  =
  T^{-3/(3H+1)}.
\) 
Therefore,
\[
  \frac{B_i\TV^{i,\mu}_h}{\eta_h}
  \le
  3mB_i\rhobar H^2 T^{1-3/(3H+1)}
  =
  3mB_i\rhobar H^2 T^\beta,
\]
and hence
\[
  \sum_{h=1}^H\frac{B_i\TV^{i,\mu}_h}{\eta_h}
  \le
  3mB_i\rhobar H^3T^\beta.
\]
Third,
\[
  \sum_{h=1}^H \frac12\eta_h\rho_iH^2
  \le
  \frac12\rho_iH^2\cdot H\eta_H
  \le
  \frac12\eta_0\rho_iH^3T^\beta,
\]
where we used $T^{-\alpha_H}\le T^\beta$ for $T\ge1$.  Fourth,
\begin{align*}
  \sum_{h=1}^H
  18\eta_h\rho_i m^2\rhobar^2H^6\eta_H^2T
  &=
  18\rho_i m^2\rhobar^2H^6T\eta_H^2\sum_{h=1}^H\eta_h \le
  18\rho_i m^2\rhobar^2H^6T\eta_H^2\cdot H\eta_H \\
  &\le
  18\eta_0^3\rho_i m^2\rhobar^2H^7T^{1-3\alpha_H}
  \\&\le 18\eta_0^3\rho_i m^2\rhobar^2H^7T^\beta,
\end{align*}
where the last line follows from $\alpha_H=1/(3H+1)$,
\(  
  1-3\alpha_H=\frac{3H-2}{3H+1}=\beta.
\) Choosing
\( 
    \lambda_i=\frac{1}{|A_i|},
\) 
gives
\( 
    \rho_i
    =
    1+|A_i|\lambda_i
    =
    2,
    \;
    \bar\rho
    =
    2,
\) 
and
\( 
    B_i
    =
    2\log(|A_i|+1).
\) Combining all four pieces results in
\begin{align*}
  \sum_{h=1}^H R_{i,h}(\mu_i)
  &\le
  \left(
    \frac{HB_i}{\eta_0}
    +3mB_i\rhobar H^3
    +\frac12\eta_0\rho_iH^3
    +18\eta_0^3\rho_i m^2\rhobar^2H^7
  \right)T^\beta.
    \\&\le
    \left(
    \frac{2H\log(|A_i|+1)}{\eta_0}
    +
    12mH^3\log(|A_i|+1)
    +
    \eta_0H^3
    +
    144\eta_0^3m^2H^7
    \right)T^\beta .
\end{align*}
Define
\[
    C_i^{\mathrm{SE}}
    :=
    \frac{2H\log(|A_i|+1)}{\eta_0}
    +
    12mH^3\log(|A_i|+1)
    +
    \eta_0H^3
    +
    144\eta_0^3m^2H^7 .
\]
By Lemma~\ref{lem:history-vdiff}, we have
\[
\sum_{t=1}^T
\Bigl(
V^{i,\pi}_{t,1}(s_1)-V^{i,\mu_i}_{t,1}(s_1)
\Bigr)
=
\sum_{h=1}^H R_{i,h}(\mu_i).
\]
The left-hand side is precisely the regret of player~\(i\) with respect to the comparator policy \(\mu_i\).
Since the bound on
\( 
\sum_{t=1}^T
\Bigl(
V^{i,\pi}_{t,1}(s_1)-V^{i,\mu_i}_{t,1}(s_1)
\Bigr)
\) 
holds uniformly for all \(\mu_i\in\Pi_i^{\mathrm{gen}}\), taking the supremum over
\(\mu_i\in\Pi_i^{\mathrm{gen}}\) yields
\[
\Reg^{\mathrm{gen},i}_T
=
\sup_{\mu_i\in\Pi_i^{\mathrm{gen}}}
\sum_{t=1}^T
\Bigl(
V^{i,\pi}_{t,1}(s_1)-V^{i,\mu_i}_{t,1}(s_1)
\Bigr)
\le
C_i^{\mathrm{SE}} T^\beta,
\qquad
\beta=\frac{3H-2}{3H+1}.
\]
Now we let
\( 
    C_*^{\mathrm{SE}}:=\max_{i\in[m]} C_i^{\mathrm{SE}},
    \;
    T_\varepsilon
    :=
    \left\lceil
    \left(
        \frac{C_*^{\mathrm{SE}}}{\varepsilon}
    \right)^{(3H+1)/3}
    \right\rceil .
\) 
In particular, since \(\beta-1=-3/(3H+1)\), this choice ensures that, for every player
\(i\in[m]\),
\[
    \frac{\Reg^{\mathrm{gen},i}_{T_\varepsilon}}{T_\varepsilon}
    \le
    C_i^{\mathrm{SE}}T_\varepsilon^{-3/(3H+1)}
    \le
    C_*^{\mathrm{SE}}T_\varepsilon^{-3/(3H+1)}
    \le
    \varepsilon .
\]
Equivalently,
\( 
    \Reg^{\mathrm{gen},i}_{T_\varepsilon}
    \le
    \varepsilon T_\varepsilon .
\) 
The \(\varepsilon\)-CCE guarantee follows by applying the standard
no-regret-to-CCE equivalence to the empirical distribution
\( 
    \widehat{\Pi}_{T_\varepsilon}
\)
\citep{CesaBianchiLugosi2006}.

\section{Convergence Analysis of Algorithm \ref{alg:blocked-bandit-oomd}}
\label{app:episodic-partial-feedback}

\subsection{Comparator smoothing}
\label{app:pf-comparator-smoothing}

Algorithm \ref{alg:blocked-bandit-oomd} plays policies in the restricted simplex
\(\Delta_i^\zeta\).  We therefore first compare an arbitrary history-dependent
policy with its \(\zeta\)-smoothed version.  For
\(\mu^i\in\Pi_i^{\mathrm{gen}}\), define
\[
    \mu_h^{i,\zeta}(\cdot\mid\tau_h)
    :=
    (1-\zeta)\mu_h^i(\cdot\mid\tau_h)
    +
    \zeta\,\Unif(A_i),
    \qquad h\in[H],\ \tau_h\in\Hist_h .
\]
Then \(\mu_h^{i,\zeta}(\cdot\mid\tau_h)\in\Delta_i^\zeta\) for every admissible 
history, and we have the following guarantee for the \emph{smoothing}.

\begin{lemma}
\label{lem:pf-comparator-smoothing}
Fix player \(i\), a product Markov opponent profile \(\bm{\pi}^{-i}\), and a
history-dependent policy \(\mu^i\in\Pi_i^{\mathrm{gen}}\). Consequently, for every product Markov profile \(\bm{\pi}\), it holds that
\[
    V^{i,\pi}_{t,1}(s_1)-V^{i,\mu^i}_{t,1}(s_1)
    \le
    V^{i,\pi}_{t,1}(s_1)-V^{i,\mu^{i,\zeta}}_{t,1}(s_1)
    +2\zeta H^2 .
\]
\end{lemma}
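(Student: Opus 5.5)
The claimed inequality is equivalent to
\[
V^{i,\mu^{i,\zeta}}_{t,1}(s_1)-V^{i,\mu^i}_{t,1}(s_1)\le 2\zeta H^2 .
\]
So the plan is to bound how much the value of player \(i\) can change when the comparator \(\mu^i\) is replaced by its smoothed version \(\mu^{i,\zeta}\), while the opponents' Markov profile \(\bm\pi_t^{-i}\) stays fixed. I will do this with a performance-difference identity in the spirit of Lemma~\ref{lem:history-vdiff}, but with the roles of the policies adapted to two history-dependent policies of player \(i\).

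\textbf{Step 1 (decomposition).} Let \(W^{\mu}_h(\tau_h)\) be the continuation cost from history \(\tau_h\) when player \(i\) follows \(\mu^i\) and the opponents follow \(\bm\pi_t^{-i}\). Define the history-level action values
\[
Q^\mu_h(\tau_h,a_i):=\ell^i_{t,h}(s(\tau_h),a_i)+\E\bigl[W^\mu_{h+1}(\tau_{h+1})\mid\tau_h,a_i\bigr].
\]
Exactly as in the telescoping proof of Lemma~\ref{lem:history-vdiff}, I will obtain
\[
V^{i,\mu^{i,\zeta}}_{t,1}(s_1)-V^{i,\mu^i}_{t,1}(s_1)=\sum_{h=1}^H\E_{\tau_h\sim d^{i,\mu^\zeta}_{t,h}}\Bigl\langle Q^\mu_h(\tau_h,\cdot),\,\mu^{i,\zeta}_h(\cdot\mid\tau_h)-\mu^i_h(\cdot\mid\tau_h)\Bigr\rangle .
\]
Here the outer expectation is over histories generated by the smoothed policy. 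The argument is a backward induction on \(W^{\mu^\zeta}_h-W^\mu_h\), in which at each layer one adds and subtracts \(\langle Q^\mu_h,\mu^{i,\zeta}_h\rangle\). Nothing in that argument uses the Markov property of the evaluated policy; history dependence only enlarges the index set from states to histories. I therefore expect this step to be routine.

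\textbf{Step 2 (per-layer bound).} By construction,
\[
\mu^{i,\zeta}_h-\mu^i_h=\zeta\bigl(\Unif(A_i)-\mu^i_h\bigr),
\]
so \(\|\mu^{i,\zeta}_h-\mu^i_h\|_1\le 2\zeta\). The action values satisfy \(0\le Q^\mu_h\le H-h+1\le H\). Hölder's inequality then bounds each inner product by \(2\zeta H\); a slightly sharper bound of \(\zeta H\) is available by centering \(Q\), but it is not needed. Summing over the \(H\) layers gives at most \(2\zeta H^2\), and the normalization of \(d^{i,\mu^\zeta}_{t,h}\) removes the expectation.

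\textbf{Step 3 (conclusion).} Adding \(V^{i,\pi}_{t,1}(s_1)\) to both sides of the bound from Step~2 yields the displayed inequality of the lemma. Since \(\bm\pi\) and \(t\) were arbitrary, the inequality holds for every product Markov profile, which gives the stated "consequently" form.

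The main point needing care is Step~1. The correct occupancy measure must be used, namely that of the smoothed policy together with the \(Q\)-values of the unsmoothed comparator, or the other way around consistently. In addition, the history-level \(Q\) must be defined with the opponents' actions and the transition marginalized through \(\bm\pi_t^{-i}\), so that the recursion closes.
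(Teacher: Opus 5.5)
Your proof is correct and follows the paper's argument: the paper telescopes through hybrid policies $\nu^{i,r}$ (smoothed on layers $h\le r$, equal to $\mu^i$ afterwards), and the $r$-th hybrid difference is exactly your $r$-th performance-difference summand $\E_{\tau_r\sim d^{i,\mu^\zeta}_{t,r}}\langle Q^\mu_r(\tau_r,\cdot),\mu^{i,\zeta}_r(\cdot\mid\tau_r)-\mu^i_r(\cdot\mid\tau_r)\rangle$, which the paper then bounds by $2\zeta H$ via H\"older and sums over the $H$ layers. You take the continuation value under the unsmoothed comparator $\mu^i$; the paper's text says the continuation follows $\pi^i$, but $\mu^i$ is the correct reading of that step, and the bound $\|Q\|_\infty\le H$ holds either way.
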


\begin{proof}
For \(r=0,1,\ldots,H\), define a policy \(\nu^{i,r}\) by
\[
    \nu_h^{i,r}(\cdot\mid\tau_h)
    :=
    \begin{cases}
        \mu_h^{i,\zeta}(\cdot\mid\tau_h), & h\le r,\\
        \mu_h^i(\cdot\mid\tau_h), & h>r.
    \end{cases}
\]
Then \(\nu^{i,0}=\mu^i\) and \(\nu^{i,H}=\mu^{i,\zeta}\).  Hence
\begin{align}\label{eq:quaso4}
    &V_{t,1}^{i,\mu^{i,\zeta}}(s_1)
    -
    V_{t,1}^{i,\mu^i}(s_1)  =
    \sum_{r=1}^H
    \left[
        V_{t,1}^{i,\nu^{i,r}}(s_1)
        -
        V_{t,1}^{i,\nu^{i,r-1}}(s_1)
    \right].
\end{align}
The policies \(\nu^{i,r}\) and \(\nu^{i,r-1}\) agree on all layers \(\ell<r\), and their only possible difference is at layer \(r\). For \(\tau_r\in\Hist_r\), let
\(
    Q^i_{t,r}(s(\tau_r),\cdot)
\)
denote the Q-function whose \(a_i\)-th coordinate is the expected
payoff obtained by taking \(a_i\) at \(\tau_r\), then following \( \pi^{i}\), while the opponents follow \(\bm\pi^{-i}\). Since the remaining total cost is at most \(H\),
\(
    \|Q^i_{t,r}(s(\tau_r),\cdot)\|_\infty\le H .
\)
Moreover, by the definition of the smoothed policy,
\(
    \mu_r^{i,\zeta}(\cdot\mid\tau_r)
    -
    \mu_r^i(\cdot\mid\tau_r)
    =
    \zeta\bigl(\Unif(A_i)-\mu_r^i(\cdot\mid\tau_r)\bigr).
\)
Hence,
\[
    \left\|
        \mu_r^{i,\zeta}(\cdot\mid\tau_r)
        -
        \mu_r^i(\cdot\mid\tau_r)
    \right\|_1
    =
    \zeta
    \left\|
        \Unif(A_i)-\mu_r^i(\cdot\mid\tau_r)
    \right\|_1
    \le 2\zeta .
\]
Thus, by Lemma \ref{lem:history-vdiff}
\[
\begin{aligned}
    \left|
        V_{t,1}^{i,\nu^{i,r}}(s_1)
        -
        V_{t,1}^{i,\nu^{i,r-1}}(s_1)
    \right|                                     
    &\le
    \sum_{\tau_r\in\Hist_r}
    d_{t,r}^{i,\nu^{i,r-1}}(\tau_r)
    \|Q_{t,r}(\tau_r,\cdot)\|_\infty
    \left\|
        \mu_r^{i,\zeta}(\cdot\mid\tau_r)
        -
        \mu_r^i(\cdot\mid\tau_r)
    \right\|_1                                                  \\
    &\le
    \sum_{\tau_r\in\Hist_r} d_{t,r}^{i,\nu^{i,r-1}}(\tau_r)\cdot H\cdot 2\zeta
    =2\zeta H .
\end{aligned}
\]
Summing over \(r=1,\ldots,H\) by \eqref{eq:quaso4} we obtain
\[
    \left|
        V_{t,1}^{i,\mu^{i,\zeta}}(s_1)
        -
        V_{t,1}^{i,\mu^i}(s_1)
    \right|
    \le
    2\zeta H^2 .
\]
The final claim follows
by adding and subtracting
\(V_{t,1}^{i,\mu^{i,\zeta}}(s_1)\) to \(  V^{i,\pi}_{t,1}(s_1)-V^{i,\mu^i}_{t,1}(s_1) \).
\end{proof}

\subsection{Uniform estimation of \texorpdfstring{\(Q\)}{Q}-values}
\label{app:pf-q-estimation}

At block \(k\), conditional on the past, the block policy \(\pi_k\) is fixed
and the \(B\) trajectories in that block are independent.  The reachability condition, together with the action-probability floor for player~\(i\), gives a uniform lower bound on the probability of observing each state--player-\(i\)-action pair. We state the resulting estimation guarantee for the player-\(i\) \(Q\)-function estimates as follows.

\begin{lemma}
\label{lem:pf-uniform-q-estimation}
Fix \(K,B\in\mathbb N\), \(\delta\in(0,1)\), and
\(\xi>0\).  
\( 
    M_K
    :=
    KmA_{\max}S,
    \;
    u_K:=\log\frac{4M_K}{\delta},\; 
    n_Q
    :=
    \left\lceil
        \frac{H^2}{2\xi^2}u_K
    \right\rceil,
    \;
    p_{\min}:=\frac{\kappa\zeta}{A_{\max}} .
\) 
If
\( 
    B
    \ge
    \left\lceil
        \frac{8}{p_{\min}}(n_Q+u_K)
    \right\rceil,
\) 
then, with probability at least \(1-\delta\),
\[
    \max_{\substack{k\in[K],i\in[m],h\in[H],\\ s\in S_h, a\in A_i}}
    \left|
        \widehat Q_{k,h}^i(s,a)-Q_{k,h}^{i}(s,a)
    \right|
    \le
    \xi .
\]
\end{lemma}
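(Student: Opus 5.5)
The plan is to fix one block $k$ and one tuple $(i,h,s,a)$ and prove a per-tuple failure probability of at most $\delta/(2M_K)$. A union bound over at most $K\cdot m\cdot S\cdot A_{\max}=M_K$ tuples then gives total failure probability at most $\delta/2\le\delta$. The constant $4$ inside $u_K$ leaves room to split each tuple's failure into two events: too few visits, and a large deviation given enough visits.

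Throughout, work conditionally on the history before block $k$. This fixes $\pi_k$, and since Algorithm~\ref{alg:blocked-bandit-oomd} keeps every iterate in $\Delta_j^\zeta$, every coordinate of $\pi_k$ lies above the floor $\zeta/|A_j|$. By $(\kappa,\zeta)$-reachability (Definition~\ref{def:kappa-zeta-reachability}),
\[
\Pr_{\pi_k}(s_h=s)\ge\kappa ,
\]
and the floor on player $i$'s own action gives
\[
p:=\Pr_{\pi_k}(s_h=s,\,a_h^i=a)\ge \kappa\zeta/|A_i|\ge p_{\min}.
\]
The $B$ episodes of the block are i.i.d., so $N_{k,i}(s,a)\sim\mathrm{Bin}(B,p)$ with mean at least $Bp_{\min}\ge 8(n_Q+u_K)$.

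\textbf{Visit counts.} A multiplicative Chernoff bound gives
\[
\Pr\bigl(N<Bp/2\bigr)\le \exp(-Bp/8)\le \exp(-(n_Q+u_K))\le e^{-u_K}=\frac{\delta}{4M_K}.
\]
So with that probability the count is at least $Bp/2\ge 4(n_Q+u_K)\ge n_Q$.

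\textbf{Deviation given enough visits.} Condition on the index set $\{r: s_h^{k,r}=s,\ a_h^{i,k,r}=a\}$. By independence across episodes and the Markov property of the product Markov profile, the tail costs $Y_h^{i,k,r}$ for these episodes are i.i.d. They lie in $[0,H]$ and have mean equal to $Q_{k,h}^i(s,a)$: given $s_h=s$ and $a_h^i=a$, the opponents' actions are drawn from $\bm\pi^{-i}_{k,h}(\cdot\mid s)$ and play thereafter follows $\pi_k$, which is exactly the definition of $Q_h^{i,\pi_k}$. Hoeffding's inequality with $N\ge n_Q$ samples gives
\[
\Pr\bigl(|\widehat Q-Q|>\xi\bigr)\le 2\exp\!\left(-\frac{2n_Q\xi^2}{H^2}\right)\le 2e^{-u_K}=\frac{\delta}{2M_K}.
\]
Hence the per-tuple failure probability is at most $3\delta/(4M_K)$, which still sums to at most $\delta$ over the $M_K$ tuples. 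One should not integrate out the conditioning on $\pi_k$ until after these bounds are in place; since they hold uniformly in the past, towering expectations finishes the argument.

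\textbf{Main obstacle.} The delicate step is justifying that, conditionally on the random count $N$ and on which episodes hit $(s,a)$, the selected $Y$'s are still i.i.d. with the correct mean. Conditioning on a random sample size is exactly where Hoeffding can be misapplied. I would handle it by conditioning on the full vector of indicators $(\mathbf 1\{s_h^{k,r}=s,a_h^{i,k,r}=a\})_{r\le B}$. Given these indicators, the episodes remain independent, and each selected episode's tail from layer $h$ has the law of play started at $(s,a)$. Hoeffding then applies for each fixed realization with $N\ge n_Q$. The case $N=0$ is excluded on the good event.
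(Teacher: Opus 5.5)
Your proof is correct and matches the paper's argument step for step: condition on the pre-block history, lower-bound the visit probability by $p_{\min}$ via reachability and the action floor, control $\{N_{k,i}(s,a)<n_Q\}$ with a multiplicative Chernoff bound, apply Hoeffding conditionally on the full indicator vector, and union-bound over the $M_K$ tuples with per-tuple failure probability $3e^{-u_K}=3\delta/(4M_K)$. The differences are cosmetic. You apply Chernoff at the true success probability $p$ rather than passing to $\mathrm{Bin}(B,p_{\min})$ by monotonicity of the binomial lower tail. Your opening per-tuple budget of $\delta/(2M_K)$ should be replaced by the $3\delta/(4M_K)$ you actually derive, which still sums to at most $\delta$.
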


\begin{proof}
First, fix a tuple \((k,i,h,s,a_i)\), with \(s\in S_h\).  Then, let \(\mathcal F_{k-1}\) be
the sigma-field generated by all randomness before block \(k\). Let \(\mathbb P\) denote the probability measure induced by Algorithm \ref{alg:blocked-bandit-oomd} and the game itself. Conditional on \(\mathcal F_{k-1}\) under \(\mathbb P\), the policy \(\pi_k\) is fixed and the \(B\) trajectories in block \(k\) are independent.
For episode \(r\in[B]\), define
\( 
    X_r:=\mathbf 1\{s_h^{k,r}=s,\ a_h^{i,k,r}=a_i\} , 
\) thus  \(N_{k,i}(s,a_i)=\sum_{r=1}^B X_r\) by \eqref{eq:quasocount}. Then, by Definition \ref{eq:quasocount} conditional on \(\mathcal F_{k-1}\), the random variables \(X_1,\ldots,X_B\) are i.i.d. Bernoulli random variables with success probability
\( 
p_{k,i,h,s,a_i}
:=
\mathbb P(s_h^{k,r}=s, a_h^{i,k,r}=a_i\mid \mathcal F_{k-1}).
\) 
By the reachability assumption and the action-floor condition,
\[
p_{k,i,h,s,a_i}
=
\mathbb P(s_h^{k,r}=s\mid \mathcal F_{k-1})
\pi_{k,h}^i(a_i\mid s)
\ge
\kappa\frac{\zeta}{|A_i|}
\ge
\frac{\kappa\zeta}{A_{\max}}
=:p_{\min}
\]
\(\mathbb P\)-almost surely. Hence, under the conditional law \(\mathbb P(\cdot\mid\mathcal F_{k-1})\),
\[
N_{k,i}(s,a_i)\sim \mathrm{Bin}(B,p_{k,i,h,s,a_i}),
\qquad
p_{k,i,h,s,a_i}\ge p_{\min}\quad \mathbb P\text{-a.s.}
\]
Since the lower-tail probability of a binomial random variable is nonincreasing in its success probability, it follows that, for every threshold \(n\),
\[
\mathbb P\!\left(N_{k,i}(s,a_i)<n\mid \mathcal F_{k-1}\right)
\le
\Pr\!\left(\mathrm{Bin}(B,p_{\min})<n\right)
\]
\(\mathbb P\)-almost surely. Since \(Bp_{\min}\ge 8(n_Q+u_K)\), we have
\(n_Q\le Bp_{\min}/2\). Therefore, Chernoff's inequality gives
\[
\begin{aligned}
    \mathbb P\!\left(N_{k,i}(s,a_i)<n_Q\mid\mathcal F_{k-1}\right)
    &\le
    \Pr\!\left(\mathrm{Bin}(B,p_{\min})<\frac12 Bp_{\min}\right)  \\
    &\le
    \exp\!\left(-\frac{Bp_{\min}}8\right)
    \le
    e^{-u_K},
\end{aligned}
\]
\(\mathbb P\)-almost surely. Now, let
\( 
    A:=\{N_{k,i}(s,a_i)\ge n_Q\}.
\) 
On \(A\), the estimator is
\[
    \widehat Q_{k,h}^i(s,a_i)
    =
    \frac{1}{N_{k,i}(s,a_i)}
    \sum_{r=1}^B X_rY_h^{i,k,r}.
\]
Fix a realization \(x_1,\ldots,x_B\in\{0,1\}\) with
\(\sum_{r=1}^B x_r\ge n_Q\). Conditional on
\(\mathcal F_{k-1}\) and on \(\{X_r=x_r,\ r\in[B]\}\), the variables
\(\{Y_h^{i,k,r}:x_r=1\}\) are independent, since the \(B\) trajectories in block \(k\) are conditionally independent. Moreover, for each \(r\) with \(x_r=1\),
\[
\begin{aligned}
\mathbb E\!\left[
    Y_h^{i,k,r}
    \,\middle|\,
    \mathcal F_{k-1},\ s_h^{k,r}=s,\ a_h^{i,k,r}=a_i
\right]
&=
\mathbb E\!\left[
    \sum_{\tau=h}^H \ell_\tau^i(s_\tau^{k,r},\bm a_\tau^{k,r})
    \,\middle|\,
    \mathcal F_{k-1},\ s_h^{k,r}=s,\ a_h^{i,k,r}=a_i
\right] \\
&=
Q_{k,h}^i(s,a_i).
\end{aligned}
\]
as conditional on
\(\mathcal F_{k-1}\), the policy \(\pi_k\) is fixed, so the conditional expected tail cost from layer \(h\) onward is precisely the state-action value \(Q_{k,h}^i(s,a_i)\). Also,
\(Y_h^{i,k,r}\in[0,H]\). Therefore, Hoeffding's inequality gives
\[
\mathbb P\!\left(
    \left|
        \frac{1}{\sum_{r=1}^B x_r}
        \sum_{r=1}^B x_rY_h^{i,k,r}
        -
        Q_{k,h}^i(s,a_i)
    \right|>\xi
    \,\middle|\,
    \mathcal F_{k-1},\ X_r=x_r,\ r\in[B]
\right)
\le
2\exp\!\left(-\frac{2n_Q\xi^2}{H^2}\right).
\]
Averaging over all realizations \(x_1,\ldots,x_B\) with
\(\sum_{r=1}^B x_r\ge n_Q\), we obtain
\[
\mathbb P\!\left(
    \left|
        \widehat Q_{k,h}^i(s,a_i)-Q_{k,h}^i(s,a_i)
    \right|>\xi,\ A
    \,\middle|\,
    \mathcal F_{k-1}
\right)
\le
2\exp\!\left(-\frac{2n_Q\xi^2}{H^2}\right)
\le
2e^{-u_K}.
\]
Combining the bound on \(A^c=\{N_{k,i}(s,a_i)<n_Q\}\) with the bound on the estimation error on \(A\), we obtain, for the fixed tuple \((k,i,h,s,a_i)\),
\[
\begin{aligned}
\mathbb P\!\left(
        \left|
            \widehat Q_{k,h}^i(s,a_i)-Q_{k,h}^{i}(s,a_i)
        \right|>\xi
        \,\middle|\,\mathcal F_{k-1}
    \right) 
& \le
\mathbb P\!\left(A^c\mid\mathcal F_{k-1}\right)
+
\mathbb P\!\left(
    \left|
        \widehat Q_{k,h}^i(s,a_i)-Q_{k,h}^{i}(s,a_i)
    \right|>\xi,\ A
    \,\middle|\,
    \mathcal F_{k-1}
\right)  \\
& \le
e^{-u_K}+2e^{-u_K}
=
3e^{-u_K},
\end{aligned}
\]
\(\mathbb P\)-almost surely. Taking expectations on both sides gives the unconditional bound
\[
\mathbb P\!\left(
        \left|
            \widehat Q_{k,h}^i(s,a_i)-Q_{k,h}^{i}(s,a_i)
        \right|>\xi
    \right)
\le
3e^{-u_K}.
\]
Now take a union bound over all tuples
\( 
(k,i,h,s,a_i)
\;\text{with}\;
k\in[K],\ i\in[m],\ h\in[H],\ s\in S_h, a_i\in A_i .
\) 
The number of such tuples is at most \(M_K=KmA_{\max}S\). Therefore,
\[
\begin{aligned}
&\mathbb P\!\left(
    \max_{\substack{k\in[K],i\in[m],h\in[H],\\ s\in S_h,\ a_i\in A_i}}
    \left|
        \widehat Q_{k,h}^i(s,a_i)-Q_{k,h}^{i}(s,a_i)
    \right|
    >\xi
\right)\le
3M_Ke^{-u_K}
=
3M_K\frac{\delta}{4M_K}
\le
\delta .
\end{aligned}
\]
Equivalently, with probability at least \(1-\delta\),
\[
    \max_{\substack{k\in[K],i\in[m],h\in[H],\\ s\in S_h,\ a_i\in A_i}}
    \left|
        \widehat Q_{k,h}^i(s,a_i)-Q_{k,h}^{i}(s,a_i)
    \right|
    \le
    \xi .
\]
This proves the result.

\end{proof}

\subsection{Regret with uniformly accurate estimates}
\label{app:pf-regret-accurate-estimates}

We next prove a deterministic regret bound for Algorithm~\ref{alg:blocked-bandit-oomd} on the event that all estimated \(Q\)-vectors are uniformly accurate. Define
\( 
    \mathcal E_\xi
    :=
    \left\{
    \max_{k,i,h,s,a_i}
    \left|
        \widehat Q_{k,h}^i(s,a_i)-Q_{k,h}^i(s,a_i)
    \right|
    \le \xi
    \right\}.
\) 
On this event, the regret analysis of the estimated-\(Q\) functions differs from the full-feedback case only through the estimation-error terms that appear.

\begin{proposition}
\label{prop:pf-accurate-estimates}
Suppose Algorithm~\ref{alg:blocked-bandit-oomd} is run for \(K\) blocks.  On
the event \(\mathcal E_\xi\), for every player \(i\in[m]\),
\[
\begin{aligned}
    \sup_{\mu^i\in\Pi_i^{\mathrm{gen}}}
    \frac1K
    \sum_{k=1}^K
    \left[
        V^{i,\pi}_{k,1}(s_1)
        -
        V^{i,\mu^i}_{k,1}(s_1)
    \right]
    \le\;&
    \widetilde C_i^{\mathrm{SE}}K^{-3/(3H+1)}
    +2H\xi
    +2\zeta H^2
    +8\eta_0H\xi^2K^{-\alpha_H},
\end{aligned}
\]
where
\( 
    \widetilde C_i^{\mathrm{SE}}
    :=
        \frac{2H\log(|A_i|+1)}{\eta_0}
        +12mH^3\log(|A_i|+1)
        +\eta_0H^3
        +288\eta_0^3m^2H^7.
\) 
\end{proposition}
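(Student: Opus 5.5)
The plan is to reduce to the full-feedback proof of Theorem~\ref{thm:main}, with three additive corrections: comparator smoothing, replacing $Q$ by $\widehat Q$ in the regret decomposition, and the extra $Q$-variation caused by estimation noise. Fix a player $i$ and a comparator $\mu^i\in\Pi_i^{\mathrm{gen}}$.

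First, apply Lemma~\ref{lem:pf-comparator-smoothing} in every block $k$. This replaces $\mu^i$ by $\mu^{i,\zeta}$ at an additive cost of $2\zeta H^2$ per block, i.e.\ $2\zeta H^2$ after averaging. The point is that $\mu^{i,\zeta}_h(\cdot\mid\tau_h)\in\Delta_i^\zeta$ for every history, so it is a feasible comparator for mirror descent over $\Delta_i^\zeta$.

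Next, apply Lemma~\ref{lem:history-vdiff} to $\mu^{i,\zeta}$. This writes $\sum_k[V^{i,\pi}_{k,1}(s_1)-V^{i,\mu^{i,\zeta}}_{k,1}(s_1)]$ as $\sum_h R_{i,h}$, where $R_{i,h}$ is a sum of terms weighted by $d^{i,\mu^\zeta}_{k,h}(\tau_h)$. In each term I split
\[
\langle Q_{k,h}^i,\pi-\mu^{\zeta}\rangle=\langle \widehat Q_{k,h}^i,\pi-\mu^{\zeta}\rangle+\langle Q_{k,h}^i-\widehat Q_{k,h}^i,\pi-\mu^{\zeta}\rangle .
\]
On $\mathcal E_\xi$ the second inner product is at most $\xi\cdot 2$. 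The weights sum to one over $\tau_h$, and there are $H$ layers, so this contributes $2H\xi$ per block.

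The remaining weighted term with $\widehat Q$ is handled by rerunning the full-feedback argument with $K$ in place of $T$ and $\Delta_i^\zeta$ in place of $\Delta(A_i)$. I would check that each ingredient survives this change of feasible set.
\begin{itemize}
\item Lemma~\ref{lem:three-point} is stated for an arbitrary closed convex set, and the cited proximal Lipschitz result in Lemma~\ref{lem:prox-lip} holds for any closed convex domain. The restriction of $\Psi_i$ to $\Delta_i^\zeta$ keeps strong convexity $1/\rho_i$ and diameter $B_i$. Hence Lemma~\ref{lem:weighted-oomd} holds verbatim on $\Delta_i^\zeta$.
\item The estimates $\widehat Q_{k,h}^i$ are empirical averages of tail costs $Y\in[0,H]$, or $0$, so $\|\widehat Q\|_\infty\le H$. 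Lemma~\ref{lem:movement} therefore still gives movement at most $3\rho_i H\eta_h$.
\item Lemma~\ref{lem:history-tv} depends only on policy movement, so it gives the same bound on $\mathrm{TV}^{i,\mu^\zeta}_h$.
\end{itemize}

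The only genuinely new estimate is the optimistic variation term. For $k\ge2$,
\[
\|\widehat Q_{k,h}^i-\widehat Q_{k-1,h}^i\|_\infty\le\|Q_{k,h}^i-Q_{k-1,h}^i\|_\infty+2\xi .
\]
Combining $(a+b)^2\le 2a^2+2b^2$ with Lemma~\ref{lem:qvar} gives the bound $72m^2\bar\rho^2H^6\eta_H^2+8\xi^2$. The first piece doubles the last constant of $C_i^{\mathrm{SE}}$, turning $144$ into $288$, which is exactly $\widetilde C_i^{\mathrm{SE}}$. The $k=1$ term is still bounded by $H^2$.

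The $8\xi^2$ piece contributes $\frac{\eta_h\rho_i}{2}\cdot 8\xi^2 K$ at each layer. With $\rho_i=2$ and $\sum_h\eta_h\le H\eta_H$, this totals $8\eta_0H\xi^2K^{1-\alpha_H}$, which becomes $8\eta_0H\xi^2K^{-\alpha_H}$ after dividing by $K$.

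The other terms are summed over $h$ exactly as in the proof of Theorem~\ref{thm:main}. This gives $\widetilde C_i^{\mathrm{SE}}K^{\beta}$, hence $\widetilde C_i^{\mathrm{SE}}K^{-3/(3H+1)}$ after dividing by $K$. Since all bounds are uniform in $\mu^i$, taking the supremum yields the proposition.

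I expect the main obstacle to be justifying that the OOMD tools (the proximal Lipschitz bound and the movement lemma) transfer to the restricted simplex $\Delta_i^\zeta$. Tracking the $\xi$ terms is only bookkeeping. My first attempt would be to note that both tools use only strong convexity of $\Psi_i$ and first-order optimality over a convex set, neither of which depends on the set being the full simplex.
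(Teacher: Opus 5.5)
Your proposal is correct and follows the paper's proof essentially step for step. The shared steps are: comparator smoothing via Lemma~\ref{lem:pf-comparator-smoothing}; the decomposition of Lemma~\ref{lem:history-vdiff}, with the \(Q\)-to-\(\widehat Q\) swap costing \(2H\xi\) per block; the weighted OOMD bound together with the movement and TV lemmas, transferred to \(\Delta_i^\zeta\) using \(\widehat Q\in[0,H]^{A_i}\); and the \((a+b)^2\le 2a^2+2b^2\) split, which turns \(144\) into \(288\) and produces the \(8\eta_0H\xi^2K^{-\alpha_H}\) term. Your explicit check that the proximal-Lipschitz and three-point inequalities hold on any closed convex set justifies the restricted-simplex transfer slightly more carefully than the paper, which only asserts it.
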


\begin{proof}
Fix player \(i\) and a comparator \(\mu^i\in\Pi_i^{\mathrm{gen}}\).  Let
\(\mu^{i,\zeta}\) be its \(\zeta\)-smoothed version defined as in Appendix \ref{app:pf-comparator-smoothing}.  By
Lemma~\ref{lem:pf-comparator-smoothing}, it is enough to control the regret against
\(\mu^{i,\zeta}\) and then add \(2\zeta H^2\) to the average regret.

\vspace{6pt}
\noindent 
For block \(k\), let
\(d_{k,h}^{i,\mu^\zeta}\in\Delta(\Hist_h)\) be the history distribution at layer
\(h\) generated when player \(i\) uses \(\mu^{i,\zeta}\) and the opponents use
\(\bm \pi_k^{-i}\). Applying Lemma~\ref{lem:history-vdiff} to the block profile
\(\bm \pi_k\) gives
\[
\begin{aligned}
    &V^{i,\pi^i}_{k,1}(s_1)
    -
    V^{i,\mu^{i,\zeta}}_{k,1}(s_1)         =
    \sum_{h=1}^H
    \sum_{\tau_h\in\Hist_h}
    d_{k,h}^{i,\mu^\zeta}(\tau_h)
    \left\langle
        Q_{k,h}^i(s(\tau_h),\cdot),
        \pi_{k,h}^i(\cdot\mid s(\tau_h))
        -
        \mu_h^{i,\zeta}(\cdot\mid\tau_h)
    \right\rangle,
\end{aligned}
\]
On the event \(\mathcal E_\xi\), for every history \(\tau_h\), it holds almost surely that
\[
\begin{aligned}
    &\left|
    \left\langle
        Q_{k,h}^i(s(\tau_h),\cdot)-\widehat Q_{k,h}^i(s(\tau_h),\cdot),
        \pi_{k,h}^i(\cdot\mid s(\tau_h))
        -
        \mu_h^{i,\zeta}(\cdot\mid\tau_h)
    \right\rangle
    \right|                                          \le
    \xi
    \left\|
        \pi_{k,h}^i(\cdot\mid s(\tau_h))
        -
        \mu_h^{i,\zeta}(\cdot\mid\tau_h)
    \right\|_1
    \le 2\xi .
\end{aligned}
\]
Since \(d_{k,h}^{i,\mu^\zeta}\in\Delta(\Hist_h)\) has total mass one at each layer, replacing
\(Q_{k,h}^i\) by \(\widehat Q_{k,h}^i\) costs at most \(2HK\xi\) in cumulative regret.

\vspace{6pt}
\noindent 
It remains to bound the corresponding weighted regret expression with
\(\widehat Q_{k,h}^i\) in place of \(Q_{k,h}^i\).  Lemma~\ref{lem:weighted-oomd} is
stated for the full simplex, but its proof uses only first-order optimality over
a convex decision set, the Bregman diameter bound \(D_i\le B_i\), and
\(1/\rho_i\)-strong convexity of the smoothed entropy regularizer.  Hence the same bound applies verbatim on the
restricted simplex \(\Delta_i^\zeta\).  Applying Lemma~\ref{lem:weighted-oomd}
separately to every history \(\tau\in\Hist_h\), with
\[
    x_k=\pi_{k,h}^i(\cdot\mid s(\tau_h)),
    \quad
    u=\mu_h^{i,\zeta}(\cdot\mid\tau_h),
    \quad
    \ell_k=\widehat Q_{k,h}^i(s(\tau_h),\cdot), \quad
    q_k=d_{k,h}^{i,\mu^\zeta}(\tau_h),
    \quad
    M_k=\widehat Q_{k-1,h}^i(s(\tau_h),\cdot),
\]
and summing over histories gives
\begin{align} \label{eq:oomd-qhat}
    R_{i,h}^{\mathrm{est}}(\mu^{i,\zeta})
    \le\;&
    \frac{B_i(1+\TV_h^{i,\mu^\zeta})}{\eta_h}       +
    \frac{\eta_h\rho_i}{2}
    \sum_{\tau\in\Hist_h}
    \sum_{k=1}^K
    d_{k,h}^{i,\mu^\zeta}(\tau)
    \left\|
        \widehat Q_{k,h}^i(s(\tau),\cdot)
        -
        \widehat Q_{k-1,h}^i(s(\tau),\cdot)
    \right\|_\infty^2,
\end{align}
where
\( 
    \TV_h^{i,\mu^\zeta}
    :=
    \sum_{k=1}^{K-1}
    \left\|
        d_{k+1,h}^{i,\mu^\zeta}-d_{k,h}^{i,\mu^\zeta}
    \right\|_1 .
\) 

\vspace{6pt}
\noindent
We next bound first order path of the policies generated by estimated \(Q\)-functions. The proof of Lemma~\ref{lem:movement}, which is based on
Lemma~\ref{lem:prox-lip}, only uses the fact that the feedback vectors lie in
\([0,H]^{A_j}\).  Since \(\widehat Q_{k,h}^j\in[0,H]^{A_j}\), repeating the same calculation, for every player \(j\), layer \(h\), state \(s\in S_h\), and block
\(k\), we obtain
\begin{align} \label{eq:quaso5}
    \left\|
        \pi_{k+1,h}^j(\cdot\mid s)
        -
        \pi_{k,h}^j(\cdot\mid s)
    \right\|_1
    \le
    3\rho_jH\eta_h
    \le
    3\rhobar H\eta_h .
\end{align}
Applying the Lemma~\ref{lem:history-tv} to the block sequence
\(\pi_1,\ldots,\pi_K\), and using \eqref{eq:quaso5}, yields
\[
    \TV_h^{i,\mu^\zeta}
    \le
    3m\rhobar HK\sum_{\ell<h}\eta_\ell .
\]
We now control the variation of the estimated \(Q\)-function.  For the exact values, Lemma~\ref{lem:q-perturb} and \eqref{eq:quaso5} imply, for \(k\ge2\),
\[
\begin{aligned}
    \left\|
        Q_{k,h}^i(s,\cdot)-Q_{k-1,h}^i(s,\cdot)
    \right\|_\infty
    &\le
    2H^2\sum_{j=1}^m
    \|\pi_{k,h}^j-\pi_{k-1,h}^j\|_{\infty,1}                         \\
    &\le
    2H^2\cdot m\cdot 3\rhobar H\eta_H
    =6m\rhobar H^3\eta_H .
\end{aligned}
\]
 On the good event \(\mathcal E_\xi\), for \(k\ge2\),
\[
\begin{aligned}
    \left\|
        \widehat Q_{k,h}^i(s,\cdot)-\widehat Q_{k-1,h}^i(s,\cdot)
    \right\|_\infty
    &\le
    \left\|
        Q_{k,h}^i(s,\cdot)-Q_{k-1,h}^i(s,\cdot)
    \right\|_\infty
    +2\xi \le
    6m\rhobar H^3\eta_H+2\xi 
\end{aligned}
\] Then, we have
\[
\begin{aligned}
    \left\|
        \widehat Q_{k,h}^i(s,\cdot)-\widehat Q_{k-1,h}^i(s,\cdot)
    \right\|_\infty^2
    \le 72m^2\rhobar^2H^6\eta_H^2+8\xi^2,
\end{aligned}
\]
where the last line follows from the inequality \((a+b)^2\le2a^2+2b^2\).
On the first block, \(\widehat Q_{0,h}^i\equiv0\) and
\(\widehat Q_{1,h}^i\in[0,H]^{A_i}\), so
\( 
    \left\|
        \widehat Q_{1,h}^i(s,\cdot)-\widehat Q_{0,h}^i(s,\cdot)
    \right\|_\infty^2
    \le H^2 .
\) 
Substituting these estimates into the weighted OOMD bound \eqref{eq:oomd-qhat} gives, on every
layer \(h\),
\begin{align} \label{eq:quaso6}
    R_{i,h}^{\mathrm{est}}(\mu^{i,\zeta})
    \le\;&
    \frac{B_i}{\eta_h}
    +
    \frac{B_i\TV_h^{i,\mu^\zeta}}{\eta_h}
    +
    \frac12\eta_h\rho_iH^2  +
    36\eta_h\rho_i m^2\rhobar^2H^6\eta_H^2K
    +
    4\eta_h\rho_i\xi^2K .
\end{align}
We now sum over \(h\in[H]\), using the same bounds on the learning-rate related terms as in the
proof of Theorem~\ref{thm:main} in Appendix~\ref{app:proof-main}.  Namely, with
\(\beta=(3H-2)/(3H+1)\),
\begin{align}\label{eq:step-block1}
    \sum_{h=1}^H\frac{B_i}{\eta_h}
    \le
    \frac{HB_i}{\eta_0}K^\beta,\quad 
    \sum_{h=1}^H\frac{B_i\TV_h^{i,\mu^\zeta}}{\eta_h}
    \le
    3mB_i\rhobar H^3K^\beta,
    \quad 
    \sum_{h=1}^H\frac12\eta_h\rho_iH^2
    \le
    \frac12\eta_0\rho_iH^3K^\beta,
\end{align}
and
\begin{align}\label{eq:step-block2}
    \sum_{h=1}^H
    36\eta_h\rho_i m^2\rhobar^2H^6\eta_H^2K
    \le
    36\eta_0^3\rho_i m^2\rhobar^2H^7K^\beta .
\end{align}
Then the final term in \eqref{eq:quaso6} satisfies
\begin{align}\label{eq:step-block3}
    \sum_{h=1}^H 4\eta_h\rho_i\xi^2K
    \le
    4\eta_0\rho_iH\xi^2K^{1-\alpha_H} .
\end{align}
Choosing
\( 
    \lambda_i=\frac{1}{|A_i|},
\) 
gives
\( 
    \rho_i
    =
    1+|A_i|\lambda_i
    =
    2,
    \;
    \bar\rho
    =
    2,
\) 
and
\( 
    B_i
    =
    2\log(|A_i|+1)
\). Then,  substituting these bounds into the \eqref{eq:step-block1}, \eqref{eq:step-block2}, \eqref{eq:step-block3} yields, 
\begin{align} \label{eq:onemore}
    R_{i,h}^{\mathrm{est}}(\mu^{i,\zeta})
    \le\;&
    \widetilde C_i^{\mathrm{SE}}K^{\beta}
    +8\eta_0H\xi^2K^{1-\alpha_H}.
\end{align}
Finally, combining \eqref{eq:onemore} with the error from comparing the played policies against \(\mu^{i,\zeta}\), namely \(2\zeta H^2K\), and the error from \(Q\)-function estimation, namely \(2HK\xi\), we obtain,
\[
\begin{aligned}
    \sum_{k=1}^K
    \left[
        V_{k,1}^{i,\pi}(s_1)
        -
        V_{k,1}^{i,\mu^i}(s_1)
    \right]
    \le\;&
    \widetilde C_i^{\mathrm{SE}}K^\beta
    +2HK\xi
    +2\zeta H^2K  +
    8\eta_0H\xi^2K^{1-\alpha_H} .
\end{aligned}
\]
Dividing by \(K\) and taking the supremum over
\(\mu^i\in\Pi_i^{\mathrm{gen}}\) proves the proposition.
\end{proof}

\subsection{Proof of Theorem~\ref{thm:episodic-partial-feedback}}
\label{app:proof-episodic-partial-feedback}

\begin{proof}
Apply Lemma~\ref{lem:pf-uniform-q-estimation} with
\(   
    K=K_\varepsilon,
    \;
    B=B_\varepsilon,
    \;
    \zeta=\zeta_\varepsilon,
    \;
    \xi=\xi_\varepsilon .
\) 
By the definition of \(B_\varepsilon\), the event \(\mathcal E_{\xi_\varepsilon}\)
holds with probability at least \(1-\delta\).  On this event,
Proposition~\ref{prop:pf-accurate-estimates} gives, for every player
\(i\in[m]\),
\[
\begin{aligned}
    \sup_{\mu^i\in\Pi_i^{\mathrm{gen}}}
    \frac1{K_\varepsilon}
    \sum_{k=1}^{K_\varepsilon}
    \left[
        V_{k,1}^{i,\pi}(s_1)
        -
        V_{k,1}^{i,\mu^i}(s_1)
    \right]
    \le\;&
    \widetilde C_i^{\mathrm{SE}}K_\varepsilon^{-3/(3H+1)}
    +2H\xi_\varepsilon
    +2\zeta_\varepsilon H^2  +
    8\eta_0H\xi_\varepsilon^2K_\varepsilon^{-\alpha_H} .
\end{aligned}
\]
\begin{itemize}
\item By the definition of \(K_\varepsilon\),
\( 
    \widetilde C_i^{\mathrm{SE}}K_\varepsilon^{-3/(3H+1)}
    \le
    \widetilde C_*^{\mathrm{SE}}K_\varepsilon^{-3/(3H+1)}
    \le
    \frac\varepsilon4 .
\) 
\item By the definition of \(\xi_\varepsilon\),
\( 
    2H\xi_\varepsilon\le\frac\varepsilon4 .
\) 
\item By the definition of \(\zeta_\varepsilon\),
\( 
    2\zeta_\varepsilon H^2=\frac\varepsilon4 .
\)
\item Finally, since \(K_\varepsilon^{-\alpha_H}\le1\), \(\rho_i=2\), and
\(\xi_\varepsilon^2\le\varepsilon/(32\eta_0 H)\),
\( 
    8\eta_0H\xi_\varepsilon^2K_\varepsilon^{-\alpha_H}
    \le
    8\eta_0 H\xi_\varepsilon^2
    \le
    \frac\varepsilon4 .
\) 
\end{itemize}
Thus, with probability at least \(1-\delta\), for every player \(i\in[m]\),
\begin{align}\label{eq:epsilonblock}
    \sup_{\mu^i\in\Pi_i^{\mathrm{gen}}}
    \frac1{K_\varepsilon}
    \sum_{k=1}^{K_\varepsilon}
    \left[
        V_{k,1}^{i,\pi}(s_1)
        -
        V_{k,1}^{i,\mu^i}(s_1)
    \right]
    \le
    \varepsilon .
\end{align}
It remains to translate this block-level guarantee into a guarantee over the
actual trajectory episodes. By construction, during block \(k\) the algorithm
plays the same product Markov policy \(\pi_k\) for all
\(B_\varepsilon\) episodes in that block. Therefore
\[
\begin{aligned}
    \frac1{N_\varepsilon}
    \sum_{n=1}^{N_\varepsilon}
    \left[
        V_n^{i,\pi}(s_1)
        -
        V_n^{i,\mu^i}(s_1)
    \right]
    &=
    \frac1{K_\varepsilon B_\varepsilon}
    \sum_{k=1}^{K_\varepsilon}
    \sum_{r=1}^{B_\varepsilon}
    \left[
        V_{k,1}^{i,\pi}(s_1)
        -
        V_{k,1}^{i,\mu^i}(s_1)
    \right]=
    \frac1{K_\varepsilon}
    \sum_{k=1}^{K_\varepsilon}
    \left[
        V_{k,1}^{i,\pi}(s_1)
        -
        V_{k,1}^{i,\mu^i}(s_1)
    \right].
\end{aligned}
\]
Hence the bound given in \eqref{eq:epsilonblock} holds when the average is taken over
the \(N_\varepsilon=K_\varepsilon B_\varepsilon\) actual trajectory episodes. Moreover, the empirical distribution policies over \(\pi_1,\ldots,\pi_{{N_{\varepsilon}}}\), is
identical to the empirical distribution over the block policies
\(\pi_1,\ldots,\pi_{K_\varepsilon}\), since each policy \(\pi_k\) appears
exactly \(B_\varepsilon\) times. Thus, the empirical distribution over block
policies is an \(\varepsilon\)-approximate CCE.

\vspace{6pt}
\noindent
Finally, the total number of episodes is \(N_\varepsilon=K_\varepsilon B_\varepsilon\). By the definitions of
\(K_\varepsilon\), \(B_\varepsilon\), \(\zeta_\varepsilon\), \(n_\varepsilon\), \(u_\varepsilon\), and using the fact \(\xi_\varepsilon\le \varepsilon/(8H)\),
\[
N_\varepsilon
=
\widetilde O\left(
    \frac{A_{\max}H^6}{\kappa\varepsilon^3}
    \left(
        \frac{\widetilde C_*^{\mathrm{SE}}}{\varepsilon}
    \right)^{(3H+1)/3}
\right),
\]
where \(\widetilde O(\cdot)\) hides logarithmic factors in
\((m,S_{\mathrm{nt}},A_{\max},K_\varepsilon,1/\delta)\), and
\[
\widetilde C_*^{\mathrm{SE}}
\le
\max_{i\in[m]}
\left\{
    \frac{2H\log(|A_i|+1)}{\eta_0}
    +12mH^3\log(|A_i|+1)
    +\eta_0H^3
    +288\eta_0^3m^2H^7
\right\}.
\]
This completes the proof.

\end{proof}

\section{Proofs for the Approximation of Discounted Markov Games (Section \ref{sec:discounted-truncation})}
\label{app:discounted-truncation}

This appendix proves the truncation statements used in
Section~\ref{sec:discounted-truncation}. Throughout this appendix, we fix the discount factor \(\gamma\in(0,1)\).The argument is written once for the
unified discounted model with original horizon
\(\bar H\in\mathbb N_+\cup\{\infty\}\).  The case \(\bar H<\infty\) corresponds to the
finite-horizon discounted games, while \(\bar H=\infty\) corresponds to the discounted
infinite-horizon games.

\begin{lemma}
\label{lem:discounted-tail-unified}
For every \(\bar H\)-horizon policy profile \(\bm\sigma\in\Pi^{\mathrm{gen},\bar H}\), every
\(L\le\bar H\), and every player \(i\in[m]\),
\[
    0
    \le
    J_{i,\bar H}^\gamma(\bm{\sigma})
    -
    J_{i,L}^\gamma(\bm\sigma^{[L]})
    \le
    \tau_{L,\bar H}(\gamma)
    \le
    \frac{\gamma^L}{1-\gamma}.
\]
\end{lemma}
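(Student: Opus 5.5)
The plan is to work directly with the trajectory expectations and rely on the fact that restricting $\bm\sigma$ to its first $L$ decision rules does not change the law of the first $L$ layers. Let $\bm\sigma^{[L]}=(\bm\sigma_h)_{h=1}^L$. The distribution of the partial trajectory $(s_1,\bm a_1,\ldots,s_L,\bm a_L)$ is built recursively: $s_1$ is fixed, $\bm a_h$ is drawn from the product of $\sigma^j_h(\cdot\mid\tau_h)$, and $s_{h+1}$ is drawn from $P_h(\cdot\mid s_h,\bm a_h)$. This recursion uses only the decision rules at layers $h\le L$. Hence, by induction on $h$, the marginal on $\mathcal H^{\bar H}_{L}\times A$ under $\mathbb P^{\bm\sigma}$ coincides with the one under $\mathbb P^{\bm\sigma^{[L]}}$ in the truncated game. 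Consequently
\[
J_{i,L}^\gamma(\bm\sigma^{[L]})=\mathbb E^{\bm\sigma}\Bigl[\sum_{h=1}^{L}\gamma^{h-1}\ell_h^i(s_h,\bm a_h)\Bigr].
\]

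Subtracting this from $J_{i,\bar H}^\gamma(\bm\sigma)$ leaves
\[
J_{i,\bar H}^\gamma(\bm\sigma)-J_{i,L}^\gamma(\bm\sigma^{[L]})=\mathbb E^{\bm\sigma}\Bigl[\sum_{h=L+1}^{\bar H}\gamma^{h-1}\ell_h^i(s_h,\bm a_h)\Bigr].
\]
When $\bar H=\infty$, splitting the expectation of the series into a head and a tail is justified by monotone convergence (or Tonelli), because every summand lies in $[0,1]$ and the series converges absolutely. Since $0\le \ell_h^i\le 1$, the tail expectation is at least $0$ and at most $\sum_{h=L+1}^{\bar H}\gamma^{h-1}=\tau_{L,\bar H}(\gamma)$. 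If $L=\bar H$ the sum is empty and both sides equal $0$.

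The final inequality is the geometric bound $\tau_{L,\bar H}(\gamma)\le\sum_{h=L+1}^{\infty}\gamma^{h-1}=\gamma^L/(1-\gamma)$. The only step needing care is the marginal-equivalence argument in the first paragraph, especially for history-dependent $\bm\sigma$ and $\bar H=\infty$. I would handle it by defining both laws through the Ionescu-Tulcea construction and noting that their projections onto the first $L$ coordinates are determined by identical kernels. Everything else is routine.
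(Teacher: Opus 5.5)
Your proof is correct and follows the paper's argument. Both identify the laws of the first $L$ layers under $\bm\sigma$ and $\bm\sigma^{[L]}$, reduce the difference to the expected discounted tail $\mathbb E^{\bm\sigma}\bigl[\sum_{h=L+1}^{\bar H}\gamma^{h-1}\ell_h^i(s_h,\bm a_h)\bigr]$, and bound it using $0\le \ell_h^i\le 1$ and the geometric series. Your extra justifications (Ionescu--Tulcea for the marginal equivalence, and monotone convergence for splitting the series when $\bar H=\infty$) make explicit steps the paper leaves implicit.
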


\begin{proof}
The policy profile \(\bm \sigma\) and its truncation \(\bm\sigma^{[L]}\) have
the same first \(L\) decision rules.  Therefore they induce the same
distribution over trajectories up to the beginning of layer \(L+1\).  Hence,
\[
\begin{aligned}
    J_{i,\bar H}^\gamma(\bm{\sigma})
    -
    J_{i,L}^\gamma(\bm\sigma^{[L]})
    & =
    \mathbb E^{\bm{\sigma}}
    \left[
        \sum_{h=L+1}^{\bar H}
        \gamma^{h-1}\ell_h^i(s_h,\bm a_h)
        \;\middle|\; s_1
    \right].
\end{aligned}
\]
The right-hand side is nonnegative.  Since \(0\le \ell_h^i\le 1\), we have
\[
    \sum_{h=L+1}^{\bar H}\gamma^{h-1} \ell_h^i(s_h,\bm a_h) \le 
    \sum_{h=L+1}^{\infty}\gamma^{h-1}
    =
    \frac{\gamma^L}{1-\gamma},
\]
which concludes the proof. 
\end{proof}

\begin{lemma}
\label{lem:discounted-single-policy-transfer}
Let \(\bm{\pi}^{[L]}\in\Pi^{\mathrm{markov},L}\), be an \(L\)-step product Markov profile, and define
\( 
    \bar{\bm{\pi}}:=\mathrm{Ext}_L(\bm{\pi}^{[L]}).
\) 
Then,
\[
\begin{aligned}
    J_{i,\bar H}^\gamma(\bar{\bm{\pi}})
    -
    J_{i,\bar H}^\gamma(\mu^i\odot\bar{\bm{\pi}}^{-i})
    \le\;&
    J_{i,L}^\gamma(\bm{\pi}^{[L]})
    -
    J_{i,L}^\gamma(\mu^{i,[L]}\odot\bm{\pi}^{[L],-i}) +
    \tau_{L,\bar H}(\gamma).
\end{aligned}
\]
\end{lemma}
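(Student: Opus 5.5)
We compare each side of the claimed inequality with its \(L\)-step truncation, using Lemma~\ref{lem:discounted-tail-unified} in opposite directions for the two terms. The inequality then splits into an upper bound on the learned profile's cost and a lower bound on the deviation's cost.

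\emph{Learned profile.} Apply Lemma~\ref{lem:discounted-tail-unified} to \(\bm\sigma=\bar{\bm\pi}=\mathrm{Ext}_L(\bm\pi^{[L]})\). By the definition of \(\mathrm{Ext}_L\), the truncation \(\bar{\bm\pi}^{[L]}\) equals \(\bm\pi^{[L]}\) on layers \(1,\ldots,L\). The upper half of the lemma therefore gives
\(J_{i,\bar H}^\gamma(\bar{\bm\pi})\le J_{i,L}^\gamma(\bm\pi^{[L]})+\tau_{L,\bar H}(\gamma)\).

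\emph{Deviation profile.} Apply Lemma~\ref{lem:discounted-tail-unified} to \(\bm\sigma=\mu^i\odot\bar{\bm\pi}^{-i}\), this time using only the nonnegative half:
\(J_{i,\bar H}^\gamma(\mu^i\odot\bar{\bm\pi}^{-i})\ge J_{i,L}^\gamma\bigl((\mu^i\odot\bar{\bm\pi}^{-i})^{[L]}\bigr)\).
This is just the fact that the discarded tail \(\sum_{h>L}\gamma^{h-1}\ell_h^i\) is nonnegative. The remaining point is the identity
\((\mu^i\odot\bar{\bm\pi}^{-i})^{[L]}=\mu^{i,[L]}\odot\bm\pi^{[L],-i}\).
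It holds because truncation acts coordinatewise on the decision rules. For \(h\le L\), player \(i\) uses \(\mu^i_h(\cdot\mid\tau_h)\). The opponents use \(\bar\pi^j_h=\pi^{[L],j}_h\), since the reference continuation \(\bm\pi^{L,\mathrm{ref}}\) only enters after layer \(L\).

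Both profiles share the kernels \(P_1,\ldots,P_L\) and start at \(s_1\). They therefore induce the same law on \((s_1,\bm a_1,\ldots,s_L,\bm a_L)\), which makes their \(L\)-truncated discounted costs equal.

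\emph{Combining.} Subtracting the deviation lower bound from the learned-profile upper bound gives exactly the claimed inequality.

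The only delicate step is the bookkeeping identity in the deviation bound. One must check that \(J_{i,L}^\gamma\) of a truncated non-Markov profile depends only on the first \(L\) decision rules. One must also check that the \(L\)-step truncated game \(G^{[L]}_{\gamma,\bar H}\) uses costs \(\gamma^{h-1}\ell_h^i\), so that its objective coincides with \(J_{i,L}^\gamma\). Both follow directly from the definitions in Section~\ref{subsect:truncated:quaso}. When \(L=\bar H\), the tail is empty and the argument degenerates to an equality, with \(\tau_{L,\bar H}(\gamma)=0\).
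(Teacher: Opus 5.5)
Your proposal is correct and follows essentially the same route as the paper. You bound \(J_{i,\bar H}^\gamma(\bar{\bm\pi})\) from above via Lemma~\ref{lem:discounted-tail-unified}, bound the deviation's cost from below by nonnegativity of the discarded tail after identifying its first \(L\) decision rules with \(\mu^{i,[L]}\odot\bm\pi^{[L],-i}\), and then subtract; your extra remarks on the truncation identity and the case \(L=\bar H\) only spell out what the paper states in one line.
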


\begin{proof}
Applying Lemma~\ref{lem:discounted-tail-unified} with \(\bm{\pi}=\bar{\bm{\pi}}\), we obtain
\begin{align}\label{eq:tot-disc1}
    J_{i,\bar H}^\gamma(\bar{\bm{\pi}})
    \le
    J_{i,L}^\gamma(\bm{\pi}^{[L]})
    +
    \tau_{L,\bar H}(\gamma).
\end{align}
On the other hand, the first \(L\) decision rules of
\(\mu^i\odot\bar{\bm{\pi}}^{-i}\) are exactly those of
\(\mu^{i,[L]}\odot\bm{\pi}^{[L],-i}\).  Since all costs are nonnegative, ignoring the
layers after \(L\) can only decrease the term \( 
    J_{i,\bar H}^\gamma(\mu^i\odot\bar{\bm{\pi}}^{-i})
\).  Therefore,
\begin{align}\label{eq:tot-disc2}
    J_{i,\bar H}^\gamma(\mu^i\odot\bar{\bm{\pi}}^{-i})
    \ge
    J_{i,L}^\gamma(\mu^{i,[L]}\odot\bm{\pi}^{[L],-i}).
\end{align}
Then, subtracting \eqref{eq:tot-disc2} from \eqref{eq:tot-disc1} proves the claim.
\end{proof}

\begin{proof}[Proof of Theorem~\ref{thm:discounted-transfer}]
Fix a player \(i\in[m]\) and a policy \(\mu^i\in\Pi_i^{\mathrm{gen},\bar H}\).
For each output profile \(\bm{\pi}_n^{[L]}\in\Pi^{\mathrm{markov},L}\), define
\(\bar{\bm{\pi}}_n:=\mathrm{Ext}_L(\pi_n^{[L]})\). 
By Lemma~\ref{lem:discounted-single-policy-transfer}, for every \(n\),
\[
\begin{aligned}
    J_{i,\bar H}^\gamma(\bar{\bm{\pi}}_n)
    -
    J_{i,\bar H}^\gamma(\mu^i\odot\bar{\bm{\pi}}_n^{-i})
    \le\;&
    J_{i,L}^\gamma(\bm{\pi}_n^{[L]})
    -
    J_{i,L}^\gamma(\mu^{i,[L]}\odot\bm{\pi}_n^{[L],-i}) +
    \tau_{L,\bar H}(\gamma).
\end{aligned}
\]
Averaging over \(n=1,\ldots,N\) gives

\begin{align}\label{eq:quaso8}
    \frac1N\sum_{n=1}^N
    \left[
        J_{i,\bar H}^\gamma(\bar{\bm{\pi}}_n)
        -
        J_{i,\bar H}^\gamma(\mu^i\odot\bar{\bm{\pi}}_n^{-i})
    \right]
    \le\;& \underbrace{
    \frac1N\sum_{n=1}^N
    \left[
        J_{i,L}^\gamma(\bm{\pi}_n^{[L]})
        -
        J_{i,L}^\gamma(\mu^{i,[L]}\odot\bm{\pi}_n^{[L],-i})
    \right]}_* +
    \tau_{L,\bar H}(\gamma).
\end{align}
Since \(\mu^{i,[L]}\in\Pi_i^{\mathrm{gen},L}\),
the assumed finite-horizon regret guarantee implies that the term \((*)\) in \eqref{eq:quaso8} is at most \(\varepsilon_{\mathrm{alg}}\),
simultaneously for all players \(i\in[m]\).  
The resulting bound is uniform over
\(\mu^i\), and thus taking the supremum over \(\Pi_i^{\mathrm{gen},\bar H}\) proves
the statement.
\end{proof}

\subsection{Proofs for the full-feedback case}\label{subsec:corollaryproof-1}

We first note that Algorithm~1 can be applied to game \(G_{\gamma,\bar H}^{[L_\varepsilon]}\), as \(G_{\gamma,\bar H}^{[L_\varepsilon]}\) is simply an undiscounted finite-horizon Markov game with costs \(c_h^i(s,a)=\gamma^{h-1}\ell_h^i(s,a)\), and these costs remain in \([0,1]\). Hence Theorem~\ref{thm:main} applies directly with the horizon parameter \(H\) replaced by \(L_\varepsilon\).

\begin{proof}[Proof of Corollary~\ref{cor:disc-full}]
We apply Theorem~\ref{thm:discounted-transfer} when
Algorithm~\ref{alg:layered-omd} runs on the truncated discounted game
\(G_{\gamma,\bar H}^{[L_\varepsilon]}\).  By Theorem~\ref{thm:main}, applied
with horizon \(L_\varepsilon\), the finite-horizon average regret on the truncated
game \(G_{\gamma,\bar H}^{[L_\varepsilon]}\) is bounded by
\( 
    \varepsilon_{\mathrm{alg}}
    =
    C_*^{\mathrm{SE}}(L_\varepsilon)
    T_\varepsilon^{-3/(3L_\varepsilon+1)}.
\) 
The definition of \(T_\varepsilon\) gives
\[
    C_*^{\mathrm{SE}}(L_\varepsilon)
    T_\varepsilon^{-3/(3L_\varepsilon+1)}
    \le
    \frac{\varepsilon}{2}.
\]
It remains to bound the truncation error.  If
\(L_\varepsilon=\bar H<\infty\), then the truncated game \(G_{\gamma,\bar H}^{[L_\varepsilon]}\), is the same as the original game \(G_{\gamma,\bar H}\) and
\( 
    \tau_{L_\varepsilon,\bar H}(\gamma)=0.
\) 
Otherwise, by the definition of
\( 
    L_\varepsilon=\min\{\bar H,H_\varepsilon^\gamma\},
\) 
we have \(L_\varepsilon=H_\varepsilon^\gamma\).  In that case, by Lemma \ref{lem:discounted-tail-unified}
\[
    \tau_{L_\varepsilon,\bar H}(\gamma)
    \le
    \frac{\gamma^{L_\varepsilon}}{1-\gamma}
    \le
    \frac{\varepsilon}{2},
\]
where the last inequality follows from the choice
\( 
    H_\varepsilon^\gamma
    =
    \left\lceil
        \frac{\log\!\left(\frac{2}{(1-\gamma)\varepsilon}\right)}
             {\log(1/\gamma)}
    \right\rceil .
\) 
Therefore, for every player
\(i\in[m]\), Theorem~\ref{thm:discounted-transfer} gives
\[
    \sup_{\mu^i\in\Pi_i^{\mathrm{gen},\bar H}}
    \sum_{t=1}^{T_\varepsilon}
    \left[
        J_{i,\bar H}^\gamma(\bar{\bm\pi}_t)
        -
        J_{i,\bar H}^\gamma(\mu^i\odot\bar{\bm\pi}_t^{-i})
    \right]
    \le
    \varepsilon T_\varepsilon .
\]
The discounted CCE conclusion in Corollary \ref{cor:disc-full} follows from the standard no-regret-to-CCE
equivalence \citep{CesaBianchiLugosi2006} applied to the empirical distribution of joint policies over
\(\bar{\bm\pi}_1,\ldots,\bar{\bm\pi}_{T_\varepsilon}\).

\vspace{6pt}
\noindent 
We now justify the upper bound for the total number of episodes. Since
\( 
    \log(1/\gamma)\ge 1-\gamma,
\) 
we have
\[
    L_\varepsilon
    \le
    L_{\gamma,\varepsilon}^{\bar H}
    :=
    \min\left\{
        \bar H,\,
        1+
        \frac{1}{1-\gamma}
        \log\!\left(\frac{2}{(1-\gamma)\varepsilon}\right)
    \right\},
\]
with the convention that when \(\bar H=\infty\),
\( 
    L_{\gamma,\varepsilon}^{\bar H}
    =
    1+
    \frac{1}{1-\gamma}
    \log\!\left(\frac{2}{(1-\gamma)\varepsilon}\right).
\) 
Under the choice \(\lambda_i=1/|A_i|\), from Theorem \ref{thm:main} we have,
\[
    C_*^{\mathrm{SE}}(L)
    \le
    \left(
        \frac{2A_{\log}}{\eta_0}
        +
        12mA_{\log}
        +
        \eta_0
        +
        144\eta^3_0m^2
    \right)L^7,
    \qquad
    A_{\log}:=\max_{i\in[m]}\log(|A_i|+1).
\]
Substituting this estimate into the definition of \(T_\varepsilon\), using
\(\lceil x\rceil\le 1+x\), gives, for every fixed \(\gamma<1\),
\[
    T_\varepsilon
    \le
    \left[
        \frac{
            \left(
                \frac{A_{\log}}{\eta_0}
                +
                mA_{\log}
                +
                m^2
                +
                1
            \right)
            \min\{\bar H,\log(1/\varepsilon)\}^7
        }{
            \varepsilon
        }
    \right]^{
        O(\min\{\bar H,\log(1/\varepsilon)\})
    }.
\]
This concludes the proof.
\end{proof}

\subsection{Proof of the partial-feedback case}\label{subsec:corollaryproof-2}

Now, similar to Subsection \ref{subsec:corollaryproof-1}, Theorem~\ref{thm:episodic-partial-feedback} can be applied to the game \(G_{\gamma,\bar H}^{[L_\varepsilon]}\), as its scaled costs \(c_h^i(s,a)=\gamma^{h-1}\ell_h^i(s,a)\) still lie in \([0,1]\). 
Hence  Theorem~\ref{thm:episodic-partial-feedback} applies directly with the horizon parameter \(H\) replaced by \(L_\varepsilon\) and the \((\kappa,\zeta_{\bar\varepsilon,L})\)-reachability condition for \(G_{\gamma,\bar H}^{[L_\varepsilon]}\).

\begin{proof}[Proof of Corollary~\ref{cor:disc-bandit}]
Set
\( 
    L=L_\varepsilon,
    \;
    \bar\varepsilon=\frac{\varepsilon}{2}.
\) 
By assumption, the truncated discounted game
\( 
    G_{\gamma,\bar H}^{[L]}
\) 
is \((\kappa,\zeta_{\bar\varepsilon,L})\)-reachable.
Therefore, applying Theorem~\ref{thm:episodic-partial-feedback}
to \(G_{\gamma,\bar H}^{[L]}\) with stated parameters,
we obtain that, with probability at least \(1-\delta\), for every player
\(i \in [m]\),
\[
    \sup_{\nu^i\in\Pi_i^{\mathrm{gen},L}}
    \frac1{N_{\varepsilon,\bar H,\delta}}
    \sum_{k=1}^{K_{\bar\varepsilon,L_\varepsilon}}
    \sum_{t=1}^{B_{\varepsilon,L_\varepsilon,\delta}}
    \left[
        J_{i,L}^{\gamma}({\bm\pi}_k^{[L]})
        -
        J_{i,L}^{\gamma}(\nu^i\odot{\bm\pi}_k^{[L],-i})
    \right]
    \le
    \bar\varepsilon
    =
    \frac{\varepsilon}{2}.
\]
By Theorem~\ref{thm:discounted-transfer}, the average regret of the
extended block policies \( 
    \bar{\bm\pi}_k
    :=
    \mathrm{Ext}_L(\bm\pi_k^{[L]})
\) 
is at most 
\( 
    \frac{\varepsilon}{2}
    +
    \tau_{L,\bar H}(\gamma).
\) 
It remains to bound the truncation error.  If
\(L_\varepsilon=\bar H<\infty\), then the truncation is the original game and
\( 
    \tau_{L_\varepsilon,\bar H}(\gamma)=0.
\)  Hence
\[
    \tau_{L,\bar H}(\gamma)
    \le
    \frac{\gamma^L}{1-\gamma}
    \le
    \frac{\varepsilon}{2}
\]
by the Lemma \ref{lem:discounted-tail-unified}.  Consequently, with probability at
least \(1-\delta\), for every player \(i\in[m]\), Theorem~\ref{thm:discounted-transfer} gives, 
\[
    \sup_{\mu^i\in\Pi_i^{\mathrm{gen},\bar H}}
    \frac1{N_{\varepsilon,\bar H,\delta}}
    \sum_{k=1}^{K_{\bar\varepsilon,L_\varepsilon}}
    \sum_{t=1}^{B_{\varepsilon,L_\varepsilon,\delta}}
    \left[
        J_{i,\bar H}^{\gamma}(\bar\pi_k)
        -
        J_{i,\bar H}^{\gamma}(\mu^i\odot\bar\pi_k^{-i})
    \right]
    \le
    \varepsilon .
\]
The discounted CCE statement follows from the standard no-regret-to-CCE
equivalence \citep{CesaBianchiLugosi2006} applied to the empirical distribution over the extended block
policies.

\vspace{6pt}
\noindent 
It remains to justify the upper bound on
\(N_{\varepsilon,\bar H,\delta}^{\mathrm{band}}\).  
We have,
\( 
    L_\varepsilon
    =
    O(\mathcal L_{\bar H,\varepsilon}),
    \;
    \mathcal L_{\bar H,\varepsilon}
    :=
    \min\{\bar H,\log(1/\varepsilon)\},
\) 
with the convention
\(\mathcal L_{\infty,\varepsilon}=\log(1/\varepsilon)\).  Then,
\begin{align}\label{eq:quaso10}
    K_{\bar\varepsilon,L_\varepsilon}
    =
    \left\lceil
        \left(
            \frac{4\widetilde C_*^{\mathrm{SE}}(L_\varepsilon)}
                 {\bar\varepsilon}
        \right)^{(3L_\varepsilon+1)/3}
    \right\rceil
    \le
    \left[
        \frac{
            \mathcal C_{\mathrm{band}}
            \mathcal L_{\bar H,\varepsilon}^{7}
        }{
            \varepsilon
        }
    \right]^{
        O(\mathcal L_{\bar H,\varepsilon})
    },
\end{align}
where, \(\mathcal C_{\mathrm{band}}
    :=
    \frac{2A_{\log}}{\eta_0}
    +12mA_{\log}
    +\eta_0
    +288m^2 \eta^3_0\) . 
Next, by the definitions
\( 
    \zeta_{\bar\varepsilon,L}
    =
    \frac{\bar\varepsilon}{8L^2},
    \;
    \xi_{\bar\varepsilon,L}
    =
    \min\left\{
        \frac{\bar\varepsilon}{8L},
        \sqrt{\frac{\bar\varepsilon}{32\eta_0 L}}
    \right\},
\) 
we have, up to universal constants,
\begin{align}\label{eq:quaso11}
    \frac{1}{\zeta_{\bar\varepsilon,L_\varepsilon}}
    \lesssim
    \frac{\mathcal L_{\bar H,\varepsilon}^{2}}{\varepsilon},
    \qquad
    \frac{L_\varepsilon^2}{\xi_{\bar\varepsilon,L_\varepsilon}^{2}}
    \lesssim
    \frac{\mathcal L_{\bar H,\varepsilon}^{4}}{\varepsilon^2}
    +
    \frac{\eta_0\mathcal L_{\bar H,\varepsilon}^{3}}{\varepsilon}.
\end{align}
Using the definitions
\( 
    M_{\bar\varepsilon,L}
    =
    K_{\bar\varepsilon,L}
    m A_{\max}S,
    \;
    u_{\bar\varepsilon,L,\delta}
    =
    \log\frac{4M_{\bar\varepsilon,L}}{\delta},
\) 
we obtain
\begin{align}\label{eq:quaso12}
    u_{\bar\varepsilon,L_\varepsilon,\delta}
    \le
    O\left(
        \mathcal L_{\bar H,\varepsilon}
        \log
        \frac{
            \mathcal C_{\mathrm{band}}
            \mathcal L_{\bar H,\varepsilon}^{7}
        }{
            \varepsilon
        }
        +
        \log
        \frac{
            mA_{\max}S
        }{
            \delta
        }
    \right).
\end{align}
Finally, substituting \eqref{eq:quaso11},\eqref{eq:quaso12} into
\( 
    B_{\bar\varepsilon,L_\varepsilon,\delta}
    =
    \left\lceil
        \frac{8A_{\max}}
             {\kappa\zeta_{\bar\varepsilon,L_\varepsilon}}
        \left(
            n_{\bar\varepsilon,L_\varepsilon,\delta}
            +
            u_{\bar\varepsilon,L_\varepsilon,\delta}
        \right)
    \right\rceil,
    \;
    n_{\bar\varepsilon,L_\varepsilon,\delta}
    =
    \left\lceil
        \frac{L_\varepsilon^2}{2\xi_{\bar\varepsilon,L_\varepsilon}^2}
        u_{\bar\varepsilon,L_\varepsilon,\delta}
    \right\rceil,
\) up to universal constants we get; 
\begin{align} \label{eq:quaso9}
B_{\bar\varepsilon,L_\varepsilon,\delta} \lesssim \left\lceil
        \frac{A_{\max}}
             {\kappa}\frac{\mathcal L_{\bar H,\varepsilon}^{2}}{\varepsilon} \left(\frac{\mathcal L_{\bar H,\varepsilon}^{4}}{\varepsilon^2}
    +
    \frac{\eta_0\mathcal L_{\bar H,\varepsilon}^{3}}{\varepsilon}+1\right)
            u_{\bar\varepsilon,L_\varepsilon,\delta}
    \right\rceil,
\end{align}
and multiplying \eqref{eq:quaso9} by \eqref{eq:quaso10}, yields 
\[
\begin{aligned}
    N_{\varepsilon,\bar H,\delta}
    \le
    \left[
        \frac{
            A_{\max}\mathcal L_{\bar H,\varepsilon}^{2}
        }{
            \kappa\varepsilon
        }\!
        \left(
            1
            +
            \frac{\mathcal L_{\bar H,\varepsilon}^{4}}{\varepsilon^2}
            +
            \frac{\eta_0\mathcal L_{\bar H,\varepsilon}^{3}}{\varepsilon}\!
        \right)\!\!
        \times\!\!
        \left(\!
            \mathcal L_{\bar H,\varepsilon}\!
            \log
            \frac{
                \mathcal C_{\mathrm{band}}
                \mathcal L_{\bar H,\varepsilon}^{7}
            }{
                \varepsilon
            }
            \!+\!
            \log
            \frac{\!
                mA_{\max}S\!
            }{
                \delta
            }
        \right)\!
    \right]\!\!
    \left[
        \frac{\!
            \mathcal C_{\mathrm{band}}
            \mathcal L_{\bar H,\varepsilon}^{7}\!
        }{
            \varepsilon
        }
    \right]^{
        O(\mathcal L_{\bar H,\varepsilon})
    }
\end{aligned}
\]
This concludes the proof.  
\end{proof}

\section{Proofs for the Computational Lower Bounds (Section \ref{sec:computational-lower-bound})}
\label{app:discounted-lower-bound}

\noindent
For an encoded object \(z\), we write \(\bits(z)\) for the length of its binary encoding.
Rational numbers are encoded in reduced form: a rational number \(q\in\mathbb{Q}\) is written
as \(q=p/r\), where \(p\in\mathbb{Z}\), \(r\in\mathbb{N}\), and \(\gcd(|p|,r)=1\). We define
\(
\bits(q)
:=
1+\left\lceil \log_2(|p|+1)\right\rceil
+\left\lceil \log_2(r+1)\right\rceil .
\) 

\subsection{Proof of Algorithms~\ref{alg:layered-omd} and
\ref{alg:blocked-bandit-oomd} solving the sparse discounted-CCE
problem}
\label{subsec:proof-search-problem-output}

\begin{proof}[Proof of
Proposition~\ref{prop:algorithms-solve-discounted-sparse-cce}]
We verify separately the policy-class requirement and the discounted
CCE requirement appearing in
Definition~\ref{def:discounted-sparse-markov-cce}.

\vspace{6pt}
\noindent 
First consider Algorithm~\ref{alg:layered-omd}. At every episode
\(t\), player \(i\), layer \(h\), and state \(s\), the algorithm sets
\( 
    \pi_{t,h}^{i,[L_\varepsilon]}(\cdot\mid s)
    =
    x_{t,h}^{i,s}
    \in \Delta(\mathcal A_i)
\) independently for each instance.
Hence
\( 
    \pi_t^{i,[L_\varepsilon]}
    =
    \bigl\{
        \pi_{t,h}^{i,[L_\varepsilon]}
    \bigr\}_{h=1}^{L_\varepsilon}
    \in
    \Pi_i^{\mathrm{markov},L_\varepsilon},
\) 
and therefore
\(       \bm\pi_t^{[L_\varepsilon]}
    =
    \bigl(
        \pi_t^{1,[L_\varepsilon]},\ldots,
        \pi_t^{m,[L_\varepsilon]}
    \bigr)
    \in
    \Pi^{\mathrm{markov},L_\varepsilon}.
\) 
The continuation used by
\(\operatorname{Ext}_{L_\varepsilon}\) is itself a fixed product Markov
policy. Consequently,
\( 
    \bar{\bm\pi}_t
    =
    \operatorname{Ext}_{L_\varepsilon}
    \bigl(\bm\pi_t^{[L_\varepsilon]}\bigr)
    \in
    \Pi^{\mathrm{markov},\bar H}
\) 
for every \(t\in[T_\varepsilon]\). Corollary~\ref{cor:disc-full} gives, for every player \(i\in[m]\),
\( 
    \sup_{\mu^i\in\Pi_i^{\mathrm{gen},\bar H}}
    \sum_{t=1}^{T_\varepsilon}
    \left[
        J_{i,\bar H}^{\gamma}
        (\bar{\bm\pi}_t)
        -
        J_{i,\bar H}^{\gamma}
        \bigl(
            \mu^i\odot
            \bar{\bm\pi}_t^{-i}
        \bigr)
    \right]
    \leq
    \varepsilon T_\varepsilon.
\) 
Thus the sequence
\( 
    \bigl(
        \bar{\bm\pi}_1,\ldots,
        \bar{\bm\pi}_{T_\varepsilon}
    \bigr)
\) 
satisfies both requirements of
Definition~\ref{def:discounted-sparse-markov-cce}, and therefore solves
\( 
    (T_\varepsilon,\varepsilon)\text{-}
    \mathsf{DiscSparseMarkovCCE}_{\gamma}^{\mathrm{gen}}.
\) 

\vspace{6pt}
\noindent 
We next consider
Algorithm~\ref{alg:blocked-bandit-oomd}. At every block \(k\), player
\(i\), layer \(h\), and state \(s\), the algorithm sets
\( 
    \pi_{k,h}^{i,[L_\varepsilon]}(\cdot\mid s)
    =
    x_{k,h}^{i,s}
    \in
    \Delta_i^\zeta
    \subseteq
    \Delta(\mathcal A_i)
\) independently for each instance.
Therefore,
\( 
    \bm\pi_k^{[L_\varepsilon]}
    \in
    \Pi^{\mathrm{markov},L_\varepsilon},
    \;
    \bar{\bm\pi}_k
    =
    \operatorname{Ext}_{L_\varepsilon}
    \bigl(\bm\pi_k^{[L_\varepsilon]}\bigr)
    \in
    \Pi^{\mathrm{markov},\bar H}.
\) 
Since the policy \(\bar{\bm\pi}_k\) is played throughout
block \(k\), every entry of the sequence
\( 
    \bm\sigma^{(k-1)B_\varepsilon+r}
    :=
    \bar{\bm\pi}_k
\) 
belongs to \(\Pi^{\mathrm{markov},\bar H}\). By Corollary~\ref{cor:disc-bandit}, with probability at least
\(1-\delta\), for any player \(i\in[m]\),
\( 
    \sup_{\mu^i\in\Pi_i^{\mathrm{gen},\bar H}}
    \sum_{n=1}^{N_\varepsilon}
    \left[
        J_{i,\bar H}^{\gamma}
        (\bm\sigma^{n})
        -
        J_{i,\bar H}^{\gamma}
        \bigl(
            \mu^i\odot
            \bm\sigma^{n,-i}
        \bigr)
    \right]
    \leq
    \varepsilon N_\varepsilon.
\) 
Hence the episode-level sequence solves
\( 
    (N_\varepsilon,\varepsilon)\text{-}
    \mathsf{DiscSparseMarkovCCE}_{\gamma}^{\mathrm{gen}}
\) 
on this event.
\end{proof}

\subsection{Game size and encoding conventions}
\label{app:lower-bound-search-problems}

For a finite-horizon Markov game
\( 
G_H
=
\Bigl(
H,\{S_h\}_{h=1}^{H+1},
\{A_i\}_{i=1}^m,
\{P_h\}_{h=1}^{H},
\{\ell_h^i\}_{i\in[m],h\in[H]},
s_1
\Bigr),
\) 
we define the term $\beta(G_H)$ as 
\[
\begin{aligned}
\beta(G_H)
:=
\max\Biggl\{
\max_{\substack{
h\in[H], s\in S_h, \bm a\in A,\
s'\in S_{h+1}
}}
\bits\left(P_h(s'\mid s,\bm a)\right),\max_{\substack{
i\in[m], h\in[H], s\in S_h, \bm a\in A
}}
\bits\left(\ell_h^i(s,\bm a)\right)
\Biggr\}.
\end{aligned}
\]
Thus, \(\beta(G_H)\) is the largest bit length of one transition-probability entry or one loss entry.
We define
\( 
|G_H|
:=
\max\left\{
H,
S,
A_{\max},
\beta(G_H)
\right\}.
\) 
Consequently, \(G_H\) can be described using \(|G_H|^{O(1)}\) bits for fixed \(m\). For the special case of $H=1$, we simply denote the size of a normal-form game by $|G|$ under this convention.

\vspace{6pt}
\noindent
Similarly, for a discounted infinite-horizon Markov game
\( 
G_\gamma
=
\Bigl(
S,\{A_i\}_{i=1}^m,
P,\{\ell^i\}_{i=1}^m,
\gamma,s_1
\Bigr),
\) 
we define the term $\beta(G_\gamma)$ as
\[
\begin{aligned}
\beta(G_\gamma)
:=
\max\Biggl\{\bits(\gamma),
\max_{\substack{s\in S, \bm a\in A, s'\in S}}
\bits\left(P(s'\mid s,\bm a)\right),
\max_{\substack{i\in[m], s\in S, \bm a\in A}}
\bits\left(\ell^i(s,\bm a)\right)
\Biggr\}.
\end{aligned}
\]
We then define
\( 
|G_\gamma|
:=
\max\left\{
S,
A_{\max},
\beta(G_\gamma)
\right\}.
\) 
If \(\gamma\) is fixed before the input is given, then \(\bits(\gamma)=O(1)\), and including
\(\bits(\gamma)\) in \(\beta(G_\gamma)\) changes \(|G_\gamma|\) only by a fixed constant factor
inside the maximum.

\vspace{6pt}
\noindent
For a finite-horizon discounted Markov game \(G_{\gamma,\bar H}\), we use the same convention as
above with the horizon parameter included:
\( 
|G_{\gamma,\bar H}|
:=
\max\left\{
\bar H,
S,
A_{\max},
\beta(G_{\gamma,\bar H})
\right\},
\) 
and
\[
\begin{aligned}
\beta(G_{\gamma,\bar H})
:=
\max\Biggl\{\bits(\gamma),
\max_{\substack{
h\in[\bar H], s\in S_h, \bm a\in A,\
s'\in S_{h+1}
}}\!\!\!\!\!\!
\bits\left(P_h(s'\mid s,\bm a)\right),
\max_{\substack{
i\in[m], h\in[\bar H], s\in S_h, \bm a\in A
}}\!\!\!
\bits\left(\ell_h^i(s,\bm a)\right)
\Biggr\}.
\end{aligned}
\]

\noindent
Throughout Section~\ref{sec:computational-lower-bound}, by \(N\), we denote an upper bound on the game-size parameters. Specifically,
\( 
    |G_H| \le N,\;
    |G_{\gamma,\bar H}| \le N,\;
    |G_\gamma| \le N,
\) 
for finite-horizon, finite-horizon discounted, and infinite-horizon discounted Markov games, respectively. The letter \(n\) is reserved
for the number of actions per player in the corresponding games. 
 We particularly name the \(\widehat M\) from Lemma \ref{lemma:quaso} as the \emph{source} game after the bimatrix game described in Lemma 8.3 of \citet{Rubinstein2016}, which is used in the reduction underlying Theorem~\ref{thm:rubinstein-qpoly-hardness}. Furthermore, we emphasize that, to remain consistent with the previous sections, we will use these games under the \emph{loss} convention instead of the \emph{utility-payoff} convention, since this conversion neither changes the equilibrium points nor the size of the games in view of Lemma~\ref{lem:clock-size-app} and the subsequent results. For brevity, for further details regarding the exact construction of the game, we refer to \citet{Rubinstein2016}. 

\subsection{The logarithmic-horizon undiscounted Markov game}\label{subsect:quaso0}

We use a finite-horizon Markov game construction from \citet{noah_hardness}. Let \(M=(L^1,L^2)\) be a bimatrix \(n\times n\) game. For \(M\), we identify a corresponding finite-horizon Markov game
\(F_H(M)\) as follows, where \(H\ge 2\) is a fixed even (numbered) horizon length.

\vspace{6pt}
\noindent 
In \(F_H(M)\), both players share the same state and action sets. The admissible action set is \([n]\) at every non-terminal layer.  For odd (numbered) layers \(h\), the state set is a singleton \(S_h=\{s_h^\star\}\), and for even (numbered)
layers \(h\), the state set is \(S_h=[n]\times[n]\).

\vspace{6pt}
\noindent 
For each layer \(h\), the state transition probability kernels  \(P_h\times [n]\times[n]\to \Delta(S_{h+1}) \) are ``deterministic'' and are defined as follows. For an odd layer  \(h \), if the current state is  \(s_h^\star \) and the players choose the joint action  \((a,b)\in[n]\times[n] \), then the next state records this joint action:
 \(
P_h\bigl((a,b)\mid s_h^\star,(a,b)\bigr)=1 .
 \) For an even layer \(h<H\), the transition probability is independent of both the current state and the joint action: for every  \((x,y)\in[n]\times[n] \) and every  \((a,b)\in[n]\times[n] \),
\( 
P_h\bigl(s_{h+1}^\star\mid (x,y),(a,b)\bigr)=1 .
\) 
All other transition probabilities are zero.

\vspace{6pt}
\noindent 
Furthermore, at an odd layer \(h\), if the players choose the action profile \((a,b)\in[n]\times[n]\), then the player \(i\) incurs cost
\( 
    \ell_h^i(s_h^\star,(a,b))
    :=
    \frac{1}{H}L^i(a,b),
\) 
At an even layer \(h<H\), all costs are zero and the next state is the next
odd-layer singleton \(s_{h+1}^\star\).

\vspace{6pt}
\noindent 
Let
\( 
    \mathcal O_H:=\{1,3,5,\ldots,H-1\}
\) 
be the set of odd layers.  For a given product Markov profile \(\bm\pi_t\) and
odd layer \(h\in\mathcal O_H\), define the corresponding local mixed strategies of players as
\( 
    x_{t,h}:=\pi^1_{t}(\cdot\mid s_h^\star),
\)
and
\(
    y_{t,h}:=\pi^2_{t}(\cdot\mid s_h^\star).
\) 
Finally, it can be clearly seen that the described construction takes polynomial time.

\begin{lemma}[Lemma D.4 \citet{noah_hardness}]
\label{lem:fgk-extraction-app}
Let \(M\) be a bimatrix game. There exists an absolute constant \(c_F>0\), independent of any problem parameter, such that the following holds.  Let
\(H\ge 2\) be even, \(T \in \mathbb N_+\), and let \(F_H(M)\) be the finite-horizon Markov game constructed as described in Appendix \ref{subsect:quaso0}.
Suppose,
\( 
    L_H=(\bm\pi_1,\ldots,\bm\pi_T)
\) 
is an \(\varepsilon_\star/4\)-approximate \(T\)-sparse CCE
of \(F_H(M)\), against non-Markov deviations.  If
\( 
    T<\exp(c_F\varepsilon_\star^2H),
\) 
then there exist \(t\in[T]\) and \(h\in\mathcal O_H\) such that
\( 
    (x_{t,h},y_{t,h})
\) 
is an \(\varepsilon_\star\)-approximate Nash equilibrium of \(M\), where \( 
    \mathcal O_H
\) 
is the set of odd numbered layers. 
\end{lemma}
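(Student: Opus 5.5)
The statement to prove is Lemma~\ref{lem:fgk-extraction-app}, attributed to \citet{noah_hardness}. I would re-derive it from the structure of $F_H(M)$ by contradiction. Suppose that for every $t\in[T]$ and every odd layer $h\in\mathcal O_H$ the pair $(x_{t,h},y_{t,h})$ is not an $\varepsilon_\star$-Nash equilibrium of $M$. The goal is to build a history-dependent deviation for one player that gains more than $\varepsilon_\star/4$ against the empirical distribution $\frac1T\sum_t\delta_{\bm\pi_t}$.

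The structural fact I would use is that even layers record the realized joint action $(a,b)$ of the preceding odd layer. A non-Markov deviator therefore sees the opponent's past sampled actions. Conditional on the sampled index $t$, the opponent at odd layer $h$ plays $y_{t,h}$ (or $x_{t,h}$) independently of the history, because the policy is Markov and the odd states are singletons. Since the deviator does not know $t$, the plan is a multiplicative-weights or posterior-tracking deviation over the $T$ candidate profiles. At each odd layer, the deviator maintains weights $w_h(t)\propto\prod_{h'<h,\,h'\in\mathcal O_H}y_{t,h'}(b_{h'})$, which form the Bayesian posterior over $t$ given observed opponent actions. It then best-responds to the posterior-mixed opponent strategy.

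The accounting splits into two parts.
\begin{itemize}
\item \textbf{Exploitation gain.} Each odd layer contributes cost $\frac1H L^i$. A per-layer argument shows that whenever the true $t$'s pair is not $\varepsilon_\star$-Nash and the posterior is concentrated on strategies close to $y_{t,h}$, some player gains $\Omega(\varepsilon_\star)$ per layer. Here I would pick the player violating the Nash condition and handle both players by symmetry, averaging over $i\in\{1,2\}$.
\item \textbf{Learning cost.} This is where a standard log-loss or KL chain-rule bound enters. The sum over layers of $\mathbb E\,\mathrm{TV}^2$ between the posterior-predictive and the true $y_{t,h}$ is at most $\log T$. By Pinsker and Cauchy--Schwarz, the number of layers at which the predictive is more than $\varepsilon_\star/8$-far from the truth is $O(\log T/\varepsilon_\star^2)$ in expectation.
\end{itemize}

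With $H/2$ odd layers, the hypothesis $T<\exp(c_F\varepsilon_\star^2H)$ makes the misprediction layers a small constant fraction of $H$. The total gain, normalized by $1/H$, then exceeds $\varepsilon_\star/4$ after choosing $c_F$ small. This contradicts the $\varepsilon_\star/4$-CCE property, so some $(t,h)$ pair is an $\varepsilon_\star$-Nash equilibrium.

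The main obstacle is the exploitation step. Failing to be $\varepsilon_\star$-Nash means \emph{some} player can gain $\varepsilon_\star$, but which player it is may vary with $(t,h)$, whereas a CCE deviation is chosen for one fixed player. I would handle this by letting the deviating player be $i$ and using the sum of both players' gaps: non-Nash means gap$_1+$gap$_2\ge\varepsilon_\star$, so one of the two players collects at least half of the total deviation gain over $(t,h)$. That player's deviation alone yields average gain at least $\varepsilon_\star/2$ minus the learning cost.

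The best response to a posterior mixture must also dominate in expectation over the true $t$. This follows because the expected cost of the deviation equals the expected cost against the predictive distribution at each layer, by the tower property.
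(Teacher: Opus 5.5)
\paragraph{Verdict.} The paper gives no proof of this lemma; it imports Lemma D.4 of \citet{noah_hardness}. Your route (Bayes mixture over the $T$ profiles, $\sum_h \E_t\,\mathrm{KL}\le \log T$, Pinsker, best response, pigeonhole over the two players) is the aggregation idea that reference is built on, so the architecture is sound. The gap is in the decisive inequality: it is off by a factor of two, and once that is corrected the contradiction disappears.

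\paragraph{Where the accounting fails.} In $F_H(M)$ the stage game is played only at the $H/2$ odd layers, each with cost $\frac1H L^i$. Suppose $\mathrm{gap}_1(t,h)+\mathrm{gap}_2(t,h)>\varepsilon_\star$ for every $t\in[T]$ and $h\in\mathcal O_H$. Then the player with the larger total satisfies $\frac{1}{TH}\sum_{t,h}\mathrm{gap}_i(t,h)>\frac{1}{TH}\cdot\frac12\cdot\varepsilon_\star\cdot T\cdot\frac H2=\frac{\varepsilon_\star}{4}$, not $\frac{\varepsilon_\star}{2}$. Subtracting any positive learning cost leaves you at or below the CCE tolerance $\varepsilon_\star/4$, so there is no contradiction. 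Shrinking $c_F$ cannot fix this: the excess of $\max_i \mathrm{gap}_i$ over $\varepsilon_\star$ can be arbitrarily small, while the cost of not knowing $t$ is strictly positive. Your thresholding step makes things worse. Requiring only $\mathrm{TV}\le\varepsilon_\star/8$ forfeits up to $\varepsilon_\star/4$ per well-predicted layer. The right bookkeeping is $\sum_{h\in\mathcal O_H}\E_t\,\mathrm{TV}_h\le\sqrt{H\log T}/2$ by Cauchy--Schwarz, which gives a total cost of at most $\sqrt{\log T/H}<\sqrt{c_F}\,\varepsilon_\star$.

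\paragraph{The printed constants have no slack.} Here is a counterexample, valid for any $\varepsilon_\star<1/2$.
\begin{itemize}
\item Game and parameters: let $L^1(a,b)=L^2(a,b)=\mathbf{1}\{a\neq b\}$ on $[2]\times[2]$, $T=2$, $\varepsilon'=\varepsilon_\star+\eta$, and $K=H/2$.
\item Layer $1$: $\bm\pi_1$ plays $((1-\varepsilon',\varepsilon'),\delta_1)$ and $\bm\pi_2$ plays $((\varepsilon',1-\varepsilon'),\delta_2)$, where $\delta_b$ is the point mass on action $b$. Both pairs have gaps $(\varepsilon',0)$.
\item Remaining $K-1$ odd layers: both profiles play the same pair, alternating $((1-\varepsilon',\varepsilon'),\delta_1)$, with gaps $(\varepsilon',0)$, and $(\delta_1,(1-\varepsilon',\varepsilon'))$, with gaps $(0,\varepsilon')$.
\item Even layers: both players play uniformly.
\end{itemize}
No pair is an $\varepsilon_\star$-Nash equilibrium. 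A deviator for either player has no information on $t$ at layer $1$ and pays $1/2$ there instead of $\varepsilon'$. Afterwards the profiles do not depend on $t$, so learning $t$ is useless. Its gain is therefore $\frac1H[(1+n_i)\varepsilon'-\frac12]$ with $n_i\le\lceil (K-1)/2\rceil$. This is at most $\varepsilon_\star/4$ whenever $2\varepsilon_\star+(K+2)\eta\le 1$. Since $T=2<\exp(c_F\varepsilon_\star^2H)$ for large $H$, the lemma as printed fails for every $c_F>0$.

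\paragraph{What your argument does prove.} With the corrected bookkeeping, for each player $\frac{1}{TH}\sum_{t,h}\mathrm{gap}_i\le \varepsilon_{\mathrm{CCE}}+\sqrt{\log T/H}$. Hence some $(t,h)$ has $\max_i\mathrm{gap}_i\le 4\varepsilon_{\mathrm{CCE}}+4\sqrt{\log T/H}$. So an $\varepsilon_\star/8$-approximate CCE with $c_F\le 1/64$ yields an $\varepsilon_\star$-Nash equilibrium. Alternatively, keep $\varepsilon_\star/4$ and scale the odd-layer costs by $2/H$. Either version is all the downstream hardness theorem needs, after replacing $\varepsilon_\star/4$ by $\varepsilon_\star/8$ in its hypothesis.

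\paragraph{A minor point.} If $w_h$ uses only odd-layer opponent actions, it is not the true posterior given $\tau_h$, so your tower-property identity is not exact. You do not need it: the per-$t$ comparison $(L^i y_{t,h})_{a_h}\le \min_a (L^i y_{t,h})_a+2\,\mathrm{TV}(\hat y_h,y_{t,h})$ suffices.
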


\subsection{Embedding into a fixed-discount game}\label{subsect:quaso1}

We extend the construction of \(F_H(M)\) to the infinite-horizon setting. This enables us to establish a connection between the notion of CCE in discounted infinite-horizon Markov games and the notion of Nash equilibrium in the associated normal-form game \(M\).

\vspace{6pt}
\noindent 
For a finite-horizon Markov game \( 
G_H
=
\Bigl(
H,\{S_h\}_{h=1}^{H+1},
\{A_i\}_{i=1}^m,
\{P_h\}_{h=1}^{H},
\{\ell_h^i\}_{i\in[m],h\in[H]},
s_1
\Bigr),
\)  we define a corresponding infinite-horizon discounted Markov game
\( 
    \widetilde G_{H,\gamma}
    =
    (\widetilde S,\{A_i\}_{i=1}^m,
      \widetilde P,\{\widetilde\ell^i\}_{i=1}^m,
      \gamma,s_1)
\), where \(\gamma\in(0,1) \cap \mathbb Q\) as follows.

\vspace{6pt}
\noindent
The state space is
\( 
    \widetilde S
    :=
    \{(h,s):h\in[H],\ s\in S_h\}\cup\{\bot\},
\) 
where \(\bot\) is an absorbing state.  The action sets of the players stay the same. In this configuration, we define the transition kernel of the players in $\widetilde G_{H,\gamma}$ as,   
\[
\widetilde P(x' \mid x,a)
=
\begin{cases}
P_h(s' \mid s,a),
& \text{if } x=(h,s),\ h<H,\ x'=(h+1,s'),\\[4pt]
1,
& \text{if } x=(H,s),\ x'=\bot,\\[4pt]
1,
& \text{if } x=\bot,\ x'=\bot,\\[4pt]
0,
& \text{otherwise}.
\end{cases}
\]
Choose a scaling parameter
\( 
    \alpha:=\gamma^{H-1},
\) 
and define the cost function of player \(i\) as
\[
    \widetilde\ell^i((h,s),a)
    :=
    \alpha\gamma^{-(h-1)}\ell_h^i(s,a)
    =
    \gamma^{H-h}\ell_h^i(s,a),
    \qquad h\in[H],
\]
while \(\widetilde\ell^i(\bot, \cdot): \equiv 0\). Moreover, as \(\widetilde G_{H,\gamma}\) can be constructed from \(G_H\) in polynomial time, since it only adds one absorbing state and rescales the finitely many transition and cost entries of \(G_H\). Then, we have the following result.
\begin{lemma}
\label{lem:discounted-clock-transfer-app}
For every policy profile \(\bm\sigma \in \Pi^{\mathrm{markov}}\) in \(\widetilde G_{H,\gamma}\) and every
player \(i\),
\( 
    J_i^\gamma(\bm\sigma;\widetilde G_\gamma)
    =
    \alpha
    V^{i,\bm\sigma^{[H]} }(s_1;G_H).
\) 
Moreover, for every sparse list
\(\widetilde L=(\bm\sigma_1,\ldots,\bm\sigma_T)\) where, for all \(t \in [T]\) \(\sigma_t   \in \Pi^{\mathrm{markov,\infty}}\), it holds that, if \(\widetilde L\) is an \(\varepsilon\)-approximate discounted CCE
of \(\widetilde G_{H,\gamma}\), then \(\widetilde L^{[H]}\) is an
\((\varepsilon/\alpha)\)-approximate CCE of \(G_H\).
\end{lemma}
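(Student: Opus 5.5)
The plan is to prove the value identity first, by direct computation, and then deduce the CCE transfer from it by applying the identity to every deviation profile.

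\textbf{Step 1: trajectory coupling.} Fix any profile $\bm\sigma$ in $\widetilde G_{H,\gamma}$; deviations may be history-dependent, so I allow this generality. By construction of $\widetilde P$, the state at time $h\le H$ is $(h,s_h)$ with probability one. The conditional law of $s_{h+1}$ given the history is $P_h(\cdot\mid s_h,\bm a_h)$. Hence the law of $(s_1,\bm a_1,\dots,s_H,\bm a_H)$ under $\bm\sigma$ in $\widetilde G_{H,\gamma}$ coincides with its law under $\bm\sigma^{[H]}$ in $G_H$. To make this precise, I identify histories $((1,s_1),\bm a_1,\dots,(h,s_h))$ with $\tau_h\in\mathcal H_h$; this is a bijection on reachable histories. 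For $h\ge H+1$ the state is $\bot$ almost surely, and $\bot$ has zero cost.

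\textbf{Step 2: computing the discounted cost.} Using Step 1,
\[
J_i^\gamma(\bm\sigma;\widetilde G_{H,\gamma})=\mathbb E\Bigl[\sum_{h=1}^H\gamma^{h-1}\gamma^{H-h}\ell_h^i(s_h,\bm a_h)\Bigr]=\gamma^{H-1}\,\mathbb E^{\bm\sigma^{[H]}}\Bigl[\sum_{h=1}^H\ell_h^i(s_h,\bm a_h)\Bigr]=\alpha V^{i,\bm\sigma^{[H]}}(s_1;G_H).
\]
The exponents combine as $(h-1)+(H-h)=H-1$, so the weight $\alpha$ is the same at every layer. This proves the identity.

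\textbf{Step 3: the CCE transfer.} This is the main point requiring care. Given a deviation $\nu^i\in\Pi_i^{\mathrm{gen}}$ for $G_H$, lift it to $\mu^i\in\Pi_i^{\mathrm{gen},\infty}$ as follows. At times $h\le H$, $\mu^i$ acts as $\nu^i$ via the history bijection. At unreachable histories and after time $H$, it acts arbitrarily, say uniformly.

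The profile $\mu^i\odot\bm\sigma_t^{-i}$ restricted to $H$ layers is $\nu^i\odot\bm\sigma_t^{[H],-i}$, because truncating the opponents' Markov policies commutes with replacing player $i$'s policy. Applying Step 2 to both the played and the deviation profiles therefore gives
\[
\sum_t\bigl[V^{i,\bm\sigma_t^{[H]}}-V^{i,\nu^i\odot\bm\sigma_t^{[H],-i}}\bigr]=\alpha^{-1}\sum_t\bigl[J_i^\gamma(\bm\sigma_t)-J_i^\gamma(\mu^i\odot\bm\sigma_t^{-i})\bigr]\le \alpha^{-1}\varepsilon T.
\]
The inequality is the discounted CCE hypothesis for $\widetilde L$. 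Taking the supremum over $\nu^i$ gives the claim. The one subtlety to check is that every general deviation in $G_H$ is realized by some deviation in $\widetilde G_{H,\gamma}$. This holds by the bijection of histories, so the supremum over $\nu^i$ is dominated by the supremum over $\mu^i$.
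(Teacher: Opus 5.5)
Your proposal is correct and follows essentially the same route as the paper. The exponent identity $\gamma^{h-1}\gamma^{H-h}=\gamma^{H-1}=\alpha$ on the emulated first $H$ layers, together with zero cost at $\bot$, gives the value identity. Lifting an arbitrary $H$-step deviation to the infinite-horizon game then transfers the CCE gap with the factor $\alpha^{-1}$.

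Your version is slightly more careful than the paper's in two places. First, you extend the value identity to history-dependent profiles, which is genuinely needed because it is applied to $\mu^i\odot\bm\sigma_t^{-i}$. Second, you note that only the inequality $\sup_{\nu^i}\le\sup_{\mu^i}$ is required, rather than the full equivalence of the two suprema that the paper asserts.
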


\begin{proof}
For the first \(H\) layers, the infinite-horizon game \(\widetilde G_{H,\gamma}\) exactly  emulates \(G_H\). By construction, after the layer \(H\), the state of the players in the game \(\widetilde G_{H,\gamma}\) permanently moves to
\(\bot\), where all future costs of the players are zero.  Along every simulated state-action pair, i.e. \(((h,s_h),a_h)\), for \(h\le H\), we obtain
\[
    \gamma^{h-1}\widetilde\ell^i((h,s_h),a_h)
    =
    \gamma^{h-1}\alpha\gamma^{-(h-1)}\ell_h^i(s_h,a_h)
    =
    \alpha\ell_h^i(s_h,a_h).
\]
Taking the expectation of the expression above, we obtain
\[
    J_i^\gamma(\bm\sigma;\widetilde G_\gamma)
    =
    \alpha
    \mathbb E^{\bm\sigma^{[H]}}
    \left[\sum_{h=1}^H \ell_h^i(s_h,a_h)\right]
    =
    \alpha V^{i,\bm\sigma^{[H]}}(s_1;G_H).
\]
Now fix a list \(\widetilde L\). Then, for any infinite-horizon deviation
\(\mu^i \in \Pi_i^{\mathrm{gen},\infty}\) we get the following equality
\begin{align}\label{eq:quaso19}
    &\frac1T\sum_{t=1}^T
    \left[
        J_i^\gamma(\bm\sigma_{t};\widetilde G_\gamma)
        -
        J_i^\gamma(\mu^i\odot\bm\sigma_t^{-i};\widetilde G_\gamma)
    \right]  =
    \alpha\frac1T\sum_{t=1}^T
    \left[
        V^{i,\bm\sigma_t^{[H]}}(s_1;G_H)
        -
        V^{i,\mu^{i,[H]} \odot \bm\sigma_t^{[H],-i}}(s_1;G_H)
    \right].
\end{align}
It can be seen that, every \(H\)-step deviation can be extended arbitrarily to infinite horizon
after the absorbing layer \(H\), and the discounted value of the players remains the same because all post-\(H\) costs are zero. Thus, taking the supremum over deviations
on the LHS of \eqref{eq:quaso19} is equivalent to taking the supremum over \(H\)-step deviations on
the right.  This completes the proof. 
\end{proof}

\vspace{6pt}
\noindent
Now, we describe the expression sizes of the source game \(\widehat M\) and its corresponding encoding  \(\widetilde F_{H,\gamma}(\widehat M)\) in the next two lemmas, Lemmas \ref{lem:rubinstein-source} and \ref{lem:clock-size-app}.

\begin{lemma}[Size of the source game]
\label{lem:rubinstein-source}
Let \(\widehat M\) be the \emph{source} two-player game  described in Lemma 8.3 of \citet{Rubinstein2016}, and let \(n\) be the number of pure actions per player in \(\widehat M\).  Then, there exist absolute constants \(q_{\rm src},n_{\rm src}\ge 1\), such that
\( 
    n
    \le
    |\widehat M|
    \le
    (n+2)^{q_{\rm src}}
\)
for all \(n\ge n_{\rm src}\).  
\end{lemma}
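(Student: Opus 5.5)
The lower bound $n\le|\widehat M|$ holds by definition. Under the size convention of Appendix~\ref{app:lower-bound-search-problems}, a normal-form game ($H=1$) has $|\widehat M|=\max\{1,1,A_{\max},\beta(\widehat M)\}$. Since $A_{\max}=n$, we get $|\widehat M|\ge n$ immediately.

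The substance is the upper bound $\beta(\widehat M)\le (n+2)^{q_{\rm src}}$. That is, every loss entry of the source game must have bit length polynomial in $n$.

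For this I will go back to the construction of Lemma 8.3 in \citet{Rubinstein2016}. There, $\widehat M$ is obtained from an $\mathsf{EndOfALine}$ instance of size $\tilde n$, through the $\varepsilon$-GCircuit and the Althöfer-style imitation game. The number of actions is $n=2^{\Theta(\sqrt{\tilde n}\,\mathrm{polylog}\,\tilde n)}$, or more precisely it is at least the size of the underlying structures. Every payoff entry is a rational whose numerator and denominator come from a fixed finite set of constants, together with quantities such as $1/\text{(number of gates)}$ or grid-point coordinates. These denominators are bounded by a polynomial in $n$.

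So I will argue that each entry can be written as $p/r$ with $|p|,r\le (n+2)^{c}$ for an absolute constant $c$. This gives $\bits\le 1+2\lceil\log_2((n+2)^c+1)\rceil=O(\log n)\le (n+2)$ for $n\ge n_{\rm src}$.

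The passage from utilities to losses is $L^i=1-U^i$, possibly after an affine rescaling into $[0,1]$ with fixed rational constants. This changes bit lengths by at most an additive $O(\log n)$. Taking the maximum with $n$ then gives $|\widehat M|\le (n+2)^{q_{\rm src}}$ even with $q_{\rm src}=1$ for $n$ large, and any larger constant also suffices.

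The main obstacle is that Rubinstein's paper does not state explicit bit-length bounds. I will have to check that no step introduces precision depending exponentially on $n$, for instance a discretization finer than $1/\mathrm{poly}(n)$. My first attempt will be to observe that the reduction runs in time $\mathrm{poly}(n)$. That alone forces every entry written down to have $\mathrm{poly}(n)$ bits, giving the polynomial bound directly without inspecting individual gadgets.
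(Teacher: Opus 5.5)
Your lower bound and your handling of the loss conversion match the paper, but your upper bound takes a different route. The paper argues componentwise from the structure of Rubinstein's Lemma~8.3. Each bimatrix entry is a bipartite-polymatrix payoff, which has polynomially many players per side and is computed from polynomially many fixed-precision quantities, giving \((n+2)^{r_1}\) bits. That payoff is multiplied by \(\lambda n_A\) or \(\lambda n_B\), giving \((n+2)^{r_2}\) bits, and a constant number of \(\{0,1\}\)-valued Alth\"ofer payoffs are added; a fixed affine normalization follows. Your main argument is instead a meta-argument: if \(\widehat M\) is written down explicitly by a procedure running in time polynomial in \(n\), no entry can have more bits than that running time. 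Because the procedure is a fixed algorithm, this yields an absolute \(q_{\rm src}\). This route is shorter and does not depend on the internal gadgets. Its one load-bearing premise, however, cannot simply be ``observed.'' The hardness conclusion of Theorem~\ref{thm:rubinstein-qpoly-hardness} only requires Rubinstein's construction to run in time \(2^{o(\tilde n)}\), and that need not be polynomial in \(n=2^{\tilde n^{1/2+o(1)}}\). You must take from Rubinstein's paper that \(\widehat M\) is built in time \(\mathrm{poly}(n)\), for example because each of the \(n^2\) entries is computed by evaluating \(\mathrm{poly}(\tilde n)\)-size circuits. The paper's proof rests on a comparable reading of Rubinstein's Sections~7--8, so the two are on similar footing. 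The paper's version, though, shows where growth could come from, namely the instance-dependent scaling \(\lambda n_A\). Your sharper first attempt, \(|p|,r\le (n+2)^c\) and hence \(O(\log n)\) bits, is more than you need and is not justified as written. If an entry aggregates many rationals with polynomially bounded denominators, its reduced denominator can be as large as their product unless you exhibit a common denominator. Either drop that claim or weaken it to a \(\mathrm{poly}(n)\) bit bound.
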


\begin{proof}[Proof outline]
The lower bound follows immediately from the definition of \(|\widehat M|\). For the upper bound, observe that the payoff functions of the bimatrix game
constructed in Lemma~8.3 of \citet{Rubinstein2016} consist of two types of
components: a scaled bipartite polymatrix game
constructed in Section~7 of \citet{Rubinstein2016}, and the payoffs of the two auxiliary
\emph{Alth\"ofer} games.

\vspace{6pt}
\noindent
We first consider the polymatrix component. The bipartite polymatrix game
constructed in Section~7 of \citet{Rubinstein2016} contains number of
players that is at most polynomial in
\(n\) for each \emph{side} of the bipartite polymatrix game. Moreover, as
described in Sections~7.2--7.4 of \citet{Rubinstein2016}, each polymatrix payoff entry is computed
from a number of quantities that is polynomially bounded
by \(n\). Moreover, all numerical quantities are represented at the
fixed precision. Therefore, there exist an
absolute constant \(r_1\) and a sufficiently large absolute threshold \(n_{\rm src}\) such that every payoff entry of the bipartite polymatrix
game has binary encoding length at most
\(
    (n+2)^{r_1}
\)
whenever \(n\ge n_{\rm src}\).

\vspace{6pt}
\noindent
Then, the main payoff component is obtained by multiplying the
bipartite polymatrix payoff by factors of the form
\(\lambda n_A\) or \(\lambda n_B\), where \(\lambda\) is an absolute
constant and \(n_A,n_B\) are the numbers of players on the two sides of
the bipartite polymatrix game. It follows that the binary encoding length
of every payoff entry arising from the scaled polymatrix component is at
most
\(
    (n+2)^{r_2}
\)
for some absolute constant \(r_2\) and all \(n\ge n_{\rm src}\), after
sufficiently large \(n_{\rm src}\).

\vspace{6pt}
\noindent
We next consider the two \emph{Alth\"ofer}-game components used in
Lemma~8.3 of \citet{Rubinstein2016}. Their payoff entries belong to
\(\{0,1\}\), and therefore have constant binary encoding length. Since
each final bimatrix payoff is the sum of one scaled polymatrix payoff and
a constant number of Alth\"ofer payoffs, there exists an absolute constant
\(r_0\) such that
\(
    \beta(\widehat M)\le (n+2)^{r_0}
\)
for all \(n\ge n_{\rm src}\), where \(\beta(\widehat M)\) denotes the maximum
binary encoding length of any payoff entry of \(M\). Consequently,
\( 
\begin{aligned}
    |\widehat M|
    =\max\{n,\beta(\widehat M)\}
    \le
    \max\bigl\{n,(n+2)^{r_0}\bigr\}
    \le
    (n+2)^{q_{\rm src}}
\end{aligned}
\)
for a sufficiently large absolute constant
\(
    q_{\rm src}.
\) 

\vspace{6pt}
\noindent
Finally, the payoff range of the \emph{source} game need not initially be
normalized to \([0,1]\). Since the payoffs in the construction are bounded
by absolute constants, such a normalization can be carried out by an
affine transformation with fixed coefficients. This changes the payoff
bit complexity by at most an absolute additive amount and therefore does
not affect the asserted polynomial relation for sufficiently large \(n\).
\end{proof}

\begin{lemma}[Size of the discounted embedding]
\label{lem:clock-size-app}
Fix a rational discount factor \(\gamma\in(0,1)\cap\mathbb Q\). Let
\(q_{\mathrm{src}}\ge 1\) be the absolute constant from Lemma~\ref{lem:rubinstein-source}. Define
\(
    d_0 := \max\{4,q_{\mathrm{src}}+2\}.
\) 
Then, for every finite constant \(B_H\ge 2\), there exists an integer
\(n_{\mathrm{size}}=n_{\mathrm{size}}(\gamma)\) such that
the following holds for every \(n\ge n_{\mathrm{size}}\). Let \(\widehat M\) be the \emph{source} two-player game  described in Lemma 8.3 of \citet{Rubinstein2016}, with \(n\) pure actions per player, and let \(H\ge 2\) be an even integer
satisfying
\( 
    H \le B_H\log(n+2).
\) 
Let \(F_H(\widehat M)\) be the finite-horizon Markov game constructed according to Subsection \ref{subsect:quaso0}, and let
\(\widetilde F_{H,\gamma}(\widehat M)\) be the discounted game obtained from \(F_H(\widehat M)\) as described in Subsection \ref{subsect:quaso1}. Then
\( 
    n \le |\widetilde F_{H,\gamma}(\widehat M)| \le (n+2)^{d_0}.
\) 
\end{lemma}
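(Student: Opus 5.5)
The proof is a direct bookkeeping of the size parameter $|\widetilde F_{H,\gamma}(\widehat M)| = \max\{S, A_{\max}, \beta\}$ (no horizon term, since the embedded game is infinite-horizon). We bound each component by a polynomial in $n+2$, using $H \le B_H\log(n+2)$, the fixed rational $\gamma$, and Lemma~\ref{lem:rubinstein-source}.

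The lower bound is immediate: every player has action set $[n]$ at the non-terminal states of $F_H(\widehat M)$, and these are inherited by $\widetilde F_{H,\gamma}(\widehat M)$. Hence $A_{\max}=n$ and $|\widetilde F_{H,\gamma}(\widehat M)|\ge n$.

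For the upper bound we handle the state count and the bit lengths separately.

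\emph{State count.} Odd layers contribute one state each, even layers contribute $n^2$ states each, and there is one absorbing state $\bot$. So
\[
S\le H(n^2+1)+1\le B_H\log(n+2)\,(n+2)^2+1 .
\]
For $n\ge n_{\mathrm{size}}(\gamma)$ this is at most $(n+2)^4\le (n+2)^{d_0}$, since $B_H\log(n+2)+1\le (n+2)^2$ eventually.

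\emph{Transition entries.} These are $0$ or $1$, so their bit length is $O(1)$. The quantity $\bits(\gamma)$ is also a constant, because $\gamma$ is fixed.

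\emph{Cost entries.} Each is either $0$ or $\gamma^{H-h}\cdot\frac1H L^i(a,b)$ with $L^i(a,b)$ an entry of $\widehat M$. Write $\gamma=p/r$ in reduced form and $L^i(a,b)=u/v$. The product's numerator and denominator are bounded in absolute value by $p^{H}|u|$ and $r^{H}Hv$; reducing the fraction can only shorten it. Therefore
\[
\bits\bigl(\widetilde\ell^i\bigr)\le \bits(L^i(a,b)) + H\bigl(\log_2(p+1)+\log_2(r+1)\bigr)+\log_2(H+1)+O(1).
\]
Lemma~\ref{lem:rubinstein-source} gives $\bits(L^i(a,b))\le |\widehat M|\le (n+2)^{q_{\mathrm{src}}}$. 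The remaining terms are $O_\gamma(B_H\log(n+2))$. Choosing $n_{\mathrm{size}}(\gamma)\ge n_{\mathrm{src}}$ large enough that $c_\gamma B_H\log(n+2)+O(1)\le (n+2)^{q_{\mathrm{src}}}$, the total is at most $2(n+2)^{q_{\mathrm{src}}}\le (n+2)^{q_{\mathrm{src}}+2}\le (n+2)^{d_0}$.

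Taking the maximum of the three bounds gives the claim.

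The one step needing care is the cost-entry bit length. It requires two things: that the factor $\gamma^{H-h}$ inflates the encoding only linearly in $H$, hence logarithmically in $n$, and that the loss convention and normalization of $\widehat M$ do not change its entries' bit length beyond what Lemma~\ref{lem:rubinstein-source} already accounts for.

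The threshold $n_{\mathrm{size}}$ depends on $\gamma$ through $\log(p+1)$ and $\log(r+1)$, and on the fixed constant $B_H$. This is why the statement quantifies $n_{\mathrm{size}}=n_{\mathrm{size}}(\gamma)$ after fixing $B_H$.
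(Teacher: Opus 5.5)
Your proposal is correct and follows essentially the same bookkeeping as the paper. You take the lower bound from $A_{\max}=n$, bound the state count $1+\frac H2+\frac H2 n^2$ by $(n+2)^4$ using $H\le B_H\log(n+2)$, note that transition entries have constant bit length, and show that the factor $\gamma^{H-h}/H$ adds only $O_\gamma(\log(n+2))$ bits on top of $(n+2)^{q_{\mathrm{src}}}$. The only cosmetic difference is that you bound the product's numerator and denominator directly, giving an additive overhead, whereas the paper cites the generic inequality $\bits\bigl(\prod_\ell r_\ell\bigr)\le c_{\mathrm{rat}}\bigl(1+\sum_\ell \bits(r_\ell)\bigr)$ and absorbs $c_{\mathrm{rat}}$ into the extra factor $(n+2)^2$.
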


\begin{proof}
To construct \(\widetilde F_{H,\gamma}(\widehat M)\) from \(\widehat M\), we first denote it in \emph{loss} convention through well known conversions from payoffs to losses. By Lemma~\ref{lem:rubinstein-source}, the source game satisfies
\( 
    |\widehat M| \le (n+2)^{q_{\mathrm{src}}}.
\) 
Then, every loss entry of \(\widehat M\) has bit length at most
\[
    \max_{i\in\{1,2\}}\max_{a,b\in[n]}\mathrm{bits}(L^i(a,b))
    \le (n+2)^{q_{\mathrm{src}}}.
\]
We now bound the size of \(\widetilde F_{H,\gamma}(\widehat M)\). First, consider the state and action sets. In \(F_H(\widehat M)\), every odd layer is a
singleton and every even layer has state set \([n]\times[n]\). Hence,
\( 
    \sum_{h=1}^{H}|S_h|
    =
    \frac{H}{2}+\frac{H}{2}n^2.
\) 
Recall that the state space of the players in the game \(\widetilde F_{H,\gamma}(\widehat M)\) is
\( 
    \widetilde S
    :=
    \{(h,s):h\in[H],\,s\in S_h\}\cup\{\bot\}.
\) 
Therefore
\[
    |\widetilde S|
    =
    1+\frac{H}{2}+\frac{H}{2}n^2
    \le 1+H(n+2)^2.
\]
Since \(H \le B_H \log(n+2)\), we have
\( 
    H \le n+2
\) 
whenever \(n \ge n_{\mathrm{size}}\). Consequently,
\begin{align}\label{eq:quaso13}
    |\widetilde S|
    \le 1+(n+2)^3
    \le (n+2)^4.
\end{align}
Moreover, both players in  \(\widetilde F_{H,\gamma}(\widehat M)\) have the action set \([n]\), and thus
\( 
    A_{\max}=n.
\) 
This already gives the lower bound
\[
    |\widetilde F_\gamma(\widehat M)|
    =
    \max\{|\widetilde S|,A_{\max},\beta(\widetilde G_\gamma(\widehat M)\}
    \ge A_{\max}
    =
    n.
\]
It remains to bound \(\beta(\widetilde F_{H,\gamma}(\widehat M))\). The transition
probabilities of the players in \(\widetilde F_{H,\gamma}(\widehat M)\) are all either \(0\) or \(1\), and hence the
transition entries have constant bit length. We only need to bound the loss
entries. For \(h\in[H]\), the discounted embedding uses losses of the form
\( 
    \widetilde \ell^i((h,s),a)
    =
    \gamma^{H-h}\ell^i_h(s,a),
\) 
and \(\widetilde \ell^i(\bot,a)=0\).
Thus every nonzero loss entry in \(\widetilde F_{H,\gamma}(\widehat M)\) has the form
\( 
    \gamma^{H-h}\frac{1}{H}L^i(a,b).
\) 
Since \(\gamma\in(0,1)\cap\mathbb Q\) is fixed, there is a constant
\(c_\gamma\ge 1\), depending only on \(\gamma\), such that
\( 
    \mathrm{bits}(\gamma^k)\le c_\gamma(k+1)
    \; \text{for all }k\ge 0.
\) 
Also,
\( 
    \mathrm{bits}\!\left(\frac{1}{H}\right)
    \le O(\log(H+1)).
\) 
Furthermore, there is a
universal constant \(c_{\mathrm{rat}}\ge 1\) such that, for every finite
collection of rationals \(r_1,\ldots,r_k\),
\( 
        \operatorname{bits}\!\left(\prod_{\ell=1}^k r_\ell\right)
        \le
        c_{\mathrm{rat}}
        \left(
            1+\sum_{\ell=1}^k \operatorname{bits}(r_\ell)
        \right).
\)
Then, there exists a universal constant \(c_{\mathrm{rat}}\ge 1\) such that
every loss entry of \(\widetilde F_{H,\gamma}(\widehat M)\) satisfies
\[
\begin{aligned}
    \mathrm{bits}\!\left(
        \gamma^{H-h}\frac{2}{H}L^i(a,b)
    \right)
    &\le
    c_{\mathrm{rat}}
    \left(
        c_\gamma(H+1)
        + O(\log(H+1))
        + \mathrm{bits}(L^i(a,b))
        + 1
    \right)                                                   \\
    &\le
    c_{\mathrm{rat}}
    \left(
        c_\gamma(H+1)
        + O(\log(H+1))
        + (n+2)^{q_{\mathrm{src}}}
        + 1
    \right).
\end{aligned}
\]
Since \(H\le B_H\log(n+2)\), for sufficiently large \(n_{\mathrm{size}}\)
we have
\begin{align}\label{eq:quaso14}
    \beta(\widetilde F_{H,\gamma}(\widehat M))
    \le (n+2)^{q_{\mathrm{src}}+2}
\end{align}
for all \(n\ge n_{\mathrm{size}}\). Combining \eqref{eq:quaso13} and \eqref{eq:quaso14} we obtain
\[
\begin{aligned}
    |\widetilde F_\gamma(\widehat M)|
    \le
    \max\{(n+2)^4,n,(n+2)^{q_{\mathrm{src}}+2}\}   \le
    (n+2)^{d_0},
\end{aligned}
\]
where \(d_0=\max\{4,q_{\mathrm{src}}+2\}\). Together with
\(|\widetilde F_{H,\gamma}(\widehat M)|\ge n\), we have
\( 
    n \le |\widetilde F_{H,\gamma}(\widehat M)| \le (n+2)^{d_0}.
\) 
\end{proof}

\section{Details for the Numerical Experiments (Section \ref{sec:num_results})}
\label{app:numerical-details}

\subsection{Implementation and evaluation protocol}
\label{app:numerical-protocol}

Both finite-horizon experiments use the four games specified below without any
changes.  The two experiments differ only in the feedback available to
the players.  

\subsection{Finite-Horizon Episodic Games}
\label{app:episodic-games}

Below, we list the transition and cost tables used for the finite-horizon experiments.

\subsubsection{Markov battle-of-the-sexes chain}

This is a two-player, two-action, horizon-$2$ game with state layers
\(  
    S_1=\{0\},\; S_2=\{1,2\},\; S_3=\{3\}.
\) 
At the initial state $0$, the transition and costs are
\begin{center}
\begin{tabular}{c|c|c|c}
\toprule
Joint action $a$ & Transition to $(1,2)$ & $c_1(0,a)$ & $c_2(0,a)$ \\
\midrule
$(0,0)$ & $(0.9,0.1)$ & $0.05$ & $0.20$ \\
$(1,1)$ & $(0.1,0.9)$ & $0.20$ & $0.05$ \\
$(0,1),(1,0)$ & $(0.5,0.5)$ & $0.55$ & $0.55$ \\
\bottomrule
\end{tabular}
\end{center}
At the second layer, the game terminates deterministically.  Explicitly, for $s=1$, costs are
\begin{align*}
    c_1(1,a)&=\begin{cases}
    0.15,&a=(0,0),\\
    0.35,&a=(1,1),\\
    0.65,&a\in\{(0,1),(1,0)\},
    \end{cases}\qquad
    c_2(1,a)=\begin{cases}
    0.35,&a=(0,0),\\
    0.15,&a=(1,1),\\
    0.65,&a\in\{(0,1),(1,0)\}.
    \end{cases}
\end{align*}
For $s=2$,
\begin{align*}
    c_1(2,a)&=\begin{cases}
    0.15,&a=(1,1),\\
    0.35,&a=(0,0),\\
    0.65,&a\in\{(0,1),(1,0)\},
    \end{cases}\qquad
    c_2(2,a)=\begin{cases}
    0.35,&a=(1,1),\\
    0.15,&a=(0,0),\\
    0.65,&a\in\{(0,1),(1,0)\}.
    \end{cases}
\end{align*}

\subsubsection{Routing/congestion game}

This is a two-player, two-action, horizon-$3$ game with layers
\( 
    S_1=\{0\},\; S_2=\{1,2\},\; S_3=\{3,4\},\; S_4=\{5\}.
\) 
Actions are interpreted as route choices: $0$ is an upper/fast route and $1$ is a lower/safe route.  Congestion occurs when both players choose the same action.

\vspace{6pt}
\noindent
At $h=1$, let $n_0(a)=\mathbf 1\{a_1=0\}+\mathbf 1\{a_2=0\}$ and $\kappa(a)=\mathbf 1\{a_1=a_2\}$.  The transition is
\begin{equation}
    P(1\mid0,a)=
    \begin{cases}
    0.75,&n_0(a)\le1,\\
    0.25,&n_0(a)=2,
    \end{cases}
    \qquad P(2\mid0,a)=1-P(1\mid0,a),
\end{equation}
with player-$i$ cost
\( 
    c_i(0,a)=0.10+0.45\kappa(a)+0.05a_i.
\) 
At $h=2$, for $s\in\{1,2\}$,
\begin{equation}
    P(3\mid s,a)=
    \begin{cases}
    0.85,&s=1,\ \kappa(a)=0,\\
    0.35,&s=1,\ \kappa(a)=1,\\
    0.35,&s=2,\ \kappa(a)=0,\\
    0.15,&s=2,\ \kappa(a)=1,
    \end{cases}
    \qquad P(4\mid s,a)=1-P(3\mid s,a),
\end{equation}
with base cost $0.05$ at state $1$ and $0.20$ at state $2$:
\( 
    c_i(s,a)=\operatorname{base}(s)+0.50\kappa(a)+0.05\mathbf 1\{a_i=0\}.
\) 
At $h=3$, the game terminates.  Costs are
\(  c_i(s,a)=0.10+0.55\kappa(a)+p_i(s,a),
\) 
where $p_i(s,a)=0.10$ if $(s=3,a_i=1)$ or $(s=4,a_i=0)$, and $p_i(s,a)=0$ otherwise.

\subsubsection{Three-player public-goods game}

This game has $m=3$, binary actions, horizon $2$, and layers
\( 
    S_1=\{0\},\; S_2=\{1,2\},\; S_3=\{3\}.
\) 
Action $1$ means contribute.  Let $k(a)=a_1+a_2+a_3$ be the number of contributors.  At the first layer,
\begin{equation}
    P(1\mid0,a)=
    \begin{cases}
    0.85,&k(a)\ge2,\\
    0.20,&k(a)<2,
    \end{cases}
    \qquad P(2\mid0,a)=1-P(1\mid0,a),
\end{equation}
with
\(    c_i(0,a)=0.08+0.22a_i+0.35\frac{\max\{0,2-k(a)\}}{2}.
\) 
At the second layer, the game terminates.  If $s=1$, then
\begin{equation}
    c_i(1,a)=0.10+0.18a_i+0.08\mathbf 1\{k(a)<1\}.
\end{equation}
If $s=2$, then
\begin{equation}
    c_i(2,a)=0.55+0.15a_i-0.10\min\{k(a),2\}.
\end{equation}

\subsubsection{Transition-trap game}

This is a two-player, two-action, horizon-$3$ game with layers
\( 
    S_1=\{0\},\; S_2=\{1,2\},\; S_3=\{3,4\},\; S_4=\{5\}.
\) 
Action $0$ is a shortcut and action $1$ is safe.  The shortcut gives lower immediate cost but increases the probability of entering a dangerous or trapped state.

\vspace{6pt}
\noindent
At $h=1$, let $n_0(a)=\mathbf 1\{a_1=0\}+\mathbf 1\{a_2=0\}$.  The probability of moving to state $2$ is
\begin{equation}
    p_{\mathrm{danger}}(n_0)=
    \begin{cases}
    0.10,&n_0=0,\\
    0.50,&n_0=1,\\
    0.80,&n_0=2.
    \end{cases}
\end{equation}
The transition is $P(2\mid0,a)=p_{\mathrm{danger}}(n_0(a))$, and the immediate cost is $0.05$ for choosing action $0$ and $0.13$ for choosing action $1$.

\vspace{6pt}
\noindent
At $h=2$, from $s=1$ the trap probability is $(0.05,0.20,0.50)$ for $n_0=0,1,2$, while from $s=2$ it is $(0.10,0.50,0.80)$.  The immediate cost is $0.06$ for action $0$ and $0.20$ for action $1$, with an additional $0.02$ cost in state $2$.

\vspace{6pt}
\noindent
At $h=3$, the game terminates.  In state $3$, every action has cost $0.20$.  In the trap state $4$, choosing shortcut action $0$ costs $0.80$, while action $1$ costs $0.60$.

\subsection{Discounted LQ Markov Game}
\label{app:lq-details}

The discounted case is a two-player, two-state, zero-sum Markov game.  The state is $s\in\{0,1\}$, player~0 chooses $u\in[-U,U]$, and player~1 chooses $v\in[-U,U]$.  Player~0 minimizes the raw cost
\begin{equation}
\label{eq:lq-raw-app}
    g_s(u,v)=\frac r2u^2-\frac r2v^2+b_suv+a_su+d_sv+k_s,
\end{equation}
while player~1 minimizes $-g_s(u,v)$.  The transition is
\begin{equation}
\label{eq:lq-trans-app}
    P(s'=1\mid s,u,v)=p_s+\alpha_su+\beta_sv.
\end{equation}
For the numerical experiment, the raw zero-sum costs have been transformed by
\begin{equation}
    c_0(s,u,v)=\frac12+\frac{g_s(u,v)}{2M},
    \qquad
    c_1(s,u,v)=\frac12-\frac{g_s(u,v)}{2M}.
\end{equation}
The parameters are
\begin{center}
\begin{tabular}{c|ccccccccccc}
\toprule
$s$ & $r$ & $b_s$ & $a_s$ & $d_s$ & $k_s$ & $p_s$ & $\alpha_s$ & $\beta_s$ & $U$ & $\gamma$ & $M$ \\
\midrule
$0$ & $1.50$ & $0.35$ & $1.10$ & $0.90$ & $-0.35$ & $0.22$ & $0.07$ & $-0.055$ & $1$ & $0.90$ & $2.40$ \\
$1$ & $1.50$ & $-0.30$ & $-1.05$ & $-0.95$ & $0.45$ & $0.78$ & $-0.05$ & $0.060$ & $1$ & $0.90$ & $2.40$ \\
\bottomrule
\end{tabular}
\end{center}

\noindent 
Also, our infinite-horizon benchmark can be viewed as a finite-state stochastic
dynamic-game analogue of the classical two-player zero-sum LQ dynamic games
studied in \citet{basar_olsder_ch6}. 

\subsubsection{Closed-form saddle-point solution}

Let $V_s$ be the raw, unnormalized infinite-horizon discounted value of player~0 at state $s$, and define
\( 
    \Delta:=V_1-V_0.
\) 
The Bellman-Isaacs saddle-point equation is
\( 
    V_s=\min_u\max_v
    \left\{
    g_s(u,v)+\gamma\mathbb E[V_{s'}\mid s,u,v]
    \right\}.
\) 
Thus,
\begin{align}
    \mathbb E[V_{s'}\mid s,u,v]
    &=(1-p_s-\alpha_su-\beta_sv)V_0
      +(p_s+\alpha_su+\beta_sv)V_1=V_0+(p_s+\alpha_su+\beta_sv)\Delta.
\end{align}
Therefore the Bellman-Isaacs objective equals
\begin{equation}
    \gamma V_0+\gamma p_s\Delta+k_s
    +\frac r2u^2-\frac r2v^2+b_suv+A_s(\Delta)u+D_s(\Delta)v,
\end{equation}
where
\( 
    A_s(\Delta):=a_s+\gamma\alpha_s\Delta,
    \; D_s(\Delta):=d_s+\gamma\beta_s\Delta.
\)
Thus
\begin{equation}
\label{eq:Vs-phi-app}
    V_s=\gamma V_0+\gamma p_s\Delta+k_s+\phi_s(\Delta),
\end{equation}
where
\( 
    \phi_s(\Delta)
    :=\min_u\max_v
    \left\{
    \frac r2u^2-\frac r2v^2+b_suv+A_s(\Delta)u+D_s(\Delta)v
    \right\}.
\)
The first-order saddle-point equations are
\begin{equation}
    ru+b_sv+A_s(\Delta)=0,
    \qquad
    -rv+b_su+D_s(\Delta)=0.
\end{equation}
Solving these gives
\begin{equation}
\label{eq:u-star-app}
    u_s^*(\Delta)
    =-\frac{rA_s(\Delta)+b_sD_s(\Delta)}{r^2+b_s^2},
\quad 
    v_s^*(\Delta)
    =\frac{rD_s(\Delta)-b_sA_s(\Delta)}{r^2+b_s^2}.
\end{equation}
Substituting~\eqref{eq:u-star-app} into the optimized quadratic yields
\begin{equation}
\label{eq:phi-app}
    \phi_s(\Delta)=
    \frac12
    \frac{
    r\bigl(D_s(\Delta)^2-A_s(\Delta)^2\bigr)
    -2b_sA_s(\Delta)D_s(\Delta)
    }{r^2+b_s^2}.
\end{equation}
Subtracting~\eqref{eq:Vs-phi-app} for $s=0$ from the same equation for $s=1$ yields the scalar equation
\begin{equation}
\label{eq:delta-equation-app}
    \Delta
    =\gamma(p_1-p_0)\Delta+(k_1-k_0)
    +\phi_1(\Delta)-\phi_0(\Delta).
\end{equation}
Since $\phi_s$ is quadratic in $\Delta$, equation~\eqref{eq:delta-equation-app} is a scalar quadratic.  We solve this equation numerically, and select the feasible root whose saddle-point actions lie in $[-U,U]$, and then recovers
\( 
    V_0=\frac{\gamma p_0\Delta+k_0+\phi_0(\Delta)}{1-
    \gamma},
    \; V_1=V_0+\Delta.
\) 
For the transformed normalized costs, the corresponding values are
\begin{equation}
    J_0^*(s)=\frac12+\frac{(1-
    \gamma)V_s}{2M},
    \qquad
    J_1^*(s)=\frac12-\frac{(1-
    \gamma)V_s}{2M}.
\end{equation}

\end{document}